\providecommand{\tabularnewline}{\\}
\theoremstyle{definition}
\newtheorem{defn}{\protect\definitionname}
\theoremstyle{definition}
 \newtheorem{example}{\protect\examplename}
\theoremstyle{plain}
\newtheorem{thm}{\protect\theoremname}
\theoremstyle{plain}
\newtheorem{prop}{\protect\propositionname}
\theoremstyle{definition}
\newtheorem{condition}{\protect\conditionname}
\theoremstyle{remark}
\newtheorem{notation}{\protect\notationname}
\theoremstyle{plain}
\newtheorem{lem}{\protect\lemmaname}
\theoremstyle{plain}
\newtheorem{assumption}{\protect\assumptionname}
\theoremstyle{remark}
\newtheorem{claim}{\protect\claimname}
\providecommand{\assumptionname}{Assumption}
\providecommand{\claimname}{Claim}
\providecommand{\conditionname}{Condition}
\providecommand{\definitionname}{Definition}
\providecommand{\examplename}{Example}
\providecommand{\lemmaname}{Lemma}
\providecommand{\notationname}{Notation}
\providecommand{\propositionname}{Proposition}
\providecommand{\theoremname}{Theorem}
\providecommand{\assumptionname}{Assumption}
\providecommand{\claimname}{Claim}
\providecommand{\conditionname}{Condition}
\providecommand{\definitionname}{Definition}
\providecommand{\examplename}{Example}
\providecommand{\lemmaname}{Lemma}
\providecommand{\notationname}{Notation}
\providecommand{\propositionname}{Proposition}
\providecommand{\theoremname}{Theorem}
\begin{document}
\title{\vspace{-60bp}
 Evolutionarily Stable (Mis)specifications: \\
 Theory and Applications\thanks{We thank Cuimin Ba, Thomas Chaney, Sylvain Chassang, In-Koo Cho, Krishna Dasaratha, Andrew Ellis, Ignacio
Esponda, Mira Frick, Drew Fudenberg, Alice Gindin, Ryota Iijima, Yuhta
Ishii, Philippe Jehiel, Pablo Kurlat, Jonny Newton, Filippo Massari,
Andy Postlewaite, Philipp Sadowski, Alvaro Sandroni, Grant Schoenebeck,
Joshua Schwartzstein, Philipp Strack, Carl Veller, and various conference and seminar
participants for helpful comments.  Byunghoon Kim provided excellent research assistance. Kevin He thanks the California
Institute of Technology for hospitality when some of the work on this
paper was completed, and the University Research Foundation Grant at the University of Pennsylvania for financial support. Jonathan Libgober thanks Yale University and the Cowles foundation for their hospitality.}}
\author{Kevin He\thanks{University of Pennsylvania. Email: \texttt{\protect\protect\protect\href{mailto:hesichao\%5C\%5C\%5C\%40gmail.com}{hesichao@gmail.com}}}
\and Jonathan Libgober\thanks{University of Southern California. Email: \texttt{\protect\protect\protect\href{mailto:libgober\%5C\%5C\%5C\%40usc.edu}{libgober@usc.edu}}}}
\date{{\normalsize{}{}{}}%
\begin{tabular}{rl}
First version: & December 20, 2020\tabularnewline
This version: & February 10, 2023\tabularnewline
\end{tabular}}

\maketitle
\vspace*{-20bp}

\begin{abstract}
{\normalsize{}{}{}\thispagestyle{empty} \setcounter{page}{0}}{\normalsize\par}

\noindent Toward explaining the persistence of biased inferences, we propose a framework to evaluate competing (mis)specifications in strategic settings. Agents with heterogeneous (mis)specifications coexist and draw Bayesian inferences about their environment through repeated play. The relative stability of (mis)specifications depends on their adherents' equilibrium payoffs. A key mechanism is the \emph{learning channel}: the endogeneity of perceived best replies due to inference. We characterize when a rational society is only vulnerable to invasion by some misspecification through the learning channel. The learning channel leads to new stability phenomena, and can confer an evolutionary advantage to otherwise detrimental biases in economically relevant applications.   \medskip{}

%\noindent \textbf{Keywords}: misspecified Bayesian learning, endogenous
%misspecifications, evolutionary stability, higher-order beliefs, analogy
%classes\newpage{}
\end{abstract}

\newpage 

\section{Introduction}

In many economic settings, people draw \emph{misspecified inferences}
about the world: that is, they learn from data but start with a prior
belief that dogmatically precludes the true data-generating process.
For instance, past work has documented a number of prevalent
statistical biases. Reasoning about economic fundamentals
under the spell  of these biases  constitutes  misspecified
learning. Following \cite{esponda2016berk}, a growing literature
has focused on the implications of Bayesian learning under different
misspecifications. Most of the  work in this area look at exogenously given misspecifications.

Compared with many other kinds of errors and mistakes, a distinctive
component of misspecified \emph{learning} is that  biased agents use data
to form beliefs about the world. Why and when might misspecified learning
persist, and does the ability to draw inferences enhance
the viability of such mistakes? We study this question in strategic settings, associating the viability of a particular (mis)specification with the objective payoffs of  individuals who adopt it. Our approach to endogenizing misspecified inference contrasts with those involving subjective expectations of payoffs \citep*{olea2019competing,Levyetal2020,gagnon2018channeled}
or goodness-of-fit tests \citep{cho2015learning,cho2017gresham,ba2020,schwartzstein2020using,Lanzani2022}. It also contrasts with work that has used objective payoffs to endogenize misspecified inference in \emph{single-agent} settings \citep*{FL_mutation,FII_welfare_based} or restricted attention to financial markets \citep{sandroni2000markets,massari2020under}.

Our main message is that the \emph{learning channel} --- i.e., the
ability for agents to learn and draw (possibly wrong) inferences from
data --- strictly expands the scope for misspecifications to invade
rational societies in strategic settings. Central to our approach is articulating ways of distinguishing dogmatic beliefs (which are exogenous and do not depend on observed data) from flexible beliefs (which are  endogenously determined in equilibrium). 
We highlight that a rational
society can be immune to \emph{any} invaders who do not
learn from data, yet be vulnerable to some invaders who undertake \emph{strategically
beneficial misinferences}. Also, the mapping between different matching assortativities and the selected biases may be reversed for agents who do not learn from data relative to those who do. 

We find general conditions under which the learning channel enables more invasions, and we also study applications where the  invading misspecification is encoded in economically meaningful and natural biases.
Along similar lines, we examine some tests that guarantee a rational
society will repel invasion by a given group of invaders, provided these opponents do not undertake inference. We find that passing these tests no longer guarantees immunity to invasion when the opponents are misspecified agents who mislearn. Misspecified learners are \emph{polymorphic}: they can appear weak in one environment and
become stronger in another environment in a way impossible for
biased agents with a dogmatic belief and a fixed best response. Due to the learning channel, the misspecified invaders' equilibrium beliefs and equilibrium best-response function depend on details of the environment (e.g., matching assortativity and   population composition).

In applications, we  show how the persistence
of particular biases depends jointly on the social interaction structure, the possibility of learning, and the stage game's payoff structure. All three factors influence the selection of biases, so studying only one factor in isolation may provide an incomplete understanding.

\subsection{Inference and Selecting Misspecified Beliefs about Correlation} \label{sect:LQNintuitionIntro}

To articulate some intuition for why inference can affect the selection of biases, we informally describe one application of our framework. Consider
a linear-quadratic-normal (LQN) game of incomplete information
as the stage game, interpreted as an incomplete-information version of Cournot duopoly. A population of players (firms) match in pairs
every period to play the stage game. The intercept of the demand curve
is drawn i.i.d. across games, and every pair of matched players receive
correlated information about this intercept in their game.
After  observing this signal, players choose a production quantity. The market price depends on the intercept of demand, the quantity choices of the firms, and a price elasticity parameter (which is fixed across matches). 

We suppose that a small fraction of firms hold a dogmatically wrong belief about the  signal correlation and invade a society which has correct beliefs about all game parameters. An important property of this game is that players gain from strategic commitments, and which commitments are valuable depend on assortativity. If entrants are only paired with each other (perfectly assortative matching), 
then they can improve payoffs by committing to more cooperative strategies. If entrants are
paired with the rational incumbents (uniform matching), then committing to more aggressive strategies
can help them obtain more favorable outcomes compared to when incumbents play
each other.  Our contribution is to show that whether a certain biased belief about signal correlation leads to more cooperative or more aggressive play, and hence
whether it will be selected for a given matching assortativity, depends on whether the learning
channel is present.

When learning is absent, an increase in the subjective perception of correlation makes a player choose \emph{less} aggressive strategies. Intuitively, because production quantities are strategic substitutes, a player who believes signals are excessively correlated will produce relatively less following an optimistic signal about demand, expecting the opponent to produce more. But when inference is present, an exaggerated perception of correlation also leads the player to believe that market price is less elastic relative to the truth. This is because the agent overestimates opponent's production and is thus surprised by how little the price adjusts. Inferring a more inelastic price  makes the player choose \emph{more} aggressive strategies. While these forces move in opposite directions, the second effect dominates. Thus, the presence of inference can reverse the conclusion of which misperception outperforms rationality. 

The presence of the learning channel has an even more stark effect on the selection of
errors when the underlying elasticity parameter can take on multiple possible values. In
such cases, a fixed belief about elasticity can be beneficial for some realizations of the true
elasticity parameter but harmful for others. We use this idea to show that generally,
there is some amount of uncertainty under which no entrant with a fixed misperception
about correlation and elasticity can invade a rational society, but some entrant with a biased
belief making flexible inferences from data can do
strictly better than the incumbents.

\subsection{A Framework of Competing Specifications}
In our general framework, we encode specifications in \emph{models} that
delineate feasible beliefs about the stage game. These models serve as the basic unit of cultural
transmission.  The
model's adherents think that one of the model parameters describes the true stage game. They estimate the best-fitting parameter which determines their subjective preference. Models  rise and fall in prominence
based on the objective welfare of adherents, as higher payoffs
confer greater evolutionary success.

When we allow for inference in the example above, the incumbents and the entrants differ in their
perceptions about the signal correlation structure in the stage game. Every firm learns about an aspect of the environment
(price elasticity) through the lens of its model. Firms that believe
in different correlations interpret the same observation differently
when inferring price elasticity, as they make different estimates
about rival firm's production based on their own demand signal. 

Society consists of the adherents of multiple competing models who match up to play the stage game every period. 
We introduce the concept of a \emph{zeitgeist} to capture the
social interaction structure --- the sizes of the
subpopulations with different models and the matchmaking technology
that pairs up opponents to play the game.  Agents can identify which subpopulation their opponent is from, and (correctly) know that the game they play is orthogonal to the type of opponent.\footnote{If the players think that the stage game can change depending on their opponent, then this would give additional channels for biases to invade a rational society. Our framework focuses on how the learning channel that plays a distinctive role in misspecified learning affects the viability of errors.} Our framework assumes that the agents
might face one of several possible games and
therefore richer models can in principle help as they allow agents to adapt their behavior more. Conditional on the stage game, in equilibrium each agent
forms a Bayesian belief about the  game using data from all
of her interactions, and plays a subjective best response against every type of opponent given this belief. 

We define the \emph{evolutionary stability} of model A against model
B based on whether model A has a weakly higher average equilibrium payoff
than model B when the population share of model A is close to 1, with the average taken over the different stage games. This criterion is familiar from past work that use what is known as the \emph{indirect evolutionary approach}. Under this approach, evolution does not directly act on strategies, but rather acts on some trait that determines best responses. While our stability concepts reduce to standard notions under this approach when inference is absent, our contribution is to apply it to the selection of \emph{models} that contain multiple feasible beliefs about the environment.

Indeed, we show that the ability to draw inferences within a model (as opposed to committing to a fixed belief) may be necessary for misspecifications to defeat rationality. In Section \ref{sec:TheoryOverPref}, we  characterize environments where the correctly specified model is only evolutionarily fragile against invading models that allow for inferences.  Our argument constructs an optimal misspecified model for invading a rational society. This misspecification resembles an ``illusion of control'' bias, where agents think the outcomes they get in a game only depend on their own strategy and not on the opponent's strategy. The model has the property that its adherents end up adopting
the optimal commitment against a correctly specified opponent game-by-game. Misinference thus becomes a channel to tailor commitments to the true game. The correctly specified model  is  evolutionarily fragile   against this misspecified model  with uniform matching, unless the former already gets the Stackelberg payoff in every game.

More generally, one can ask whether misspecified models can exhibit different  stability properties than distorted preferences   in our framework. Our next two results say that misspecified models are more \emph{polymorphic}: they can appear weak against rational incumbents in one environment and yet grow stronger and successfully invade the rational society in another environment, in a way that is impossible for invaders with a fixed subjective preference. The reason is that due to the learning channel, an adherent of a misspecified model may come to hold different beliefs about parameters of the underlying stage game, and thus adopt different best-reply functions, when facing game outcomes generated from different strategy profiles. Thus, changes in the population structure and matching process can influence perceived best replies for adherents of misspsecified models.

Polymorphism enables a new stability phenomenon that we call  \emph{stability reversals}. Two models exhibit stability reversal
if:
\begin{enumerate}
    \item whenever
model  A is dominant,  its adherents strictly outperform model  B's adherents
not only on average, but even conditional on opponent's type; and
    \item whenever model  B is dominant, its adherents strictly outperform
model A's adherents on average
\end{enumerate} In the absence of inference, condition (1) would imply that A outperforms B regardless of the two subpopulations' sizes. But this no longer holds when inference is possible. The reason is that the adherents of model  B might make an evolutionarily advantageous inference only when they are matched up with each other sufficiently often.  Thus, even if condition (1) held, model B might still drive out  model A if model B adherents reach some critical mass.

Polymorphism also manifests in a non-monotonicity of stability with respect to matching assortativity. As discussed in \cite{AlgerWeibull2013}, the assortativity parameter can represent  degree of homophily in the society or frequency of interaction with kin.  Various versions of the idea that high assortativity selects for cooperative agents and low assortativity selects for competitive ones date  back to at least \cite{Hamiltonb,Hamiltona}. But this simple dichotomous perspective becomes complicated with misspecifications. Because the adherents of a misspecified model can draw different misinferences about a fixed game's parameters when facing data  generated by different opponent actions,  one model may be favored over another only at \emph{intermediate} levels of assortativities, but not favored at  either very low or very high levels. Thus, a particular bias might only survive in moderately homophilous societies --- a novel empirical implication of misspecified inference.

\section{Environment and Stability Concept \label{sec:Environment-and-Stability}}

We start with our formal stability
concept, defining \emph{equilibrium
zeitgeist} to determine the evolutionary fitness of specifications that coexist in a society. We consider a separate notion, \emph{equilibrium zeitgeist
with strategic uncertainty}, in Section \ref{sec:ABEE}, when we allow agents to draw inferences about others' strategies in addition to learning about the fundamentals. Online Appendix
\ref{sec:Learning-Foundation} provides a combined learning foundation for both equilibrium concepts, but in the main text we primarily focus on the steady-state characterization.

\subsection{\label{subsec:Objective-Primitives}Objective Primitives}

A population of agents repeatedly match to play a stage game, which is a symmetric two-player game with a common, metrizable strategy
space $\mathbb{A}$. There is a set of possible
states of nature $G\in\mathcal{G}$, called \emph{situations}. The
strategy choices $a_{i},a_{-i}\in\mathbb{A}$ of $i$ and $-i$, together
with the situation, stochastically generate consequences $y_{i},y_{-i}\in\mathbb{Y}$
from a metrizable space $\mathbb{Y}$. Each $i$'s consequence $y_{i}$
determines her utility, according to a common utility function $\pi:\mathbb{Y}\to\mathbb{R}$.
The objective distribution over consequences is $F^{\bullet}(a_{i},a_{-i},G)\in\Delta(\mathbb{Y}),$ with an associated density or probability mass function associated denoted by $f^{\bullet}(a_{i},a_{-i},G),$ where $f^{\bullet}(a_{i},a_{-i},G)(y)\in\mathbb{R}_{+}$
for each $y\in\mathbb{Y}$. We suppress $G$ from $f^{\bullet}$ and
$F^{\bullet}$ when $|\mathcal{G}|=1$.

This setup captures mixed strategies (if $\mathbb{A}$
is the set of mixtures over some pure actions), incomplete-information
games (if $S$ is a space of private signals, $A$ a space of actions,
and $\mathbb{A}=A^{S}$ is the set of signal-contingent actions), and even asymmetric games. For the latter, we consider the ``symmetrized'' version where each player is placed into each role with equal probability (see Section \ref{sec:ABEE} for one application where agents play an asymmetric game).

\subsection{Models and Parameters}

Throughout this paper, we will take the strategy space $\mathbb{A},$
the set of consequences $\mathbb{Y},$ and the utility function over
consequences $\pi$ to be common knowledge among the agents. But,
agents are unsure about how play in the stage game translates into
consequences: that is, they have \emph{fundamental uncertainty} about
the function $(a_{i},a_{-i})\mapsto F^{\bullet}(a_{i},a_{-i},G).$

We focus on the case where society consists of two observably distinguishable groups
of agents, A and B, who may behave differently in the stage game due
to different beliefs about how $y$ is generated. The two groups of agents entertain different \emph{models} of the
world that help resolve  their fundamental uncertainty. A model $\Theta$ is a collection of data-generating processes  $F:\mathbb{A}^{2}\to\Delta(\mathbb{Y})$ about
how strategy profiles translate into consequences for the agent, with different processes corresponding to different \emph{parameters} of the model. Each $F$ has associated with it a density or probability mass function $f(a_{i},a_{-i}):\mathbb{Y}\to\mathbb{R}_{+}$
for every $(a_{i},a_{-i})\in\mathbb{A}^{2}$. We thus view each model as a subset
of $(\Delta(\mathbb{Y}))^{\mathbb{A}^{2}}$ and we   assume it is
metrizable. 

Each agent enters society with a persistent model, which
depends entirely on whether she is from group A or group B. We refer to the agents who are endowed with a given model the \emph{adherents} of that model. Each agent dogmatically believes that in every situation $G \in \mathcal{G}$, one of the parameters of her model accurately represents the stage game. We call
 $\Theta=\{F^{\bullet}(\cdot,\cdot,G):G\in\mathcal{G}\}$  the  \emph{minimal correctly specified }model. A model may exclude the true $F^{\bullet}(\cdot,\cdot,G)$
that produces consequences, at least in some situation $G$. In this case, the model is
\emph{misspecified}.  

\subsection{\label{subsec:ZeitDef}Zeitgeists}

To study competition between two models, we must describe the social
composition and interaction structure in the society where learning
takes place. We have in mind a setting where each agent plays the stage game with a random opponent
in every period and uses her personal experience in these matches
to calibrate the most accurate parameter within her model.  A zeitgeist describes the corresponding landscape.
\begin{defn}
\label{def:Zeitdef}Fix models $\Theta_{A}$ and $\Theta_{B}$.
A \emph{zeitgeist $\mathfrak{Z}=(\mu_{A}(G),\mu_{B}(G),p,\lambda,a(G))_{G\in\mathcal{G}}$
}consists of: (1) for each situation $G,$ a belief over parameters for
each model, $\mu_{A}(G)\in\Delta(\Theta_{A})$ and $\mu_{B}(G)\in\Delta(\Theta_{B})$;
(2) relative sizes of the two groups in the society, $p=(p_{A},p_{B})$
with $p_{A},p_{B}\ge0,$ $p_{A}+p_{B}=1$; (3) a matching assortativity
parameter $\lambda\in[0,1]$; (4) for each situation $G,$ each group's
strategy when matched against each other group, $a=(a_{AA}(G),a_{AB}(G),a_{BA}(G),a_{BB}(G))$
where $a_{g,g^{'}}(G)\in\mathbb{A}$ is the strategy that an adherent
of $\Theta_{g}$ plays against an adherent of $\Theta_{g^{'}}$ in
situation $G$.
\end{defn}
A zeitgeist outlines the beliefs and interactions among agents with
heterogeneous models living in the same society. Part (1) captures the beliefs of each group. Parts (2) and (3) determine
social composition and social interaction---the relative prominence
of each model and the probability of interacting with one's own group
versus with the overall population. In each period, $\lambda$ is the probability an agent's opponent is from her own group,
and $1-\lambda$ is the probability the opponent is drawn uniformly from the population.
Therefore, an agent from group $g$ has probability
$\lambda+(1-\lambda)p_{g}$ of being matched with an opponent from
her own group, and a complementary chance of being matched with an
opponent from the other group. Part (4) describes behavior in the
society. Note that a zeitgeist describes each group's situation-contingent
belief and behavior, since agents may infer different parameters and thus
adopt different subjective best replies in different situations.

\subsection{\label{subsec:EZDef}Equilibrium Zeitgeists}
 A model's fitness corresponds to the equilibrium payoffs of its adherents. An equilibrium zeitgeist (EZ) imposes optimality conditions on 
inference  and behavior in a zeitgeist. Optimality of behavior requires each player to best respond
given her beliefs, and optimality of inference requires that the support
of each player's belief only contains the ``best-fitting'' parameter
from her model in the sense of minimizing Kullback-Leibler (KL) divergence.

We now formalize this criterion. For two distributions over consequences, $\Phi,\Psi\in\Delta(\mathbb{Y})$
with density or probability mass functions $\psi,\phi$,
define the KL divergence from $\Psi$ to $\Phi$ as $D_{KL}(\Phi\parallel\Psi):=\int\phi(y)\ln\left(\frac{\phi(y)}{\psi(y)}\right)dy$.
Recall that every data-generating process $F$, like the true fundamental $F^{\bullet}(\cdot, \cdot, G)$, outputs a distribution over consequences for
every profile of own play and opponent's play, $(a_{i},a_{-i})\in\mathbb{A}^{2}$.
For data-generating process $F,$ let $K(F;a_{i},a_{-i},G):=D_{KL}(F^{\bullet}(a_{i},a_{-i},G)\parallel F(a_{i},a_{-i}))$
be the KL divergence from the expected distribution
$F(a_{i},a_{-i})$ to the objective distribution $F^{\bullet}(a_{i},a_{-i},G)$
under the play $(a_{i},a_{-i})$ and situation $G$. For a distribution $\mu$ over parameters, let $U_{i}(a_{i},a_{-i};\mu)$
represent $i$'s subjective expected utility under the belief that
the true parameter is drawn according to $\mu.$ That is, $U_{i}(a_{i},a_{-i};\mu):=\mathbb{E}_{F\sim\mu}(\mathbb{E}_{y\sim F(a_{i},a_{-i})}[\pi(y)])$.
\begin{defn}
\label{def:EZ}A zeitgeist $\mathfrak{Z}=(\mu_{A}(G),\mu_{B}(G),p,\lambda,a(G))_{G\in\mathcal{G}}$
is an\emph{ equilibrium zeitgeist (EZ)} if, for every $G\in\mathcal{G}$
and $g,g^{'}\in\{A,B\},$ $a_{g,g^{'}}(G)\in\underset{a_{i}\in\mathbb{A}}{\arg\max}\ U_{i}(a_{i},a_{g^{'},g}(G);\mu_{g}(G))$
and, for every $g\in\{A,B\},$ belief $\mu_{g}(G)$ is supported on
\begin{align*}
\underset{F\in\Theta_{g}}{\arg\min}\left\{ (\lambda+(1-\lambda)p_{g})\cdot K(F;a_{g,g}(G),a_{g,g}(G),G)+(1-\lambda)(1-p_{g})\cdot K(F;a_{g,-g}(G),a_{-g,g}(G),G)\right\} 
\end{align*}
where $-g$ means the group other than $g$.
\end{defn}

Plainly, this definition requires agents from group $g$ to choose a subjective best response
against their opponents, given the belief $\mu_{g}$
about the fundamental uncertainty. No matter which group the agent is matched against, these choices are always made to selfishly maximize her individual
subjective utility function. Each agent's belief $\mu_{g}$ is supported on the parameters in her model
that minimize a weighted KL-divergence objective in situation $G$,
with the data from each type of match weighted by the probability
of confronting this type of opponent. The use of KL-divergence minimization as the inference procedure is standard in the misspecified Bayesian learning literature, as in \cite{esponda2016berk}. We note that here we assume inference occurs separately across situations. This reflects situation persistence, with agents having enough data to establish new beliefs and behavior if the situation were to change. Our learning foundation in Online Appendix \ref{sec:Learning-Foundation} justifies this situation-by-situation updating, but we omit the details here as it otherwise plays no role in our results. 

\subsection{Evolutionary Stability of Models}

Given a distribution $q\in\Delta(\mathcal{G})$ and an EZ, we
define the \emph{fitness} of each model as the expected objective payoff of its adherents in the EZ when $G$ is drawn according
to $q$. We have in mind an evolutionary story where the relative success
of the two models depends on their
relative fitness, so that one model is more successful if the objective expected payoffs are higher. Given this criterion, our question of interest is: Can the adherents of a \emph{resident
model} $\Theta_{A}$, starting at a position of social prominence,
always repel an invasion from a small $\epsilon$ mass of agents who
adhere to a \emph{mutant model} $\Theta_{B}$?

Evolutionary stability depends on the fitness of models $\Theta_{A},\Theta_{B}$
in EZs with $p_{A}=1, p_B=0$. But it is motivated by the invasion of
a small but strictly positive population of model $\Theta_{B}$ adherents
into an otherwise homogeneous society of model $\Theta_{A}$ adherents.
Below, we directly analyze EZs with $p=(1,0)$, but
note that these EZs can be written as the limit of EZs where the population
share of $\Theta_{B}$ is positive but approaching 0. Online
Appendix \ref{sec:Existence-and-Continuity} provide conditions for
the existence of an EZ with $p=(1,0)$ and to ensure that any limit
of EZs with positive but diminishing fraction of $\Theta_{B}$ remains
an EZ with $p=(1,0)$. 

\begin{defn}
\label{def:stability} Say $\Theta_{A}$ is \emph{evolutionarily stable
{[}fragile{]} }against $\Theta_{B}$ under $\lambda$-matching if
there exists at least one EZ with models $\Theta_{A},\Theta_{B}$,
$p=(1,0),$ matching assortativity $\lambda$ and, in all such EZs,
$\Theta_{A}$ has a weakly higher {[}strictly lower{]} fitness than
$\Theta_{B}$.
\end{defn}

Evolutionary stability is when $\Theta_{A}$ has higher fitness than
$\Theta_{B}$ in all EZs, and evolutionary fragility is when $\Theta_{A}$
has lower fitness in all EZs.\footnote{If
the set of EZs is empty, then $\Theta_{A}$ is neither evolutionarily stable
nor evolutionarily fragile against $\Theta_{B}.$} These two cases give sharp predictions about whether a small share
of mutant-model invaders might grow in size, across all equilibrium
selections. A third possible case, where $\Theta_{A}$ has lower fitness
than $\Theta_{B}$ in some but not all  EZs, correspond to a situation
where the mutant model may or may not grow in the society, depending
on the equilibrium selection.

\subsection{Discussion} 

Our model applies the ``indirect evolutionary approach'' framework (see \citet{robson2011evolutionary}) to settings where agents can draw inferences (especially misspecified inferences). Suppose $\Theta = \{F\}$ is
a singleton model that only contains one parameter. Then $\Theta$ also determines preferences in the stage game with subjective
utility function $(a_{i},a_{-i})\mapsto\mathbb{E}_{y\sim F(a_{i},a_{-i})}[\pi(y)]$. In this special case, our equilibrium and stability concepts coincide with those used in an existing literature that studies which preferences are selected by evolution (see, for instance, \citet{Alger2019survey} for a survey). Models are more general than preferences in that agents may adapt their beliefs (which determine their subjective preferences) endogenously. The reason we introduce \emph{zeitgeists} is, relative to other evolutionary frameworks, ours requires beliefs about the data generating process, $\mu$, to be incorporated. Allowing for multiple situations is the most direct way for inference itself to be beneficial, although one could alo study settings with multiple situations without inference (e.g., \citet{GuthNapel2006}). 

An important assumption is that agents (correctly)
believe the economic fundamentals (represented by $G$) do not vary
depending on which group they are matched against. That is, the mapping
$(a_{i},a_{-i})\mapsto\Delta(\mathbb{Y})$ describes the stage game
that they are playing, and agents know that they always play the same
stage game even though opponents from different groups may use different
strategies in the game. As a result, the agent's experience in games
against both groups of opponents jointly resolve the same fundamental
uncertainty about the environment.\footnote{We note that play between two groups
$g$ and $g^{'}$ is not a Berk-Nash equilibrium \citep{esponda2016berk}, since adherents from one group draw inferences about the game's
parameters from the matches against the other group, which may adopt a
different strategy. A Berk-Nash equilibrium between groups $g$ and $g'$ would require inferences to \emph{only} be made from data generated in the match between
$g$ and $g'$.} If adherents were able to believe the fundamentals changed depending on their opponent, then this would give a trivial way for in-group preferences to emerge and also trivialize the question of which errors could invade. 

We comment on some other modelling assumptions.  First, our framework assumes that agents can identify which group their matched opponent
belongs to, though we do not assume that agents know the data-generating processes contained
in other models or that they are capable of making inferences using
other models. Observability assumptions are common in the literature on the indirect evolutionary approach; see \cite{Alger2019survey} and \cite{Dekeletal2007} for discussions. While there are a number of ways it can be relaxed, we expect the main insights to carry through given sufficient observability. In our context, one key assumption which makes our approach tractable is that players do not change their inferences in response to seeing their opponents' actions. In other words, players do not necessarily try to ``read into'' what others do when learning. This particular assumption seems plausible in many cases, as the inference problem on its own may be rather complex even before considering such higher-order inferences.  Consider hedge funds that regularly
trade against each other in a variety of settings. Funds hold differing philosophies, with some focusing on fundamental analysis and
others on technical analysis.\footnote{In practice, each fund's model about the financial market is well
known to other market participants, as it is always prominently marketed
to their clients.} But, simply observing another fund's actions would not lead a technical
analyst to embrace efficient markets, or vice versa. Both fundamental analysis and technical analysis are complex
forecasting systems that involve calibrating sophisticated models
and take many years of training and experience to master. In settings such as these,
agents need not know how others' models work even after identifying who they are. 

Second, EZs as presented abstract away from the issues surrounding learning others'
strategies. However, we study
an extension in Section \ref{sec:ABEE} allowing agents to be misspecified
about others' strategies and hold wrong beliefs about these strategies
in equilibrium.

Lastly, even as agents adjust their beliefs and behavior to achieve optimality, population proportions $p_{A}$ and $p_{B}$
remain fixed. We imagine a world where the relative prominence of
models change much more slowly than the rate of convergence to an
EZ. This assumption about
the relative rate of change in the population sizes follows the previous
work on evolutionary game theory (See \citet*{Sandholm2001Preference}
or \citet*{Dekeletal2007}).

\section{Learning Channel and New Stability Phenomena\label{sec:new_stability_phenomena}}

The main novelty of our framework relative to past work on the indirect evolutionary approach is that agents maximize endogenously determined subjective preferences, not exogenously fixed ones. The \emph{learning channel} refers to this endogenous preference formation,  and this section discusses how this mechanism leads to new stability phenomena. 

The idea that agents' personal experiences (and more broadly, the environments that generate these experiences) shape their preferences \emph{beyond} their individual characteristics is empirically well documented. For instance,  recent work studying attitudes toward immigrants (\cite{ThomasPaper}) or attitudes among immigrants (\cite{BrianPaper})  find that variation in a person's environment---plausibly independent from individual characteristics---can considerably influence their political behavior and preferences. In an experiment with Indian men, \cite{Lowe2021} finds that favoritism for one's own caste changes in response to cross-caste contacts, in a way that depends on whether interactions are competitive or cooperative. Our framework derives the implications of these kinds of preference-formation mechanism on the stability of misspecified models.

Misspecified models, unlike correctly specified models or dogmatic preferences, are \emph{polymorphic}: a given model can induce different preferences through the learning channel in different environments. Our framework thus gives a natural setting where environment shapes preference and lets us ask about its implications. We show that this polymorphism strictly expands the possibility of invading rational societies, and it also makes models that seem evolutionarily unfit in one environment surprisingly strong invaders in other environments. We also show how accommodating feedback changes the predictions of the evolutionary framework. We show that the learning channel can suggest invasions under assumptions prohibiting it with exogenous preferences, and lead to greater indeterminacy in stable outcomes. 

\subsection{\label{subsec:When-Is-the}Necessity of the Learning Channel
for Fragility of Rationality}

\label{sec:TheoryOverPref}

Our first result characterizes when a misspecified model can \emph{only} invade a rational society when inference is possible, due to the gain achieved via adapting preferences to the relevant situation. The following example illustrates: 

\begin{example}
\label{exa:only_theory_invades} Suppose there are two situations, $G_{A}$ and $G_{B}$, which are equally likely, and consequences  $\text{\ensuremath{\mathbb{Y}}}=\{g,b\},$
with $u(g)=1$ and $u(b)=0.$ Suppose that the probability a given player obtains $y$ given an action profile and situation is determined by the table below. \begin{center}
\begin{tabular}{|c|c|c|c|}
\hline 
{\small{}$G_{A}$} & {\small{}$a_{1}$} & {\small{}$a_{2}$} & {\small{}$a_{3}$}\tabularnewline
\hline 
\hline 
{\small{}$a_{1}$} & {\small{}0.1, 0.1} & {\small{}0.1, 0.1} & {\small{}0.1, 0.11}\tabularnewline
\hline 
{\small{}$a_{2}$} & {\small{}0.1, 0.1} & 0.3, 0.3 & {\small{}0.1, 0.1}\tabularnewline
\hline 
{\small{}$a_{3}$} & {\small{}0.11, 0.1} & {\small{}0.1, 0.1} & {\small{}0.2, 0.2}\tabularnewline
\hline 
\end{tabular}{\small{} \qquad{}}%
\begin{tabular}{|c|c|c|c|}
\hline 
{\small{}$G_{B}$} & {\small{}$a_{1}$} & {\small{}$a_{2}$} & {\small{}$a_{3}$}\tabularnewline
\hline 
\hline 
{\small{}$a_{1}$} & {\small{}0.11, 0.11} & {\small{}0.5, 0.5} & {\small{}0.12, 0.4}\tabularnewline
\hline 
{\small{}$a_{2}$} & {\small{}0.5, 0.5} & {\small{}0.12, 0.12} & {\small{}0.14, 0.55}\tabularnewline
\hline 
{\small{}$a_{3}$} & {\small{}0.4, 0.12} & {\small{}0.55, 0.14} & {\small{}0.4, 0.4}\tabularnewline
\hline 
\end{tabular}
\par\end{center}
Taking $\lambda=0,$ we show the correctly specified model is not evolutionarily
fragile against any singleton mutant model $\Theta=\{F\}$. Indeed, the minimal correctly specified model obtains objective fitness .35 if  $(a_{2},a_{2})$ in situation $G_{A}$ and  $(a_{3},a_{3})$ in situation $G_{B}$ are played, as these are Nash equilibria. But under the singleton model $\{F\}$, one of the three must hold: 

\begin{itemize} 
\item If $a_{3}$ is a best response to $a_{3}$ under $F$, there is an EZ where $(a_{3},a_{3})$ is always the outcome, and the expected fitness is $.3 < .35$

\item If $a_{2}$ is a best response to $a_{3}$ under $F$, there is an EZ where $(a_{2},a_{3})$ is played by the mutant and resident in $G_{B}$, so the mutant's payoff is at most $\frac{1}{2}.3+\frac{1}{2}.14 < .35$

\item If $a_{1}$ is a best response to $a_{3}$ under $F$, then there is an EZ where $(a_{1},a_{3})$ is played by the mutant and resident in $G_{A}$, so the mutant's payoff is at most $\frac{1}{2}.1+ \frac{1}{2}.55 < .35$.
\end{itemize}

\noindent Thus, the minimal correctly
specified model is not evolutionarily fragile against any singleton. However, consider the misspecified model $\Theta= \{F_{A},F_{B}\}$, where both $F_{A}$ and $F_{B}$ depend only on one's own strategies and not the opponent's. Under $F_{A},$ $a_{1},a_{2},a_{3}$ lead to consequence $g$ with probabilities 0.1, 0.3, and 0.2 respectively. Under $F_{B},$ playing $a_{1},a_{2},a_{3}$ lead to consequence $g$ with probabilities 0.5, 0.14, and 0.4 respectively.

The resident minimal correctly specified model is evolutionarily fragile
against this misspecified model. Note that the mutants never choose $a_{3}$, since this is dominated under both $F_{A}$ and $F_{B}$. Next, note that mutants would play $a_{2}$ when believing $F_{A}$ and $a_{1}$ when believing $F_{B}$.  We show these mutants play $a_{2}$ in $G_{A}$ and $a_{1}$
in $G_{B}$ against the resident. Indeed, if mutants were to play $a_{1}$ in situation $G_{A}$, the  correctly specified residents would
best respond with $a_{3}$ in $G_{A}$. The mutants then learn $F_{A}$ in
$G_{A}$, and would then deviate
to $a_{2}.$ If mutants play $a_{2}$ in situation $G_{B},$ once
again the residents best respond with $a_{3}$ in $G_{B}$, and
the mutants learn $F_{B}$. But under $F_{B},$ the mutants
believe they should deviate to $a_{1}.$ These arguments rule out
all other EZ behavior, so the mutants must play $a_{2}$ in $G_{A}$
and $a_{1}$ in $G_{B}$. In this EZ, mutant fitness is $(1/2).3+(1/2).5=.4>.3$, higher than the resident's fitness

\end{example}

The previous example feature two notable features: (1) A misspecification resembling an ``illusion of control'' whereby individuals believe consequences only depend on their own actions, and 
(2) Inferences leading to a belief that a desirable action is dominant, in each situation. Models of this form allow us to determine when the ability to draw misinferences strictly
expands the scope for invasion against rationality. Intuitively, if mutants can adopt
the optimal commitment situation-by-situation, then the learning
channel allows the mutants to tailor their commitment. But a mutant with only one model (i.e.,
an exogenous subjective preference) lacks the flexibility to play
differently in different situations.

Some notation is needed to state the general result. Consider an arbitrary situation $G$. We let $v_{G}^{\text{NE}}\in\mathbb{R}$ be the highest symmetric
Nash equilibrium payoff in $G$, when agents choose
strategies from $\mathbb{A}$. For each $a_{i}\in\mathbb{A}$,
we let $\underline{\text{BR}}(a_{i},G)$ be a rational best response
against the strategy $a_{i}$ in situation $G,$ breaking ties \emph{against}
the user of $a_{i}$. Let $\bar{v}_{G}\in\mathbb{R}$ be the Stackelberg
equilibrium payoff in situation $G$, breaking ties against the
Stackelberg leader, i.e.,

\begin{equation}
\bar{v}_{G}:=\max_{a_{i}}U_{i}(a_{i},\underline{\text{BR}}(a_{i},G),F^{\bullet}(G)).\label{eq:stackleberg}
\end{equation}

\noindent Call the strategy $\bar{a}_{G}$ that maximizes Equation
(\ref{eq:stackleberg}) the Stackelberg strategy in situation $G$.
We assume the Stackelberg strategy is unique in each situation, and
furthermore there is a unique rational best response to $\bar{a}_{G}$ in each situation $G',$ where possibly $G\ne G'$. Finally, let $v_{G}^{b}$
denote the worst equilibrium payoff of an agent with the subjective best-response
correspondence $b$ when she plays against a rational opponent in
situation $G.$\footnote{More formally, given correspondence $b:\mathbb{A}\rightrightarrows\mathbb{A}$,
let $v_{G}^{b}\in\mathbb{R}$ be defined as $i$'s lowest payoff across
all strategy profiles $(a_{i},a_{-i})$ such that $a_{i}\in b(a_{-i})$
and $a_{-i}$ is a rational response to $a_{i}$ in situation $G.$
If no such profile exists, let $v_{G}^{b}=-\infty.$} 

We impose two identifiability conditions: 
\begin{defn}
\emph{Situation identifiability} is satisfied if for every $a_{i},a_{-i}\in\mathbb{A}$
and $G\ne G',$ we have $F^{\bullet}(a_{i},a_{-i},G)\ne F^{\bullet}(a_{i},a_{-i},G').$
\emph{Stackelberg identifiability} is satisfied if whenever $G\ne G'$
and $a_{-i}$, $a_{-i}'$ are rational best responses to $\bar{a}_{G}$
in situations $G$ and $G'$, we have $F^{\bullet}(\bar{a}_{G},a_{-i},G)\ne F^{\bullet}(\bar{a}_{G},a_{-i}',G')$.
\end{defn}

Under situation identifiability, a minimal correctly specified agent can identify
the true situation. Under Stackelberg identifiability, playing $\bar{a}_{G}$ in situation $G$ leads to different consequences
than playing the same strategy in situation $G'\ne G$,
provided the opponent chooses the rational best response to the strategy. We can now state our result.
\begin{thm}
\label{thm:theoryneeded} Suppose $\lambda=0$, there are finitely
many situations, and there is a symmetric Nash equilibrium in $\mathbb{A}\times\mathbb{A}$
for every situation $G$.
\begin{enumerate}
\item If there is no point $(u_{G})_{G\in\mathcal{G}}$ in the convex hull
of $\{(v_{G}^{b})_{G\in\mathcal{G}}\mid b:\mathbb{A}\rightrightarrows\mathbb{A}\}$
with the property that $u_{G}\ge v_{G}^{\text{NE}}$ for every $G\in\mathcal{G},$
then there exists a full-support distribution $q\in\Delta(\mathcal{G})$ so
that the correctly specified model is not evolutionarily fragile
against any singleton model.
\item If $v_{G}^{\text{NE}}<\bar{v}_{G}$ for some $G$, situation identifiability
and Stackelberg identifiability hold, and there are finitely many
strategies, then there exists a model $\hat{\Theta}$ such that the
correctly specified model is evolutionarily fragile against $\hat{\Theta}$
under any full-support distribution $q\in\Delta(\mathcal{G})$.
\end{enumerate}
\end{thm}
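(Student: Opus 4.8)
The plan is to prove the two parts separately, as they use quite different ideas. For Part 1, the idea is that the hypothesis precisely says no convex combination of the ``worst payoffs against a rational opponent'' profiles $(v_G^b)_G$ dominates the Nash-payoff profile $(v_G^{\text{NE}})_G$. I would invoke a separating-hyperplane argument: since the closed convex hull of $\{(v_G^b)_{G\in\mathcal{G}} \mid b:\mathbb{A}\rightrightarrows\mathbb{A}\}$ does not meet the upper-orthant cone anchored at $(v_G^{\text{NE}})_G$, there is a nonzero vector $q\in\mathbb{R}^{\mathcal{G}}$ separating them; one checks $q$ can be taken with strictly positive entries (because the cone includes all downward directions, forcing $q\ge 0$, and a boundary-case perturbation argument upgrades this to strictly positive, possibly after shrinking). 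This $q$, normalized to a distribution, is the desired full-support prior. Then for any singleton mutant theory $\Theta=\{F\}$ with subjective best-response correspondence $b_F$, I would construct an EZ in which the residents correctly learn each situation (possible because each $G$ has a symmetric Nash equilibrium in $\mathbb{A}\times\mathbb{A}$ and $\lambda=0$ means the mutant mass is negligible, so residents essentially only see within-group data), the residents play the highest symmetric Nash equilibrium against each other achieving $v_G^{\text{NE}}$, and the mutant plays against residents in a way that realizes the \emph{worst} payoff $v_G^b$ consistent with $b_F$ and rational resident response. The separation inequality then gives $\sum_G q(G) v_G^{\text{NE}} > \sum_G q(G) v_G^{b_F}$ (with strict inequality secured by the full support and the strictness in the separation), so the resident's fitness strictly exceeds the mutant's in this EZ — hence the correctly specified theory is not evolutionarily fragile. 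The delicate points here are (i) arguing the separating vector is genuinely full support rather than merely nonnegative, and (ii) verifying that the candidate resident-mutant profile really is an EZ — in particular that the residents' KL-minimizing beliefs pin down the true situation under \emph{every} profile, which is where I would not yet need situation identifiability since with $\lambda=0$ and $p=(1,0)$ the residents' inference is driven by within-group play at a symmetric Nash profile.

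For Part 2, the plan is to construct $\hat\Theta$ explicitly as the ``illusion of control'' theory from Example \ref{exa:only_theory_invades}: one model $F_G$ per situation $G$, where $F_G(a_i,a_{-i})$ is defined so that the distribution of consequences depends only on $a_i$, and is calibrated so that (a) under $F_G$ the unique subjective best response is the Stackelberg strategy $\bar a_G$, and (b) $F_G$ predicts, for own-action $\bar a_G$, exactly the objective consequence distribution that arises when the opponent plays the rational best response to $\bar a_G$ in situation $G$ — i.e.\ $f_G(\bar a_G,\cdot)(y)$ matches $f^{\bullet}(\bar a_G, \underline{\text{BR}}(\bar a_G,G), G)(y)$. (For other own-actions $a_i\ne\bar a_G$ one just needs $F_G$'s predicted payoff to be low enough that $\bar a_G$ strictly dominates within the model; there is freedom here, e.g.\ make every other action predict the minimal payoff.) I would then show: in any EZ with $p=(1,0)$, $\lambda=0$, the residents correctly learn each situation (same reasoning as Part 1), so against the mutant in situation $G$ a resident plays $\underline{\text{BR}}(a^{\text{mut}}_G, G)$. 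The mutant's belief in situation $G$ is supported on the $F_{G'}$ minimizing KL divergence to the data she generates, namely the data $f^{\bullet}(a^{\text{mut}}_G, \underline{\text{BR}}(a^{\text{mut}}_G,G), G)$. The heart of the argument is a fixed-point/consistency check: the unique consistent configuration is $a^{\text{mut}}_G=\bar a_G$ with belief $F_G$, because (1) if the mutant believes $F_{G'}$ she plays $\bar a_{G'}$; (2) for her belief in $F_{G'}$ to survive KL-minimization when she actually plays $\bar a_{G'}$ in true situation $G$ and the resident best-responds, Stackelberg identifiability forces $G'=G$ (the model $F_{G'}$ predicts the true-$G'$-with-rational-response distribution for action $\bar a_{G'}$, which by Stackelberg identifiability differs from the true-$G$ distribution unless $G'=G$, so a different model would achieve strictly smaller KL — here I also use situation identifiability to rule out other models $F_{G''}$ being tied); and (3) when $G'=G$, the KL divergence is exactly zero, so $F_G$ is indeed the strict KL-minimizer and $\bar a_G$ is indeed the subjective best response, closing the loop. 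Hence in every EZ the mutant gets exactly $\bar v_G$ in situation $G$ while the resident gets at most $v_G^{\text{NE}}$ in situation $G$ (the resident's against-resident payoff is a symmetric Nash payoff $\le v_G^{\text{NE}}$, and with $\lambda=0,\ p=(1,0)$ the resident's fitness is the against-resident payoff in the limit); since $v_G^{\text{NE}}<\bar v_G$ for some $G$ and $v_G^{\text{NE}}\le \bar v_G$ always (a Nash payoff cannot exceed the Stackelberg payoff), the mutant's fitness strictly exceeds the resident's under any full-support $q$, giving evolutionary fragility.

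I expect the main obstacle to be the EZ-uniqueness argument in Part 2, step (2): ruling out ``spurious'' equilibria in which the mutant plays some non-Stackelberg action sustained by a belief on a mismatched model. This is exactly where both identifiability conditions do their work, and the argument must carefully track that the mutant's KL objective, with $\lambda=0$ and negligible mutant mass, is driven entirely by the within-mutant-group data — wait, no: with $p=(1,0)$ the mutant is the vanishing group, so the mutant's data is essentially \emph{all} from matches against residents, which is the regime where the resident best-responds and Stackelberg identifiability is stated. One must also handle the tie-breaking conventions baked into $\underline{\text{BR}}$ and $\bar v_G$ consistently — these are set up so that the constructions go through, but the verification that $\bar a_G$ remains a strict within-model best response (not merely a weak one) under $F_G$ needs the model to be built with a little slack, which is the one genuinely free design choice in $\hat\Theta$. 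A secondary obstacle is making precise the ``residents correctly learn the situation'' claim as a limiting statement about EZs with vanishing mutant share, but Online Appendix \ref{sec:Existence-and-Continuity} is cited for exactly this upper-hemicontinuity issue, so I would lean on it rather than reprove it.
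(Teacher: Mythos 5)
Your Part 1 is essentially the paper's argument (separate $v^{\text{NE}}$ from the downward closure of the convex hull, take the separating vector as the prior, and exhibit an EZ in which residents play the best symmetric Nash profile and the mutant-vs-resident match realizes $v_G^{b}$), and the points you flag are the right ones; the only omission is the degenerate case in which no profile $(a_i,a_{-i})$ with $a_i\in b(a_{-i})$ and $a_{-i}$ a rational reply exists (so $v_G^{b}=-\infty$), which the paper handles by noting that then no EZ exists and non-fragility holds vacuously.

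Part 2, however, has a genuine gap, and it sits exactly at the step you yourself identify as the heart of the argument. You pin down each model $F_G$ only at the single own-action $\bar a_G$ and leave the predictions at all other own-actions free (``make every other action predict the minimal payoff''). With that construction, step (2) fails: suppose the mutant believes $F_{G'}$ and plays $\bar a_{G'}$ in true situation $G\ne G'$, with the resident best responding. Stackelberg identifiability does give $F_{G'}$ strictly positive KL divergence against the data, but your contradiction requires some \emph{other} model in $\hat\Theta$ to fit the data strictly better, and under your design there is no such guarantee: $F_G$'s prediction at own-action $\bar a_{G'}$ is an arbitrary ``low payoff'' distribution, which may fit far worse (even infinitely badly) than $F_{G'}$ does. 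So spurious EZs in which the mutant locks onto a mismatched model and plays a non-Stackelberg action are not ruled out, and in such an EZ the mutant's payoff can fall below the resident's, defeating fragility. The paper's construction avoids this precisely by specifying the model at \emph{every} own action: $\tilde F_G(a_i,\cdot):=F^{\bullet}(a_i,\underline{\text{BR}}(a_i,G),G)$, so that in situation $G$ the model $F_G$ fits the mutant's data (essentially) perfectly no matter which action she plays, as long as the resident rationally best responds; a small perturbation of these models is then added solely to make the KL minimizer unique (your appeal to situation identifiability cannot do that job, since it constrains the true $F^{\bullet}$'s, not the mutant's misspecified models --- its role in the proof is to pin the residents' beliefs to the truth). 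A side benefit of the paper's construction is that your requirement (a) comes for free: under $\tilde F_G$ the subjective value of $a_i$ equals the Stackelberg objective $U_i(a_i,\underline{\text{BR}}(a_i,G),F^{\bullet}(G))$, so $\bar a_G$ is automatically the unique within-model optimum, with no extra slack needed.
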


The core of the proof uses a separating hyperplanes argument to determine a distribution $q$ under which the rational model cannot be invaded. One can check that indeed Example \ref{exa:only_theory_invades} satisfies both conditions of Theorem \ref{thm:theoryneeded}. Whenever the conditions are satisfied, the minimal correctly specified model is evolutionarily fragile
against some mutant model, but not evolutionarily fragile against
any \emph{singleton} mutant model. In these environments, the ability  adapt preferences endogenously to the relevant situation (i.e., the learning channel) is a necessary condition for an invading mutant
to displace the rational incumbent.  Hence, this result shows that
mutants with misspecified models cannot in general be represented
simply as mutants with fixed subjective best-response correspondences.

\subsection{\label{subsec:Stability-Reversals}Stability Reversals}

We now show that the learning channel can lead to greater indeterminacy in the emergence of stable biases. For expositional simplicity, we assume that $|\mathcal{G}|=1$ throughout this section. We will refer to  a model's \emph{conditional fitness against group $g$}, i.e., the expected payoff of the model's adherents in matches against
group $g.$
\begin{defn}
Two models $\Theta_{A},\Theta_{B}$
exhibit \emph{stability reversal} if (i) in every EZ with $\lambda=0$
and $(p_{A},p_{B})=(1,0),$ $\Theta_{A}$ has strictly higher conditional
fitness than $\Theta_{B}$ against group A opponents and against group
B opponents, but also (ii) in every EZ with $\lambda=0$ and $(p_{A},p_{B})=(0,1),$
$\Theta_{B}$ has strictly higher fitness than $\Theta_{A}$.
\end{defn}

When $p_B=0$, how $\Theta_A$ performs against $\Theta_B$ does not actually affect group A's fitness. Condition (i) encodes the strong requirement that $\Theta_A$  outperforms $\Theta_B$ even on the zero-probability event of being matched against a $\Theta_B$ opponent.  A stability reversal occurs if this stronger requirement holds (when $\Theta_A$ dominates in society), and yet $\Theta_{B}$ is still stable against $\Theta_{A}$ (if $\Theta_{B}$ starts from a position of prominence). 

We begin with two general results on when stability reversals \emph{cannot} emerge. First, it cannot emerge without the learning channel: 

\begin{prop}
\label{prop:no_reversal}Suppose $|\mathcal{G}|=1$. Two singleton models (i.e., two subjective
preferences in the stage game) cannot exhibit stability reversal. 
\end{prop}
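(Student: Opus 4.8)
The plan is a short proof by contradiction that isolates exactly where singleton-ness is used. Write $\Theta_A=\{F_A\}$ and $\Theta_B=\{F_B\}$, let $\Pi(a_i,a_{-i}):=\mathbb{E}_{y\sim F^{\bullet}(a_i,a_{-i})}[\pi(y)]$ be the objective stage-game payoff, and let $\mathrm{BR}_g(a):=\arg\max_{a_i\in\mathbb{A}}U_i(a_i,a;\delta_{F_g})$ be the subjective best-reply correspondence induced by $F_g$. The first step is to record that a singleton theory leaves no freedom in the inference step of Definition \ref{def:EZ}: any belief over $\Theta_g$ is forced to be the point mass $\delta_{F_g}$, so the KL-minimization requirement is met vacuously, whatever $p$ and $\lambda$ are. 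Consequently, when $\lambda=0$ a zeitgeist is an EZ if and only if its behavior component $(a_{AA},a_{AB},a_{BA},a_{BB})$ satisfies the four best-reply conditions $a_{AA}\in\mathrm{BR}_A(a_{AA})$, $a_{AB}\in\mathrm{BR}_A(a_{BA})$, $a_{BA}\in\mathrm{BR}_B(a_{AB})$, $a_{BB}\in\mathrm{BR}_B(a_{BB})$. None of these conditions mention the population shares, so one and the same behavior tuple is an EZ at $p=(1,0)$ and at $p=(0,1)$.

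The second step computes fitness in the two configurations appearing in the definition of stability reversal. Suppose for contradiction that the singleton theories $\Theta_A,\Theta_B$ exhibit stability reversal; as in the footnote to Definition \ref{def:stability} this presumes the relevant EZs are non-empty (otherwise both conditions hold only vacuously), so fix an EZ at $\lambda=0$, $p=(1,0)$ with behavior tuple $(a_{AA},a_{AB},a_{BA},a_{BB})$. Since $\lambda=0$ and $p_A=1$, every adherent of either group meets group $A$ almost surely; hence the conditional fitness of $\Theta_A$ against group $B$ is $\Pi(a_{AB},a_{BA})$ and that of $\Theta_B$ against group $B$ is $\Pi(a_{BB},a_{BB})$, so part (i) forces $\Pi(a_{AB},a_{BA})>\Pi(a_{BB},a_{BB})$. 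By the first step the same tuple is also an EZ at $\lambda=0$, $p=(0,1)$, where now every adherent of either group meets group $B$; there the unconditional fitness of $\Theta_B$ is $\Pi(a_{BB},a_{BB})$ and that of $\Theta_A$ is $\Pi(a_{AB},a_{BA})$, so part (ii) forces $\Pi(a_{BB},a_{BB})>\Pi(a_{AB},a_{BA})$. These two strict inequalities contradict each other, which gives the statement.

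The argument has essentially no computational content, and the only place where care is needed — and the sole point at which singleton-ness does work — is the claim that the cross-group play $(a_{AB},a_{BA})$ and the within-$B$ play $a_{BB}$ appearing in the $p=(1,0)$ EZ may be reused verbatim in a $p=(0,1)$ EZ. For a non-singleton theory this would fail, because flipping the population shares changes the weights in the KL objective of Definition \ref{def:EZ} between in-group and cross-group match data, so adherents could settle on a different best-fitting model and hence a different best reply — precisely the learning channel that makes stability reversal possible for richer theories. The other point worth double-checking when writing it up is the bookkeeping of inequality directions: part (i) is a statement about conditional fitness \emph{against group $B$}, and it is exactly this quantity that resurfaces, with the strict inequality reversed, as the unconditional fitness comparison in part (ii), since at $\lambda=0$, $p=(0,1)$ the rare group-$A$ mutant only ever meets group $B$.
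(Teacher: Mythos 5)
Your proof is correct and is essentially the paper's own argument: both rest on the observation that singleton theories pin down beliefs, so an EZ behavior tuple at one population share is an EZ at the other, and then the "against group B" conditional-fitness comparison from part (i) directly contradicts the unconditional fitness comparison in part (ii). The only cosmetic difference is the direction of transfer (you carry a $p=(1,0)$ EZ to $p=(0,1)$, the paper does the reverse), which is immaterial.
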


Additionally, stability reversals cannot emerge in decision problems. We show this by introducing a class of games where strategic interactions do not matter: 

\begin{defn}
A model $\Theta$ is \emph{strategically independent} if for all
$\mu\in\Delta(\Theta)$, $\underset{a_{i}\in\mathbb{A}}{\arg\max}\ U_{i}(a_{i},a_{-i};\mu)$
is the same for every $a_{-i}\in\mathbb{A}.$
\end{defn}
\noindent The adherents of a strategically independent model believe that while
opponent's action may affect their utility, it does not affect their
best response.
\begin{prop}
\label{prop:reversal_inference_channel}Suppose $|\mathcal{G}|=1$,
suppose $\Theta_{A},\Theta_{B}$ exhibit stability reversal and $\Theta_{A}$
is the correctly specified singleton model. Then, the beliefs that
the adherents of $\Theta_{B}$ hold in all EZs with $p=(1,0)$ and
the beliefs they hold in all EZs with $p=(0,1)$ form disjoint sets.
Also, $\Theta_{B}$ is not strategically independent.
\end{prop}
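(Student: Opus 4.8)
The plan is to prove the two assertions separately, in each case by a ``mix-and-match'' construction: starting from the equilibrium zeitgeists presumed by stability reversal, I build a new EZ by recombining match-type strategies, and then extract a contradiction from the payoff comparison that stability reversal imposes on \emph{every} such EZ. Throughout, write $u^{\bullet}(a_{i},a_{-i}):=\mathbb{E}_{y\sim F^{\bullet}(a_{i},a_{-i})}[\pi(y)]$ and use two facts repeatedly. First, since $\Theta_{A}=\{F^{\bullet}\}$ is the correctly specified singleton, an adherent of $\Theta_{A}$ always holds belief $\delta_{F^{\bullet}}$ (so her inference clause in Definition~\ref{def:EZ} is automatic) and always plays an objective best reply, which against any strategy $a$ yields $\max_{a'}u^{\bullet}(a',a)\ge u^{\bullet}(a,a)$. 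Second, at $\lambda=0$ with $p=(1,0)$ every adherent of $\Theta_{B}$ faces only group-$A$ opponents, so B's KL-minimization reads only the data generated by the $(a_{BA},a_{AB})$ profile and is entirely insensitive to $a_{BB}$; symmetrically, at $\lambda=0$ with $p=(0,1)$, B faces only group-$B$ opponents, so B's inference reads only the $(a_{BB},a_{BB})$ data and is insensitive to $a_{BA},a_{AB}$.

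\emph{Disjointness.} Suppose, for contradiction, that some $\mu\in\Delta(\Theta_{B})$ is the $\Theta_{B}$-belief both in an EZ $\mathfrak{Z}^{1}$ with $p=(1,0)$, $\lambda=0$, and in an EZ $\mathfrak{Z}^{0}$ with $p=(0,1)$, $\lambda=0$. From $\mathfrak{Z}^{1}$ take the cross-profile $(a^{1}_{AB},a^{1}_{BA})$: here $a^{1}_{AB}$ is an objective best reply to $a^{1}_{BA}$, $a^{1}_{BA}$ is a $\mu$-subjective best reply to $a^{1}_{AB}$, and the support of $\mu$ lies among the KL-minimizers of the $(a^{1}_{BA},a^{1}_{AB})$ data. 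From $\mathfrak{Z}^{0}$ take the self-profile $(a^{0}_{BB},a^{0}_{BB})$: here $a^{0}_{BB}$ is a $\mu$-subjective best reply to itself and the support of $\mu$ lies among the KL-minimizers of the $(a^{0}_{BB},a^{0}_{BB})$ data. Now form two equilibrium zeitgeists, both keeping B's belief $=\mu$, A's belief $=\delta_{F^{\bullet}}$, and reusing one of the Nash strategies $a^{0}_{AA},a^{1}_{AA}$ in the $A$-vs-$A$ slot. The first, $\mathfrak{Z}'$, has $p=(0,1)$, $\lambda=0$, and strategies $a_{AB}=a^{1}_{AB}$, $a_{BA}=a^{1}_{BA}$, $a_{BB}=a^{0}_{BB}$; using the two facts above one verifies all clauses of Definition~\ref{def:EZ}. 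The second, $\mathfrak{Z}^{1}_{\star}$, is $\mathfrak{Z}^{1}$ with only its $B$-vs-$B$ strategy overwritten by $a^{0}_{BB}$; this is again an EZ because at $p=(1,0)$ the $B$-vs-$B$ slot enters neither B's inference nor any clause involving another slot, and the only clause it does enter --- that $a_{B,B}$ be a $\mu$-subjective best reply to $a_{B,B}$ --- holds since $a^{0}_{BB}$ is a $\mu$-subjective best reply to itself. Applying the $p=(0,1)$ requirement of stability reversal to $\mathfrak{Z}'$ yields $u^{\bullet}(a^{0}_{BB},a^{0}_{BB})>u^{\bullet}(a^{1}_{AB},a^{1}_{BA})$ (B's fitness is its within-group payoff, A's fitness is its payoff against group $B$); applying the ``$\Theta_{A}$ beats $\Theta_{B}$ conditional on a group-$B$ opponent'' requirement of stability reversal to $\mathfrak{Z}^{1}_{\star}$ yields $u^{\bullet}(a^{1}_{AB},a^{1}_{BA})>u^{\bullet}(a^{0}_{BB},a^{0}_{BB})$. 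These contradict each other, so no such $\mu$ exists.

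\emph{Strategic dependence.} I argue the contrapositive: if $\Theta_{B}$ is strategically independent, stability reversal fails. Strategic independence means that for each $\mu\in\Delta(\Theta_{B})$ the set $\mathrm{BR}_{B}(\mu):=\arg\max_{a_{i}}U_{i}(a_{i},a_{-i};\mu)$ does not depend on $a_{-i}$. Take any EZ $\mathfrak{Z}^{0}$ with $p=(0,1)$, $\lambda=0$ (one exists, since stability reversal presupposes the relevant set of EZs is non-empty, as in Definition~\ref{def:stability}), with $\Theta_{B}$-belief $\mu^{0}$ and $B$-vs-$B$ strategy $\tilde{a}:=a^{0}_{BB}\in\mathrm{BR}_{B}(\mu^{0})$. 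Modify it into $\mathfrak{Z}^{0}_{\star}$ by putting $\tilde{a}$ also in the $B$-vs-$A$ slot --- legitimate because $\tilde{a}\in\mathrm{BR}_{B}(\mu^{0})$ is a subjective best reply to \emph{any} opponent strategy --- and any objective best reply to $\tilde{a}$ in the $A$-vs-$B$ slot; B's inference is unchanged since at $p=(0,1)$ it reads only the $(\tilde{a},\tilde{a})$ data. All clauses of Definition~\ref{def:EZ} hold, so $\mathfrak{Z}^{0}_{\star}$ is an EZ with $p=(0,1)$, $\lambda=0$. But in it $\Theta_{B}$'s fitness is $u^{\bullet}(\tilde{a},\tilde{a})$ while $\Theta_{A}$'s fitness is $\max_{a'}u^{\bullet}(a',\tilde{a})\ge u^{\bullet}(\tilde{a},\tilde{a})$, so $\Theta_{B}$ does not have \emph{strictly} higher fitness than $\Theta_{A}$ in this $p=(0,1)$ EZ --- contradicting the $p=(0,1)$ requirement of stability reversal. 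Hence $\Theta_{B}$ is not strategically independent.

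The only real obstacle is the (routine but load-bearing) verification that the recombined zeitgeists $\mathfrak{Z}'$, $\mathfrak{Z}^{1}_{\star}$, $\mathfrak{Z}^{0}_{\star}$ satisfy \emph{every} clause of Definition~\ref{def:EZ}, especially the $\Theta_{B}$-inference clause. This goes through precisely because at $\lambda=0$ with a degenerate population split an adherent of $\Theta_{B}$ meets only one type of opponent, so exactly one of her two match-type strategies is irrelevant to her KL-minimization --- which is also exactly the structural feature that breaks once $\lambda>0$ or the population shares are interior, and the reason these phenomena are special to the learning channel.
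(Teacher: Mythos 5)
Your proposal is correct and follows essentially the same route as the paper's proof: both claims are established by the same recombination of match-type strategies across the $p=(1,0)$ and $p=(0,1)$ equilibrium zeitgeists, justified by the observation that at $\lambda=0$ with a degenerate population split the $\Theta_{B}$ adherents' KL-minimization depends only on one match type, and then contradicted against parts (i) and (ii) of stability reversal exactly as in the paper (your $\mathfrak{Z}'$ and $\mathfrak{Z}^{1}_{\star}$ are the paper's $\mathfrak{\tilde{Z}}'$ and $\mathfrak{Z}'$, and your contrapositive for strategic independence mirrors the paper's contradiction argument, including the shared implicit nonemptiness convention).
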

\noindent The first claim of Proposition \ref{prop:reversal_inference_channel}
underscores that stability reversal require inference---it cannot happen if group B agents merely have a different subjective preference. The second claim
shows that stability reversal can only happen if the misspecified
agents respond differently to different rival play, immediately implying they cannot emerge in decision problems.

We now show by example that stability reversal can emerge with models that allow for inference. Consider a two-player investment game where player $i$
chooses an investment level $a_{i}\in\{1,2\}.$ A random productivity
level $P$ is realized according to $b^{\bullet}(a_{i}+a_{-i})+\epsilon$
where $\epsilon$ is a zero-mean noise term, $b^{\bullet}>0$. Player
$i$'s payoffs are $a_{i}\cdot P-1_{\{a_{i}=2\}}\cdot c$. Consequences are $y=(a_{i},a_{-i},P).$ We record the payoff matrix of this investment game: 
\begin{center}
\begin{tabular}{|c|c|c|}
\hline 
 & 1 & 2\tabularnewline
\hline 
\hline 
1 & $2b^{\bullet},2b^{\bullet}$ & $3b^{\bullet},6b^{\bullet}-c$\tabularnewline
\hline 
2 & $6b^{\bullet}-c,3b^{\bullet}$ & $8b^{\bullet}-c,8b^{\bullet}-c$\tabularnewline
\hline 
\end{tabular}
\par\end{center}

\begin{condition}
\label{cond:medium_cost}$5b^{\bullet}<c<6b^{\bullet}$.
\end{condition}
In words, we assume that $a_{i}=1$ is a strictly
dominant strategy in the stage game, but the investment profile (2,2)
Pareto dominates the investment profile (1,1).
Consider two models in the society. Take $\Theta_{A}$ to be a correctly
specified singleton (thus knowing the true mapping from actions to payoffs), while $\Theta_{B}$
wrongly stipulates $P=b(x_{i}+x_{-i})-m+\epsilon$, where $m>0$ is fixed, while $b\in\mathbb{R}$ is a parameter
that the adherents infer. We impose a condition on $\Theta_{B}$, which holds whenever $m>0$ is large enough:.
\begin{condition}
\label{cond:large_misspec} $c<4b^{\bullet}+\frac{1}{3}m$ and $c<5b^{\bullet}+\frac{1}{4}m.$
\end{condition}
We show that in this example models $\Theta_{A}$ and $\Theta_{B}$
exhibit stability reversal.
\begin{example}
\label{exa:stability_reversal_example}In the investment game, under
Condition \ref{cond:medium_cost} and Condition \ref{cond:large_misspec},
$\Theta_{A}$ and $\Theta_{B}$ exhibit stability reversal.
\end{example}
The idea is that the adherents of $\Theta_{B}$ overestimate the complementarity
of investments, and this overestimation is more severe when they face
data generated from lower investment profiles. As a result, the match
between $\Theta_{A}$ and $\Theta_{B}$ plays out in a different way
depending on which model is resident: it results in the investment
profile $(1,2)$ when $\Theta_{A}$ is resident, but results in $(1,1)$
when $\Theta_{B}$ is resident. (We relegate the formal argument to Appendix \ref{App:ExDetails}.) Due to Propositions \ref{prop:no_reversal} and \ref{prop:reversal_inference_channel}, we conclude that this example is possible due to the non-trivial strategic interactions and $\Theta_{B}$'s inference about $b$ (i.e., the learning channel). 

Stability reversals provide a clear demonstration of polymorphism in models that permit inference. A mutant model   may appear very weak when present in small proportions, doing worse than the incumbent model conditional on every type of opponent. Yet, if the population share of the mutant model reaches a critical mass, its adherents infer a more evolutionarily advantageous model parameter based on their within-group interactions, change their best-response correspondence, and hence  outperform the adherents of the incumbent model.

\subsection{\label{subsec:Non-Monotonic-Stability-in}Non-Monotonic Stability
in Matching Assortativity}

Our last general result shows another unique prediction of the learning channel: a mutant model might successfully invade \emph{only} when  matching assortativity in the society is intermediate. This non-monotonicity in stability arises because a misspecified agent can draw different inferences about the game's fundamentals depending on the relative frequency of in-group and out-group interactions, as these two groups of opponents choose different actions. The idea that social interaction structure shapes people's beliefs about the world has been empirically documented,\footnote{For example, \cite{BazziEtAl2019} document how ethnic attachment in response to a resettlement policy in Indonesia has varying effects depending on whether a community is ``fractionalized'' (so that most interactions are not with one's own group members, i.e., $\lambda$ is small) versus polarized (so that most interactions are with one's own group, i.e., $\lambda$ is large).} and our framework accommodates this mechanism and shows how it affects the stability of misspecified models. 

We again assume there is only one situation,
for simplicity. Note that without inference (i.e., in the setting of preference evolution), the fitness of a
group is linear in matching assortativity. Thus, for singleton models, $\Theta_{A}$ being evolutionarily stable against $\Theta_{B}$
both when $\lambda=0$ and when $\lambda=1$ implies the same holds for all $\lambda\in(0,1).$
\begin{prop}
\label{prop:no_non_mono_lambda}Suppose $\Theta_{A},\Theta_{B}$ are
singleton models (i.e., subjective preferences in the stage game)
and $\Theta_{A}$ is evolutionarily stable against $\Theta_{B}$ with
$\lambda$-matching for both $\lambda=0$ and $\lambda=1.$ Then,
$\Theta_{A}$ is also evolutionarily stable against $\Theta_{B}$
with $\lambda$-matching for any $\lambda\in[0,1]$.
\end{prop}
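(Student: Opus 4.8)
The plan is to exploit the fact that, for singleton theories, the entire zeitgeist data (beliefs are trivial, being point masses) reduces to a specification of the four strategies $a_{AA}, a_{AB}, a_{BA}, a_{BB}$, and the equilibrium conditions on these strategies do not depend on $\lambda$ at all. Concretely, since $\Theta_A = \{F_A\}$ and $\Theta_B = \{F_B\}$ are singletons, the belief $\mu_g$ in any EZ is forced to be the point mass on the unique model, regardless of $p$ or $\lambda$; hence the KL-minimization clause in Definition \ref{def:EZ} is vacuous. The only remaining equilibrium requirements are that $a_{g,g'}$ be a subjective best response to $a_{g',g}$ under the fixed preference induced by $F_g$. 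With $p=(1,0)$ fixed throughout (as in Definition \ref{def:stability}), these best-response conditions decouple into: $(a_{AA})$ is a fixed point of the mutual-best-response map for group-$A$ preferences, and $(a_{AB}, a_{BA})$ is a Nash equilibrium of the ``cross'' game where $A$ uses $F_A$-preferences and $B$ uses $F_B$-preferences. Crucially, none of this depends on $\lambda$, so the set of EZs with $p=(1,0)$ is literally the same set for every $\lambda \in [0,1]$.

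The second ingredient is the linearity of fitness in $\lambda$. For a fixed EZ (equivalently, a fixed strategy profile $(a_{AA}, a_{AB}, a_{BA}, a_{BB})$), when $p=(1,0)$ a group-$A$ agent meets another $A$ with probability $\lambda + (1-\lambda)\cdot 1 = 1$, so $A$'s fitness is just the objective payoff at $(a_{AA},a_{AA})$, which does not involve $\lambda$. A group-$B$ agent (the vanishing mutant) meets an $A$ opponent with probability $1-\lambda$ and a $B$ opponent with probability $\lambda$, so $B$'s fitness is $(1-\lambda)\cdot\pi^{\bullet}(a_{BA}, a_{AB}) + \lambda\cdot \pi^{\bullet}(a_{BB},a_{BB})$, where I write $\pi^{\bullet}$ for the objective expected stage-game payoff. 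This is affine in $\lambda$. Therefore the fitness gap $\Delta(\lambda) := \text{fit}_A - \text{fit}_B$ is affine in $\lambda$ on $[0,1]$ for each fixed EZ.

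Putting the two together: fix any $\lambda^\ast \in (0,1)$ and any EZ $\mathfrak{Z}$ at matching assortativity $\lambda^\ast$. By the first step, $\mathfrak{Z}$ (viewed as the same strategy profile with $p=(1,0)$) is also an EZ at $\lambda = 0$ and at $\lambda = 1$. By the hypothesis that $\Theta_A$ is evolutionarily stable against $\Theta_B$ at both $\lambda=0$ and $\lambda=1$, and since stability requires $\Theta_A$ to have weakly higher fitness in \emph{all} EZs, we get $\Delta(0) \ge 0$ and $\Delta(1) \ge 0$ for this profile. Since $\Delta$ is affine, $\Delta(\lambda^\ast) = (1-\lambda^\ast)\Delta(0) + \lambda^\ast \Delta(1) \ge 0$. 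As $\mathfrak{Z}$ was an arbitrary EZ at $\lambda^\ast$, and the set of such EZs is nonempty (it equals the nonempty set of EZs at $\lambda = 0$), $\Theta_A$ is evolutionarily stable against $\Theta_B$ at $\lambda^\ast$. Since $\lambda^\ast \in (0,1)$ was arbitrary and the endpoint cases hold by assumption, this covers all $\lambda \in [0,1]$.

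The only subtle point — and the main thing to get right — is the claim that the set of EZs with $p=(1,0)$ is genuinely independent of $\lambda$ for singleton theories; one must check carefully that the weighted KL objective in Definition \ref{def:EZ} truly imposes no constraint when $|\Theta_g| = 1$ (the $\arg\min$ over a singleton is that singleton, whatever the weights), so that the $\lambda$-dependent weights $\lambda + (1-\lambda)p_g$ and $(1-\lambda)(1-p_g)$ are harmless, and that the best-response conditions on $a$ indeed involve only the fixed preference and not $\lambda$. Everything else is the elementary observation that a convex combination of two nonnegative numbers is nonnegative, applied EZ-by-EZ.
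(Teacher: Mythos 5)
Your proposal is correct and follows essentially the same route as the paper: for singleton theories the KL-minimization clause is vacuous, so the same beliefs and behavior form an EZ at every $\lambda$, and then the hypothesis applied to that profile at $\lambda=0$ and $\lambda=1$ (the paper phrases this as $u_{A,A}\ge u_{B,A}$ and $u_{A,A}\ge u_{B,B}$) combined with the affine dependence of the mutant's fitness on $\lambda$ yields the conclusion, with existence at $\lambda$ inherited from existence at $\lambda=0$. No gap.
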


Crucially, inference leads to cases where the relevant ``preference'' changes depending on how frequently a model interacts with different types of opponents. This means a model's fitness may be \emph{non-linear} in the matching probabilities. This phenomenon is a distinguishing feature of our framework and we show that the conclusion of Proposition \ref{prop:no_non_mono_lambda} need not hold for models that allow for parameter inferences. 

Consider a stage game where each player chooses an action from $\{a_{1},a_{2},a_{3}\}.$
Every player then receives a random prize, $y\in\{g,b\}$, which are
worth utilities $\pi(g)=1,$ $\pi(b)=0.$ The payoff matrix below
displays the objective expected utilities associated with different
action profiles, which also correspond to the probabilities that the
row and column players receive the good prize $g$.
\begin{center}
\begin{tabular}{|c|c|c|c|}
\hline 
 & $a_{1}$ & $a_{2}$ & $a_{3}$\tabularnewline
\hline 
\hline 
$a_{1}$ & 0.25, 0.25 & 0.50, 0.20 & 0.70, 0.15\tabularnewline
\hline 
$a_{2}$ & 0.20, 0.50 & 0.40, 0.40 & 0.40, 0.20\tabularnewline
\hline 
$a_{3}$ & 0.15, 0.70 & 0.20, 0.40 & 0.20, 0.20\tabularnewline
\hline 
\end{tabular}
\par\end{center}

Let $\Theta_{A}$ be the correctly specified singleton model. The
action $a_{1}$ is strictly dominant under the objective payoffs,
so an adherent of $\Theta_{A}$ always plays $a_{1}$ in all matches.
Let $\Theta_{B}$ be a misspecified model $\Theta_{B}=\{F_{H},F_{L}\}$.
Each model $F_{H},F_{L}$ stipulates that the prize $g$ is generated
the probabilities in the following table, where $b$ and $c$ are
parameters that depend on the model. The model $F_{H}$ has $(b,c)=(0.8,0.2)$
and $F_{L}$ has $(b,c)=(0.1,0.4).$
\begin{center}
\begin{tabular}{|c|c|c|c|}
\hline 
 & $a_{1}$ & $a_{2}$ & $a_{3}$\tabularnewline
\hline 
\hline 
$a_{1}$ & 0.10, 0.10 & 0.10, $c$ & 0.10, 0.15\tabularnewline
\hline 
$a_{2}$ & $c,$ 0.10 & $b,b$ & $b,$ 0.20\tabularnewline
\hline 
$a_{3}$ & 0.15, 0.10 & 0.20, $b$ & 0.20, 0.20\tabularnewline
\hline 
\end{tabular}
\par\end{center}

The learning channel for the biased mutants leads the correctly specified
model to have non-monotonic evolutionarily stability in terms of
matching assortativity.
\begin{example}
\label{exa:non_mono_example}In this stage game, $\Theta_{A}$ is
evolutionarily stable against $\Theta_{B}$ under $\lambda$-matching
when $\lambda=0$ and $\lambda=1,$ but it is also evolutionarily
fragile under $\lambda$-matching when $\lambda\in(\lambda_{l},\lambda_{h})$,
where $0<\lambda_{l}<\lambda_{h}<1$ are $\lambda_{l}=0.25$, $\lambda_{h}\approx0.56$.
\end{example}
Consider the match between two adherents
of $\Theta_{B}.$ If they believe in $F_{H}$, they will play the
action profile $(a_{2},a_{2})$ and payoff
profile $(0.4,0.4)$, a Pareto improvement compared to the correctly
specified outcome $(a_{1},a_{1})$. The problem is that the data
from play of $(a_{2},a_{2})$ fit $F_{L}$ better than
than $F_{H}$, since the objective 40\% probability of getting prize
$g$ is closer to $F_{L}$'s conjecture (10\%) than $F_{H}$'s conjecture (80\%). A belief in $F_{H}$ --- and hence the profile $(a_{2},a_{2})$
--- cannot be sustained if the mutants only play each other. On the
other hand, when an adherent of $\Theta_{B}$ plays a correctly specified
$\Theta_{A}$ adherent, both  $F_{H}$ and $F_{L}$ prescribe
a best response of $a_{2}$ against the $\Theta_{A}$ adherent's play
$a_{1}.$ The data generated from the $(a_{2},a_{1})$ profile lead
biased agents to the parameter $F_{H}$ that enables cooperative behavior
within the mutant community. But, these matches against correctly
specified opponents harm the mutant's welfare, as they only get an
objective payoff of 0.2.

Therefore, the most advantageous interaction structure for the mutants
is one where they can infer $F_{H}$ using the data
from matches against correctly specified opponents, then extrapolate
this optimistic belief about $b$ to coordinate on $(a_{2},a_{2})$
in matches against fellow mutants. This requires the mutants to match
with intermediate assortativity. Figure \ref{fig:3x3} depicts the
equilibrium fitness of the mutant model $\Theta_{B}$ as a function
of assortativity. While payoffs of $\Theta_{B}$ adherents increase
in $\lambda$ at first, eventually they drop when mutant-vs-mutant
matches become sufficiently frequent that a belief in $F_{H}$ can
no longer be sustained. Note that a similar conclusion obtains with fixed $\lambda$ and varying population sizes: what actually matters is the probability $\Theta_{B}$ with which interacts with each model. Non-linearity of fitness in the population shares can emerge here as well, also a unique possibility due to inference.\footnote{See Section \ref{sect:StableShares} for a discussion of stability with intermediate population shares.}

\begin{figure}
\begin{centering}
\includegraphics[scale=0.5]{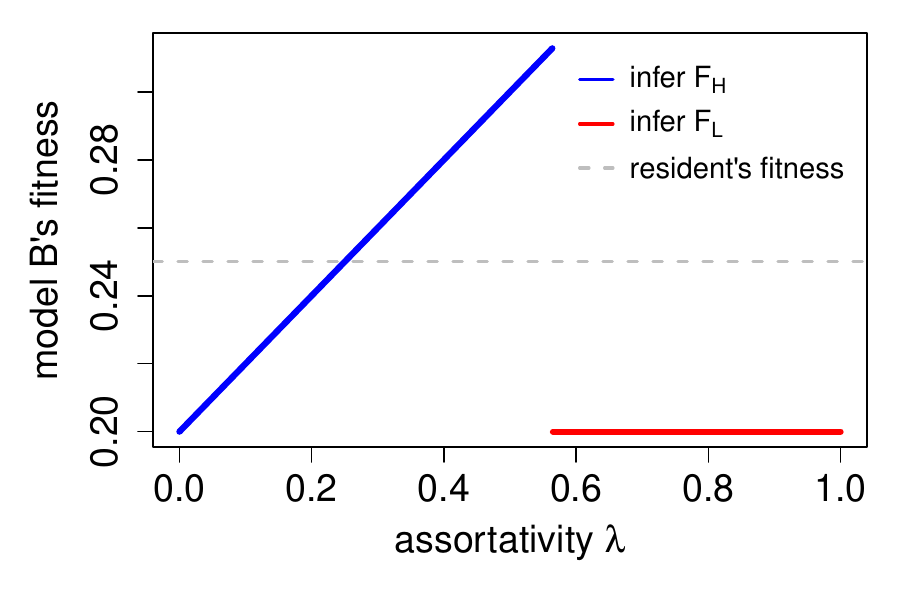}
\par\end{centering}
\vspace{-13bp}

\caption{\label{fig:3x3}The EZ fitness of $\Theta_{B}$ for different values
of $\lambda$ when $p_{B}=0$. (The EZ fitness
of the resident model $\Theta_{A}$ is always 0.25.) In the blue
region, adherents of $\Theta_{B}$
infer $F_{H}$ and receive linearly increasing average payoffs across
all matches as $\lambda$ increases. In the red region, adherents of $\Theta_{B}$ infer $F_{L}$ and receive
payoff 0.2 in all matches, regardless of $\lambda$.}
\end{figure}

\section{\label{sec:LQN}Higher-Order Misspecifications in LQN Games}

Section \ref{sec:new_stability_phenomena} showed that the learning channel can in general lead to new stability phenomena. Next, we illustrate the relevance of the learning channel applied to a specific economically significant bias. This bias relates to how players perceive the correlation in private information in a strategic setting. We work with a class of linear quadratic normal (LQN) games. While prior work has exploited the tractability of this classic framework to derive comparative statics with respect to information (e.g., \citet{bergemann2013robust}), we innovate by accommodating both misspecifications and inference. In the main text we focus
on a Cournot duopoly application, and extend the insights to general LQN games in Appendix \ref{subsec:LQNGeneral}.

\subsection{Stage Game and Misperceptions of Information Structure}

We first describe the stage game. There is a demand state $\omega \sim \mathcal{N}(0,\sigma_{\omega}^{2})$, where
$\mathcal{N}(\mu,\sigma^{2})$ is the normal distribution with mean
$\mu$ and variance $\sigma^{2}.$ Each player is a firm, with firm $i$ receiving a private signal $s_{i}=\omega+\epsilon_{i},$ and then
choosing $q_{i}\in\mathbb{R}$ (i.e., a quantity). The resulting market
price is $P=\omega-r^{\bullet}\cdot\frac{1}{2}(q_{1}+q_{2})+\zeta$,
where $\zeta\sim\mathcal{N}(0,(\sigma_{\zeta}^{\bullet})^{2})$ is
an idiosyncratic independent price shock. Firm $i$'s profit is $q_{i}P-\frac{1}{2}q_{i}^{2}.$

The stage game is parametrized by $\sigma_{\omega}^{2},r^{\bullet},(\sigma_{\zeta}^{\bullet})^{2} > 0$---i.e., variance in
market demand, the elasticity of market price with respect to average
quantity supplied, and the variance of price shocks, respectively. These parameters
remain constant (so $|\mathcal{G}|=1$).
However, demand state $\omega,$ signals $(s_{i})$, and
price shock $\zeta$ are redrawn independently across matches.

Note that market prices and quantity choices may be positive or negative.
To interpret, when $P>0,$ the market pays for each unit of good supplied,
and market price decreases in total supply. When $P<0,$ the market
pays for disposal of the good. The cost $\frac{1}{2}q_{i}^{2}$ represents either a convex
production cost or a convex disposal cost, depending on the sign of
$q_{i}.$

We take the signals within a match to possibly be correlated conditional on $\omega$, and study the perception of this correlation. Recalling that $s_{i}=\omega+\epsilon_{i}$, we assume in particular that
$\epsilon_{i}=\frac{\kappa}{\sqrt{\kappa^{2}+(1-\kappa)^{2}}}z+\frac{1-\kappa}{\sqrt{\kappa^{2}+(1-\kappa)^{2}}}\eta_{i},$
where $\eta_{i}\sim\mathcal{N}(0,\sigma_{\epsilon}^{2})$ is the idiosyncratic
component generated i.i.d. across players and $z\sim\mathcal{N}(0,\sigma_{\epsilon}^{2})$
is the common component. Higher $\kappa$
leads to an information structure with higher conditional correlation.
When $\kappa=0,$ $s_{i}$ and $s_{-i}$ are conditionally uncorrelated
given $\omega$. When $\kappa=1,$ we always have $s_{i}=s_{-i}$.
This functional form for $\epsilon_{i}$ ensures $\text{Var}(s_{i})$ is constant in
$\kappa,$ which facilitates tractability. 

While objectively, $\kappa=\kappa^{\bullet}$, our interest will be in studying misspecifications in $\kappa$. Indeed, this particular bias is common in experiments, many of which show subjects often do not form accurate beliefs about
the beliefs of others. We draw a connection between the misperception
we study and such statistical biases:
\begin{defn}
Let $\tilde{\kappa}$ be a player's perceived $\kappa$. A player
suffers from \emph{correlation neglect} if $\tilde{\kappa}<\kappa^{\bullet}$.
A player suffers from \emph{projection bias} if $\tilde{\kappa}>\kappa^{\bullet}$.
\end{defn}
\noindent Correlation neglect agents believe signals are less correlated relative to the truth, whereas projection
bias agents ``project'' their own information onto others (exaggerating
the similarity between others' signals and their own). We are
agnostic about the origin of these misspecifications,
e.g., cognitive biases or more complex mechanisms,\footnote{For example, \citet*{hansen2021algorithmic} show that multiple agents
simultaneously conducting algorithmic price experiments in the same
market may generate correlated information which get misinterpreted
as independent information, a form of correlation neglect for firms.
\citet{goldfarb2019} structurally estimate a model of thinking cost
and find that bar owners over-extrapolate the effect of today's weather
shock on future profitability.} instead asking whether such misspecifications would persist under selection pressures were they to appear.

\subsection{Formalizing Strategies and Models}

This environment fits into the formalism from Section \ref{sec:Environment-and-Stability}
as follows. A strategy is a function $Q_{i}:\mathbb{R}\to\mathbb{R}$
that assigns a quantity $Q_{i}(s_{i})$ to every signal $s_{i}$, and a strategy is \emph{linear} if $Q_{i}(s_{i})=\alpha_{i}s_{i}$ for every $s_{i}\in\mathbb{R}$ and some $\alpha_{i} \geq 0$.
Since the best response to any linear strategy is linear, regardless
of the agent's belief about the correlation parameter and market price
elasticity (Lemma \ref{lem:BR} in Appendix \ref{subsect:LQNEZs}),
we restrict attention to linear strategies and let $\mathbb{A}=[0,\bar{M}_{\alpha}]$
for $\bar{M}_{\alpha}<\infty,$ with $\alpha_{i}\in\mathbb{A}$ referring
to strategy $Q_{i}(s_{i})=\alpha_{i}s_{i}$.

The stage game is common knowledge except
for $r^{\bullet},\kappa^{\bullet},$ and $\sigma_{\zeta}^{\bullet}$.
Models are dogmatic and possibly wrong about
$\kappa,$ but  allow inferences about $r$ and $\sigma_{\zeta}$.
We set the consequence space for agent $i$ to be $\mathbb{Y}=\mathbb{R}^{3},$ where
$y=(s_{i},q_{i},P)\in\mathbb{Y}$. Consequence $y$ delivers utility $\pi(y):=q_{i}P-\frac{1}{2}q_{i}^{2}.$
Since $\kappa$ indexes models, we write $\Theta(\kappa):=\{F_{r,\kappa,\sigma_{\zeta}}:r\in[0,\bar{M}_{r}],\sigma_{\zeta}\in[0,\bar{M}_{\sigma_{\zeta}}]\}$
for some $\bar{M}_{r},\bar{M}_{\sigma_{\zeta}}<\infty.$ So,
$\Theta(\kappa)$ is a set of parameters which reflect a dogmatic belief in the correlation parameter $\kappa$.
Each $F_{r,\kappa,\sigma_{\zeta}}:\mathbb{A}\times\mathbb{A}\to\Delta(\mathbb{Y})$
is such that $F_{r,\kappa,\sigma_{\zeta}}(\alpha_{i},\alpha_{-i})$
gives the distribution over $i$'s consequences in a stage game with
parameters $(r,\kappa,\sigma_{\zeta})$, when $i$ uses the linear
strategy $\alpha_{i}$ against an opponent using linear strategy $\alpha_{-i}.$ While agents learn about both $r$ and $\sigma_{\zeta}$,
(mis)inferences about $r$ drives the main results.\footnote{Since each firm's profit is linear in the market price,  belief about the variance of the idiosyncratic price shock does not change her expected payoffs or behavior. The parameter $\sigma_{\zeta}$ absorbs changes in the variance of market price, creating significant tractability. To infer $r$, it is only necessary to consider the mean of the market price in the data, not its variance.}

We assumed that the space of feasible linear strategies
$\alpha_{i}\in[0,\bar{M}_{\alpha}]$ and the domain of inference over
game parameters $r\in[0,\bar{M}_{r}],\sigma_{\zeta}\in[0,\bar{M}_{\sigma_{\zeta}}]$
are compact, to guarantee EZ existence. For some of our results, we utilize the following shorthand:
\begin{notation}
A result is said to hold ``\emph{with high enough price volatility
and large enough strategy space and inference space}'' if, whenever
the strategy space $[0,\bar{M}_{\alpha}]$ has $\bar{M}_{\alpha}\ge\frac{1/\sigma_{\epsilon}^{2}}{1/\sigma_{\epsilon}^{2}+1/\sigma_{\omega}^{2}}$,
there exist $0<L_{1},L_{2},L_{3}<\infty$ so that for any objective
game $F^{\bullet}$ with $(\sigma_{\zeta}^{\bullet})^{2}\ge L_{1}$
and with models where $r\in[0,\bar{M}_{r}],$
$\sigma_{\zeta}\in[0,\bar{M}_{\sigma_{\zeta}}]$ are such that $\bar{M}_{\sigma_{\zeta}}^{2}\ge(\sigma_{\zeta}^{\bullet})^{2}+L_{2}$
and $\bar{M}_{r}\ge L_{3}$, the result is true.
\end{notation}

When imposed, these assumptions will ensure behavior and beliefs are interior.  Our analysis relies on a number of technical lemmas, which we defer to Appendix \ref{sec:LQNMore}. We show, for example, that the set
of EZs is non-empty and it is upper hemicontinuous in population sizes.
We also derive there closed-form expressions for the best-fitting
inference and optimal behavior of misspecified agents.

\subsection{The Impact of Misspecification: Some Intuition \label{subsec:CournotIntuition}}

Before presenting our results on the fragility of correct specifications,
we briefly describe what happens when players entertain a dogmatically
misspecified view of $\kappa$.

Most importantly, an agent's inference about $r$ is strictly decreasing
in her belief about the correlation parameter $\kappa.$ To understand
why, assume player $i$ uses the linear strategy $\alpha_{i}$ and
player $-i$ uses the linear strategy $\alpha_{-i}$. After receiving
a private signal $s_{i}$, player $i$ expects to face a price distribution
with a mean that is linearly increasing in $\mathbb{E}[s_{-i}\mid s_{i}]$,
which in turn is linearly increasing in $s_{i}$ (see Appendix \ref{subsect:LQNEZs}
for more details).\footnote{Specifically, Lemma
\ref{lem:kappa} in Appendix \ref{subsect:LQNEZs} shows there exists a
strictly increasing and strictly positive function $\psi(\kappa)$
so that $\mathbb{E}_{\kappa}[s_{-i}\mid s_{i}]=\psi(\kappa)\cdot s_{i}$
for all $s_{i}\in\mathbb{R},\kappa\in[0,1].$} Now, under projection bias $\kappa>\kappa^{\bullet},$
$\mathbb{E}_{\kappa}[s_{-i}\mid s_{i}]$ is excessively steep in $s_{i}$,
since the correlation is higher. For example, following a large and
positive $s_{i},$ the agent overestimates the similarity of $-i$'s
signal and wrongly predicts that $-i$ must also choose a very high
quantity, and thus becomes surprised when market price remains high.
As a result, the agent then wrongly infers that the market price elasticity
must be low. Therefore, in order to rationalize the average market
price conditional on own signal, an agent with projection bias must
infer $r<r^{\bullet}$. For similar reasons, an agent with correlation
neglect infers $r>r^{\bullet}.$

The fact that projection biased players believe the elasticity of
demand is lower than the truth suggests that they will behave more
aggressively than correctly specified players---with the converse
holding for correlation neglect players. Intuitively, if price reacts
less to quantity, then players should price more aggressively. While
this does turn out to be true---and drives many of the results below---things
are more subtle because increasing $\kappa$ has an \emph{a priori}
ambiguous impact on the agent's equilibrium aggressiveness. In fact,
in our characterization, we show that increasing $\kappa$ but holding
fixed the player's belief about price elasticity has the direct effect
of \emph{lowering} aggression. The results below show that the indirect
effect through the learning channel dominates, and the evolutionary
stability of correlational errors are driven by this channel. We show
in Section \ref{subsec:learningkey} that the conclusions are reversed
when we shut down the learning channel.

\subsection{Selecting Biases under Uniform and Assortative Matching}

We now consider the evolutionary instability of correctly specified
beliefs about the information structure. We first take $\lambda=0$; we note that this case requires some technical innovation in order to characterize the \emph{asymmetric} equilibrium
strategy profile in matches between the correctly specified residents
and the projection-biased mutants.
\begin{prop}[Uniform Matching Selects Projection Bias]
\label{prop:LQN_lambda_0}Let $r^{\bullet}>0,$ $\kappa^{\bullet}\in[0,1]$
be given. With high enough price volatility and large enough strategy
space and inference space, there exist $\underline{\kappa}<\kappa^{\bullet}<\bar{\kappa}$
so that taking $(\Theta_{A},\Theta_{B})=(\Theta(\kappa^{\bullet}),\Theta(\kappa))$
for $\kappa\in[\underline{\kappa},\bar{\kappa}]$, there is a unique
 EZ with uniform matching ($\lambda=0$) and $(p_{A},p_{B})=(1,0)$.
The equilibrium fitness of $\Theta(\kappa)$ is strictly higher than
that of $\Theta(\kappa^{\bullet})$ if $\kappa>\kappa^{\bullet}$,
and strictly lower if $\kappa<\kappa^{\bullet}$.
\end{prop}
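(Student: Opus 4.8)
The plan is to work directly with the unique EZ at $\lambda=0$, $(p_A,p_B)=(1,0)$, show that the mutant theory's equilibrium fitness, viewed as a function $g(\kappa)$, coincides with the (constant) resident fitness at $\kappa=\kappa^\bullet$ and is strictly increasing there, and then conclude by continuity on a suitably small interval $[\underline\kappa,\bar\kappa]$ around $\kappa^\bullet$. First I would record the structure of such an EZ. Since the best reply to any linear strategy is linear (Lemma~\ref{lem:BR}), all behavior is summarized by quantity-coefficients $\alpha\in\mathbb{A}$. With $\lambda=0$ and $p_A=1$, a resident meets only residents; being correctly specified, her belief concentrates on the true model and her behavior is the unique symmetric Nash coefficient $\alpha_A^*$ of the objective linear-quadratic game (unique because the objective best-reply map has slope of absolute value below $1/2$), giving resident fitness $\Pi(\alpha_A^*,\alpha_A^*)$, where $\Pi$ denotes objective expected profit; this number does not depend on $\kappa$. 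A mutant meets only residents, so the mutant--resident match is pinned down by a three-equation fixed point in $(\alpha_B,a_{AB},\hat r)$: (i) $\alpha_B$ is the mutant's subjective best reply to $a_{AB}$ under the dogmatic correlation $\kappa$ and inferred elasticity $\hat r$; (ii) $a_{AB}$ is the resident's objective best reply to $\alpha_B$ (the resident has learned $r^\bullet$ from her own-group matches); and (iii) $\hat r$ minimizes KL divergence given the data generated by $(\alpha_B,a_{AB})$ under the true $\kappa^\bullet$. Invoking the appendix lemmas on EZ existence and interiority and the closed-form characterizations of best-fitting inference and optimal behavior --- this is where the ``high enough price volatility and large enough strategy and inference space'' proviso is used, in particular so that the inferred $\sigma_\zeta$ absorbs the price-variance mismatch and (iii) reduces to matching the conditional mean of price given one's own signal --- equation (iii) becomes $\hat r\,(\alpha_B+\psi(\kappa)a_{AB})=r^\bullet(\alpha_B+\psi(\kappa^\bullet)a_{AB})$ with $\psi$ as in Lemma~\ref{lem:kappa}; hence $\hat r<r^\bullet$ precisely when $\kappa>\kappa^\bullet$, the effect described in Section~\ref{subsec:CournotIntuition} whereby projection bias lowers the perceived elasticity.

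Next I would reduce the fitness comparison to a statement about $\alpha_B$ alone. Writing $\mathrm{BR}^\bullet$ for the objective best-reply map, the mutant's fitness is $g(\kappa)=\Pi(\alpha_B(\kappa),a_{AB}(\kappa))=\Gamma(\alpha_B(\kappa))$, where $\Gamma(\alpha):=\Pi(\alpha,\mathrm{BR}^\bullet(\alpha))$ is the objective payoff to ``committing'' to $\alpha$ against a rational opponent. At $\kappa=\kappa^\bullet$ the fixed point collapses to $\alpha_B=a_{AB}=\alpha_A^*$ and $\hat r=r^\bullet$, so $g(\kappa^\bullet)=\Pi(\alpha_A^*,\alpha_A^*)$ equals the resident fitness. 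Because $\alpha_A^*$ is an interior objective best reply to itself, $\partial_1\Pi(\alpha_A^*,\alpha_A^*)=0$, so $\Gamma'(\alpha_A^*)=\partial_2\Pi(\alpha_A^*,\alpha_A^*)\cdot(\mathrm{BR}^\bullet)'(\alpha_A^*)$; both factors are negative (more opponent quantity lowers one's profit, and the rival best-responds by contracting when one expands), so $\Gamma'(\alpha_A^*)>0$ --- the familiar value of committing to more aggressive play in Cournot. Consequently $g'(\kappa^\bullet)=\Gamma'(\alpha_A^*)\,\alpha_B'(\kappa^\bullet)$ has the same sign as $\alpha_B'(\kappa^\bullet)$, and it remains only to show $\alpha_B'(\kappa^\bullet)>0$, i.e.\ that projection bias induces more aggressive mutant play.

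This sign is the \emph{main obstacle}, since it requires separating the \emph{direct effect} of perceived correlation from the \emph{indirect} learning-channel effect and showing the latter dominates. I would differentiate the three-equation system at $\kappa^\bullet$, writing $\dot x:=(dx/d\kappa)|_{\kappa=\kappa^\bullet}$ and $\psi^\bullet:=\psi(\kappa^\bullet)\in(0,1]$ (note $\psi(1)=1$ since $\kappa=1$ forces $s_i=s_{-i}$, and $\psi$ is increasing). Differentiating (iii) gives $\dot{\hat r}=-\,r^\bullet\psi'(\kappa^\bullet)/(1+\psi^\bullet)<0$; differentiating (ii) gives $\dot a_{AB}=-\,\tfrac{r^\bullet\psi^\bullet}{2(r^\bullet+1)}\,\dot\alpha_B$. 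Substituting both into the differentiated form of (i) and collecting terms, the $\dot\alpha_B$ contributions combine into
\[
\dot\alpha_B\cdot\frac{4(r^\bullet+1)^2-(r^\bullet\psi^\bullet)^2}{4(r^\bullet+1)}=\frac{r^\bullet\alpha_A^*\,\psi'(\kappa^\bullet)}{2(1+\psi^\bullet)}>0.
\]
The left-hand coefficient factors as $[2(r^\bullet+1)-r^\bullet\psi^\bullet]\,[2(r^\bullet+1)+r^\bullet\psi^\bullet]/[4(r^\bullet+1)]$, which is strictly positive because $\psi^\bullet\le1<2$; hence $\alpha_B'(\kappa^\bullet)=\dot\alpha_B>0$. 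The economic content is that the direct effect of a higher perceived correlation (the mutant predicts steeper rival play and so dampens her own responsiveness) is outweighed by the indirect effect (a lower perceived elasticity, which sharpens her responsiveness), exactly as previewed in Section~\ref{subsec:CournotIntuition}.

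Finally I would assemble the pieces. The implicit function theorem applied to the interior three-equation system makes $\kappa\mapsto(\alpha_B(\kappa),a_{AB}(\kappa),\hat r(\kappa))$, and hence $g$, continuously differentiable near $\kappa^\bullet$; combined with $g(\kappa^\bullet)=\Pi(\alpha_A^*,\alpha_A^*)$ and $g'(\kappa^\bullet)>0$, this yields $\underline\kappa<\kappa^\bullet<\bar\kappa$ on which $g$ is strictly increasing. Shrinking the interval if needed so the appendix uniqueness argument applies throughout (the fixed-point system reduces to a polynomial in $\alpha_B$ with a unique root in the relevant interior range), the EZ is unique for each $\kappa\in[\underline\kappa,\bar\kappa]$, and the mutant theory $\Theta(\kappa)$ has strictly higher equilibrium fitness than $\Theta(\kappa^\bullet)$ when $\kappa>\kappa^\bullet$ and strictly lower when $\kappa<\kappa^\bullet$, which is the claim (at an endpoint of $[0,1]$ only the corresponding one-sided comparison is asserted, by the same argument).
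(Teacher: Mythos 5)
Your proposal is correct and reaches the same two key facts as the paper's proof --- that the mutant's fitness equals the commitment payoff $\bar U_i(\alpha_{BA})$ (your $\Gamma$) evaluated at the mutant's cross-match coefficient, that $\bar U_i'(\alpha_{AA})>0$, and that $\alpha_{BA}'(\kappa^\bullet)>0$ --- but the execution of both derivative computations is genuinely different and cleaner. For $\Gamma'(\alpha_{AA})>0$ you use an envelope argument ($\partial_1\Pi=0$ at the interior symmetric Nash, then the product of the two negative factors $\partial_2\Pi$ and $(\mathrm{BR}^\bullet)'$), where the paper grinds through the algebra of $\bar U_i'$ directly; the two give the same value $\mathbb{E}[s_i^2]\,(r^\bullet\psi^\bullet)^2\alpha_{AA}/[4(1+r^\bullet)]$. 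For the sign of $\alpha_{BA}'(\kappa^\bullet)$, the paper first eliminates $r_B$ and $\alpha_{AB}$ to get the quadratic $H(x)$ of Equation (\ref{eq:Hx}) and then performs a lengthy implicit differentiation, whereas you differentiate the three-equation system directly and exploit the cancellation at $\kappa=\kappa^\bullet$ (numerator equals denominator in the inference equation, so $\dot{\hat r}=-r^\bullet\psi'/(1+\psi^\bullet)$ with no $\dot\alpha$ terms); I checked your displayed identity and it is exactly right, and the positive coefficient $[4(1+r^\bullet)^2-(r^\bullet\psi^\bullet)^2]/[4(1+r^\bullet)]$ is precisely the nondegeneracy needed for your implicit-function step. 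The one place where your sketch is thinner than the paper is the uniqueness claim for $\kappa\neq\kappa^\bullet$: local nonsingularity only gives uniqueness of the continuation of the $\kappa^\bullet$ solution, not the absence of other, distant fixed points of the system for nearby $\kappa$. The paper's Claim \ref{claim:H} supplies exactly this global step (unique root of $H$ at $\kappa^\bullet$ from the slope-below-$1/2$ best replies, no roots at the endpoints of $[0,\gamma/(\tfrac12 r^\bullet\psi(\kappa^\bullet))]$, and $H'(x^*)<0$ so the root is simple, then continuity in $\kappa$); your parenthetical ``the fixed-point system reduces to a polynomial with a unique root'' gestures at this but would need those boundary and simple-root checks to be spelled out to fully deliver the ``unique EZ'' part of the statement.
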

\noindent Figure \ref{fig:uniformFitness} illustrates how, around $\kappa^{\bullet}$,
mutant payoffs increase in $\kappa$. But misperception only helps to a point---the correct specification becomes evolutionarily
stable for large enough $\kappa$.

\begin{figure}
\hspace{2mm}
    \begin{subfigure}{.44\textwidth}
      \caption{Uniform Matching}
    	\includegraphics[width=.9\linewidth]{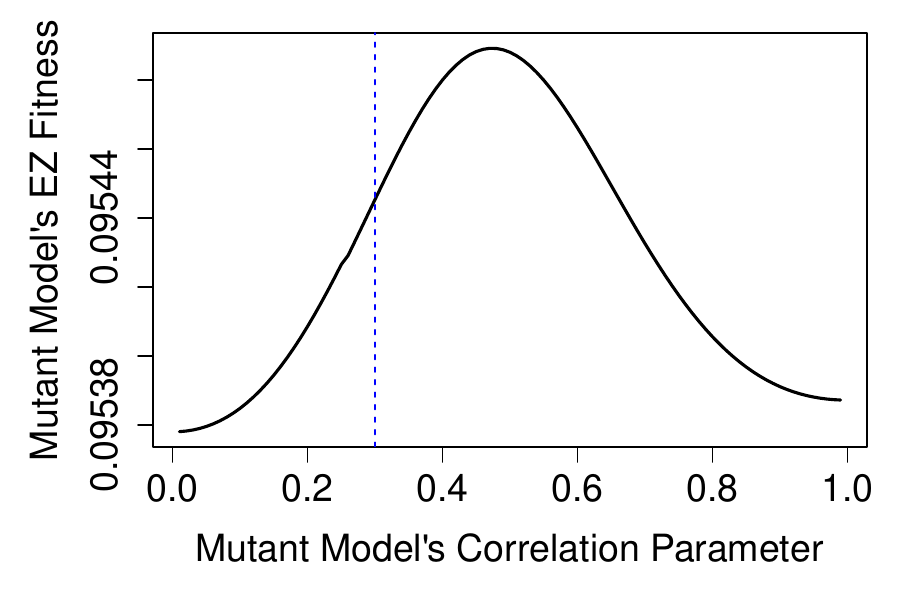}
    	\small
    \label{fig:uniformFitness}
    \end{subfigure}
    \hspace{-2mm}
    \begin{subfigure}{.44\textwidth}
    	\caption{Perfectly Assortative Matching}
    	\includegraphics[width=.9\linewidth]{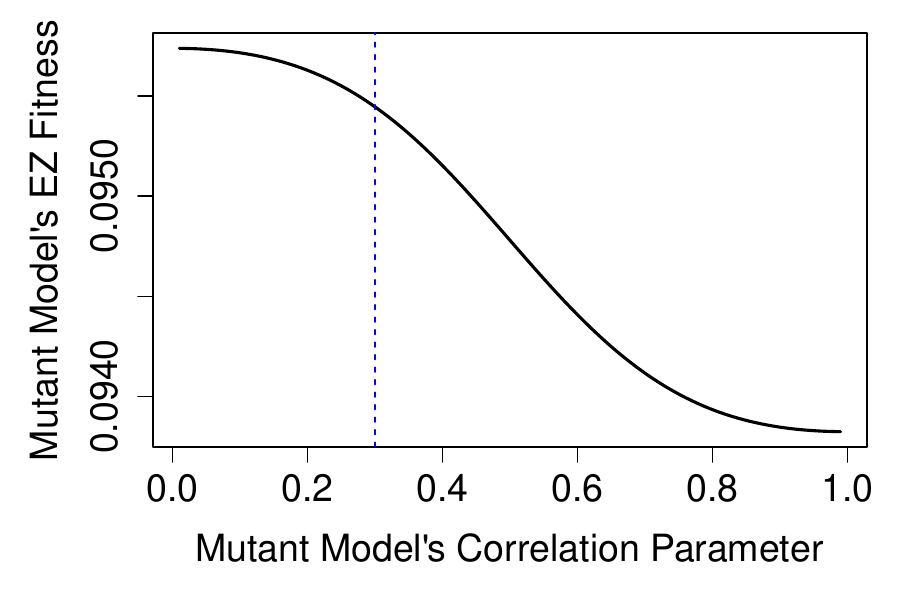}
    	\small
    	\label{fig:assortativeFitness} 
    \end{subfigure}
    \begin{minipage}{1\linewidth}
        \small
        \vspace{-2mm}
        \caption{Fitness of mutant model against a correctly specified resident, as a function
of $\kappa$.}
        \label{figure:distance-beliefs}
        \vspace*{-1em}
        \small
        \singlespacing \emph{Notes}: 
The left panel assumes uniform matching (i.e., $\lambda=0$) and the right panel assumes assortative matching (i.e., $\lambda=1$). 
Both examples take $\kappa^{\bullet}=0.3$, $r^{\bullet}=1$,
$\sigma_{\omega}^{2}=\sigma_{\epsilon}^{2}=1$.
    \end{minipage}
\end{figure}

The intuition for this result follows from the intuition outlined
in Section \ref{subsec:CournotIntuition}--- projection bias generates
a commitment to aggression as it leads the biased agents to under-infer
market price elasticity. It is well known that in Cournot oligopoly
games, such commitment can be beneficial. For instance, if quantities
are chosen sequentially, the first mover obtains a higher payoff compared
to the case where quantities are chosen simultaneously. A similar
force is at work here, but the source of the commitment is different.
Misspecification about signal correlation leads to misinference about
$r^{\bullet}$, which causes the mutants to credibly respond to their
opponents' play in an overly aggressive manner. The rational residents,
who can identify the mutants in the population, back down and yield
a larger share of the surplus. While projection bias is beneficial
in small measure, it is also intuitive that excessive aggression would
be detrimental as well, as overproduction can be individually suboptimal.

By contrast, perfectly assortative matching favors biases which lead to more \emph{cooperative} behavior, and thus the commitment to aggression is detrimental to fitness. Correspondingly, we obtain the opposite result:
evolutionary stability selects correlation neglect.
\begin{prop}[Perfectly Assortative Matching Selects Correlation Neglect]
\label{prop:LQN_lambda_1}Let $r^{\bullet}>0,$ $\kappa^{\bullet}\in[0,1]$
be given. With high enough price volatility and large enough strategy
space and inference space, taking $(\Theta_{A},\Theta_{B})=(\Theta(\kappa_{A}),\Theta(\kappa_{B}))$
where $\kappa_{A}\le\kappa_{B}$, the fitness of $\Theta_{A}$ is
weakly higher than that of $\Theta_{B}$ in every EZ with any population
proportion $p$ and perfectly assortative matching ($\lambda=1$).
\end{prop}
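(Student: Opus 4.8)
The plan is to exploit the fact that $\lambda=1$ makes each group face only its own adherents, so an EZ decouples into two independent single-population learning problems and the population proportions $p$ play no role. Concretely, in Definition \ref{def:EZ} with $\lambda=1$ the weight $(1-\lambda)(1-p_g)$ on cross-group data vanishes while the weight $\lambda+(1-\lambda)p_g$ on own-group data equals $1$, so each group's belief together with its own-group strategy forms a \emph{symmetric Berk--Nash equilibrium} of the stage game under that group's theory, with the off-path strategies $a_{AB},a_{BA}$ irrelevant to both beliefs and fitness. Hence the fitness of $\Theta(\kappa)$ in any such EZ is the objective symmetric payoff $\Pi(\alpha(\kappa))$, where $\alpha(\kappa)$ is the symmetric Berk--Nash strategy, which depends on the theory alone and not on $p$ or on the other theory. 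It therefore suffices to show that $\alpha(\kappa)$ is (essentially) unique and that $\kappa\mapsto\Pi(\alpha(\kappa))$ is weakly decreasing on $[0,1]$, since then $\kappa_A\le\kappa_B$ yields $\Pi(\alpha(\kappa_A))\ge\Pi(\alpha(\kappa_B))$ in every EZ.

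Next I would characterize $\alpha(\kappa)$ in closed form using the lemmas of Appendix \ref{sec:LQNMore} together with Lemma \ref{lem:kappa} and Lemma \ref{lem:BR}. Because each firm's profit is linear in price and $\sigma_\zeta$ is free to absorb price variance, the best-fitting model within $\Theta(\kappa)$ drives the KL divergence to zero by matching the conditional mean of $P$ given the own signal: writing $\mathbb{E}_\kappa[s_{-i}\mid s_i]=\psi(\kappa)s_i$ and letting $\hat r(\kappa)$ be the inferred elasticity, the match requires $\hat r(\kappa)(1+\psi(\kappa))=r^{\bullet}(1+\psi(\kappa^{\bullet}))$. This pins $\hat r(\kappa)$ down independently of $\alpha$ and, since $\psi$ is strictly increasing, makes $\hat r(\kappa)$ strictly decreasing in $\kappa$, confirming the Section \ref{subsec:CournotIntuition} intuition that higher perceived correlation causes under-inference of elasticity. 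Plugging $\hat r(\kappa)$ into the linear-best-reply fixed point yields a unique interior $\alpha(\kappa)$ --- interiority being exactly what ``high enough price volatility and large enough strategy space and inference space'' provides --- and $\alpha(\kappa)$ is strictly increasing in $\kappa$, since a lower inferred elasticity breeds more aggressive production.

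Finally I would compare payoffs. The objective symmetric payoff is a strictly concave quadratic $\Pi(\alpha)=\alpha\sigma_\omega^2-\tfrac{\alpha^2}{2}\big(r^{\bullet}(\sigma_s^2+\rho)+\sigma_s^2\big)$, with $\sigma_s^2:=\sigma_\omega^2+\sigma_\epsilon^2$ and $\rho:=\mathrm{Cov}(s_i,s_{-i})$, maximized at some $\alpha^{\ast}$; since $\alpha(\cdot)$ is increasing, it is enough to show $\alpha(0)\ge\alpha^{\ast}$, for then $\alpha(\kappa)\ge\alpha^{\ast}$ for all $\kappa$ and $\Pi$ is decreasing along the equilibrium locus. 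Using $\psi(\kappa^{\bullet})=\rho/\sigma_s^2$ (the correctly perceived regression coefficient equals the objective one) and $\hat r(0)=r^{\bullet}(1+\psi(\kappa^{\bullet}))/(1+\psi(0))$, the inequality $\alpha(0)\ge\alpha^{\ast}$ reduces to $\psi(0)\ge 0$, which holds since $\psi$ is strictly positive (indeed $\psi(0)=\sigma_\omega^2/\sigma_s^2>0$). Intuitively, even an agent who believes signals are conditionally independent still thinks her opponent's quantity covaries positively with her own signal through the common demand shock, so she still faces the usual Cournot externality and over-produces relative to the symmetric optimum; correlation neglect only partially undoes the larger over-production of the correctly specified equilibrium. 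This gives $\alpha^{\ast}\le\alpha(\kappa_A)\le\alpha(\kappa_B)$ and hence $\Pi(\alpha(\kappa_A))\ge\Pi(\alpha(\kappa_B))$, the claim.

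The main obstacle is not the comparative-statics logic, which is the short parabola argument above, but the bookkeeping needed to make it rigorous: establishing existence, interiority, and uniqueness of the symmetric Berk--Nash equilibrium within the compact strategy and inference spaces (so the closed forms apply, no corner solutions arise, and ``every EZ'' has the same fitness), and verifying that the KL minimizer is precisely the mean-matching elasticity once $\sigma_\zeta$ absorbs the price variance. I would discharge all of these by invoking the technical lemmas of Appendix \ref{sec:LQNMore} under the stated volatility and large-space conditions.
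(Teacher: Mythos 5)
Your proposal is correct and follows essentially the same route as the paper's proof: with $\lambda=1$ each group's belief is pinned by its own symmetric play, giving the mean-matching inference $\hat r(\kappa)=r^{\bullet}\frac{1+\psi(\kappa^{\bullet})}{1+\psi(\kappa)}$, the unique symmetric equilibrium coefficient increasing in $\kappa$, and a comparison against the team solution using concavity of the objective symmetric payoff (your $\alpha(0)\ge\alpha^{\ast}$ step is just the paper's $\alpha_{gg}(\kappa)>\alpha_{TEAM}$ check, both resting on $\psi(0)>0$). The technical bookkeeping you defer is exactly what the paper's Lemmas \ref{lem:kappa}, \ref{lem:BR}, and \ref{lem:unique_inference} supply.
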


\noindent Correlation neglect leads agents to over-infer market price elasticity, enabling commitment
to more cooperative behavior (i.e., linear strategies with a smaller
coefficient $\alpha_{i}$). Rational opponents would take advantage
of such agents, but biased agents never match up against rational
opponents in a society with perfectly assortative matching. 
The contrast with uniform matching is illustrated in Figure
\ref{fig:assortativeFitness}---when $\lambda=1$, the misspecified
agents' payoffs are \emph{decreasing} in $\kappa$ around the true $\kappa^{\bullet}$.

In fact, the fragility of the correct specification is even starker when $\lambda=1$ compared to $\lambda=0$. Proposition \ref{prop:LQN_lambda_1} implies that mutant fitness is not only locally decreasing in $\kappa$ around $\kappa^{\bullet}$, but monotonic \emph{for all} $\kappa$ (whereas Figure \ref{fig:uniformFitness} illustrated the possibility of non-monotonicity of fitness in $\kappa$). Indeed, letting $\alpha^{TEAM}$
denote the symmetric linear strategy profile that maximizes the sum
of the two firms' expected objective payoffs, we show that among
symmetric strategy profiles, players' payoffs strictly decrease in
their aggressiveness in the region $\alpha>\alpha^{TEAM}$. We
also show that with $\lambda=1$ and any $\kappa\in[0,1],$ the equilibrium
play among two adherents of $\Theta(\kappa)$ strictly increases in
aggression as $\kappa$ grows, always being strictly more aggressive
than $\alpha^{TEAM}$. Lowering perception of $\kappa$ confers an
evolutionary advantage by bringing play monotonically closer to $\alpha^{TEAM}$ in equilibrium.

\subsection{\label{subsec:learningkey}The Necessity of the (Mis)Learning Channel}

In the previous sections, the misinference over $r$ allows agents to commit to behavior which increases their equilibrium payoffs against their typical opponents. We establish two results to emphasize that the statistical biases
may not be beneficial on their own, but only become beneficial due to the learning channel. First, assuming a single situation (as we have been working with so far), we show that if players were instead dogmatically correct about $r=r^{\bullet}$, then the predictions in Propositions \ref{prop:LQN_lambda_0} and \ref{prop:LQN_lambda_1} can be reversed:
\begin{prop}
\label{prop:no_learning_channel}Let $r^{\bullet}>0,$ $\kappa^{\bullet}\in[0,1]$
be given. With high enough price volatility and large enough strategy
space and inference space, there exists $\epsilon>0$ so that for
any $\kappa_{l},\kappa_{h}\in[0,1]$, $\kappa_{l}<\kappa^{\bullet}<\kappa_{h}\le\kappa^{\bullet}+\epsilon$,
the correctly specified model $\Theta(\kappa^{\bullet})$ is evolutionarily
stable against the singleton model $\{F_{r^{\bullet},\kappa_{h},\sigma_{\zeta}^{\bullet}}\}$
under uniform matching ($\lambda=0$), and evolutionarily stable against
the singleton model $\{F_{r^{\bullet},\kappa_{l},\sigma_{\zeta}^{\bullet}}\}$
under perfectly assortative matching ($\lambda=1$).
\end{prop}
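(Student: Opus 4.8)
The crucial structural feature is that $\Theta_B=\{F_{r^{\bullet},\kappa,\sigma_{\zeta}^{\bullet}}\}$ is a \emph{singleton}, which shuts the learning channel: a mutant holds the fixed (and partly correct) belief $(r^{\bullet},\sigma_{\zeta}^{\bullet})$ and simply subjectively best responds to whatever opponent strategy she faces, using the \emph{correct} elasticity $r^{\bullet}$ together with a possibly-wrong correlation $\kappa$. The resident theory $\Theta(\kappa^{\bullet})$ is correctly specified, so by the identification and existence/uniqueness lemmas in Appendix \ref{sec:LQNMore} the residents learn $r=r^{\bullet}$ and $\sigma_{\zeta}=\sigma_{\zeta}^{\bullet}$ in every EZ and thus behave exactly like objectively rational players. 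Hence a $(\Theta(\kappa^{\bullet}),\Theta_B)$ EZ is simply a collection of best-response fixed points among linear strategies (Lemma \ref{lem:BR}): one symmetric among residents, and either one asymmetric resident-vs-mutant profile (when $\lambda=0$) or one symmetric mutant-vs-mutant profile (when $\lambda=1$). I will repeatedly use three facts about this Cournot stage game: (i) the objective payoff is strictly decreasing in the opponent's aggression, and objective best responses $\gamma(\cdot)$ are strictly decreasing in the opponent's coefficient (strategic substitutes); (ii) the ``direct effect'' from the paper's characterization (Section \ref{subsec:CournotIntuition}, via $\mathbb{E}_{\kappa}[s_{-i}\mid s_{i}]=\psi(\kappa)s_{i}$ with $\psi$ strictly increasing, Lemma \ref{lem:kappa}): holding perceived elasticity fixed at $r^{\bullet}$, the subjective best-response coefficient $\beta_{\kappa}(\alpha_{-i})$ of a $\kappa$-believer is strictly decreasing in $\kappa$, pointwise in $\alpha_{-i}$; (iii) ``large enough strategy space and inference space'' keeps all relevant best responses interior.

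\textbf{Uniform matching, projection bias ($\lambda=0$, $\kappa_h>\kappa^{\bullet}$).} With $p=(1,0)$ the residents almost surely meet residents, so in any EZ they play the correctly-specified symmetric Cournot profile $(\alpha^{NE},\alpha^{NE})$ and the resident fitness is $U^{\bullet}(\alpha^{NE},\alpha^{NE})$, independent of $\kappa_h$. In the resident-vs-mutant match the resident plays the objective best response $\gamma(\cdot)$ and the mutant plays $\beta_{\kappa_h}(\cdot)$, so the equilibrium coefficients satisfy $\alpha_B=\beta_{\kappa_h}(\gamma(\alpha_B))$ and $\alpha_A=\gamma(\alpha_B)$, with $\alpha_B=\alpha^{NE}$ when $\kappa_h=\kappa^{\bullet}$. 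Since $\gamma$ and $\beta_{\kappa_h}$ are both decreasing the self-map $\alpha_B\mapsto\beta_{\kappa_h}(\gamma(\alpha_B))$ is increasing, and since $\beta_{\kappa_h}<\beta_{\kappa^{\bullet}}$ pointwise (the direct effect) its fixed point shifts strictly down: $\alpha_B(\kappa_h)<\alpha^{NE}$. Now consider the mutant's objective payoff along the opponent's reaction curve, $\alpha_i\mapsto U^{\bullet}(\alpha_i,\gamma(\alpha_i))$: it is a concave quadratic equal to $U^{\bullet}(\alpha^{NE},\alpha^{NE})$ at $\alpha_i=\alpha^{NE}$, with derivative there equal to $\partial_{\alpha_{-i}}U^{\bullet}\cdot\gamma'>0$ (a product of two negatives), so it is strictly increasing on $[0,\alpha^{NE}]$ and its maximizer (the Stackelberg coefficient) lies strictly above $\alpha^{NE}$. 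Therefore the mutant fitness $U^{\bullet}(\alpha_B(\kappa_h),\gamma(\alpha_B(\kappa_h)))$ is strictly below the resident fitness $U^{\bullet}(\alpha^{NE},\alpha^{NE})$. Choosing $\epsilon$ small enough that the resident-vs-mutant equilibrium remains interior and unique makes this ranking hold in \emph{every} EZ, which is exactly evolutionary stability of $\Theta(\kappa^{\bullet})$ against $\{F_{r^{\bullet},\kappa_h,\sigma_{\zeta}^{\bullet}}\}$.

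\textbf{Perfectly assortative matching, correlation neglect ($\lambda=1$, $\kappa_l<\kappa^{\bullet}$).} With $\lambda=1$ the two groups never interact, so for any population share the resident fitness is $U^{\bullet}(\alpha^{NE},\alpha^{NE})$ and the mutant fitness is $U^{\bullet}(\alpha^{MM},\alpha^{MM})$, where $\alpha^{MM}$ is the symmetric coefficient played between two $\kappa_l$-believers, i.e., the fixed point of $\beta_{\kappa_l}(\cdot)$. Since $\beta_{\kappa}$ is strictly decreasing in $\kappa$ and $\alpha^{MM}=\alpha^{NE}$ at $\kappa=\kappa^{\bullet}$, any correlation neglect $\kappa_l<\kappa^{\bullet}$ gives $\alpha^{MM}>\alpha^{NE}$. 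By the facts recorded just after Proposition \ref{prop:LQN_lambda_1} --- the symmetric Cournot coefficient $\alpha^{NE}$ strictly exceeds the team-optimal coefficient $\alpha^{TEAM}$, and among symmetric profiles the objective payoff strictly decreases in aggression on $\{\alpha>\alpha^{TEAM}\}$ --- we obtain $U^{\bullet}(\alpha^{MM},\alpha^{MM})<U^{\bullet}(\alpha^{NE},\alpha^{NE})$, so $\Theta(\kappa^{\bullet})$ is evolutionarily stable against $\{F_{r^{\bullet},\kappa_l,\sigma_{\zeta}^{\bullet}}\}$. (A large enough $\bar{M}_{\alpha}$ exceeds $\alpha^{MM}$ uniformly over $\kappa_l\in[0,1)$, so the mutant best response is interior; note this case requires no smallness of $|\kappa_l-\kappa^{\bullet}|$, matching the global statement.)

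\textbf{Main obstacle.} The hard part is the uniform-matching case: one must pin down the \emph{asymmetric} resident-vs-mutant equilibrium --- existence, uniqueness for small projection bias, and the comparative static $\alpha_B(\kappa_h)<\alpha^{NE}$ --- and then convert it into a strict payoff loss using that the reaction-curve payoff is increasing up to a Stackelberg peak lying strictly above $\alpha^{NE}$. This is where the closed-form best-response lemmas of Appendix \ref{sec:LQNMore} and the direct effect of $\kappa$ on aggression (with $r$ held at $r^{\bullet}$) carry the weight; the assortative case is comparatively routine once the direct effect and the ``aggression hurts beyond the team solution'' property are in hand.
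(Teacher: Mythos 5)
Your proposal is correct, and its overall architecture is the same as the paper's: in both, the residents learn the truth and play the objective Nash coefficient among themselves; for $\lambda=0$ the no-learning projection-biased mutant ends up \emph{less} aggressive than Nash against the rational resident and loses because the payoff along the resident's reaction curve is increasing at the Nash point (the Stackelberg coefficient lies above it); for $\lambda=1$ the correlation-neglect mutants play more aggressively than Nash among themselves, beyond the team optimum where symmetric payoffs are decreasing, exactly as in the paper's proof. Where you differ is in how the uniform-matching step is executed: the paper solves the asymmetric resident-vs-mutant fixed point in closed form, $\alpha_{BA}(\kappa)=\frac{\gamma(1+r^{\bullet}-\frac{1}{2}\psi(\kappa)r^{\bullet})}{1+2r^{\bullet}+(r^{\bullet})^{2}-\frac{1}{4}\psi(\kappa)\psi(\kappa^{\bullet})(r^{\bullet})^{2}}$, differentiates at $\kappa^{\bullet}$ to get $\alpha_{BA}'(\kappa^{\bullet})<0$, and combines this with the explicit computation $\bar U_i'(\alpha_{AA})>0$ from the proof of Proposition \ref{prop:LQN_lambda_0}; you instead run a monotone comparative static on the composed affine best-reply map (increasing, slope below one, shifted down pointwise by the direct effect of $\kappa$) and obtain $\bar U'(\alpha^{NE})>0$ by an envelope/sign argument, then use concavity of $\bar U$ to conclude it is increasing on $[0,\alpha^{NE}]$. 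This buys a slightly stronger conclusion — the ranking holds for \emph{every} $\kappa_h>\kappa^{\bullet}$, not just $\kappa_h\le\kappa^{\bullet}+\epsilon$ — whereas the paper's derivative argument is purely local (which suffices for the statement). One small inaccuracy in your write-up: the hedge that $\epsilon$ must be small ``so that the resident-vs-mutant equilibrium remains interior and unique'' is unnecessary, since the two reaction maps are affine with slope product below one for all $\kappa_h\in[0,1]$, so existence, uniqueness, and interiority of that fixed point hold globally; this is harmless for the claim as stated.
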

Using dogmatic beliefs over $r$ to shut down the learning channel,
misperceptions about $\kappa$ that used to confer an evolutionary
advantage for a $\lambda \in \{0,1\}$ can no longer invade
a society of correctly specified residents. Intuitively, this is because
\emph{an error about $\kappa$ has the direct effect of lowering welfare},
but also causes mislearning about $r$ and hence a stronger, indirect
effect of increasing welfare. In the case of uniform matching, for
instance, the direct effect of an increase in the perceived correlation
$\kappa$ is for players to use less aggressive strategies, anticipating
that any favorable signal about market demand is also shared by the
opponent.

For our second result on the necessity of mislearning, suppose that the environment
features multiple situations given by multiple feasible values of
$r^{\bullet}.$ Theorem \ref{thm:theoryneeded} does not apply directly, but the basic intuition
remains the same. Mistaken agents who do not learn have a fixed belief
about $r$ that cannot be  beneficial in all situations (i.e., for all values of $r^{\bullet}$), and so they
do not end up with  higher fitness than rational agents. But, misspecified
agents who can make different inferences about price elasticity in
different situations can invade a rational society. Even though models with $\kappa \neq \kappa^{\bullet}$ do not obtain the Stackelberg payoff in every situation, they outperform the correctly specified model in every situation, which is impossible for  any fully dogmatic model.
\begin{prop}
\label{prop:LQNLearningNeeded} For every $\overline{r}\ge3$, there
exists a $q\in\Delta([0,\bar{r}])$ such that the correctly specified
model is evolutionarily stable against any singleton model with
a fixed $(r,\kappa)$ when $r^{\bullet}\sim q$. On the other hand,
for every $\bar{r}>0$, there exists a projection bias model with
$\kappa>\kappa^{\bullet}$ so that the corrected specified model
is evolutionarily fragile against it for any $\rho\in\Delta([0,\bar{r}])$.
\end{prop}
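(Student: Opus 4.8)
Both parts concern $\lambda=0$ and $p=(1,0)$, so the plan is to reduce each to a family of single‑elasticity subproblems. Under these parameters the residents (who carry the correctly specified theory and hence correctly infer $r^{\bullet}$ from their own symmetric play, whose strictly positive Nash coefficient makes the price data pin down $r^{\bullet}$) face only residents, so their fitness in the situation with elasticity $r^{\bullet}$ is the unique symmetric Cournot--Nash payoff, which I write $v^{\mathrm{NE}}(r^{\bullet})$; the mutants face only residents, so the mutants' fitness is determined entirely by their per‑situation resident‑vs‑mutant play. I will use the closed‑form EZ characterizations and the EZ‑existence results from Appendix~\ref{sec:LQNMore} (valid with high enough price volatility and large enough strategy and inference spaces), and throughout write $\beta:=\sigma_{\omega}^{2}/(\sigma_{\omega}^{2}+\sigma_{\epsilon}^{2})$, so that $\mathbb{E}[\omega\mid s_{i}]=\beta s_{i}$ and $\bar M_{\alpha}\ge\beta$ under our maintained conditions.

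For the first claim I would first record the rigidity of a singleton mutant's behavior. A fixed belief $(r,\kappa)$ makes the mutant play the affine reaction $\alpha_{i}=b_{r,\kappa}(\alpha_{-i})$ coming from its first‑order condition, with intercept $\beta/(1+r)<\beta$ and slope $-\,r\psi(\kappa)/(2(1+r))\in[-\tfrac12,0]$ --- and, being a singleton, it uses this \emph{same} reaction function in every situation, so its committed coefficient in situation $r^{\bullet}$ is the unique fixed point of this reaction against the resident's true reaction, a quantity tied across situations through the single parameter $r$. Against a rational resident in situation $r^{\bullet}$ the mutant's objective payoff is a strictly concave quadratic in its committed coefficient, maximized at the Stackelberg coefficient $\bar\alpha(r^{\bullet})$ and equal to $v^{\mathrm{NE}}(r^{\bullet})$ at the Nash coefficient $\alpha^{\mathrm{NE}}(r^{\bullet})<\bar\alpha(r^{\bullet})$; hence the mutant strictly beats the residents in situation $r^{\bullet}$ only when its coefficient lands in a bounded interval around $\bar\alpha(r^{\bullet})$. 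I would then take $q$ supported on two elasticities $r_{1}$ (small) and $r_{2}=\bar r\,(\ge 3)$: as $r_{1}\downarrow 0$ the stage game becomes a single‑agent problem, so its Stackelberg premium $\to 0$ and its beneficial coefficient interval collapses to a point near $\beta$, which for $\bar r\ge 3$ is disjoint from and bounded away from situation $r_{2}$'s beneficial interval, so any coefficient useful in situation $r_{2}$ costs an order‑one amount in situation $r_{1}$ (roughly $\tfrac12\mathrm{Var}(s_{i})$ times the squared gap to $\beta$), while a coefficient near $\beta$ --- which is what situation $r_{1}$ wants, and, by the intercept--slope link, is forced whenever the mutant is near‑optimal there --- incurs an order‑one overshoot loss in the high‑elasticity situation $r_{2}$. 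Choosing the weight $q_{1}$ on $r_{1}$ large enough to dominate situation $r_{2}$'s Stackelberg premium, and then $r_{1}$ small enough that situation $r_{1}$'s (now tiny) premium cannot dominate situation $r_{2}$'s overshoot loss, I would verify from the explicit quadratics that $q_{1}v^{\mathrm{NE}}(r_{1})+q_{2}v^{\mathrm{NE}}(r_{2})$ is never exceeded by $q_{1}\Pi(r_{1};r,\kappa)+q_{2}\Pi(r_{2};r,\kappa)$ over $(r,\kappa)$, which is the first claim (an LQN analogue of Theorem~\ref{thm:theoryneeded}(1)).

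For the second claim, write $\Delta(r^{\bullet},\kappa)$ for the $\Theta(\kappa)$‑mutant's per‑situation EZ fitness minus the resident's; since this EZ is unique (Appendix~\ref{sec:LQNMore}), the correctly specified theory is evolutionarily fragile against $\Theta(\kappa)$ under $\rho$ iff $\int\Delta(\,\cdot\,,\kappa)\,d\rho>0$, so it is enough to make $\Delta(r^{\bullet},\kappa)>0$ for all $r^{\bullet}\in(0,\bar r]$ with a \emph{single} $\kappa>\kappa^{\bullet}$ (together with $\Delta(0,\cdot)\equiv 0$). For each fixed $r^{\bullet}>0$, Proposition~\ref{prop:LQN_lambda_0} --- which applies situation‑by‑situation because inference is carried out separately across situations --- already supplies a window $(\kappa^{\bullet},\bar\kappa(r^{\bullet})]$ on which $\Delta(r^{\bullet},\cdot)>0$; and at $r^{\bullet}=0$ the market price does not respond to quantities, so the best‑fitting model forces $\hat r=0$ no matter the perceived correlation, giving $\Delta(0,\cdot)\equiv 0$. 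The uniform step is to show $\Delta$ factors as $\Delta(r^{\bullet},\kappa)=(\kappa-\kappa^{\bullet})\,r^{\bullet}\,H(r^{\bullet},\kappa)$ with $H$ continuous on $[0,\bar r]\times[\kappa^{\bullet},1]$ --- it vanishes to first order in $\kappa-\kappa^{\bullet}$ because $\Theta(\kappa^{\bullet})$ is the correct theory, and to first order in $r^{\bullet}$ because the game degenerates to a decision problem --- and that $H(r^{\bullet},\kappa^{\bullet})>0$ for every $r^{\bullet}\in[0,\bar r]$, the crux being the cross‑derivative $\partial_{r^{\bullet}}\partial_{\kappa}\Delta(0,\kappa^{\bullet})>0$. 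This inequality is the quantitative form of the mechanism in Section~\ref{subsec:CournotIntuition}: a marginal increase in perceived correlation makes the agent under‑infer $r$ and so commit to beneficial aggression against a rational opponent, an effect of size linear in $r^{\bullet}$ near $0$. Continuity and compactness then yield $\epsilon_{0}>0$ with $H>0$ on $[0,\bar r]\times[\kappa^{\bullet},\kappa^{\bullet}+\epsilon_{0}]$, so every $\kappa\in(\kappa^{\bullet},\kappa^{\bullet}+\epsilon_{0}]$ makes $\Delta(r^{\bullet},\kappa)>0$ on $(0,\bar r]$, as needed.

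The main obstacle in the first claim is the genuinely two‑dimensional optimization over $(r,\kappa)$: the mutant can tilt its reaction‑function slope as well as its intercept, so its induced coefficient profile over $\{r_{1},r_{2}\}$ is not merely a constant, and the inequality must be checked for every admissible intercept--slope pair; this is where the hypothesis $\bar r\ge 3$ and the choice of $q$ are actually used, via the constraint that the slope magnitude is tied to the intercept by the single parameter $r$ (which is what forbids the mutant from being near‑Stackelberg in both situations at once). The main obstacle in the second claim is the uniform‑in‑$r^{\bullet}$ control as $r^{\bullet}\downarrow 0$, where the advantage disappears: the factorization $\Delta=(\kappa-\kappa^{\bullet})r^{\bullet}H$ with $H$ continuous and strictly positive at $(0,\kappa^{\bullet})$ is exactly what keeps the beneficial‑$\kappa$ window from pinching shut at small elasticities. (For $\rho=\delta_{0}$ the mutant exactly ties the residents, so the fragility conclusion is to be read for $\rho$ with $\rho(\{0\})<1$.)
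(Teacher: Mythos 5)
Your first part is essentially the paper's own argument: the paper also takes a two-point distribution, with mass $1-\varepsilon$ on a low-elasticity situation (taken there to be exactly $r^{\bullet}=0$, so that $\alpha_{i}=\gamma$ is strictly dominant) and mass $\varepsilon$ on $r^{\bullet}=\bar{r}$, and then splits singleton mutants into low/intermediate/high ranges of $r$ much as your intercept--slope discussion does; up to replacing $0$ by a small $r_{1}>0$, this goes through.

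The genuine gap is in your second part, at exactly the step you call the crux. You posit $\Delta(r^{\bullet},\kappa)=(\kappa-\kappa^{\bullet})\,r^{\bullet}\,H(r^{\bullet},\kappa)$ with $H$ continuous and $H(0,\kappa^{\bullet})>0$, i.e.\ $\partial_{r^{\bullet}}\partial_{\kappa}\Delta(0,\kappa^{\bullet})>0$. This is false, and the paper's own formulas show it. In the unique EZ with $\lambda=0$, $p=(1,0)$, the mutant's fitness is $\bar{U}_{i}(\alpha_{BA}(\kappa))$ and the resident's is $\bar{U}_{i}(\alpha_{AA})$, so $\partial_{\kappa}\Delta(r^{\bullet},\kappa^{\bullet})=\bar{U}_{i}'(\alpha_{AA})\cdot\alpha_{BA}'(\kappa^{\bullet})$. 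The proof of Proposition \ref{prop:LQN_lambda_0} computes $\bar{U}_{i}'(\alpha_{AA})=\mathbb{E}[s_{i}^{2}]\cdot\frac{r^{\bullet}\psi(\kappa^{\bullet})}{2}\cdot\frac{\frac{1}{2}\gamma\psi(\kappa^{\bullet})r^{\bullet}}{(1+r^{\bullet})(1+r^{\bullet}+\frac{1}{2}\psi(\kappa^{\bullet})r^{\bullet})}$, which is of order $(r^{\bullet})^{2}$, and solving the displayed equation for $D$ there gives $\alpha_{BA}'(\kappa^{\bullet})$ of order $r^{\bullet}$ (the coefficient on $D$ tends to $-\gamma(1+\psi(\kappa^{\bullet}))\neq0$ while the constant term is $\frac{1}{2}r^{\bullet}(x^{*})^{2}$). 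Hence $\partial_{\kappa}\Delta(r^{\bullet},\kappa^{\bullet})$ is of order $(r^{\bullet})^{3}$, so $H(0,\kappa^{\bullet})=0$ and your cross-derivative inequality fails.

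Worse, the uniform conclusion you want cannot be rescued by compactness, because the failure is exactly at the corner $r^{\bullet}\downarrow0$. Expanding the EZ to first order in $r^{\bullet}$ at a \emph{fixed} $\kappa>\kappa^{\bullet}$ gives $\alpha_{BA}-\alpha_{AA}\approx\gamma r^{\bullet}\frac{\psi(\kappa)-\psi(\kappa^{\bullet})}{2(1+\psi(\kappa))}$, whereas the interval of commitments around $\alpha_{AA}$ that beat the Nash payoff has width of order $\bar{U}_{i}'(\alpha_{AA})$, i.e.\ of order $(r^{\bullet})^{2}$ (the curvature of the exact quadratic $\bar{U}_{i}$ is of order one). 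Consequently $\Delta(r^{\bullet},\kappa)\approx c_{1}(r^{\bullet})^{3}\bigl(\psi(\kappa)-\psi(\kappa^{\bullet})\bigr)-c_{2}(r^{\bullet})^{2}\bigl(\psi(\kappa)-\psi(\kappa^{\bullet})\bigr)^{2}<0$ once $r^{\bullet}$ is small relative to $\psi(\kappa)-\psi(\kappa^{\bullet})$: the biased mutant overshoots, and the beneficial window $\bar{\kappa}(r^{\bullet})$ of Proposition \ref{prop:LQN_lambda_0} pinches shut (linearly) as $r^{\bullet}\downarrow0$. So for any fixed projection-bias theory there are $\rho$ (e.g.\ a point mass at a sufficiently small positive $r^{\bullet}$) against which the mutant strictly loses, and no $\epsilon_{0}>0$ uniform over $[0,\bar{r}]$ exists. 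At best your route delivers the weaker statement in which $\kappa$ is chosen \emph{after} $\rho$, with $\kappa-\kappa^{\bullet}$ small relative to the part of $\rho$'s support bounded away from zero. (For reference, the paper's appendix proof of this proposition only establishes the first claim, so there is no argument there for the second claim to compare against; but the factorization you propose does not supply one.)
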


\section{\label{sec:ABEE}Evolutionary Stability of Analogy Classes}

We now study a second major application---coarse thinking in games. \citet{jehiel2005analogy} introduced analogy-based
expectation equilibrium (ABEE) in extensive-form games, where agents
group opponents' nodes into \emph{analogy
classes} and only keep track of aggregate statistics of opponents'
average behavior within each analogy class. An ABEE is a strategy
profile where agents best respond to the belief that at all nodes
in every analogy class, opponents behave according to the average
behavior in the analogy class. The ensuing literature typically treats analogy classes as exogenously
given, interpreted as arising from coarse feedback or agents' cognitive
limitations.\footnote{Section 6.2 of \citet{jehiel2005analogy} mentions that if players
could choose their own analogy classes, then the finest analogy classes
need not arise, but also says ``it is beyond the scope of this paper
to analyze the implications of this approach.'' In a different class
of games, \citet{jehiel1995limited} similarly observes that another
form of bounded rationality (having a limited forecast horizon about
opponent's play) can improve welfare.} We use our framework to endogenize them.

\subsection{Relaxing the Observability of Strategies}

To study analogy-based reasoning, we relax the assumption that people correctly know others' strategies in equilibrium.  
We introduce the concepts of extended parameters and extended models:

\begin{defn}
An \emph{extended parameter} is a triplet $(a_{A},a_{B},F)$ with $a_{A},a_{B}\in\mathbb{A}$
and $F:\mathbb{A}^{2}\to\Delta(\mathbb{Y}).$ An\emph{ extended model}
$\overline{\Theta}$ is a collection of extended parameters: i.e.,
a subset of $\mathbb{A}^{2}\times(\Delta(\mathbb{Y}))^{\mathbb{A}^{2}}$.
\end{defn}
\noindent In addition to a conjecture $F$ about
how strategy profiles translate into consequences for the agent, extended models also contain conjectures about how group A and group B opponents will act. We
assume the marginal of the extended model on $(\Delta(\mathbb{Y}))^{\mathbb{A}^{2}}$
is metrizable. As before, we also assume each $F$ is given by a density
or probability mass function $f(a_{i},a_{-i}):\mathbb{Y}\to\mathbb{R}_{+}$
for every $(a_{i},a_{-i})\in\mathbb{A}^{2}$. We say that an extended model $\overline{\Theta}$ is \emph{correctly
specified }if $\overline{\Theta}=\mathbb{A}^{2}\times\{F^{\bullet}(\cdot,\cdot,G)\}$,
so the agent can make unrestricted inferences about others' play and
does not rule out the correct data-generating process $F^{\bullet}(\cdot,\cdot,G)$
for any situation $G$.

Defining zeitgeists for extended models is immediate, as we can simply
replace ``model'' with ``extended model'' in Definition \ref{def:Zeitdef}.
The equilibrium notion, however, is subtly different:
\begin{defn}
A zeitgeist with strategic uncertainty $\overline{\mathfrak{Z}}=(\overline{\Theta}_{A},\overline{\Theta}_{B},\mu_{A}(G),\mu_{B}(G),p,\lambda,a(G))_{G\in\mathcal{G}}$
is an\emph{ equilibrium zeitgeist with strategic uncertainty (EZ-SU)}
if for every $G\in\mathcal{G}$ and $g,g^{'}\in\{A,B\},$ 
 $a_{g,g^{'}}(G)\in\underset{\hat{a}\in\mathbb{A}}{\arg\max}\ \mathbb{E}_{(a_{A},a_{B},F)\sim\mu_{g}(G)}\left[\mathbb{E}_{y\sim F(\hat{a},a_{g^{'}})}(\pi(y))\right]$
and, for every $g\in\{A,B\},$ the belief $\mu_{g}(G)$ is supported
on 
\begin{align*}
\underset{(\hat{a}_{A},\hat{a}_{B},\hat{F})\in\overline{\Theta}_{g}}{\arg\min}\left\{ \begin{array}{c}
(\lambda+(1-\lambda)p_{g})\cdot D_{KL}(F^{\bullet}(a_{g,g}(G),a_{g,g}(G),G)\parallel\hat{F}(a_{g,g}(G),\hat{a}_{g})))\\
+(1-\lambda)(1-p_{g})\cdot D_{KL}(F^{\bullet}(a_{g,-g}(G),a_{-g,g}(G),G)\parallel\hat{F}(a_{g,-g}(G),\hat{a}_{-g})
\end{array}\right\} 
\end{align*}
where $-g$ means the group other than $g$.
\end{defn}

\noindent The only difference with Definition \ref{def:EZ} is that the KL divergence
is now taken with respect to the \emph{conjectured opponent's strategy}, part of the extended model. Conjectures now include others' play, in addition to stage game parameters.

\subsection{Defining Stable Population Shares\label{sect:StableShares}}

In this Section, we will also be interested in stable
population shares in a society that contains both rational and misspecified
players. We briefly introduce the following solution concept.

\begin{defn} \label{def:stability_interior}
Given population share $p\in(0,1)$ and an EZ (or EZ-SU), $p$ is said to be a\emph{ stable population share }given the
EZ (or EZ-SU) if both models have the same fitness.
\end{defn}
\noindent Since EZ(-SU)s are defined with interior population
shares, we can calculate the fitness of a model in terms of its
adherents' objective expected payoff. Whereas Definition \ref{def:stability}'s stability notion reflects performance with $(p_{A},p_{B})=(1,0)$, stability with interior
population shares as in Definition \ref{def:stability_interior} correspond to both models being co-existing with equal fitness.

\subsection{Centipede Games and Analogy-Based Reasoning}

We now analyze analogy-based reasoning in
the centipede game in Figure \ref{fig:The-centipede-game} (there
is only one situation, given by the payoffs in this game). P1 and
P2 take turns choosing Across (A) or Drop (D). The non-terminal nodes
are labeled $n^{k}$, $1\le k\le K$ where $K$ is an even number.
P1 acts at odd nodes and P2 acts at even nodes, where 
choosing Drop at $n^{k}$ leads to the terminal node $z^{k}$.
If Across is always chosen, then the terminal node $z^{end}$ is reached.
Every time a player $i$ chooses Across, the sum of
payoffs grows by $g>0,$ but if the opponent chooses Drop next,
$i$'s payoff is $\ell>0$ smaller than $i$'s payoff had they chosen Drop, with $\ell > g$. Thus, if $z^{end}$ is reached, both get $Kg/2;$ if $z^{k}$ is reached when $k$ is odd, both players obtain $\frac{g(k-1)}{2}$; and if if $z^{k}$ is reached when $k$ is even, P1 obtains $\frac{k-2}{2}g-\ell$, and P2 obtains $\frac{k}{2}g+\ell$.

\begin{figure}[h]
\begin{centering}
\includegraphics[scale=0.35]{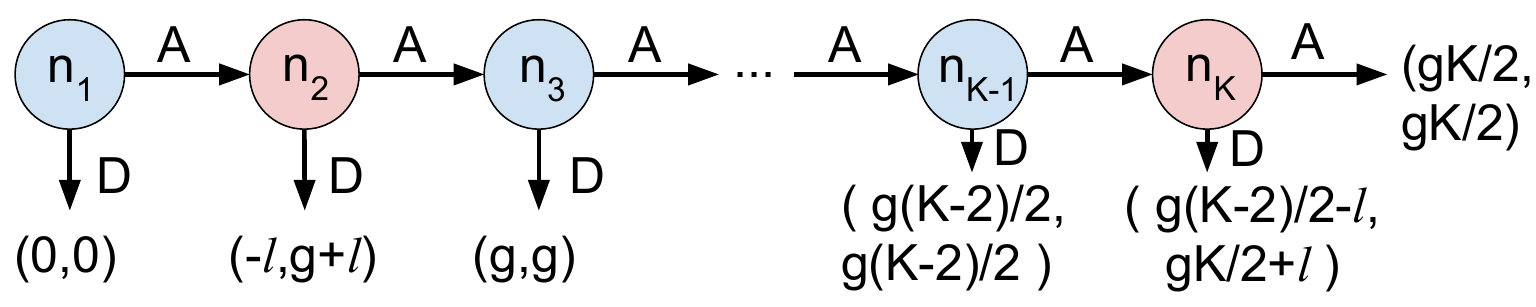}
\par\end{centering}
\vspace{-15bp}

\caption{\label{fig:The-centipede-game}The centipede game. P1 (blue) and P2 (red) alternate in choosing
Across (A) or Drop (D). Payoff profiles are shown at the terminal
nodes.}
\end{figure}

While this is an asymmetric stage game, we study a symmetrized version where two
matched agents are randomly assigned into the roles of P1 and P2.
Let $\mathbb{A}=\{(d^{k})_{k=1}^{K}\in[0,1]^{K}\}$, so each strategy
is characterized by the probabilities of playing Drop at various nodes
in the game tree. When assigned into the role of P1, the strategy
$(d^{k})$ plays Drop with probabilities $d^{1},d^{3},...,d^{K-1}$
at nodes $n^{1},n^{3},...n^{K-1}$. When assigned into the role of
P2, it plays Drop with probabilities $d^{2},d^{4},...,d^{K}$ at nodes
$n^{2},n^{4},...n^{K}$. The set of consequences is $\mathbb{Y}=\{1,2\}\times(\{z_{k}:1\le k\le K\}\cup\{z_{end}\})$,
where the first dimension of the consequence returns the player role
that the agent was assigned into, and the second dimension returns
the terminal node reached. Let $F^{\bullet}:\mathbb{A}^{2}\to\Delta(\mathbb{Y})$ be the objective distribution over consequences.

All agents know the game tree (i.e., $F^{\bullet}$), but some might adhere to a model which mistakenly assumes that their opponent plays Drop
with the same probabilities at all of their nodes. Formally, define
the restricted space of strategies $\mathbb{A}^{An}:=\{(d^{k})\in[0,1]^{K}:d^{k}=d^{k'}\text{ if }k\equiv k^{'}\text{(mod 2)}\}\subseteq\mathbb{A}$.
The correctly specified extended model is $\overline{\Theta}^{\bullet}:=\mathbb{A}\times\mathbb{A}\times\{F^{\bullet}\}.$
The misspecified model of interest is $\overline{\Theta}^{An}:=\mathbb{A}^{An}\times\mathbb{A}^{An}\times\{F^{\bullet}\}$,
reflecting a dogmatic belief that opponents play the same mixed action
at all nodes in the analogy class. We emphasize
these restriction on strategies only exists in the subjective beliefs
of the model $\overline{\Theta}^{An}$ adherents. All agents, regardless
of their model, actually have the strategy space $\mathbb{A}$. 

\subsection{Results}

The next proposition provides a justification for why we might expect
agents with coarse analogy classes given by $\mathbb{A}^{An}$ to
persist in the society.
\begin{prop}
\label{prop:abee}Suppose $K\ge4$ and $g>\frac{2}{K-2}\ell$. For
any matching assortativity $\lambda\in[0,1],$ the correctly specified
extended model $\overline{\Theta}^{\bullet}$ is evolutionarily stable
with strategic uncertainty against itself, but it is not evolutionarily
stable with strategic uncertainty against the misspecified extended
model $\overline{\Theta}^{An}.$ Also, $\overline{\Theta}^{An}$
is not evolutionarily stable against $\overline{\Theta}^{\bullet}$,
unless $\lambda=1$.
\end{prop}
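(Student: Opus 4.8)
The plan is to reduce the proposition to the analysis of three kinds of two-player interactions that can arise in the relevant EZ-SUs --- rational vs.\ rational, analogy vs.\ analogy, and rational vs.\ analogy --- to compute their payoffs, and then to assemble the fitnesses at $p=(1,0)$. I would first settle the rational--rational interaction. A standard backward-induction argument shows the centipede game has the unique Nash outcome ``P1 Drops at $n^1$'': at $n^K$, Drop strictly dominates Across for P2 since $u_2^K=\tfrac{Kg}{2}+\ell>\tfrac{Kg}{2}$, and moving backwards the mover at each node strictly prefers to Drop given the equilibrium continuation, because Dropping avoids the penalty $\ell>0$. In an EZ-SU between two correctly specified groups each player best responds to a correct conjecture about the opponent's on-path play (in the $p_B\to0^+$ limit, a correctly specified group's conjecture about any opponent group is pinned down by its data, and the two groups must play mutual best responses against each other), so every such interaction pays $0$. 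Part 1 is then immediate: with $\Theta_A=\Theta_B=\overline{\Theta}^\bullet$ and $p=(1,0)$ both groups have fitness $0$, so $\overline{\Theta}^\bullet$ weakly dominates itself, and the all-Drop profile witnesses existence.

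Next I would characterize the analogy--analogy interaction --- an ABEE of the symmetrized game under the two-class partition $\mathbb{A}^{An}$. The crucial computation is a one-node comparison: an analogy reasoner who conjectures her opponent Drops with constant probability $\delta$ at each of the opponent's nodes, and who plans to play Across afterward (Dropping only at her last node, where Dropping is dominant), strictly prefers to play Across at a given node exactly when $\delta<\tfrac{g}{g+\ell}$; a short telescoping argument then extends this to every one of her nodes but the last. Consistency of the two-class conjecture with the induced consequence distribution forces $\delta=\tfrac{2}{K}$ (exactly one Drop --- in role P2, at $n^K$ --- among the $K/2$ nodes of each role), and $\tfrac{2}{K}<\tfrac{g}{g+\ell}$ is equivalent to $g>\tfrac{2}{K-2}\ell$. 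Hence there is an EZ-SU in which two analogy reasoners play Across through $n^{K-1}$, the P2 role Drops at $n^K$, terminal $z^K$ is reached, and each earns the role-average $\tfrac{1}{2}\bigl(\tfrac{(K-2)g}{2}-\ell\bigr)+\tfrac{1}{2}\bigl(\tfrac{Kg}{2}+\ell\bigr)=\tfrac{(K-1)g}{2}>0$. I would also record that all-Drop remains an EZ-SU here, and that in \emph{any} analogy--analogy EZ-SU each player earns at least $0$: at every terminal node of the symmetrized centipede the two players' payoffs sum to $(k-1)g\ge0$ (or $Kg$ at $z^{end}$), and since both play the same strategy with roles assigned $50$--$50$ they split this sum equally.

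The rational--analogy interaction is handled by the same comparison, with one player rational and correct about the other's actual play. Against an analogy P1 who plays Across throughout, the rational P2 plays Across to $n^K$ and Drops there (the analogy player gets $\tfrac{(K-2)g}{2}-\ell$, the rational player $\tfrac{Kg}{2}+\ell$); against an analogy P2, the rational P1 plays Across to $n^{K-1}$ and Drops there (each gets $\tfrac{(K-2)g}{2}$); the induced Drop frequencies are again $\tfrac{2}{K}$ on each side, so the coarse conjectures are consistent exactly under $g>\tfrac{2}{K-2}\ell$. Thus in a mixed match the analogy reasoner earns $\tfrac{(K-2)g}{2}-\tfrac{\ell}{2}>0$ and the rational reasoner $\tfrac{(K-1)g}{2}+\tfrac{\ell}{2}>0$. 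Now assemble at $p=(1,0)$. For the second claim, set $\Theta_A=\overline{\Theta}^\bullet,\Theta_B=\overline{\Theta}^{An}$: group $A$ only meets group $A$ (fitness $0$), while group $B$ meets $B$ with probability $\lambda$ and $A$ with probability $1-\lambda$, for fitness $\lambda\tfrac{(K-1)g}{2}+(1-\lambda)\bigl(\tfrac{(K-2)g}{2}-\tfrac{\ell}{2}\bigr)>0$; so $\overline{\Theta}^\bullet$ is not evolutionarily stable against $\overline{\Theta}^{An}$ for any $\lambda$. For the third claim, with $\overline{\Theta}^{An}$ resident and $\overline{\Theta}^\bullet$ mutant: if $\lambda<1$, combine the all-Drop analogy--analogy EZ-SU (resident fitness $0$) with the mixed interaction to get rational-mutant fitness $(1-\lambda)\bigl(\tfrac{(K-1)g}{2}+\tfrac{\ell}{2}\bigr)>0$, so $\overline{\Theta}^{An}$ is not evolutionarily stable against $\overline{\Theta}^\bullet$; if $\lambda=1$ the mutant never meets the resident, so its fitness equals its rational--rational fitness $0$, while the resident's fitness is its analogy--analogy payoff, which is always $\ge0$ --- hence $\overline{\Theta}^{An}$ is evolutionarily stable against $\overline{\Theta}^\bullet$.

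The main obstacle is turning the coarse-conjecture fixed points of the second and third steps into bona fide EZ-SUs. This means: computing the KL-minimizing two-class conjecture against the induced consequence distribution and checking it equals $(0,\tfrac{2}{K})$ on the two analogy classes; verifying the best-response telescoping at \emph{every} node of each player's role, not merely the last two (and that the optimum is a strict, unique pure threshold, so there is no hidden indeterminacy); and handling the $p_B\to0^+$ limit so that the correctly specified group's conjecture about the vanishing group is determined by continuity rather than being a free parameter at $p_B=0$ exactly. The hypothesis $g>\tfrac{2}{K-2}\ell$ enters precisely as the tightness of $\tfrac{2}{K}<\tfrac{g}{g+\ell}$, and $K\ge4$ is needed both so the analogy partition is nontrivial and so this inequality can hold.
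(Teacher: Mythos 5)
Your overall plan is the paper's: decompose into the three match types, pin down the coarse $2/K$ conjectures by KL-minimization, use $g>\tfrac{2}{K-2}\ell$ as the continuation threshold, and exhibit an EZ-SU in each direction in which the resident's fitness is strictly below the mutant's (using all-Drop among analogy residents for the third claim, and noting $\lambda=1$ kills the mixed match). Your mixed-match profile (the maximal-continuation one, rational P1 dropping at $n^{K-1}$) differs from the one the paper constructs in its proof (where the rational P1 drops immediately against an analogy P2 who drops at every even node), but both yield strictly positive analogy fitness against a zero-fitness correctly specified resident, so that substitution is harmless.

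There is, however, one step whose justification would fail as written: the claim that every correct-vs-correct interaction pays $0$ because ``the two groups must play mutual best responses against each other'' and the centipede game has a unique Nash outcome. Under strategic uncertainty, the correctly specified extended theory pins down conjectures about the opponent's strategy \emph{only on the path of play} (off-path components of the conjectured strategy are unconstrained by KL-minimization), so an EZ-SU between two correct groups is a self-confirming-type object, not a mutual best response to actual strategies, and you cannot invoke Nash uniqueness directly. The conclusion is true, but it needs the on-path unraveling argument the paper gives: take the last non-terminal node $n_L$ reached with positive probability; Drop must be played there with probability one (by maximality of $L$, or by strict dominance at $n^K$); since $n_L$ is on path, the opponent's KL-minimizing conjecture must assign Drop probability one there; hence continuing at $n_{L-1}$ loses $\ell$ relative to Dropping and cannot be optimal, a contradiction. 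This is a short fix, but it is precisely where the EZ-SU (rather than Nash) structure bites, so it cannot be waved through. A second, more minor imprecision: KL-minimization does not force a single constant $\delta=2/K$ ``at each of the opponent's nodes''; the coarse conjecture is pinned down separately by parity/role, and, e.g., in the cooperative analogy-vs-analogy profile the conjectured Drop rate at the opponent's P1-role (odd) nodes is $0$, with $2/K$ only at the P2-role (even) nodes. This does not overturn your incentive inequalities (they hold a fortiori), but the statement of what consistency ``forces'' should be corrected when you carry out the verification you list as the main obstacle.
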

In contrast to the results from Section \ref{sec:LQN},
whereby a misspecified inference over $r$ was harmful for $\lambda=1$ if and only if such an inference were helpful for $\lambda=0$, in this environment the correctly specified
extended model is not evolutionarily stable against a coarse reasoner for \emph{any} level of assortativity. Here, the conditional fitness of
$\overline{\Theta}^{An}$ against both $\overline{\Theta}^{\bullet}$
and $\overline{\Theta}^{An}$ can strictly improve on the correctly
specified residents' equilibrium fitness. This is because the matches
between two adherents of $\overline{\Theta}^{\bullet}$ must result
in Dropping at the first move in equilibrium, while matches where
at least one player is an adherent of $\overline{\Theta}^{An}$ either
lead to the same outcome or lead to a Pareto dominating payoff profile
as the misspecified agent misperceives the opponent's continuation
probability and thus chooses Across at almost all of the decision
nodes. 

However, $\overline{\Theta}^{An}$ is not evolutionarily stable against
$\overline{\Theta}^{\bullet}$ either. The correctly specified agents
can exploit the analogy reasoners' mistake and receive higher payoffs
in matches against them than the misspecified
agents receive in matches against each other. Hence, no homogeneous population can be stable, as the resident model would have lower fitness than the
mutant model in equilibrium. Thus we determine stable shares as defined in Section \ref{sect:StableShares}, focusing on the EZ-SU where Across is played as
often as possible.

We take $\lambda=0$ throughout the remainder of this section. Suppose $K\ge4$ and $g>\frac{2}{K-2}\ell$. Consider the\emph{ maximal
continuation EZ-SU}: (1) misspecified agents always play Across except at node $K$ where they choose
Drop, and (2) correctly
specified agents (i) matched with misspecified agents play Drop at nodes $K-1$ and $K$ and Across otherwise, and (ii) matched with correctly specified agents always play Drop. We verify this indeed forms an EZ-SU.

\begin{prop}
\textup{\label{prop:stable_pop_share} }\emph{Suppose $\lambda=0$, $K\ge4$ and
$g>\frac{2}{K-2}\ell$.}\textup{\emph{ }}\textup{The two models
have the same fitness in the maximal continuation EZ-SU of the centipede
game if and only if $p_{B}^{*}=1-\frac{\ell}{g(K-2)}$, and thus $p_{B}^{*}$
is strictly increasing in $g$ and $K$, and strictly decreasing in $\ell.$}
\end{prop}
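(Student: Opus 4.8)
The plan is to show that the named maximal-continuation strategy profile is an EZ-SU for every interior population share $p$ under uniform matching, then to extract the per-match objective payoffs, and finally to solve a single linear equation in $p_B$ for the fitness-equalizing share; the monotonicity claims are read off the resulting closed form.

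For the EZ-SU verification I would first pin down the best-fitting conjectures of the analogy reasoners (adherents of $\overline{\Theta}^{An}$). Such an agent observes only her realized role and terminal node; her matches against group $A$ and group $B$ constrain the conjectures $\hat a_A$ and $\hat a_B$ separately, and the model of the tree is forced to be $F^\bullet$, so the KL-minimization splits into a one-parameter maximum-likelihood problem within each analogy class. In the P1 role she always reaches $z^{K}$ against either type of opponent, and the common Drop probability on the opponent's P2 nodes that makes that path most likely is $2/K$; in the P2 role she reaches $z^{K}$ against a $B$-type opponent (who plays Across at all P1 nodes), so the best-fitting Drop probability there is $0$, but she reaches $z^{K-1}$ against an $A$-type opponent (who Drops at node $K-1$), so there it is again $2/K$. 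Given these conjectures, Drop is dominant for her at node $K$, and a backward recursion on the growing pie shows that her Across incentives at all earlier nodes reduce either to $g>0$ (in matches where she conjectures the opponent never Drops) or to $(1-\tfrac{2}{K})g>\tfrac{2}{K}\ell$, i.e.\ $g>\tfrac{2}{K-2}\ell$. For the correctly specified agents the tree is known, and their conjectures about an opponent's play at nodes off the realized path (node $K$ against a $B$-type, and every node against an $A$-type) are unconstrained by data; one supports the profile by giving them the pessimistic conjecture that the opponent Drops at the first such node, under which Drop is optimal exactly when continuing would cost the agent $\ell$, yielding Drop at the first move against another $A$-type (as already noted after Proposition \ref{prop:abee}) and Drop at node $K-1$ against a $B$-type. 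Finally I would check that these conjectures are never contradicted by the realized data.

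With the profile confirmed as an EZ-SU, averaging over the equiprobable role assignment yields the per-match expected payoffs: a match between two $A$-types gives each $A$ agent $0$; a match between two $B$-types ends at $z^{K}$ and gives each $B$ agent $\tfrac12\big(\tfrac{K-2}{2}g-\ell\big)+\tfrac12\big(\tfrac{K}{2}g+\ell\big)=\tfrac{(K-1)g}{2}$; and a match between an $A$-type and a $B$-type ends at $z^{K-1}$ when the $A$ agent is P1 and at $z^{K}$ when she is P2, giving the $A$ agent $\tfrac{(K-1)g+\ell}{2}$ and the $B$ agent $\tfrac{(K-2)g-\ell}{2}$. Under uniform matching the fitnesses are then $W_A=p_B\cdot\tfrac{(K-1)g+\ell}{2}$ and $W_B=p_A\cdot\tfrac{(K-2)g-\ell}{2}+p_B\cdot\tfrac{(K-1)g}{2}$, and $W_A=W_B$ reduces to $p_B\,\ell=p_A\big((K-2)g-\ell\big)$. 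Since $g>\tfrac{2}{K-2}\ell$ forces $(K-2)g-\ell>0$, the left side strictly increases and the right side strictly decreases in $p_B$, so there is a unique solution, and substituting $p_A=1-p_B$ gives $p_B^{*}=1-\tfrac{\ell}{g(K-2)}\in(0,1)$. The comparative statics are then immediate: $\partial p_B^{*}/\partial g=\tfrac{\ell}{g^{2}(K-2)}>0$, $\partial p_B^{*}/\partial \ell=-\tfrac{1}{g(K-2)}<0$, and $\tfrac{\ell}{g(K-2)}$ strictly decreases in $K$ when $K\ge4$, so $p_B^{*}$ strictly increases in $K$.

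I expect the EZ-SU verification to be the main obstacle, and within it two points: correctly identifying the maximum-likelihood analogy conjectures in each match, especially the asymmetry whereby the analogy reasoner attributes Drop probability $2/K$ to opponents in almost every role but $0$ to a $B$-type opponent acting as P1; and carrying the backward recursion far enough to confirm that the single inequality $g>\tfrac{2}{K-2}\ell$ is precisely what sustains Across at all of her decision nodes. Once the profile is known to be an EZ-SU, the payoff bookkeeping and the linear solve are routine.
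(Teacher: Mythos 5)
Your proposal is correct and follows essentially the same route as the paper: the paper's proof likewise establishes the analogy reasoners' best-fit Drop conjecture of $2/K$ (with $0$ for a fellow $B$ opponent in the P1 role), uses $g>\frac{2}{K-2}\ell$ to sustain Across, computes the same match-by-match payoffs, and solves the same linear fitness-equalization, obtaining $p_A^*=\frac{\ell}{g(K-2)}$, i.e.\ $p_B^*=1-\frac{\ell}{g(K-2)}$. Your additional verification of the correctly specified agents' off-path conjectures and the explicit comparative-statics derivatives are fine and consistent with the paper's treatment.
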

Intuitively, $p_{B}^{*}$ reflects the fraction of society expected to be analogy reasoners if long run population changes are determined by fitness.  Under the maintained assumption $g>\frac{2}{K-2}\ell,$ the stable
population share of misspecified agents is strictly more than 50\%,
and the share grows with more periods and a larger increase
in payoffs from contintuation. The main intuition is that the
misspecified model has a higher conditional fitness than the rational
model against rational opponents. The former leads to many periods
of continuation and a high payoff for the biased agent when the rational
agent eventually drops, but the latter leads to 0 payoff from immediate
dropping. On the other hand, the misspecified model has a lower conditional
fitness than the rational model against misspecified opponents. For
the two groups to have the same expected fitness, there must be fewer
rational opponents (i.e., a smaller stable population share $p_{A}^{*}$)
when $g$ and $K$ are higher.

Note that, when payoffs are specified as above, two successive periods of continuation lead
to a strict Pareto improvement in payoffs. Consider instead the dollar
game \citep{reny1993common} in Figure \ref{fig:The_dollar_game},
a variant with a more ``competitive'' payoff
structure, where an agent always gets zero when the opponent
plays Drop, at all parts of the game tree. Assume total payoff
increases by 1 in each round. If the first player stops immediately,payoffs are (1, 0), and if the second player continues at the
final node $n^{K}$, payoffs are $(K+2,0).$

\begin{figure}[h]
\begin{centering}
\includegraphics[scale=0.35]{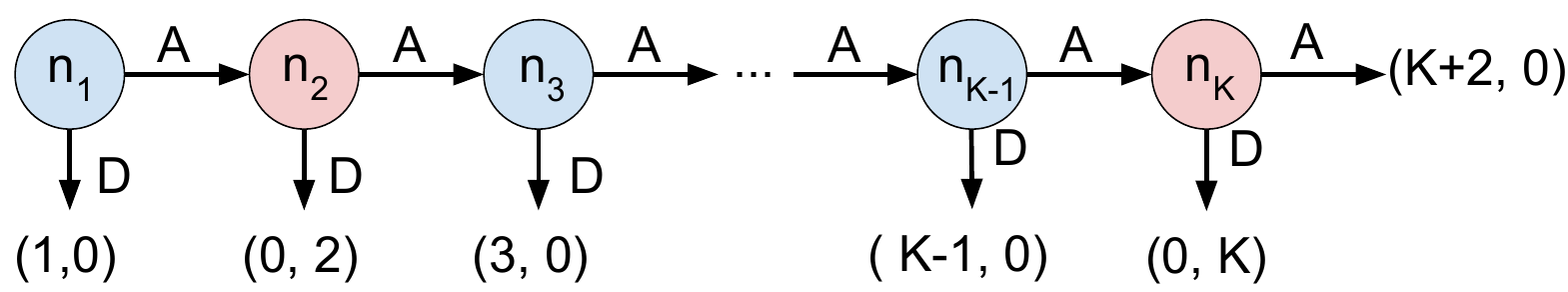}
\par\end{centering}
\vspace{-10bp}

\caption{\label{fig:The_dollar_game}The dollar game. Players 1 (blue) and 2 (red) alternate in choosing Across
(A) or Drop (D). Payoff profiles are shown at the terminal nodes.}
\end{figure}

\begin{prop}
\textup{\label{prop:stable_pop_share_dollar} For $\lambda=0$ and every population
size $(p,1-p)$ with $p\in[0,1],$ }\emph{the maximal continuation
EZ-SU}\textup{ is an EZ-SU where the fitness of $\overline{\Theta}^{\bullet}$
is strictly higher than that of $\overline{\Theta}^{An}$.}
\end{prop}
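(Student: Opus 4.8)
The plan is to (i) pin down the maximal continuation EZ-SU as an explicit strategy profile and verify it is an EZ-SU, then (ii) read off the conditional payoffs and (iii) compare fitnesses, under uniform matching. I would first argue that in \emph{any} EZ-SU the match between two $\overline{\Theta}^{\bullet}$ adherents ends with immediate Drop at $n^{1}$: Drop is strictly dominant at the last decision node $n^{K}$ (its mover takes the whole pot $K$ rather than $0$), and a correctly specified player --- whose conjecture about opponents' play is pinned to the data and hence correct along the realized path --- anticipates this, so Drop unravels back to $n^{1}$; a profile with continuation among rationals cannot be an EZ-SU, since the path it induces would expose continuation behavior inconsistent with the conjecture that was supposed to sustain it. Hence an $\overline{\Theta}^{\bullet}$ adherent obtains $\tfrac12$ against another (pot $1$ at $z^{1}$, symmetrized over the two player roles). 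For the remaining match types I would posit the profile with the same combinatorial shape as in the centipede case of Proposition~\ref{prop:stable_pop_share} (assuming $K\ge4$): every $\overline{\Theta}^{An}$ adherent continues at all of her own nodes except the forced Drop at $n^{K}$ in the P2 role, while an $\overline{\Theta}^{\bullet}$ adherent facing an $\overline{\Theta}^{An}$ adherent rides along and then, in the P1 role, Drops at $n^{K-1}$ to take the pot $K-1$ just before the forced-Drop node, and in the P2 role Drops at $n^{K}$ to take the pot $K$. Checking that this is an EZ-SU breaks into behavior optimality --- node-by-node backward comparisons of ``Drop now for pot $k$'' versus ``continue and Drop later,'' where for an $\overline{\Theta}^{An}$ adherent one uses that her best-fitting \emph{stationary} opponent Drop-rate makes going all the way look attractive given the $K+2$ payoff at $z^{end}$, with the knife-edge indifferences (e.g.\ at $n^{K-1}$ when $K=4$) broken toward continuation as ``maximal continuation'' prescribes --- and inference optimality, which reduces to the fact that the KL-best constant per-node Drop rate fitted to a path with $m$ continuations and at most one Drop equals $1/(m+1)$ (here $2/K$) and fitted to an all-continuation path equals $0$; the KL weights across the two match types are immaterial because the conjecture about an opponent group enters only that group's KL term.

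With the profile fixed, the conditional payoffs are: $\overline{\Theta}^{\bullet}$ earns $\tfrac12$ against $\overline{\Theta}^{\bullet}$ and $\tfrac12(K-1)+\tfrac12 K=K-\tfrac12$ against $\overline{\Theta}^{An}$; $\overline{\Theta}^{An}$ earns $\tfrac12\cdot0+\tfrac12 K=\tfrac{K}{2}$ against $\overline{\Theta}^{An}$ and $0$ in both roles against $\overline{\Theta}^{\bullet}$. Under uniform matching both theories face the same opponent composition $(p,1-p)$, so it is enough that, conditional on the opponent's type, the correctly specified adherent strictly outearns the misspecified adherent: $\tfrac12>0$ against a rational opponent and $K-\tfrac12>\tfrac{K}{2}$ against a misspecified opponent. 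Integrating, $\mathrm{fit}(\overline{\Theta}^{\bullet})-\mathrm{fit}(\overline{\Theta}^{An})=p\cdot\tfrac12+(1-p)\bigl(K-\tfrac12-\tfrac{K}{2}\bigr)=\tfrac12\bigl[(K-1)-p(K-2)\bigr]\ge\tfrac12>0$ for every $p\in[0,1]$, which is the claim.

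The main obstacle is the EZ-SU verification: one has to keep track simultaneously of the correctly specified players' exact (and correct) conjectures about the misspecified players' \emph{non}-stationary strategies, the misspecified players' best-fitting stationary conjectures about whatever opponent mix they actually face, and then confirm that the induced best replies coincide with the posited profile at every node --- in particular that the several indifferences resolve on the continuation side, so that the profile is genuinely the maximal-continuation equilibrium rather than merely some EZ-SU. Once the profile is in hand, the payoff bookkeeping and the one-line fitness comparison are routine, and the contrast with the centipede game of Proposition~\ref{prop:stable_pop_share} is exactly that the dollar game's competitive payoff structure makes continuation \emph{transfer} the pot rather than Pareto-improve, so a correctly specified player matched with a continuing misspecified player is the one who captures it.
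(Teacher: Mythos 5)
Your proposal is correct and follows essentially the same route as the paper's proof: you posit the same maximal-continuation profile, derive the same KL-best stationary conjectures ($2/K$ at the relevant nodes, $0$ on all-continuation data), verify node-by-node optimality with the same ``drop for pot $k$ versus $\tfrac{K-2}{K}(k+2)$'' comparisons, and reach the identical conditional payoffs $\tfrac12$, $K-\tfrac12$, $0$, $\tfrac{K}{2}$ and the linear-in-$p$ fitness gap. Your extra unraveling claim for the rational-vs-rational match and your explicit resolution of the $K=4$ knife-edge indifference at $n^{K-1}$ are harmless refinements of the same argument.
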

While maximal continuation remains an EZ-SU, the rational model strictly outperforms the misspecified model for all population shares.
Provided the maximal continuation EZ-SU remains focal, we
should thus expect no analogy reasoners in the long run with this
stage game. Intuitively, the change in the payoffs means one player can only do better \emph{at the expense} of the opponent. Since $\lambda=0$, this implies the less cooperative strategy will be selected. But unlike Section \ref{sec:LQN}, it is the correctly specified model that cannot be exploited.

In a recent survey, \citet{jehiel2020survey} points
out that the misspecified Bayesian learning approach to analogy classes
should aim for ``a better understanding of how the subjective theories
considered by the players may be shaped by the objective characteristics
of the environment.''\footnote{\citet{jehiel2020survey} interprets ABEEs as players adopting the
``simplest'' explanations of observed aggregate statistics of play with coarse feedback.
An objectively coarse feedback structure can lead agents to adopt
the subjective belief that others behave in the same way in all contingencies
in the same coarse analogy class.} 
Taken together, our analysis in this section provides predictions regarding when coarse reasoning should be more prevalent, specifically when the
payoff structure is ``less competitive.'' When this is indeed the case, the bias become more prevalent with
a longer horizon and with faster payoff growth.

\section{\label{subsec:literature}Related Literature}

Our paper contributes to the literature on misspecified Bayesian learning
by proposing a framework to assess which specifications are more likely
to persist based on their objective performance. Most prior work on
misspecified Bayesian learning takes the misspecification as exogenous, studying the subsequent implications in both single-agent decision problems\footnote{See \citet*{nyarko1991learning,fudenberg2017,heidhues2018unrealistic,he_gambler}.}
and multi-agent games.\footnote{See \citet*{bohren2016informational,bohren2017bounded,jehiel2018investment,molavi2019macro,dasaratha2020network,ba2020overconfidence,frick2019misinterpreting,murooka2021multi}.}
A number of papers establish general convergence properties of misspecified
learning.\footnote{See \citet*{esponda2016berk,esponda2019asymptotic,frick2019stability,FLS_general_conv}.} As discussed in the introduction, our work is part of a separate line of research on selecting between multiple specifications for
Bayesian learning, focusing on various
criteria that differ from objective expected payoffs as in our approach.

This paper is closest to two independent and contemporaneous papers,
\citet*{FL_mutation} and \citet*{FII_welfare_based}, who consider
welfare-based criteria for selecting among misspecifications in single-agent
decision problems.\footnote{\citet*{FL_mutation} study a framework where a
continuum of agents with heterogeneous misspecifications arrive each
period and learn from their predecessors' data. \citet*{FII_welfare_based}
assign a \emph{learning efficiency index} to every misspecified signal
structure and conduct a robust comparison of welfare under different
misspecifications.} We differ in highlighting that the learning channel can \emph{strictly} expand the possibility for misspecifications to invade rational societies in strategic settings (relative to biased invaders who do not draw inferences), and we show that misspecifications can lead to different best responses in different environments and thus induce new stability phenomena.

Our framework of competition between different specifications for
Bayesian learning is inspired by the evolutionary game theory literature. Relative to this literature, our contribution is to accommodate misspecified inference. 
We follow past work that also
uses objective payoffs as the selection criterion for subjective preferences in games and decision problems
(e.g., \citet*{Dekeletal2007}, see also the surveys \citet*{robson2011evolutionary}
and \citet*{Alger2019survey}) and the evolution of constrained strategy
spaces \citep*{heller2015three,heller2016rule}. Like us, \citet*{GuthNapel2006} allow for stage-game heterogeneity, studying the ability to discriminate between these games.

When agents entertain
fundamental uncertainty about payoff parameters,  our framework applies evolutionary forces to \emph{sets
of} preferences (i.e., models with multiple possible parameter values).
This allows us to ask our central question: when does the ability to draw inference expand the scope for errors to invade rational societies? Developing a framework which accommodates inference is necessary to answer this question, providing the main point of departure from the literature on the indirect evolutionary approach. Our emphasis on \emph{Bayesian} learning also distinguishes our work from papers that study the evolution of different belief-formation processes \citep{heller2020biased,berman2020naive}, who take a reduced-form (and possibly non-Bayesian) approach
and consider arbitrary inference rules. 

\section{\label{sec:Concluding-Discussion}Concluding Discussion}

We have introduced an evolutionary approach to predict the persistence and emergence of misspecified Bayesian learning. We have emphasized the implications and significance of the learning channel, showing its implications for evolutionary stability and the viability of biases. We showed that the learning channel strictly
expands the possibility for mistakes to invade a rational society, and illustrated how incorporating inference enables the evolutionary approach to speak to new applications and phenomena. 

We acknowledge that our framework does not account for which errors
appear in the first place. It is plausible that some first-stage filter
prevents certain obvious misspecifications from ever reaching the
stage that we study in the evolutionary framework. For this reason, the applications
we focused on reflected misspecifications that seem psychologically plausible.

We have used an otherwise off-the-shelf framework to describe the selection of specifications. The goal of this paper is
not to identify the suitable definition of fitness to justify a particular
error (which is the focus for many of the papers that \citet{robson2011evolutionary}
survey). Rather, our goal has been to determine what evolutionary forces would suggest about the emergence of misspecified learning, and implications thereof. In doing so, we have attempted to describe why it may be important for biases to respond to data, while still departing from rationality in the long run.  

\vspace{-15bp}

\begin{singlespace}
{\small{}{}{} \bibliographystyle{ecta}
\bibliography{misspec_and_welfare}
}{\small\par}
\end{singlespace}

\vspace{-10bp}

\appendix
%dummy comment inserted by tex2lyx to ensure that this paragraph is not empty%dummy comment inserted by tex2lyx to ensure that this paragraph is not empty

\begin{center}
\textbf{\Large{}{}{}Appendix}{\Large\par}
\par\end{center}

\vspace{-30bp}

\section{\label{sec:LQNMore} Additional Results for Section \ref{sec:LQN}}

\subsection{Subjective Best Response and Misspecified Inference \label{subsect:LQNEZs}}

In order to determine which models (i.e., perceptions of $\kappa$)
are stable against rival models, we must characterize the relevant
equilibrium zeitgeists. This section develops a number of preliminary
results that relate beliefs about the game parameters to best responses,
and conversely strategy profiles to the KL-divergence minimizing inferences.
The proofs of these results appear in the Online Appendix \ref{sec:omitted_proofs}. 

We begin by proving the result alluded to in Section \ref{sec:LQN}:
every agent's inferences about the state and about opponent's signal
are linear functions of her own signal. The linear coefficient on
the latter increases with the correlation parameter $\kappa$.
\begin{lem}
\label{lem:kappa}There exists a strictly increasing function $\psi(\kappa),$
with $\psi(0)>0$ and $\psi(1)=1,$ so that $\mathbb{E}_{\kappa}[s_{-i}\mid s_{i}]=\psi(\kappa)\cdot s_{i}$
for all $s_{i}\in\mathbb{R},$ $\kappa\in[0,1].$ Also, there exists
a strictly positive $\gamma\in\mathbb{R}$ so that $\mathbb{E}_{\kappa}[\omega\mid s_{i}]=\gamma\cdot s_{i}$
for all $s_{i}\in\mathbb{R}$, $\kappa\in[0,1].$
\end{lem}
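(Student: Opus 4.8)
The plan is to exploit the joint normality of $(\omega, s_i, s_{-i})$ and reduce everything to the standard formula for the conditional expectation of one jointly Gaussian variable given another. First I would record the relevant second moments. Writing $c_\kappa := \kappa/\sqrt{\kappa^2+(1-\kappa)^2}$ and $d_\kappa := (1-\kappa)/\sqrt{\kappa^2+(1-\kappa)^2}$, so that $c_\kappa^2+d_\kappa^2=1$ and $\epsilon_i = c_\kappa z + d_\kappa \eta_i$, the mutual independence of $\omega, z, \eta_1, \eta_2$ gives $\mathrm{Var}(\epsilon_i)=\sigma_\epsilon^2$, hence $\mathrm{Var}(s_i)=\sigma_\omega^2+\sigma_\epsilon^2$ (note this is $\kappa$-free, as asserted in the text), $\mathrm{Cov}(\omega,s_i)=\sigma_\omega^2$, and $\mathrm{Cov}(s_i,s_{-i})=\sigma_\omega^2 + c_\kappa^2\sigma_\epsilon^2$, since only the common noise component $z$ contributes to the cross-covariance of the two error terms.

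Since all variables are mean-zero and jointly Gaussian, the projection formula yields $\mathbb{E}_\kappa[\omega\mid s_i] = \frac{\mathrm{Cov}(\omega,s_i)}{\mathrm{Var}(s_i)}\,s_i = \gamma s_i$ with $\gamma := \sigma_\omega^2/(\sigma_\omega^2+\sigma_\epsilon^2)>0$, which is independent of $\kappa$; this establishes the second assertion. Similarly $\mathbb{E}_\kappa[s_{-i}\mid s_i] = \psi(\kappa)\, s_i$ with
\[
\psi(\kappa) := \frac{\sigma_\omega^2 + c_\kappa^2 \sigma_\epsilon^2}{\sigma_\omega^2+\sigma_\epsilon^2}.
\]
Evaluating at the endpoints, $c_0 = 0$ gives $\psi(0)=\gamma>0$, and $c_1^2 = 1$ gives $\psi(1)=1$, matching the claimed boundary conditions.

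It then remains to show $\psi$ is strictly increasing on $[0,1]$, which, because $\mathrm{Var}(s_i)$ is $\kappa$-free, is equivalent to showing $\kappa\mapsto c_\kappa^2 = \kappa^2/(\kappa^2+(1-\kappa)^2)$ is strictly increasing. Here I would just differentiate: writing the denominator as $2\kappa^2-2\kappa+1$, the numerator of the derivative simplifies to $2\kappa(1-\kappa)$, which is strictly positive on $(0,1)$; together with continuity on $[0,1]$ this gives strict monotonicity, and strict positivity of $\psi$ follows from $\psi(0)=\gamma>0$. None of these steps is a genuine obstacle — the only points requiring care are the algebra in the derivative and the observation that $\mathrm{Var}(s_i)$ does not depend on $\kappa$, so that $\psi$ inherits its monotonicity directly from $c_\kappa^2$.
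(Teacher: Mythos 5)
Your proof is correct, and it reaches the same function $\psi$ as the paper, but by a slightly different route. The paper isolates the common component $\omega+\frac{\kappa}{\sqrt{\kappa^2+(1-\kappa)^2}}z$ of the two signals, applies normal--normal Bayesian updating (precision weighting) to get its posterior mean given $s_i$, and then appends the independent idiosyncratic term $\eta_{-i}$; the resulting $\psi$ is written in precision form, which degenerates at $\kappa=1$, so the paper defines $\psi(1):=1$ separately and checks the limit, and proves monotonicity by writing $1/\psi(\kappa)=1+\frac{(1-\kappa)^2\sigma_\epsilon^2}{(\kappa^2+(1-\kappa)^2)\sigma_\omega^2+\kappa^2\sigma_\epsilon^2}$ and arguing the second term is decreasing. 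You instead compute the three second moments and invoke the Gaussian linear projection formula directly, obtaining the closed form $\psi(\kappa)=\bigl(\sigma_\omega^2+c_\kappa^2\sigma_\epsilon^2\bigr)/\bigl(\sigma_\omega^2+\sigma_\epsilon^2\bigr)$, which one can verify is algebraically identical to the paper's expression. Your version buys a uniform formula valid at $\kappa=1$ (no separate limit argument needed, and it is consistent with $s_i=s_{-i}$ there), and a one-line monotonicity check via $\frac{d}{d\kappa}\,c_\kappa^2=\frac{2\kappa(1-\kappa)}{(2\kappa^2-2\kappa+1)^2}>0$ on $(0,1)$; the paper's decomposition has the expositional advantage of making the ``shared information'' interpretation of $\kappa$ explicit, which it reuses in the surrounding discussion. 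The $\gamma$ computation is the same in substance in both proofs: $\gamma=\sigma_\omega^2/(\sigma_\omega^2+\sigma_\epsilon^2)$, independent of $\kappa$ because $\mathrm{Var}(\epsilon_i)=\sigma_\epsilon^2$ for all $\kappa$.
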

Linearity of $\mathbb{E}[\omega\mid s_{i}]$ and $\mathbb{E}[s_{-i}\mid s_{i}]$
in $s_{i}$ allows us explicitly characterize the corresponding linear
best responses, given beliefs about $\kappa$ and elasticity $r$.
For $Q_{i},Q_{-i}$ (not necessarily linear) strategies in the stage
game and $\mu\in\Delta(\Theta(\kappa))$, let $U_{i}(Q_{i},Q_{-i};\mu)$
be $i$'s subjective expected utility from playing $Q_{i}$ against
$Q_{-i},$ under the belief $\mu.$
\begin{lem}
\label{lem:BR}For $\alpha_{-i}$ a linear strategy, $U_{i}(\alpha_{i},\alpha_{-i};\mu)=\mathbb{E}[s_{i}^{2}]\cdot\left(\alpha_{i}\gamma-\frac{1}{2}\hat{r}\alpha_{i}^{2}-\frac{1}{2}\hat{r}\psi(\kappa)\alpha_{i}\alpha_{-i}-\frac{1}{2}\alpha_{i}^{2}\right)$
for every linear strategy $\alpha_{i},$ where $\hat{r}=\int r\ d\mu(r,\kappa,\sigma_{\zeta})$
is the mean of $\mu$'s marginal on elasticity. For $\kappa\in[0,1]$
and $r>0$, $\alpha_{i}^{BR}(\alpha_{-i};\kappa,r):=\frac{\gamma-\frac{1}{2}r\psi(\kappa)\alpha_{-i}}{1+r}$
best responds to $\alpha_{-i}$ among all (possibly non-linear) strategies
$Q_{i}:\mathbb{R}\to\mathbb{R}$ for all $\sigma_{\zeta}>0$.
\end{lem}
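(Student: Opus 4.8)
The plan is to compute $i$'s subjective expected profit directly from the structure of the model $F_{r,\kappa,\sigma_\zeta}$ and then maximize it pointwise in the signal. Fix $\mu\in\Delta(\Theta(\kappa))$, let $-i$ use the linear strategy $\alpha_{-i}$ (so $q_{-i}=\alpha_{-i}s_{-i}$), and let $i$ use an arbitrary measurable $Q_i$ (so $q_i=Q_i(s_i)$). Under model $F_{r,\kappa,\sigma_\zeta}$ the agent believes $P=\omega-\tfrac r2(q_i+q_{-i})+\zeta$ with $\zeta$ mean-zero and independent of $(\omega,s_i,s_{-i})$, and $\pi(y)=q_iP-\tfrac12 q_i^2$, so her expected profit is $\mathbb{E}[q_i\omega]-\tfrac r2\mathbb{E}[q_i^2]-\tfrac r2\mathbb{E}[q_iq_{-i}]+\mathbb{E}[q_i\zeta]-\tfrac12\mathbb{E}[q_i^2]$. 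Since $q_i$ is a function of $s_i$ alone and $\zeta$ is independent of $s_i$ with zero mean, $\mathbb{E}[q_i\zeta]=0$; and by the tower rule together with Lemma \ref{lem:kappa}, $\mathbb{E}[q_i\omega]=\mathbb{E}[Q_i(s_i)\mathbb{E}_\kappa[\omega\mid s_i]]=\gamma\,\mathbb{E}[Q_i(s_i)s_i]$ and $\mathbb{E}[q_iq_{-i}]=\alpha_{-i}\mathbb{E}[Q_i(s_i)\mathbb{E}_\kappa[s_{-i}\mid s_i]]=\psi(\kappa)\alpha_{-i}\mathbb{E}[Q_i(s_i)s_i]$. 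Integrating over $\mu$, where only $r$ varies and $\sigma_\zeta$ (and the variance of $\zeta$) drop out entirely, and writing $\hat r=\int r\,d\mu$, gives $U_i(Q_i,\alpha_{-i};\mu)=\big(\gamma-\tfrac{\hat r}{2}\psi(\kappa)\alpha_{-i}\big)\mathbb{E}[Q_i(s_i)s_i]-\tfrac{1+\hat r}{2}\mathbb{E}[Q_i(s_i)^2]$. Specializing to $Q_i(s_i)=\alpha_i s_i$ yields $\mathbb{E}[Q_i(s_i)s_i]=\alpha_i\mathbb{E}[s_i^2]$ and $\mathbb{E}[Q_i(s_i)^2]=\alpha_i^2\mathbb{E}[s_i^2]$; factoring out $\mathbb{E}[s_i^2]$ and regrouping recovers the first claimed identity $\mathbb{E}[s_i^2]\big(\alpha_i\gamma-\tfrac12\hat r\alpha_i^2-\tfrac12\hat r\psi(\kappa)\alpha_i\alpha_{-i}-\tfrac12\alpha_i^2\big)$.

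For the best-response claim, note that with $c(s):=\big(\gamma-\tfrac{\hat r}{2}\psi(\kappa)\alpha_{-i}\big)s$ the functional $U_i(Q_i,\alpha_{-i};\mu)=\mathbb{E}\big[c(s_i)Q_i(s_i)-\tfrac{1+\hat r}{2}Q_i(s_i)^2\big]$ separates pointwise in $s_i$. Completing the square, $c(s)q-\tfrac{1+\hat r}{2}q^2=\tfrac{c(s)^2}{2(1+\hat r)}-\tfrac{1+\hat r}{2}\big(q-\tfrac{c(s)}{1+\hat r}\big)^2$, so $U_i(Q_i,\alpha_{-i};\mu)=\tfrac{1}{2(1+\hat r)}\mathbb{E}[c(s_i)^2]-\tfrac{1+\hat r}{2}\mathbb{E}\big[(Q_i(s_i)-\tfrac{c(s_i)}{1+\hat r})^2\big]$. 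The leading term is a finite constant (as $\hat r\ge 0$ and $\mathbb{E}[s_i^2]=\mathrm{Var}(s_i)<\infty$) and the second term is nonpositive, vanishing iff $Q_i(s_i)=c(s_i)/(1+\hat r)$ almost surely; hence the essentially unique maximizer over all measurable strategies $Q_i:\mathbb{R}\to\mathbb{R}$ is the linear strategy with coefficient $\big(\gamma-\tfrac12\hat r\psi(\kappa)\alpha_{-i}\big)/(1+\hat r)$. Taking $\mu$ to be the point mass on $F_{r,\kappa,\sigma_\zeta}$ makes $\hat r=r$ and produces exactly $\alpha_i^{BR}(\alpha_{-i};\kappa,r)$, for every $\sigma_\zeta>0$.

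I do not expect a serious obstacle here: the heavy lifting is done by Lemma \ref{lem:kappa}, and the rest is a quadratic optimization. The points that merit a sentence each are (i) that $\sigma_\zeta$ and $\mathrm{Var}(\zeta)$ are irrelevant because $\pi$ is linear in $P$ and $\zeta$ is mean-zero and independent; (ii) the interchange of the $\mu$-integral with the stage-game expectation; and (iii) that restricting the maximization to square-integrable $Q_i$ is without loss — strategies with $\mathbb{E}[Q_i(s_i)^2]=\infty$ have $U_i=-\infty$ by a Cauchy--Schwarz bound on the cross term, so the completing-the-square identity covers every relevant strategy.
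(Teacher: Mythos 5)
Your proposal is correct and follows essentially the same route as the paper: compute the (subjective) expected profit using Lemma \ref{lem:kappa} so that $\sigma_{\zeta}$ drops out and the $\mu$-integral reduces to $\hat{r}$ by linearity, then observe that the objective is a strictly concave quadratic in the quantity chosen after each signal, so the pointwise (interim) maximizer is the stated linear strategy. Your completing-the-square step and the remark about strategies with $\mathbb{E}[Q_i(s_i)^2]=\infty$ are cosmetic refinements of the paper's first-order-condition argument, not a different method.
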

Lemma \ref{lem:BR} shows that $\alpha_{i}^{BR}(\alpha_{-i};\kappa,r)$
is not only the best-responding linear strategy when opponent plays
$\alpha_{-i}$ and $i$ believes in correlation parameter $\kappa$
and elasticity $r$, it is also optimal among the class of all strategies
$Q_{i}(s_{i})$ against the same opponent play and under the same
beliefs.

Call a linear strategy more \emph{aggressive} if its coefficient $\alpha_{i}\ge0$
is larger. One implication of Lemma \ref{lem:BR} is that agent $i$'s
subjective best response function becomes more aggressive when $i$
believes in lower $\kappa$ or lower $r$. We have $\frac{\partial\alpha_{i}^{BR}}{\partial\kappa}<0$
because the agent can better capitalize on her private information
about market demand when her rival does not share the same information.
We have $\frac{\partial\alpha_{i}^{BR}}{\partial r}<0$ because the
agent can be more aggressive when facing an inelastic market price.

We now turn to equilibrium inference about the market price elasticity
$r^{\bullet}$. The following lemma shows that any linear strategy
profile generates data whose KL-divergence can be minimized to 0 by
a unique value of $r$. We also characterize how this inference about
elasticity depends on the strategy profile and the agent's belief
about the correlation parameter $\kappa$. As mentioned earlier, we
focus on the case where the bounds on the inferences $r\in[0,\bar{M}_{r}]$,
$\sigma_{\zeta}\in[0,\bar{M}_{\sigma_{\zeta}}]$ are sufficiently
large to ensure that the KL-divergence minimization problem is well-behaved.
\begin{lem}
\label{lem:unique_inference}With high enough price volatility and
large enough strategy space and inference space, for every $\alpha_{i},\alpha_{-i}\in[0,\bar{M}_{\alpha}],$
we have $D_{KL}(F_{r^{\bullet},\kappa^{\bullet},\sigma_{\zeta}^{\bullet}}(\alpha_{i},\alpha_{-i})\parallel F_{\hat{r},\kappa,\hat{\sigma}_{\zeta}}(\alpha_{i},\alpha_{-i}))=0$
for exactly one pair $\hat{r}\in[0,\bar{M}_{r}],\hat{\sigma}_{\zeta}\in[0,\bar{M}_{\sigma_{\zeta}}]$.
This $\hat{r}$ is given by $r_{i}^{INF}(\alpha_{i},\alpha_{-i},;\kappa^{\bullet},\kappa,r^{\bullet}):=r^{\bullet}\frac{\alpha_{i}+\alpha_{-i}\psi(\kappa^{\bullet})}{\alpha_{i}+\alpha_{-i}\psi(\kappa)}$.
\end{lem}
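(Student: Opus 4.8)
The plan is to reduce the KL divergence to a comparison of Gaussian conditional laws of the market price. First I would observe that under the truth $F_{r^{\bullet},\kappa^{\bullet},\sigma_{\zeta}^{\bullet}}$ and under \emph{any} model $F_{r,\kappa,\sigma_{\zeta}}$, the pair $(s_{i},q_{i})$ has exactly the same distribution: the marginal of $s_{i}$ is $\mathcal{N}(0,\sigma_{\omega}^{2}+\sigma_{\epsilon}^{2})$ regardless of the correlation parameter (the functional form of $\epsilon_i$ was chosen precisely so that $\mathrm{Var}(s_i)$ is $\kappa$-invariant), and $q_{i}=\alpha_{i}s_{i}$ is a deterministic function of $s_{i}$. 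Hence the injection $(s_{i},q_{i},P)\mapsto(s_{i},P)$ preserves KL divergence, and by the chain rule, since both laws share the same $s_i$-marginal,
\[
D_{KL}\big(F_{r^{\bullet},\kappa^{\bullet},\sigma_{\zeta}^{\bullet}}(\alpha_{i},\alpha_{-i})\,\big\|\,F_{r,\kappa,\sigma_{\zeta}}(\alpha_{i},\alpha_{-i})\big)=\mathbb{E}_{s_{i}}\!\left[D_{KL}\big(\mathcal{L}^{\bullet}(P\mid s_{i})\,\big\|\,\mathcal{L}_{r,\kappa,\sigma_{\zeta}}(P\mid s_{i})\big)\right],
\]
where $\mathcal{L}^{\bullet}$, $\mathcal{L}_{r,\kappa,\sigma_{\zeta}}$ denote the conditional laws of $P$ given $s_i$ under the truth and under the model.

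Next, since $P=\omega-\tfrac{r}{2}(\alpha_{i}s_{i}+\alpha_{-i}s_{-i})+\zeta$ with $(\omega,s_{i},s_{-i},\zeta)$ jointly Gaussian (both objectively and under any model's belief), both conditional laws are Gaussian. Using Lemma~\ref{lem:kappa}, the true conditional mean is $\big(\gamma-\tfrac{r^{\bullet}}{2}(\alpha_{i}+\alpha_{-i}\psi(\kappa^{\bullet}))\big)s_{i}$ and the model's conditional mean is $\big(\gamma-\tfrac{r}{2}(\alpha_{i}+\alpha_{-i}\psi(\kappa))\big)s_{i}$ (the term $\mathbb{E}[\omega\mid s_i]=\gamma s_i$ being $\kappa$-invariant), while the conditional variances are the state-independent constants $V^{\bullet}:=\mathrm{Var}\big(\omega-\tfrac{r^{\bullet}}{2}\alpha_{-i}s_{-i}\mid s_{i}\big)+(\sigma_{\zeta}^{\bullet})^{2}$ and $V:=\mathrm{Var}_{\kappa}\big(\omega-\tfrac{r}{2}\alpha_{-i}s_{-i}\mid s_{i}\big)+\sigma_{\zeta}^{2}$. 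Since the KL divergence between two univariate Gaussians is nonnegative and vanishes iff their means and variances coincide, the displayed expectation is zero iff the two conditional means agree for a.e.\ $s_i$ \emph{and} $V=V^{\bullet}$. Matching the (linear-in-$s_i$) means forces $r^{\bullet}(\alpha_{i}+\alpha_{-i}\psi(\kappa^{\bullet}))=r(\alpha_{i}+\alpha_{-i}\psi(\kappa))$; as $\psi(\kappa)>0$, the denominator $\alpha_{i}+\alpha_{-i}\psi(\kappa)$ is positive whenever $(\alpha_{i},\alpha_{-i})\neq(0,0)$ (the case in which $r_i^{INF}$ is defined, and the only relevant one, since matched best responses are never both zero by Lemma~\ref{lem:BR} given $\gamma>0$), so $r=r^{\bullet}\frac{\alpha_{i}+\alpha_{-i}\psi(\kappa^{\bullet})}{\alpha_{i}+\alpha_{-i}\psi(\kappa)}=r_{i}^{INF}$. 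Then $V=V^{\bullet}$ pins $\sigma_{\zeta}^{2}$ down uniquely as $\hat\sigma_{\zeta}^{2}:=(\sigma_{\zeta}^{\bullet})^{2}+\mathrm{Var}\big(\omega-\tfrac{r^{\bullet}}{2}\alpha_{-i}s_{-i}\mid s_{i}\big)-\mathrm{Var}_{\kappa}\big(\omega-\tfrac{r_{i}^{INF}}{2}\alpha_{-i}s_{-i}\mid s_{i}\big)$. So there is at most one admissible pair, namely $(r_i^{INF},\sqrt{\hat\sigma_\zeta^2})$, and for it the KL divergence is indeed $0$.

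It remains to check that this pair lies in the admissible box, i.e.\ $r_{i}^{INF}\in[0,\bar M_{r}]$ and $\hat\sigma_{\zeta}^{2}\in[0,\bar M_{\sigma_{\zeta}}^{2}]$, which is exactly what the qualifier ``with high enough price volatility and large enough strategy space and inference space'' buys. For $r_i^{INF}$: by Lemma~\ref{lem:kappa}, $\psi(0)\le\psi(\kappa^{\bullet}),\psi(\kappa)\le 1$, so $0\le r_{i}^{INF}\le r^{\bullet}/\psi(0)$, and it suffices to take the threshold $L_{3}=r^{\bullet}/\psi(0)$. For $\hat\sigma_\zeta^2$: the conditional variances of $\omega$ and of $s_{-i}$ given $s_{i}$ are bounded uniformly over $\kappa\in[0,1]$ (the latter by $\mathrm{Var}(s_{-i})=\sigma_{\omega}^{2}+\sigma_{\epsilon}^{2}$), so Cauchy--Schwarz furnishes a constant $C$ (depending only on $\bar M_\alpha$, $r^\bullet$, $\sigma_\omega^2$, $\sigma_\epsilon^2$) with $\mathrm{Var}_{\kappa}\big(\omega-\tfrac{r}{2}\alpha_{-i}s_{-i}\mid s_{i}\big)\le C$ for all $\alpha_{-i}\in[0,\bar M_{\alpha}]$, $\kappa\in[0,1]$, $r\le r^{\bullet}/\psi(0)$; hence $\hat\sigma_{\zeta}^{2}\ge(\sigma_{\zeta}^{\bullet})^{2}-C\ge0$ once $(\sigma_{\zeta}^{\bullet})^{2}\ge C=:L_{1}$, and $\hat\sigma_{\zeta}^{2}\le(\sigma_{\zeta}^{\bullet})^{2}+C\le\bar M_{\sigma_{\zeta}}^{2}$ once $\bar M_{\sigma_{\zeta}}^{2}\ge(\sigma_{\zeta}^{\bullet})^{2}+C$, i.e.\ $L_{2}:=C$.

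I expect the one genuinely delicate point to be this last feasibility step rather than the identification argument: identification pins down the candidate $(\hat r,\hat\sigma_{\zeta})$ mechanically by matching the Gaussian price's conditional mean and variance, but the variance-matching equation can be solved inside the compact box $[0,\bar M_{r}]\times[0,\bar M_{\sigma_{\zeta}}]$ only when the baseline price-shock variance $(\sigma_{\zeta}^{\bullet})^{2}$ is large (so the model can always ``absorb'' the required change in conditional price variance without going negative) and the inference bounds are generous---which is precisely why the lemma carries those qualifiers rather than holding for arbitrary parameters.
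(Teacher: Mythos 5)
Your proposal is correct and follows essentially the same route as the paper's own proof: condition on $s_i$, note both conditional price laws are Gaussian with means linear in $s_i$, match the slopes to pin down $\hat{r}=r_i^{INF}$, match conditional variances to pin down $\hat{\sigma}_\zeta$ uniquely, and use the high-volatility/large-bounds qualifier to place the resulting pair inside $[0,\bar M_r]\times[0,\bar M_{\sigma_\zeta}]$. Your explicit chain-rule decomposition, the tighter bound $r_i^{INF}\le r^\bullet/\psi(0)$, and the remark about the degenerate profile $(\alpha_i,\alpha_{-i})=(0,0)$ are minor refinements of the same argument, not a different approach.
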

Lemma \ref{lem:unique_inference} implies that an agent's inference
about $r$ is strictly decreasing in her belief about the correlation
parameter $\kappa.$ To understand why, assume player $i$ uses the
linear strategy $\alpha_{i}$ and player $-i$ uses the linear strategy
$\alpha_{-i}$. After receiving a private signal $s_{i}$, player
$i$ expects to face a price distribution with a mean of $\gamma s_{i}-r(\frac{1}{2}\alpha_{i}s_{i}+\frac{1}{2}\alpha_{-i}\mathbb{E}_{\kappa}[s_{-i}\mid s_{i}]).$
Under projection bias $\kappa>\kappa^{\bullet},$ $\mathbb{E}_{\kappa}[s_{-i}\mid s_{i}]$
is excessively steep in $s_{i}$. For example, following a large and
positive $s_{i},$ the agent overestimates the similarity of $-i$'s
signal and wrongly predicts that $-i$ must also choose a very high
quantity, and thus becomes surprised when market price remains high.
The agent then wrongly infers that the market price elasticity must
be low. Therefore, in order to rationalize the average market price
conditional on own signal, an agent with projection bias must infer
$r<r^{\bullet}$. For similar reasons, an agent with correlation neglect
infers $r>r^{\bullet}.$

Combining Lemma \ref{lem:BR} and Lemma \ref{lem:unique_inference},
we find that increasing $\kappa$ has an \emph{a priori} ambiguous
impact on the agent's equilibrium aggressiveness. Increasing $\kappa$
has the direct effect of lowering aggression (by Lemma \ref{lem:BR}),
but it also causes the indirect effect of lowering inference about
$r$ (by Lemma \ref{lem:unique_inference}) and therefore increases
aggression (by Lemma \ref{lem:BR}). 

Lemma \ref{lem:unique_inference} considers the problem of KL-divergence
minimization when all of the data are generated from a single strategy
profile, $(\alpha_{-i},\alpha_{-i}).$ It implies that if $\lambda\in\{0,1\}$
and $(p_{A},p_{B})=(1,0)$, that is matching is either perfectly uniform
or perfectly assortative in a homogeneous society, then every agent
can find a parameter to exactly fit her equilibrium data. This is because
agents only match with opponents from one group in the EZ. The self-confirming
property lends a great deal of tractability and allows us to provide
sharp comparative statics and assess the stability of models.

With interior population shares, agents can observe consequences from
matches against the adherents of both $\Theta_{A}$ and $\Theta_{B}.$
Thus, they must find a single set of parameters for the stage game
that best fits all of their data, and even this best-fitting parameter
will have positive KL divergence in equilibrium. The next lemma shows
the LQN game satisfies the sufficient conditions from Online Appendix
\ref{sec:Existence-and-Continuity} (Assumptions \ref{assu:compact}
through \ref{assu:quasiconcave}) for the existence and upper hemicontinuity
of EZs. So, the tractable analysis in homogeneous societies remains
robust to the introduction of a small but non-zero share of a mutant
model.
\begin{lem}
\label{lem:LQN_conditions_1_to_5}For every $r^{\bullet},\sigma_{\zeta}^{\bullet}\ge0$,
$\lambda\in[0,1],$ $\kappa^{\bullet},\kappa\in[0,1],$ $\bar{M}_{\alpha},\bar{M}_{\sigma_{\zeta}},\bar{M}_{r}<\infty$,
the LQN with objective parameters $(r^{\bullet},\kappa^{\bullet},\sigma_{\zeta}^{\bullet})$,
strategy space $\mathbb{A}=[0,\bar{M}_{\alpha}]$ and models $\Theta(\kappa^{\bullet}),\Theta(\kappa)$
with parameter spaces $[0,\bar{M}_{r}]$, $[0,\bar{M}_{\sigma_{\zeta}}]$
satisfy Assumptions \ref{assu:compact}, \ref{assu:continuous_utility},
\ref{assu:finite_KL}, \ref{assu:continuous_KL}, and \ref{assu:quasiconcave}.
Therefore, EZs in LQN are upper hemicontinuous in population sizes.
\end{lem}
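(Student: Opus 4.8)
The plan is to go through the five assumptions of Online Appendix \ref{sec:Existence-and-Continuity} one at a time, using throughout the fact that under linear strategies every consequence distribution in this LQN game is Gaussian with mean and covariance that are rational (indeed polynomial) functions of $(\alpha_i,\alpha_{-i},r,\sigma_\zeta)$, together with the closed forms already recorded in Lemmas \ref{lem:kappa}, \ref{lem:BR}, and \ref{lem:unique_inference}. Once Assumptions \ref{assu:compact}, \ref{assu:continuous_utility}, \ref{assu:finite_KL}, \ref{assu:continuous_KL}, and \ref{assu:quasiconcave} are checked, the existence and upper hemicontinuity of EZs in LQN is just a quotation of the general theorem.

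First I would dispose of the two ``soft'' conditions. For compactness (Assumption \ref{assu:compact}): the strategy space $\mathbb{A}=[0,\bar M_\alpha]$ and the parameter box $[0,\bar M_r]\times[0,\bar M_{\sigma_\zeta}]$ are compact; the map $(r,\sigma_\zeta)\mapsto F_{r,\kappa,\sigma_\zeta}$ is continuous into $(\Delta(\mathbb{Y}))^{\mathbb{A}^2}$ endowed with the weak topology (the Gaussian parameters depend continuously on $(r,\sigma_\zeta)$) and injective (distinct $(r,\sigma_\zeta)$ induce distinct laws for, say, the profile $(1,0)$), so $\Theta(\kappa)$ is a homeomorphic image of a compact metrizable set, hence itself compact metrizable. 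For continuous utility (Assumption \ref{assu:continuous_utility}), Lemma \ref{lem:BR} gives $U_i(\alpha_i,\alpha_{-i};\mu)=\mathbb{E}[s_i^2]\bigl(\gamma\alpha_i-\tfrac12(\hat r+1)\alpha_i^2-\tfrac12\hat r\psi(\kappa)\alpha_i\alpha_{-i}\bigr)$, a polynomial in $(\alpha_i,\alpha_{-i},\hat r)$, and $\hat r=\int r\,d\mu$ is weak-$*$ continuous in $\mu$ since $r$ is a bounded continuous function on the compact theory; composing yields joint continuity of $U_i$.

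The substance is in the KL conditions. Fix a linear profile $(\alpha_i,\alpha_{-i})$. Under any parameters the consequence $y=(s_i,q_i,P)$ satisfies $q_i=\alpha_i s_i$ identically, so both $F^{\bullet}(\alpha_i,\alpha_{-i})$ and $F_{r,\kappa,\sigma_\zeta}(\alpha_i,\alpha_{-i})$ are carried by the same affine subspace $\{q_i=\alpha_i s_i\}$; passing to the effective coordinates $(s_i,P)$, each law is a bivariate Gaussian whose covariance is positive definite, because $\mathrm{Var}(s_i)=\sigma_\omega^2+\sigma_\epsilon^2>0$ and $\mathrm{Var}(P\mid s_i)>0$ --- the latter since, conditional on $s_i$, the price still contains independent randomness coming from $\omega$ and the rival's signal, the relevant coefficients never all vanishing (so the non-degeneracy survives even the boundary case $\sigma_\zeta^{\bullet}=0$). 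Consequently the objective and subjective laws are mutually absolutely continuous, the KL divergence is finite (Assumption \ref{assu:finite_KL}), and by the classical Gaussian KL formula --- a rational-plus-logarithmic expression in the entries of the mean vectors and covariance matrices, which are themselves continuous in $(\alpha_i,\alpha_{-i},r,\sigma_\zeta)$ --- the map $K(F_{r,\kappa,\sigma_\zeta};\alpha_i,\alpha_{-i})$ is jointly continuous (Assumption \ref{assu:continuous_KL}); one checks separately that nothing breaks as $\alpha_i\to 0$, since both laws collapse onto $\{q_i=0\}$ in exactly the same way. The clause ``with high enough price volatility and large enough strategy space and inference space'' is not needed here --- it is what Lemma \ref{lem:unique_inference} uses to keep the KL-minimizing $(\hat r,\hat\sigma_\zeta)$ interior.

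Finally, for quasiconcavity (Assumption \ref{assu:quasiconcave}), the displayed expression for $U_i$ is, for fixed $(\alpha_{-i},\mu)$, a quadratic in $\alpha_i$ with leading coefficient $-\tfrac12(\hat r+1)\mathbb{E}[s_i^2]<0$ (as $\hat r\ge 0$ and $\mathbb{E}[s_i^2]>0$), hence strictly concave and a fortiori quasiconcave, so the subjective best-response correspondence is nonempty- and convex-valued. With all five assumptions in hand, the conclusion on upper hemicontinuity follows directly from Online Appendix \ref{sec:Existence-and-Continuity}. The one point needing genuine (though modest) care is precisely the degeneracy caused by the deterministic link $q_i=\alpha_i s_i$: one must restrict attention to the effective coordinates $(s_i,P)$ (or argue absolute continuity on the common support) rather than treating $F(\alpha_i,\alpha_{-i})$ as a density on $\mathbb{R}^3$, and one must verify the $\alpha_i=0$ boundary and the noiseless-price boundary $\sigma_\zeta^{\bullet}=0$ by hand --- this is the main obstacle to a fully rigorous argument.
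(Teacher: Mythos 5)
Your verification is correct and follows essentially the same route as the paper's proof: check the five assumptions one by one using the closed forms from Lemmas \ref{lem:kappa} and \ref{lem:BR}, with the only real work in Assumptions \ref{assu:finite_KL}--\ref{assu:continuous_KL}, which both you and the paper handle by reducing to the price distribution given the own signal (the paper integrates conditional univariate Gaussian KL divergences and exhibits the dominating function $C_{1}+C_{2}s_{i}^{2}$, while you package the same computation as the joint bivariate Gaussian KL formula for $(s_{i},P)$). The boundary cases you flag ($\alpha_{i}=0$ and $\sigma_{\zeta}^{\bullet}=0$) require no separate treatment: they are covered by the uniform positive lower bound $\underline{\sigma}^{2}$ on $\mathrm{Var}_{\kappa}\left(\omega-\tfrac{1}{2}\hat{r}\alpha_{-i}s_{-i}\mid s_{i}\right)$ that the paper's argument (and implicitly yours, via positive-definiteness of the $(s_{i},P)$ covariance) already supplies.
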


\subsection{More General LQN Games \label{subsec:LQNGeneral}}

We turn to general incomplete-information games and provide a condition
for a model to be evolutionarily fragile against a ``nearby'' misspecified
model. This condition shows how assortativity and the learning channel
shape the evolutionary selection of models for a broader class of
stage games and biases. We also relate the condition to the specific
results studied so far in this application.

Consider a stage game where a state of the world $\omega$ is realized
at the start of the game. Players 1 and 2 observe private signals
$s_{1},s_{2}\in S\subseteq\mathbb{R}$, possibly correlated given
$\omega.$ The objective distribution of $(\omega,s_{1},s_{2})$ is
$\mathbb{P}^{\bullet}$. Based on their signals, players choose actions
$q_{1},q_{2}\in\mathbb{R}$ and receive random consequences $y_{1},y_{2}\in\mathbb{Y}.$
The distribution over consequences as a function of $(\omega,s_{1},s_{2},q_{1},q_{2})$
and the utility over consequences $\pi:\mathbb{Y}\to\mathbb{R}$ are
such that each player $i$'s objective expected utility from taking
action $q_{i}$ against opponent action $q_{-i}$ in state $\omega$
is given by $u_{i}^{\bullet}(q_{i},q_{-i};\omega)$, differentiable
in its first two arguments.

For an interval of real numbers $[\text{\ensuremath{\underbar{\ensuremath{\kappa}}}},\bar{\kappa}]$
with $\text{\ensuremath{\underbar{\ensuremath{\kappa}}}}<\bar{\kappa}$
and $\kappa^{\bullet}\in(\text{\ensuremath{\underbar{\ensuremath{\kappa}}}},\bar{\kappa})$,
suppose there is a family of models $(\Theta(\kappa))_{\kappa\in[\text{\ensuremath{\underbar{\ensuremath{\kappa}}}},\bar{\kappa}]}$.
Fix $\lambda\in[0,1]$ and a strategy space $\mathbb{A}\subseteq\mathbb{R}^{S}$,
representing the feasible signal-contingent strategies. Suppose the
two models in the society are $\Theta_{A}=\Theta(\kappa^{\bullet})$
and $\Theta_{B}=\Theta(\kappa)$ for some $\kappa\in[\text{\ensuremath{\underbar{\ensuremath{\kappa}}}},\bar{\kappa}].$
The next assumption requires there to be a unique EZ with $(p_{A},p_{B})=(1,0)$
in such societies with any $\kappa\in[\text{\ensuremath{\underbar{\ensuremath{\kappa}}}},\bar{\kappa}]$,
and further requires the EZ to feature linear equilibria. Linear equilibria
exist and are unique in a large class of games outside of the duopoly
framework, and in particular in LQN games under some conditions on
the payoff functions (see, e.g., \citet{angeletos2007efficient}).
\begin{assumption}
\label{assu:linear_interim_eqm} Suppose there is a unique EZ under
$\lambda$-matching and population proportions $(p_{A},p_{B})=(1,0)$
with $\Theta_{A}=\Theta(\kappa^{\bullet})$, $\Theta_{B}=\Theta(\kappa)$
for every $\kappa\in[\text{\ensuremath{\underbar{\ensuremath{\kappa}}}},\bar{\kappa}].$
Suppose the $\kappa$-indexed EZ strategy profiles $(\sigma(\kappa))=(\sigma_{AA}(\kappa),\sigma_{AB}(\kappa),\sigma_{BA}(\kappa),\sigma_{BB}(\kappa))$
are linear, i.e., $\sigma_{gg^{'}}(\kappa)(s_{i})=\alpha_{gg^{'}}(\kappa)\cdot s_{i}$
with $\alpha_{gg^{'}}(\kappa)$ differentiable in $\kappa$. Suppose
that in the EZ with $\kappa=\kappa^{\bullet},$ $\alpha_{AA}(\kappa^{\bullet})$
is objectively interim-optimal against itself.\footnote{More precisely, for every $s_{i}\in S,$ $\alpha_{AA}(\kappa^{\bullet})\cdot s_{i}$
maximizes the agent's objective expected utility across all of $\mathbb{R}$
when $-i$ uses the same linear strategy $\alpha_{AA}(\kappa^{\bullet})$.} Finally, assume for every $\kappa$, Assumptions \ref{assu:compact},
\ref{assu:continuous_utility}, \ref{assu:finite_KL}, \ref{assu:continuous_KL},
and \ref{assu:quasiconcave} are satisfied.
\end{assumption}
\begin{prop}
\label{prop:general_incomplete_info_game}Let $\alpha^{\bullet}:=\alpha_{AA}(\kappa^{\bullet}).$
Then, under Assumption \ref{assu:linear_interim_eqm}, if

\[
\mathbb{E}^{\bullet}\left[\mathbb{E}^{\bullet}\left[\frac{\partial u_{1}^{\bullet}}{\partial q_{2}}(\alpha^{\bullet}s_{1},\alpha^{\bullet}s_{2},\omega)\cdot[(1-\lambda)\alpha_{AB}^{'}(\kappa^{\bullet})+\lambda\alpha_{BB}^{'}(\kappa{}^{\bullet})]\cdot s_{2}\mid s_{1}\right]\right]>0,
\]
then there exists some $\epsilon>0$ so that $\Theta(\kappa^{\bullet})$
is evolutionarily fragile against models $\Theta(\kappa)$ with
$\kappa\in(\kappa^{\bullet},\kappa^{\bullet}+\epsilon]\cap[\text{\ensuremath{\underbar{\ensuremath{\kappa}}}},\bar{\kappa}]$.
Also, if

\[
\mathbb{E}^{\bullet}\left[\mathbb{E}^{\bullet}\left[\frac{\partial u_{1}^{\bullet}}{\partial q_{2}}(\alpha^{\bullet}s_{1},\alpha^{\bullet}s_{2},\omega)\cdot[(1-\lambda)\alpha_{AB}^{'}(\kappa^{\bullet})+\lambda\alpha_{BB}^{'}(\kappa{}^{\bullet})]\cdot s_{2}\mid s_{1}\right]\right]<0,
\]
then there exists some $\epsilon>0$ so that $\Theta(\kappa^{\bullet})$
is evolutionarily fragile against models $\Theta(\kappa)$ with
$\kappa\in[\kappa^{\bullet}-\epsilon,\kappa^{\bullet})\cap[\text{\ensuremath{\underbar{\ensuremath{\kappa}}}},\bar{\kappa}]$.
Here $\mathbb{E}^{\bullet}$ is the expectation with respect to the
objective distribution of $(\omega,s_{1},s_{2})$ under $\mathbb{P}^{\bullet}$.
\end{prop}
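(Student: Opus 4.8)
The plan is to carry out a first-order (envelope) analysis of the mutant's fitness as a function of its parameter $\kappa$, evaluated at $\kappa = \kappa^{\bullet}$. Write $W(\kappa)$ for the equilibrium fitness of $\Theta(\kappa)$ in the unique EZ with $(p_A,p_B)=(1,0)$ under $\lambda$-matching, and $W^{\bullet}$ for the fitness of $\Theta(\kappa^{\bullet})$, which by Assumption \ref{assu:linear_interim_eqm} is just the common objective payoff when everyone plays $\alpha^{\bullet}$. Since the resident is correctly specified, $W(\kappa^{\bullet})=W^{\bullet}$. Evolutionary fragility of $\Theta(\kappa^{\bullet})$ against $\Theta(\kappa)$ for $\kappa$ slightly above $\kappa^{\bullet}$ will follow if $W'(\kappa^{\bullet})>0$ (and for $\kappa$ slightly below if $W'(\kappa^{\bullet})<0$); the claimed sign conditions are exactly the sign of $W'(\kappa^{\bullet})$. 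The main work is to compute this derivative and show it reduces to the displayed expectation.

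First I would write the mutant's fitness explicitly. A mutant plays $\alpha_{BB}(\kappa)$ against other mutants (probability $\lambda$, since $p_B\to 0$ the uniform component vanishes into meeting residents) and $\alpha_{BA}(\kappa)$ against residents (probability $1-\lambda$), while the residents it meets play $\alpha_{AB}(\kappa)$. So $W(\kappa) = \lambda\,\mathbb{E}^{\bullet}[u_1^{\bullet}(\alpha_{BB}(\kappa)s_1,\alpha_{BB}(\kappa)s_2,\omega)] + (1-\lambda)\,\mathbb{E}^{\bullet}[u_1^{\bullet}(\alpha_{BA}(\kappa)s_1,\alpha_{AB}(\kappa)s_2,\omega)]$. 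Differentiating at $\kappa=\kappa^{\bullet}$, where all four coefficients equal $\alpha^{\bullet}$: the terms multiplying $\alpha_{BB}'(\kappa^{\bullet})$ and $\alpha_{BA}'(\kappa^{\bullet})$ through the agent's \emph{own} action entry vanish, because $\alpha^{\bullet}$ is objectively interim-optimal against $\alpha^{\bullet}$ (this is precisely why that hypothesis is imposed — it kills the own-action first-order terms via the envelope theorem, applied pointwise in $s_1$). What survives is only the dependence through the \emph{opponent's} action: the $\lambda$-term contributes $\mathbb{E}^{\bullet}[\partial_{q_2}u_1^{\bullet}(\alpha^{\bullet}s_1,\alpha^{\bullet}s_2,\omega)\cdot \alpha_{BB}'(\kappa^{\bullet})\,s_2]$ and the $(1-\lambda)$-term contributes $\mathbb{E}^{\bullet}[\partial_{q_2}u_1^{\bullet}(\alpha^{\bullet}s_1,\alpha^{\bullet}s_2,\omega)\cdot \alpha_{AB}'(\kappa^{\bullet})\,s_2]$. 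Summing gives $W'(\kappa^{\bullet}) = \mathbb{E}^{\bullet}\big[\partial_{q_2}u_1^{\bullet}(\alpha^{\bullet}s_1,\alpha^{\bullet}s_2,\omega)\cdot[(1-\lambda)\alpha_{AB}'(\kappa^{\bullet})+\lambda\alpha_{BB}'(\kappa^{\bullet})]\,s_2\big]$, which is the displayed quantity after conditioning on $s_1$ by the tower property. So a strictly positive (negative) value of this expression makes $W'(\kappa^{\bullet})$ strictly positive (negative).

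To turn the sign of the derivative into the fragility conclusion I would invoke uniqueness and continuity of the EZ: Assumption \ref{assu:linear_interim_eqm} gives a unique EZ for every $\kappa$ in the interval, with differentiable (hence continuous) coefficients, and Lemma-style continuity (Assumptions \ref{assu:compact}–\ref{assu:quasiconcave}, cited there) ensures $W$ is well-defined and $C^1$ near $\kappa^{\bullet}$. Then $W'(\kappa^{\bullet})>0$ plus $W(\kappa^{\bullet})=W^{\bullet}$ yields $W(\kappa)>W^{\bullet}$ for all $\kappa\in(\kappa^{\bullet},\kappa^{\bullet}+\epsilon]$ for some $\epsilon>0$; since the resident's fitness is exactly $W^{\bullet}$ in the (unique) EZ, the mutant strictly outperforms the resident in \emph{every} EZ with $p=(1,0)$, which is the definition of evolutionary fragility. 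The symmetric argument handles $\kappa<\kappa^{\bullet}$ when the derivative is negative.

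I expect the main obstacle to be the rigorous justification that the own-action first-order terms vanish. One must differentiate under the expectation and apply the envelope argument carefully: "$\alpha^{\bullet}$ interim-optimal against $\alpha^{\bullet}$" means that for a.e.\ $s_1$, the map $q_1\mapsto \mathbb{E}^{\bullet}[u_1^{\bullet}(q_1,\alpha^{\bullet}s_2,\omega)\mid s_1]$ is maximized at $q_1=\alpha^{\bullet}s_1$, so its derivative there is zero; one then needs $\alpha_{BA}(\kappa)s_1$ and $\alpha_{BB}(\kappa)s_1$ to be interior (guaranteed by the "large enough strategy space" convention / Assumption \ref{assu:compact}) and enough regularity (differentiability of $u_i^{\bullet}$, integrability of the relevant derivatives) to pass the derivative inside and to combine the $s_1$-conditional first-order conditions into the overall one. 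Everything else — the decomposition of $W$, the tower property, and the continuity-to-strict-inequality step — is routine once that envelope step is nailed down.
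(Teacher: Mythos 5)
Your proposal is correct and follows essentially the same route as the paper's proof: decompose the mutant's fitness as $\lambda$-weighted payoffs from mutant-vs-mutant and mutant-vs-resident matches, differentiate at $\kappa=\kappa^{\bullet}$ where all equilibrium coefficients coincide with $\alpha^{\bullet}$ by uniqueness, kill the own-action terms via the interim-optimality (first-order condition) hypothesis, and translate the sign of the resulting derivative into one-sided fragility. The paper additionally spells out, via an existence-plus-uniqueness argument, that $\alpha_{AB}(\kappa^{\bullet})=\alpha_{BA}(\kappa^{\bullet})=\alpha_{BB}(\kappa^{\bullet})=\alpha^{\bullet}$ and that $\alpha_{AA}(\kappa)=\alpha^{\bullet}$ for all $\kappa$ (so the resident's fitness is constant), points you use implicitly but which follow from the same uniqueness assumption you invoke.
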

Proposition \ref{prop:general_incomplete_info_game} describes a general
condition to determine whether a correctly specified model is evolutionarily
fragile against a nearby misspecified mutant model. The condition
asks if a slight change in the mutant model's $\kappa$ leads mutants'
opponents to change their equilibrium actions such that the mutants
become better off on average. These opponents are the residents under
uniform matching $\lambda=0$, so $\alpha_{AB}^{'}(\kappa^{\bullet})$
is relevant. These opponents are other mutants under perfectly assortative
matching $\lambda=1$, so $\alpha_{BB}^{'}(\kappa{}^{\bullet})$ is
relevant.

Proposition \ref{prop:general_incomplete_info_game} implies that
one should only expect the correctly specified model to be stable
against all nearby models in ``special'' cases --- that is, when
the expectation in the statement of Proposition \ref{prop:general_incomplete_info_game}
is exactly equal to 0. One such special case is when the agents face
a decision problem where 2's action does not affect 1's payoffs, that
is $\frac{\partial u_{1}^{\bullet}}{\partial q_{2}}=0$. This sets
the expectation to zero, so the result never implies that the correctly
specified model is evolutionarily fragile against a misspecified
model in such decision problems.

In the duopoly game analyzed previously, we have $\frac{\partial u_{1}^{\bullet}}{\partial q_{2}}(q_{1},q_{2},\omega)=-\frac{1}{2}r^{\bullet}q_{1}$.
Player 1 is harmed by player 2 producing more if $q_{1}>0,$ and helped
if $q_{1}<0.$ From straightforward algebra, the expectation in Proposition
\ref{prop:general_incomplete_info_game} simplifies to 
\[
\mathbb{E}^{\bullet}[s_{1}^{2}]\cdot(-\frac{1}{2}\psi(\kappa^{\bullet})r^{\bullet}\alpha^{\bullet})\cdot[(1-\lambda)\alpha_{AB}^{'}(\kappa^{\bullet})+\lambda\alpha_{BB}^{'}(\kappa{}^{\bullet})].
\]

The proof of Proposition \ref{prop:LQN_lambda_0} shows that when
$\lambda=0$, $\alpha_{AB}^{'}(\kappa^{\bullet})<0$. The proof of
Proposition \ref{prop:LQN_lambda_1} shows that when $\lambda=1$,
$\alpha_{BB}^{'}(\kappa^{\bullet})>0$. The uniqueness of EZ also
follow from these results, for an open interval of $\kappa$ containing
$\kappa^{\bullet}$. We restrict $\mathbb{A}$ to the set of linear
strategies, and Lemma \ref{lem:BR} implies linear strategies played
by two correctly specified firms against each other are interim optimal.
Finally, Lemma \ref{lem:LQN_conditions_1_to_5} verifies that Assumptions
\ref{assu:compact} through \ref{assu:quasiconcave} are satisfied.
So, the conditions of Proposition \ref{prop:general_incomplete_info_game}
hold for $\lambda\in\{0,1\}$, and we deduce the correctly specified
model is evolutionarily fragile against slightly higher $\kappa$
(for $\lambda=0)$ and slightly lower $\kappa$ (for $\lambda=1$).

\section{\label{sec:Proofs}Proofs of Key Results from the Main Text}

\global\long\def\thedefn{A.\arabic{defn}}%
\global\long\def\thecor{A.\arabic{cor}}%
\global\long\def\theprop{A.\arabic{prop}}%
\global\long\def\thelem{A.\arabic{lem}}%
\global\long\def\theclaim{A.\arabic{claim}}%
\global\long\def\theassumption{A.\arabic{assumption}}%
 \setcounter{prop}{0} \setcounter{lem}{0} \setcounter{defn}{0}
\setcounter{assumption}{0}

\subsection{Proof of Theorem \ref{thm:theoryneeded}}

\textit{Part 1:} Let $\mathcal{V}$ be the convex hull of $\{(v_{G}^{b})_{G\in\mathcal{G}}\mid b:\mathbb{A}\rightrightarrows\mathbb{A}\}$,
and let $\mathcal{U}=\{(u_{G})_{G\in\mathcal{G}}:u_{G}\le v_{G}\text{ for all }G\text{ for some }v\in\mathcal{V}\}.$
Note $\mathcal{U}$ is closed and convex (since $\mathcal{V}$ is
convex). By hypothesis, $v^{\text{NE}}$ is not in the interior or
on the boundary of $\mathcal{U}.$ So by the separating hyperplane
theorem, there exists a vector $q\in\mathbb{R}^{|\mathcal{G}|}$ with
$q_{G}\ne0$ for every $G,$ so that $q\cdot v^{\text{NE}}>q\cdot u$
for every $u\in\mathcal{U}.$ Furthermore, $q_{G}\ge0$ for every
$G.$ This is because if $q_{G'}<0$ for some $G',$ then since $\mathcal{U}$
contains vectors with arbitrarily negative values in the $G'$ dimension,
we cannot have $q\cdot v^{\text{NE}}\ge q\cdot u$ for every $u\in\mathcal{U}.$
We may then without loss view $q$ as a distribution on $\mathcal{G}$. In fact, we can take $q$ to be full support.  To see this,  note that since $|\mathcal{G}| < \infty$ and $\mathcal{U}$ is convex, we have
\begin{equation*} 
\lim_{\varepsilon \rightarrow 0} \max_{v \in \mathcal{U}} \left[(1- \varepsilon)q + \frac{\varepsilon}{|\mathcal{G}|}(1,1, \ldots,1) \right] \cdot v =  \max_{v \in \mathcal{U}} q \cdot v,
\end{equation*} 

\noindent by continuity of the support function of convex sets in $\mathbb{R}^{n}$ (given that the support function on $\mathcal{U}$ is bounded for all $q \geq 0$, since $v_{G}^{b}$ is bounded above for every $b$ and every $G$). Thus, setting $\tilde{q}(\varepsilon) = (1- \varepsilon)q + \frac{\varepsilon}{|\mathcal{G}|}(1,1,\ldots, 1)$, we have $\tilde{q}(\varepsilon)$ is a full support distribution with $\tilde{q}(\varepsilon) \cdot v^{NE} > \tilde{q}(\varepsilon) \cdot u$ whenever $\varepsilon$ is sufficiently small, since we have that this inequality holds in the limit.

Now consider any singleton model $\Theta=\{F\}$, and let $b:\mathbb{A}\rightrightarrows\mathbb{A}$
be the subjective best-response correspondence that $F$ induces. If
$v_{G}^{b}\ne-\infty$ for every $G$, then, for each $G$ we can
find a strategy profile $(a_{i}^{G},a_{-i}^{G})$ where $a_{i}^{G}\in b(a_{-i}^{G}),$
$a_{-i}^{G}$ is a rational best response to $a_{i}^{G}$ in situation
$G$, and the strategy pair gives utility $v_{G}^{b}$ to the first
player. There is an  EZ where the resident correctly specified agents
get $v_{G}^{\text{NE}}$ in situation $G$, and the mutants with model
$\Theta$ play $(a_{i},a_{-i})$ in matches against the residents
and get utility $v_{G}^{b}$ in the same situation. Under the distribution
of situations $q$, the residents' fitness is $q\cdot v^{\text{NE}}$
while that of the mutants is $q\cdot v^{b}$, and the former is weakly
larger by construction of $q$ since $v^{b}\in\mathcal{U}$. This
EZ shows the correctly specified model is not evolutionarily fragile
against $\{F\}.$ Otherwise, if we have that $v_{G}^{b}=-\infty$
for some $G,$ then there are no  EZs, so the correctly specified
model is not evolutionarily fragile against $\{F\}$ by the emptiness
of the set of  EZs.

\textit{Part 2:} Suppose the hypotheses hold and let us construct
the misspecified model $\hat{\Theta}=\{F_{G}:G\in\mathcal{G}\}.$
To define the parameters $F_{G},$ first consider $\tilde{F}_{G}$ where
$\tilde{F}_{G}(a_{i},a_{-i}):=F^{\bullet}(a_{i},\underline{\text{BR}}(a_{i},G),G)$
for every $a_{-i}\in\mathbb{A}$. Now for each $(a_{i},a_{-i},G)\in\mathbb{A}\times\mathbb{A}\times\mathcal{G}$,
define the distribution $F_{G}(a_{i},a_{-i})\in\Delta(\mathbb{Y})$
as a sufficiently small perturbation of the $\tilde{F}_{G}(a_{i},a_{-i})$,
such that for every $a_{i},a_{-i}\in\mathbb{A}$ and every $G\in\mathcal{G}$,
$\min_{\hat{G}\in\mathcal{G}}KL(F^{\bullet}(a_{i},a_{-i},G)\parallel F_{\hat{G}}(a_{i},a_{-i}))$
has a unique solution. This can be done because there are finitely
many strategies and situations.

Consider any  EZ $\mathfrak{Z}$ with the correctly specified resident,
$\hat{\Theta}$ as the mutant, $\lambda=0$. By situation identifiability,
in $\mathfrak{Z}$ the correctly specified residents must believe
in the true $F^{\bullet}(\cdot,\cdot,G)$ in every situation $G$.
The mutants cannot hold a mixed belief in any situation $G$, by the
construction of the parameters in $\hat{\Theta}$ to rule out ties in
KL divergence. We show further that mutants must believe in $F_{G}$
in situation $G$. This is because if they instead believed in $F_{G'}$
for some $G'\ne G$, then they must play $\bar{a}_{G'}$ as the Stackelberg
strategy is assumed to be unique. Let $a_{-i}$ be the rational best
response to $\bar{a}_{G'}$ in situation $G$ and $a_{-i}'$ be the
rational best response to $\bar{a}_{G'}$ in situation $G',$ both
unique by assumption. The mutants' expected distribution of consequences
$F_{G'}(\bar{a}_{G'},a_{-i})$ is a perturbed version of $F^{\bullet}(\bar{a}_{G'},a_{-i}',G')$,
while the true distribution of consequences $F^{\bullet}(\bar{a}_{G'},a_{-i},G)$
is a perturbed version of $F_{G}(\bar{a}_{G'},a_{-i})$. We have $F^{\bullet}(\bar{a}_{G'},a_{-i}',G')\ne F^{\bullet}(\bar{a}_{G'},a_{-i},G)$
by Stackelberg identifiability, so $KL(F^{\bullet}(\bar{a}_{G'},a_{-i},G)\parallel F_{G}(\bar{a}_{G'},a_{-i}))<KL(F^{\bullet}(\bar{a}_{G'},a_{-i},G)\parallel F_{G'}(\bar{a}_{G'},a_{-i}))$
when the perturbations are sufficiently small. This contradicts the
mutants believing in $F_{G'}$ in situation $G$ as the parameter $F_{G}$
generates smaller KL divergence. So the mutants get the Stackelberg
payoff in each situation, which means they have higher fitness than
the residents in every  EZ since $\bar{v}_{G}>v_{G}^{\text{NE}}$
for at least one situation and $q$ has full support. Finally, there
exists at least one  EZ: it is an  EZ for the residents to believe
in $F^{\bullet}(\cdot,\cdot,G)$ in every situation $G$, to play
the symmetric Nash profile that results in $v_{G}^{\text{NE}}$ when
matched with other residents (this profile exists by hypothesis of
the theorem), and for the mutants to believe in $F_{G}$ and play
$(\bar{a}_{G},\underline{\text{BR}}(\bar{a}_{G},G))$ in matches against
residents in situation $G.$

\subsection{Proof of Proposition \ref{prop:no_reversal}}
\begin{proof}
Let two singleton models $\Theta_{A},\Theta_{B}$ be given. By contradiction,
suppose they exhibit stability reversal. Let $\mathfrak{Z}=(\mu_{A},\mu_{B},p=(0,1),\lambda=0,(a))$
be any EZ where $\Theta_{B}$ is resident. By the definition of EZ,
$\mathfrak{Z}^{'}=(\mu_{A},\mu_{B},p=(1,0),\lambda=0,(a))$
is also an EZ where $\Theta_{A}$ is resident. Let $u_{g,g^{'}}$
be model $\Theta_{g}$'s conditional fitness against group $g^{'}$
in the EZ $\mathfrak{Z}^{'}$. Part (i) of the definition of stability
reversal requires that $u_{AA}>u_{BA}$ and $u_{AB}>u_{BB}$. These
conditional fitness levels remain the same in $\mathfrak{Z}$. This
means the fitness of $\Theta_{A}$ is strictly higher than that of
$\Theta_{B}$ in $\mathfrak{Z}$, a contradiction.
\end{proof}

\subsection{Proof of Proposition \ref{prop:reversal_inference_channel}}
\begin{proof}
To show the first claim, by way of contradiction, suppose $\mathfrak{Z}=(\mu_{A},\mu_{B},p=(1,0),\lambda=0,(a_{AA},a_{AB},a_{BA},a_{BB}))$
is an EZ, and $\mathfrak{\tilde{Z}}=(\mu_{A},\mu_{B},p=(0,1),\lambda=0,(\tilde{a}_{AA},\tilde{a}_{AB},\tilde{a}_{BA},\tilde{a}_{BB}))$
is another EZ where the adherents of $\Theta_{B}$ hold the same belief
$\mu_{B}$ (group A's belief cannot change as $\Theta_{A}$ is the
correctly specified singleton model). By the optimality of behavior
in $\mathfrak{Z}$, $a_{BA}$ best responds to $a_{AB}$ under the
belief $\mu_{B}$, and $a_{AB}$ best responds to $a_{BA}$ under
the belief $\mu_{A}$, therefore $\mathfrak{\tilde{Z}}^{'}=(\mu_{A},\mu_{B},p=(0,1),\lambda=0,(\tilde{a}_{AA},a_{AB},a_{BA},\tilde{a}_{BB}))$
is another EZ. This holds because the distributions of observations
for the adherents of $\Theta_{B}$ are identical in $\mathfrak{\tilde{Z}}$
and $\mathfrak{\tilde{Z}}^{'}$, since they only face data generated
from the profile $(\tilde{a}_{BB},\tilde{a}_{BB}).$ At the same time,
since $\tilde{a}_{BB}$ best responds to itself under the belief $\mu_{B},$
we have that $\mathfrak{Z^{'}}=(\mu_{A},\mu_{B},p=(1,0),\lambda=0,(a_{AA},a_{AB},a_{BA},\tilde{a}_{BB}))$
is an EZ. Part (i) of the definition of stability reversal applied
to $\mathfrak{Z^{'}}$ requires that $U^{\bullet}(a_{AB},a_{BA})>U^{\bullet}(\tilde{a}_{BB},\tilde{a}_{BB})$
(where $U^{\bullet}$ is the objective expected payoffs), but part
(ii) of the same definition applied to $\mathfrak{\tilde{Z}}^{'}$
requires $U^{\bullet}(\tilde{a}_{BB},\tilde{a}_{BB})\ge U^{\bullet}(a_{AB},a_{BA}),$
a contradiction.

To show the second claim, by way of contradiction suppose $\Theta_{B}$
is strategically independent and $\mathfrak{Z}=(\mu_{A},\mu_{B},p=(0,1),\lambda=0,(a_{AA},a_{AB},a_{BA},a_{BB}))$
is an EZ. By strategic independence, the adherents of $\Theta_{B}$
find it optimal to play $a_{BB}$ against any opponent strategy under
the belief $\mu_{B}$. So, there exists another EZ of the form $\mathfrak{Z^{'}}=(\mu_{A}^{'},\mu_{B},p=(0,1),\lambda=0,(a_{AA},a_{AB}^{'},a_{BB},a_{BB}))$,
where $a_{AB}^{'}$ is an objective best response to $a_{BB}$. The
belief $\mu_{B}$ is sustained because in both $\mathfrak{Z}$ and
$\mathfrak{Z^{'}}$, the adherents of $\Theta_{B}$ have the same
data: from the strategy profile $(a_{BB},a_{BB}).$ In $\mathfrak{Z^{'}}$,
$\Theta_{A}$ 's fitness is $U^{\bullet}(a_{AB}^{'},a_{BB})$ and
$\Theta_{B}$'s fitness is $U^{\bullet}(a_{BB},a_{BB}).$ We have
$U^{\bullet}(a_{AB}^{'},a_{BB})\ge U^{\bullet}(a_{BB},a_{BB})$ since
$a_{AB}^{'}$ is an objective best response to $a_{BB},$ contradicting
the definition of stability reversal.
\end{proof}

\subsection{Proof of Proposition \ref{prop:no_non_mono_lambda}}
\begin{proof}
Let $\lambda\in[0,1]$ be given and let $\mathfrak{Z}=(\mu_{A},\mu_{B},p=(1,0),\lambda,(a))$
be an EZ. Since $\Theta_{A},\Theta_{B}$ are singleton models, $\mathfrak{Z}_{0}=(\mu_{A},\mu_{B},p=(1,0),\lambda=0,(a))$
and $\mathfrak{Z}_{1}=(\mu_{A},\mu_{B},p=(1,0),\lambda=1,(a))$
are also EZs.  Let $u_{g,g^{'}}$ represent model $\Theta_{g}$'s
conditional fitness against group $g^{'}$ in each of these three
EZs. From the hypothesis of the proposition, $u_{A,A}\ge u_{B,A}$
and $u_{A,A}\ge u_{B,B}$. This means the fitness of $\Theta_{A}$
in $\mathfrak{Z},$ which is $u_{A,A}$, is weakly larger than the
fitness of $\Theta_{B}$ in $\mathfrak{Z},$ which is $\lambda u_{B,B}+(1-\lambda)u_{B,A}$.
This shows $\Theta_{A}$ has weakly higher fitness than $\Theta_{B}$
in every  EZ with $\lambda$ and $p=(1,0)$. Also, at least one such
 EZ exists with assortativity $\lambda$, for at least one  EZ exists
when $\lambda=0$, and the same equilibrium belief and behavior also
constitutes an EZ for any other assortativity.
\end{proof}

\subsection{Details Behind Example \ref{exa:stability_reversal_example}} \label{App:ExDetails}

Let $b^{*}(a_{i},a_{-i})$ solve $\min_{b\in\mathbb{R}}D_{KL}(F^{\bullet}(a_{i},a_{-i})\parallel\hat{F}(a_{i},a_{-i};b,m))),$
where $F^{\bullet}(a_{i},a_{-i})$ is the objective distribution over
observations under the investment profile $(a_{i},a_{-i}),$ and $\hat{F}(a_{i},a_{-i};b,m)$
is the distribution under the same investment profile in the model
where productivity is given by $P=b(x_{i}+x_{-i})-m+\epsilon$. We
find that $b^{*}(a_{i},a_{-i})=b^{\bullet}+\frac{m}{a_{i}+a_{-i}}$.
That is, adherents of $\Theta_{B}$ end up with different beliefs
about the game parameter $b$ depending on the behavior of their typical
opponents, which in turn affects how they respond to different rival
investment levels. Stability reversal happens because when $\Theta_{A}$
is resident and the adherents of $\Theta_{B}$ always meet opponents
who play $a_{i}=1,$ they end up with a more distorted belief about
the fundamental than when $\Theta_{B}$ is resident.

\subsection{Proof of Proposition \ref{prop:LQN_lambda_0}}
\begin{proof}
We can take $L_{1},L_{2},L_{3}$ as given by Lemma \ref{lem:unique_inference}.
Suppose there is an EZ with behavior $\alpha=(\alpha_{AA},\alpha_{AB},\alpha_{BA},\alpha_{BB})$
and beliefs over parameters $\mu_{A}\in\Delta(\Theta(\kappa^{\bullet})),$
$\mu_{B}\in\Delta(\Theta(\kappa)).$ By Lemma \ref{lem:unique_inference},
both $\mu_{A}$ and $\mu_{B}$ must be degenerate beliefs that induce
zero KL divergence, since both groups match up with group A with probability
1. Furthermore, since $\Theta_{A}$ is correctly specified, it is
easy to see that the parameter $F_{r^{\bullet},\kappa^{\bullet},\sigma_{\zeta}^{\bullet}}$
generates 0 KL divergence, hence the belief of the adherents of $\Theta_{A}$
must be degenerate on this correct parameter.

In terms of behavior, from Lemma \ref{lem:BR}, $\alpha_{i}^{BR}(\alpha_{-i};\kappa,r)\le\gamma$
for all $\alpha_{-i}\ge0,\kappa\in[0,1],r\ge0.$ Since the upper bound
$\bar{M}_{\alpha}\ge\gamma$, the adherents of each model must be
best responding (across all linear strategies in $[0,\infty)$) in
all matches, given their beliefs about the environment.

Using the equilibrium belief of group A, we must have $\alpha_{AA}=\alpha_{i}^{BR}(\alpha_{AA};\kappa^{\bullet},r^{\bullet}),$
so $\alpha_{AA}=\frac{\gamma-\frac{1}{2}r^{\bullet}\psi(\kappa^{\bullet})\alpha_{AA}}{1+r^{\bullet}}$.
We find the unique solution $\alpha_{AA}=\frac{\gamma}{1+r^{\bullet}+\frac{1}{2}r^{\bullet}\psi(\kappa^{\bullet})}$.
Next we turn to $\alpha_{AB},\alpha_{BA},$ and $\mu_{B}.$ We know
$\mu_{B}$ puts probability 1 on some $r_{B}$. For adherents of groups
A and B to best respond to each others' play and for group B's inference
to have 0 KL divergence (when paired with an appropriate choice of
$\sigma_{\zeta}$ ), we must have $\alpha_{AB}=\frac{\gamma-\frac{1}{2}r^{\bullet}\psi(\kappa^{\bullet})\alpha_{BA}}{1+r^{\bullet}},$
$\alpha_{BA}=\frac{\gamma-\frac{1}{2}r_{B}\psi(\kappa)\alpha_{AB}}{1+r_{B}}$,
and $r_{B}=r^{\bullet}\frac{\alpha_{BA}+\alpha_{AB}\psi(\kappa^{\bullet})}{\alpha_{BA}+\alpha_{AB}\psi(\kappa)}$
from Lemma \ref{lem:unique_inference}. We may rearrange the expression
for $\alpha_{BA}$ to say $\alpha_{BA}=\gamma-r_{B}\alpha_{BA}-\frac{1}{2}r_{B}\psi(\kappa)\alpha_{AB}.$
Substituting the expression of $r_{B}$ into this expression of $\alpha_{BA},$
we get {\small{}
\begin{align*}
\alpha_{BA} & =\gamma-r_{B}\cdot(\alpha_{BA}+\alpha_{AB}\psi(\kappa)-\frac{1}{2}\alpha_{AB}\psi(\kappa))\\
 & =\gamma-\frac{r^{\bullet}\alpha_{BA}+r^{\bullet}\alpha_{AB}\psi(\kappa^{\bullet})}{\alpha_{BA}+\alpha_{AB}\psi(\kappa)}\cdot(\alpha_{BA}+\alpha_{AB}\psi(\kappa)-\frac{1}{2}\alpha_{AB}\psi(\kappa))\\
 & =\gamma-r^{\bullet}\alpha_{BA}-r^{\bullet}\alpha_{AB}\psi(\kappa^{\bullet})+\frac{1}{2}\psi(\kappa)\alpha_{AB}\frac{r^{\bullet}\alpha_{BA}+r^{\bullet}\alpha_{AB}\psi(\kappa^{\bullet})}{\alpha_{BA}+\alpha_{AB}\psi(\kappa)}
\end{align*}
}Multiply by $\alpha_{BA}+\alpha_{AB}\psi(\kappa)$ on both sides
and collect terms by powers of $\alpha$,{\small{}
\[
(\alpha_{BA})^{2}\cdot\left[-1-r^{\bullet}\right]+\left(\alpha_{BA}\alpha_{AB}\right)\cdot[-\psi(\kappa)-\frac{1}{2}r^{\bullet}\psi(\kappa)-r^{\bullet}\psi(\kappa^{\bullet})]-(\alpha_{AB})^{2}\cdot[\frac{1}{2}r^{\bullet}\psi(\kappa^{\bullet})\psi(\kappa)]+\gamma[\alpha_{BA}+\alpha_{AB}\psi(\kappa)]=0.
\]
}Consider the following quadratic function in $x$, {\small{}
\begin{equation}
H(x):=x^{2}\left[-1-r^{\bullet}\right]+\left(x\cdot\ell(x)\right)\cdot[-\psi(\kappa)-\frac{1}{2}r^{\bullet}\psi(\kappa)-r^{\bullet}\psi(\kappa^{\bullet})]-(\ell(x))^{2}\cdot[\frac{1}{2}r^{\bullet}\psi(\kappa^{\bullet})\psi(\kappa)]+\gamma\left[x+\ell(x)\psi(\kappa)\right]=0,\label{eq:Hx}
\end{equation}
} where $\ell(x):=\frac{\gamma-\frac{1}{2}r^{\bullet}\psi(\kappa^{\bullet})x}{1+r^{\bullet}}$
is a linear function in $x.$ In an EZ, $\alpha_{BA}$ is a root of
$H(x)$ in $[0,\frac{\gamma}{\frac{1}{2}r^{\bullet}\psi(\kappa^{\bullet})}]$.
To see why, if we were to have $\alpha_{BA}>\frac{\gamma}{\frac{1}{2}r^{\bullet}\psi(\kappa^{\bullet})}$,
then $\alpha_{AB}=0.$ In that case, $r_{B}=r^{\bullet}$ and so $\alpha_{BA}=\alpha_{i}^{BR}(0;\kappa^{\bullet},r^{\bullet})=\frac{\gamma}{1+r^{\bullet}}.$
Yet $\frac{\gamma}{1+r^{\bullet}}<\frac{\gamma}{\frac{1}{2}r^{\bullet}\psi(\kappa^{\bullet})}$,
contradiction. Conversely, for any root $x^{*}$ of $H(x)$ in $[0,\frac{\gamma}{\frac{1}{2}r^{\bullet}\psi(\kappa^{\bullet})}]$,
there is an EZ where $\alpha_{BA}=x^{*},$ $\alpha_{AB}=\ell(x^{*})\in[0,\gamma],$
and $r_{B}=r^{\bullet}\frac{\alpha_{BA}+\alpha_{AB}\psi(\kappa^{\bullet})}{\alpha_{BA}+\alpha_{AB}\psi(\kappa)}.$
\begin{claim}
\label{claim:H} There exist some $\underline{\kappa}_{1}<\kappa^{\bullet}<\bar{\kappa}_{1}$
so that $H$ has a unique root in $[0,\frac{\gamma}{\frac{1}{2}r^{\bullet}\psi(\kappa^{\bullet})}]$
for all $\kappa\in[\underline{\kappa}_{1},\bar{\kappa}_{1}]\cap[0,1].$
\end{claim}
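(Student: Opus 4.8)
\textbf{Proof proposal for Claim \ref{claim:H}.} The plan is to exploit that $H$ is a polynomial in $x$ of degree at most two: since $\ell(x)=\frac{\gamma-\frac12 r^{\bullet}\psi(\kappa^{\bullet})x}{1+r^{\bullet}}$ is affine in $x$, each of the blocks $x^{2}$, $x\,\ell(x)$, $\ell(x)^{2}$, $x$, $\ell(x)$ appearing in \eqref{eq:Hx} has degree $\le 2$, so $H$ has at most two real roots. I would then locate the roots by evaluating $H$ at the two endpoints of the interval $I:=[0,\bar{x}]$, where $\bar x:=\frac{\gamma}{\frac12 r^{\bullet}\psi(\kappa^{\bullet})}$, and show $H(0)>0$ while $H(\bar x)<0$; the sign change plus the degree bound will give exactly one root in $I$. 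A useful preliminary observation: nothing in this argument will actually use $\kappa\approx\kappa^{\bullet}$, only $\psi(\kappa)>0$ (true for all $\kappa\in[0,1]$) and the crude bound $\psi(\kappa^{\bullet})\le\psi(1)=1$ from Lemma \ref{lem:kappa}; so the claim should hold for every $\kappa\in[0,1]$, and one can take $\underline{\kappa}_1<\kappa^{\bullet}<\bar{\kappa}_1$ with $[\underline{\kappa}_1,\bar{\kappa}_1]\cap[0,1]=[0,1]$.

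First I would compute the two endpoint values. At $x=0$ the $x^{2}$ and $x\ell(x)$ terms vanish and a short computation gives $H(0)=\psi(\kappa)\,\ell(0)\bigl(\gamma-\tfrac12 r^{\bullet}\psi(\kappa^{\bullet})\ell(0)\bigr)$ with $\ell(0)=\frac{\gamma}{1+r^{\bullet}}>0$; the bracket equals $\gamma\cdot\frac{2+r^{\bullet}(2-\psi(\kappa^{\bullet}))}{2(1+r^{\bullet})}$, which is strictly positive since $\psi(\kappa^{\bullet})\le1$ and $r^{\bullet}>0$, and $\gamma,\psi(\kappa)>0$ by Lemma \ref{lem:kappa}, hence $H(0)>0$. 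The key trick at the right endpoint is that $\bar x$ was chosen precisely so that $\ell(\bar x)=0$; therefore every term of $H(\bar x)$ that carries a factor $\ell(\cdot)$ drops out, leaving $H(\bar x)=-(1+r^{\bullet})\bar x^{2}+\gamma\bar x=\bar x\bigl(\gamma-(1+r^{\bullet})\bar x\bigr)$. Since $(1+r^{\bullet})\bar x=\frac{2\gamma(1+r^{\bullet})}{r^{\bullet}\psi(\kappa^{\bullet})}\ge\frac{2\gamma(1+r^{\bullet})}{r^{\bullet}}>2\gamma>\gamma$, the bracket is negative and $\bar x>0$, so $H(\bar x)<0$.

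Finally I would conclude as follows: $H$ is continuous with $H(0)>0>H(\bar x)$, so the intermediate value theorem yields at least one root in $(0,\bar x)\subseteq I$. For uniqueness, $H$ is a nonzero polynomial (it is positive at $0$) of degree at most two, so away from its finitely many real roots it has constant sign equal to that of its leading coefficient on each unbounded component of the complement. If $H$ had two roots (with multiplicity) inside $I$, those would be all of its real roots, and both $0$ and $\bar x$ would lie in the unbounded components where $H$ does not vanish — forcing $H(0)$ and $H(\bar x)$ to share a sign, contradicting $H(0)>0>H(\bar x)$. The degenerate cases (a linear $H$, or a double root) are dispatched identically, since away from its single root $H$ would again be of one sign. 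Hence $H$ has exactly one root in $I$ for every $\kappa\in[0,1]$, which gives the claim. I do not expect a genuine obstacle here; the only point requiring care is this last sign-change/degree bookkeeping (including the degenerate cases), and the observation that unlocks the whole estimate is simply $\ell(\bar x)=0$.
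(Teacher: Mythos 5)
Your proposal is correct, and it proves the claim by a genuinely different and in fact stronger route than the paper. The paper argues locally: it first shows uniqueness of the root at $\kappa=\kappa^{\bullet}$ by exploiting the economic structure (at $\kappa=\kappa^{\bullet}$ the system reduces to two linear best-response equations with slope $-\tfrac{1}{2}\tfrac{r^{\bullet}}{1+r^{\bullet}}\psi(\kappa^{\bullet})\in(-\tfrac12,0)$, hence a unique intersection), then checks that $H$ does not vanish at $x=0$ or $x=\frac{\gamma}{\frac12 r^{\bullet}\psi(\kappa^{\bullet})}$ and that the interior root is simple (via a lengthy computation showing $H'(x^{*})<0$), and finally invokes continuity of $H$ in $\kappa$ to obtain an interval $[\underline{\kappa}_{1},\bar{\kappa}_{1}]$ around $\kappa^{\bullet}$. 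You instead observe that $H$ in (\ref{eq:Hx}) is a polynomial of degree at most two in $x$ and that the right endpoint is exactly the zero of $\ell$, so all $\kappa$-dependent terms drop out there; your endpoint evaluations $H(0)=\psi(\kappa)\,\ell(0)\bigl(\gamma-\tfrac12 r^{\bullet}\psi(\kappa^{\bullet})\ell(0)\bigr)>0$ and $H(\bar x)=\bar x\bigl(\gamma-(1+r^{\bullet})\bar x\bigr)<0$ check out (using only $\gamma>0$, $r^{\bullet}>0$, $0<\psi(\kappa)\le\psi(1)=1$ from Lemma \ref{lem:kappa}), and the sign change together with the degree bound (including your handling of the double-root and linear degenerate cases, which in fact never occur since the coefficient of $x^{2}$ is bounded above by $-\tfrac12-\tfrac14 r^{\bullet}<0$) gives exactly one root in the interval. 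What your approach buys is a uniform conclusion for every $\kappa\in[0,1]$ with no perturbation step and no need for the $H'(x^{*})<0$ computation; what the paper's approach buys is that its simplicity-of-root calculation is of the same character as the implicit-differentiation argument reused immediately afterwards in the proof of Proposition \ref{prop:LQN_lambda_0}, whereas your argument, being purely about root counting, contributes nothing toward that later comparative-statics step (which of course it is not required to).
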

By Claim \ref{claim:H} (proved in the Online Appendix), for $\kappa\in[\underline{\kappa}_{1},\bar{\kappa}_{1}]\cap[0,1]$,
group B has only one possible belief about elasticity (denoted by
$r_{B}(\kappa)$) in EZ), since there is only one possible outcome
in the match between group A and group B. This means $\alpha_{BB}$
is also pinned down, since there is only one solution to $\alpha_{BB}=\alpha_{i}^{BR}(\alpha_{BB};\kappa,r_{B}(\kappa))$.
So for every $\kappa\in[\underline{\kappa}_{1},\bar{\kappa}_{1}]\cap[0,1]$,
there is a unique EZ, where equilibrium behavior is given as a function
of $\kappa$ by $\alpha(\kappa)=(\alpha_{AA}(\kappa),\alpha_{AB}(\kappa),\alpha_{BA}(\kappa),\alpha_{BB}(\kappa)).$

Recall from Lemma \ref{lem:BR} that the objective expected utility
from playing $\alpha_{i}$ against an opponent who plays $\alpha_{-i}$
is $U_{i}^{\bullet}(\alpha_{i},\alpha_{-i})=\mathbb{E}[s_{i}^{2}]\cdot(\alpha_{i}\gamma-\frac{1}{2}r^{\bullet}\alpha_{i}^{2}-\frac{1}{2}r^{\bullet}\psi(\kappa^{\bullet})\alpha_{i}\alpha_{-i}-\frac{1}{2}\alpha_{i}^{2})$.
If $-i$ plays the rational best response, then the objective expected
utility of choosing $\alpha_{i}$ is $\bar{U}_{i}(\alpha_{i}):=\mathbb{E}[s_{i}^{2}]\cdot(\alpha_{i}\gamma-\frac{1}{2}r^{\bullet}\alpha_{i}^{2}-\frac{1}{2}r^{\bullet}\psi(\kappa^{\bullet})\alpha_{i}\frac{\gamma-\frac{1}{2}r^{\bullet}\psi(\kappa^{\bullet})\alpha_{i}}{1+r^{\bullet}}-\frac{1}{2}\alpha_{i}^{2})$.
The derivative in $\alpha_{i}$ is $\bar{U}_{i}^{'}(\alpha_{i})=\gamma-r^{\bullet}\alpha_{i}-\frac{1}{2}\frac{r^{\bullet}}{1+r^{\bullet}}\gamma\psi(\kappa^{\bullet})+\frac{1}{2}\frac{(r^{\bullet})^{2}\psi(\kappa^{\bullet})^{2}}{1+r^{\bullet}}\alpha_{i}-\alpha_{i}$.
We also know that $\alpha_{AA}=\frac{\gamma}{1+r^{\bullet}+\frac{1}{2}r^{\bullet}\psi(\kappa^{\bullet})}$
satisfies the first-order condition that $\gamma-r^{\bullet}\alpha_{AA}-\frac{1}{2}r^{\bullet}\psi(\kappa^{\bullet})\alpha_{AA}-\alpha_{AA}=0$,
therefore {\small{}
\begin{align*}
\bar{U}_{i}^{'}(\alpha_{AA}) & =-\frac{1}{2}\frac{r^{\bullet}}{1+r^{\bullet}}\gamma\psi(\kappa^{\bullet})+\frac{1}{2}\frac{(r^{\bullet})^{2}\psi(\kappa^{\bullet})^{2}}{1+r^{\bullet}}\alpha_{AA}+\frac{1}{2}r^{\bullet}\psi(\kappa^{\bullet})\alpha_{AA}\\
 & =\left[\frac{r^{\bullet}\psi(\kappa^{\bullet})}{2}\right]\left(\frac{-\gamma}{1+r^{\bullet}}+\frac{\alpha_{AA}\psi(\kappa^{\bullet})r^{\bullet}}{1+r^{\bullet}}+\alpha_{AA}\right).
\end{align*}
}Making the substitution $\alpha_{AA}=\frac{\gamma}{1+r^{\bullet}+\frac{1}{2}r^{\bullet}\psi(\kappa^{\bullet})}$,
{\small{}
\begin{align*}
\frac{-\gamma}{1+r^{\bullet}}+\frac{\alpha_{AA}\psi(\kappa^{\bullet})r^{\bullet}}{1+r^{\bullet}}+\alpha_{AA} & =\frac{-\gamma(1+r^{\bullet}+\frac{1}{2}\psi(\kappa^{\bullet})r^{\bullet})+\gamma\psi(\kappa^{\bullet})r^{\bullet}+\gamma(1+r^{\bullet})}{(1+r^{\bullet})(1+r^{\bullet}+\frac{1}{2}\psi(\kappa^{\bullet})r^{\bullet})}\\
 & =\frac{\frac{1}{2}\gamma\psi(\kappa^{\bullet})r^{\bullet}}{(1+r^{\bullet})(1+r^{\bullet}+\frac{1}{2}\psi(\kappa^{\bullet})r^{\bullet})}>0.
\end{align*}
} Therefore, if we can show that $\alpha_{BA}^{'}(\kappa^{\bullet})>0,$
then there exists some $\underline{\kappa}_{1}\le\underline{\kappa}<\kappa^{\bullet}<\bar{\kappa}\le\bar{\kappa}_{1}$
so that for every $\kappa\in[\underline{\kappa},\bar{\kappa}]\cap[0,1]$,
$\kappa\ne\kappa^{\bullet}$ adherents of $\Theta_{B}$ have strictly
higher or strictly lower equilibrium fitness in the unique EZ than
adherents of $\Theta_{A}$, depending on the sign of $\kappa-\kappa^{\bullet}$.
Consider again the quadratic function $H(x)$ in Equation (\ref{eq:Hx})
and implicitly characterize the unique root $x$ in $[0,\frac{\gamma}{\frac{1}{2}r^{\bullet}\psi(\kappa^{\bullet})}]$
as a function of $\kappa$ in a neighborhood around $\kappa^{\bullet}$.
Denote this root by $\alpha^{M}$, let $D:=\frac{d\alpha^{M}}{d\psi(\kappa)}$
and also note $\frac{d\ell(\alpha^{M})}{d\psi(\kappa)}=\frac{-r^{\bullet}}{2(1+r^{\bullet})}\psi(k^{\bullet})\cdot D$.
We have {\small{}
\begin{align*}
 & (-1-r^{\bullet})\cdot(2\alpha^{M})\cdot D+(\alpha^{M}\ell(\alpha^{M}))(-1-\frac{1}{2}r^{\bullet})\\
 & +(\ell(\alpha^{M})D+\alpha^{M}\frac{-r^{\bullet}}{2(1+r^{\bullet})}\psi(\kappa^{\bullet})D)\cdot(-\psi(\kappa)-\frac{1}{2}r^{\bullet}\psi(\kappa)-r^{\bullet}\psi(\kappa^{\bullet}))+(\ell(\alpha^{M}))^{2}\cdot(-\frac{1}{2}r^{\bullet}\psi(\kappa^{\bullet}))\\
 & +(2\ell(\alpha^{M})\frac{-r^{\bullet}}{2(1+r^{\bullet})}\psi(\kappa^{\bullet})D)\cdot(-\frac{1}{2}r^{\bullet}\psi(\kappa^{\bullet})\psi(\kappa))+\gamma(D+\ell(\alpha^{M})+\psi(\kappa)\frac{-r^{\bullet}}{2(1+r^{\bullet})}\psi(\kappa^{\bullet})D)=0
\end{align*}
}Evaluate at $\kappa=\kappa^{\bullet},$ noting that $\alpha^{M}(\kappa^{\bullet})=\ell(\alpha^{M}(\kappa^{\bullet}))=x^{*}:=\frac{\gamma}{1+r^{\bullet}+\frac{1}{2}\psi(\kappa^{\bullet})r^{\bullet}}$.
The terms without $D$ are: {\small{}{}{} 
\begin{align*}
(x^{*})^{2}(-1-\frac{1}{2}r^{\bullet})+(x^{*})^{2}(\frac{1}{2}r^{\bullet}\psi(\kappa^{\bullet}))+\gamma x^{*} & =x^{*}\cdot\left[-x^{*}\cdot\left(1+r^{\bullet}+\frac{1}{2}\psi(\kappa^{\bullet})r^{\bullet}-\frac{1}{2}r^{\bullet}\right)+\gamma\right]\\
 & =x^{*}\cdot\left[-\gamma+\frac{1}{2}x^{*}r^{\bullet}+\gamma\right]=\frac{1}{2}r^{\bullet}(x^{*})^{2}>0.
\end{align*}
}The coefficient in front of $D$ is: {\small{}{}{} 
\[
(-1-r^{\bullet})(2x^{*})+(x^{*}+x^{*}\frac{-r^{\bullet}}{2(1+r^{\bullet})}\psi(\kappa^{\bullet}))\cdot(-\psi(\kappa^{\bullet})-\frac{3}{2}r^{\bullet}\psi(\kappa^{\bullet}))+\frac{1}{2}x^{*}\frac{(r^{\bullet})^{2}}{(1+r^{\bullet})}\psi(\kappa^{\bullet})^{3}+\gamma+\gamma\psi(\kappa^{\bullet})^{2}\cdot\frac{-r^{\bullet}}{2(1+r^{\bullet})}.
\]
} Make the substitution $\gamma=x^{*}\cdot\left(1+r^{\bullet}+\frac{1}{2}\psi(\kappa^{\bullet})r^{\bullet}\right)$,
{\small{}{}{} 
\begin{align*}
 & x^{*}\cdot\left\{ -2-2r^{\bullet}+\left(1-\frac{r^{\bullet}}{2(1+r^{\bullet})}\psi(\kappa^{\bullet})\right)\cdot\psi(\kappa^{\bullet})(-\frac{3}{2}r^{\bullet}-1)+\frac{(r^{\bullet})^{2}}{2(1+r^{\bullet})}\psi(\kappa^{\bullet})^{3}\right\} \\
+ & x^{*}\cdot\left\{ \left(1+r^{\bullet}+\frac{1}{2}\psi(\kappa^{\bullet})r^{\bullet}\right)\cdot(1-\psi(\kappa^{\bullet})^{2}\frac{r^{\bullet}}{2(1+r^{\bullet})})\right\} .
\end{align*}
} Collect terms inside the parenthesis based on powers of $\psi(\kappa^{\bullet}),$
we get {\small{}{}{} 
\begin{align*}
 & x^{*}\cdot\left\{ \psi(\kappa^{\bullet})^{3}\frac{(r^{\bullet})^{2}}{2(1+r^{\bullet})}-\frac{\psi(\kappa^{\bullet})^{2}r^{\bullet}}{2(1+r^{\bullet})}(-\frac{3}{2}r^{\bullet}-1)+\psi(\kappa^{\bullet})(-\frac{3}{2}r^{\bullet}-1)-2r^{\bullet}-2\right\} \\
+ & x^{*}\cdot\left\{ -\psi(\kappa^{\bullet})^{3}\frac{(r^{\bullet})^{2}}{4(1+r^{\bullet})}-\frac{\psi(\kappa^{\bullet})^{2}r^{\bullet}}{2(1+r^{\bullet})}\cdot(1+r^{\bullet})+1+r^{\bullet}+\frac{1}{2}\psi(\kappa^{\bullet})r^{\bullet}\right\} .
\end{align*}
}Combine to get: {\small{}{}{}$x^{*}\cdot\left[\psi(\kappa{}^{\bullet})^{3}\frac{(r^{\bullet})^{2}}{4(1+r^{\bullet})}+\frac{\psi(\kappa^{\bullet})^{2}(r^{\bullet})^{2}}{4(1+r^{\bullet})}-\psi(\kappa^{\bullet})r^{\bullet}-\psi(\kappa^{\bullet})-r^{\bullet}-1\right].$}
Here $\psi(\kappa{}^{\bullet})^{3}\frac{(r^{\bullet})^{2}}{4(1+r^{\bullet})}$
and $\frac{\psi(\kappa^{\bullet})^{2}(r^{\bullet})^{2}}{4(1+r^{\bullet})}$
are positive terms with {\small{}{}{}$\psi(\kappa{}^{\bullet})^{3}\frac{(r^{\bullet})^{2}}{4(1+r^{\bullet})}+\frac{\psi(\kappa^{\bullet})^{2}(r^{\bullet})^{2}}{4(1+r^{\bullet})}\le\frac{(r^{\bullet})^{2}}{4(1+r^{\bullet})}+\frac{(r^{\bullet})^{2}}{4(1+r^{\bullet})}\le\frac{1}{2}\cdot r^{\bullet}\cdot\frac{r^{\bullet}}{1+r^{\bullet}}\le\frac{1}{2}r^{\bullet}.$}
Now $-r^{\bullet}+\frac{1}{2}\cdot r^{\bullet}<0$, and also $-\psi(\kappa^{\bullet})r^{\bullet}-\psi(\kappa^{\bullet})-1<0.$
Thus the coefficient in front of $D$ is strictly negative. This shows
$D(\kappa^{\bullet})>0.$ Finally, $\frac{d\alpha^{M}}{d\psi(\kappa)}$
has the same sign as $\frac{d\alpha^{M}}{d\kappa}$ since $\psi(\kappa)$
is strictly increasing in $\kappa.$
\end{proof}

\subsection{Proof of Proposition \ref{prop:LQN_lambda_1}}
\begin{proof}
We will show that in every EZ: (i) for each $g\in\{A,B\},$ $\mu_{g}$
puts probability 1 on $\frac{1+\psi(\kappa^{\bullet})}{1+\psi(\kappa_{g})}r^{\bullet}$;
(ii) for each $g\in\{A,B\}$, $\alpha_{gg}=\frac{\gamma}{1+\frac{r^{\bullet}}{2}(1+\psi(\kappa^{\bullet}))+\frac{r^{\bullet}}{2}(\frac{1+\psi(\kappa^{\bullet})}{1+\psi(\kappa_{g})})}$;
(iii) the equilibrium fitness of group A is weakly higher than that
of group B if and only if $\kappa_{A}\le\kappa_{B}$.

Choose $L_{1},L_{2},L_{3}$ as in Lemma \ref{lem:unique_inference},
given $r^{\bullet}$ and $\bar{M}_{\alpha}.$ In any EZ with behavior
$(\alpha_{AA},\alpha_{AB},\alpha_{BA},\alpha_{BB}),$ since the adherents
of each model matches with their own group with probability 1 under
perfectly assortatively matching, we conclude that each of $\mu_{g}$
for $g\in\{A,B\}$ must put full weight on $r_{i}^{INF}(\alpha_{gg},\alpha_{gg};\kappa^{\bullet},\kappa_{g},r^{\bullet})=\frac{\alpha_{gg}+\alpha_{gg}\psi(\kappa^{\bullet})}{\alpha_{gg}+\alpha_{gg}\psi(\kappa_{g})}r^{\bullet}=\frac{1+\psi(\kappa^{\bullet})}{1+\psi(\kappa_{g})}r^{\bullet}$,
proving (i).

Given this belief, we must have $\alpha_{gg}=\frac{\gamma-\frac{1}{2}\frac{1+\psi(\kappa^{\bullet})}{1+\psi(\kappa_{g})}r^{\bullet}\psi(\kappa_{g})\alpha_{gg}}{1+\frac{1+\psi(\kappa^{\bullet})}{1+\psi(\kappa_{g})}r^{\bullet}}$
by Lemma \ref{lem:BR}. Rearranging yields $\alpha_{gg}=\frac{\gamma}{1+\frac{r^{\bullet}}{2}(1+\psi(\kappa^{\bullet}))+\frac{r^{\bullet}}{2}(\frac{1+\psi(\kappa^{\bullet})}{1+\psi(\kappa)})},$
proving (ii).

From Lemma \ref{lem:BR}, the objective expected utility of each player
when both play the strategy profile $\alpha_{symm}$ is $\mathbb{E}[s_{i}^{2}]\cdot\left(\alpha_{symm}\gamma-\frac{1}{2}r^{\bullet}\alpha_{symm}^{2}-\frac{1}{2}r^{\bullet}\psi(\kappa^{\bullet})\alpha_{symm}^{2}-\frac{1}{2}\alpha_{symm}^{2}\right)$.
This is a strictly concave quadratic function in $\alpha_{symm}$
that is 0 at $\alpha_{symm}=0.$ Therefore, it is strictly decreasing
in $\alpha_{symm}$ for $\alpha_{symm}$ larger than the team solution
$\alpha_{TEAM}$ that maximizes this expression, given by the first-order
condition 
\[
\gamma-r^{\bullet}\alpha_{TEAM}-r^{\bullet}\psi(\kappa^{\bullet})\alpha_{TEAM}-\alpha_{TEAM}=0\Rightarrow\alpha_{TEAM}=\frac{\gamma}{1+r^{\bullet}+r^{\bullet}\psi(\kappa^{\bullet})}.
\]
For any value of $\kappa\in[0,1],$ using the fact that $\psi(0)>0$
and $\psi$ is strictly increasing, 
\[
\frac{\gamma}{1+\frac{r^{\bullet}}{2}(1+\psi(\kappa^{\bullet}))+\frac{r^{\bullet}}{2}(\frac{1+\psi(\kappa^{\bullet})}{1+\psi(\kappa)})}>\frac{\gamma}{1+\frac{r^{\bullet}}{2}(1+\psi(\kappa^{\bullet}))+\frac{r^{\bullet}}{2}(1+\psi(\kappa^{\bullet}))}=\alpha_{TEAM}.
\]
Also, $\frac{\gamma}{1+\frac{r^{\bullet}}{2}(1+\psi(\kappa^{\bullet}))+\frac{r^{\bullet}}{2}(\frac{1+\psi(\kappa^{\bullet})}{1+\psi(\kappa)})}$
is a strictly increasing function in $\kappa$, since $\psi$ is strictly
increasing. We therefore conclude that each player's utility when
they play $\frac{\gamma}{1+\frac{r^{\bullet}}{2}(1+\psi(\kappa^{\bullet}))+\frac{r^{\bullet}}{2}(\frac{1+\psi(\kappa^{\bullet})}{1+\psi(\kappa)})}$
against each other is strictly decreasing in $\kappa,$ proving (iii).
\end{proof}

\subsection{Proof of Proposition \ref{prop:LQNLearningNeeded}}

We consider a distribution $q$ over two situations that have different
true values of $r^{\bullet}$, where $q(r^{\bullet}=0)=1-\varepsilon$
and $q(r^{\bullet}=\bar{r})=\varepsilon$, for some $\overline{r}\ge3$.
Suppose $p=(1,0)$ with the rational model as the resident. We claim
that there are some $0<r_{0}<r_{1}$ such that the following three
conditions hold.
\begin{itemize}
\item For $0\le r<r_{0}$, in every EZ, every singleton model $(r,\kappa)$
obtains negative payoff when $r^{\bullet}=\overline{r}$, and no more
than the rational model's payoff when $r^{\bullet}=0$.
\item For $r_{0}\le r\le r_{1}$, in every EZ, every singleton model $(r,\kappa)$
obtains strictly less than the rational payoff when $r^{\bullet}=0$,
and no more than the Stackelberg payoff against a rational opponent
when $r^{\bullet}=\overline{r}$. Furthermore, the singleton model's
highest EZ payoff when $r^{\bullet}=0$ is given by a continuous function
$\xi(r)$.
\item For $r>r_{1}$, in every EZ, every singleton model $(r,\kappa)$
obtains payoff less than half that of the rational payoff when $r^{\bullet}=0$,
and no more than the Stackelberg payoff against a rational opponent
when $r^{\bullet}=\overline{r}$.
\end{itemize}
We show that if these conditions hold, then the correctly specified
model is evolutionarily stable against any singleton model when
$\varepsilon$ is sufficiently small. Let $c_{0}>0$ be the rational
model's payoff when $r^{\bullet}=0$, let $c_{\overline{r}}>0$ be
the rational model's payoff when $r^{\bullet}=\overline{r}$, and
let $c_{s}>0$ be the Stackelberg payoff against the rational model
when $r^{\bullet}=\overline{r}$. For every $r\in[r_{0},r_{1}]$,
there exists some $\epsilon_{r}>0$ so that $\epsilon_{r}\cdot c_{s}+(1-\epsilon_{r})\cdot\xi(r)=\epsilon_{r}\cdot c_{\overline{r}}+(1-\epsilon_{r})\cdot c_{0}$.
Since $\xi(r)<c_{0}$ for every $r$, we get that if $\varepsilon<\epsilon_{r}$,
then the rational model is evolutionarily stable against the singleton
model with $r$. We have that $\min_{r\in[r_{0},r_{1}]}\epsilon_{r}>0$
since $\xi(r)$ is continuous. Finally, there is some $\epsilon'>0$
so that $\epsilon'\cdot c_{s}+(1-\epsilon')\cdot(c_{0}/2)<\epsilon'\cdot c_{\overline{r}}+(1-\epsilon')\cdot c_{0}$.
Whenever $\varepsilon<\min\{\min_{r\in[r_{0},r_{1}]}\epsilon_{r},\epsilon'\}$,
the rational model is evolutionarily stable against the singleton
model with any $r\ge0$.

As $\varepsilon\rightarrow0$, by linearity of expectations the expected
payoff converges to the payoff when $r^{\bullet}=0$ with probability
1; for any $\hat{r}>0$, a mutant who believes $r=\hat{r}$ obtains
less than the correctly specified resident when $r^{\bullet}=0$.
Thus, a mutant with $r\in(0,\tilde{r}]$ does worse than the correctly
specified resident. On the other hand, while the misspecified resident
may do better than the correctly specified resident when $r>\tilde{r}$,
they do significantly worse when $r=0$, and uniformly so over all
such $r$; as $\varepsilon\rightarrow0$, the benefit vanishes uniformly
and we have that again that the the rational model is stable against
all such $r$.

Recall that Lemma \ref{lem:BR} says the best replies are $\alpha_{i}^{BR}(\alpha_{-i};\kappa,r):=\frac{\gamma-\frac{1}{2}r\psi(\kappa)\alpha_{-i}}{1+r}$.
Suppose $r^{\bullet}=0$. In this case, the rational player chooses
$q(s_{i})=\gamma s_{i}$, and therefore any other $\hat{r}$ chooses
$q(s_{i})=\gamma\left(\frac{1-\frac{1}{2}\hat{r}\psi(\kappa)}{1+\hat{r}}\right)s_{i}$.
The rational player's expected payoff is $\mathbb{E}[\mathbb{E}[\omega q_{i}(s_{i})-\frac{1}{2}q_{i}(s_{i})^{2}\mid s_{i}]]=\mathbb{E}[s_{i}^{2}]\cdot\left(\frac{\gamma^{2}}{2}\right)$;
the mutant playing strategy $q(s_{i})=\alpha_{i}s_{i}$ obtains $\mathbb{E}[s_{i}^{2}](\gamma\alpha_{i}-\frac{1}{2}\alpha_{i}^{2})$,
which is quadratic in $\alpha_{i}$ and maximized at $\alpha_{i}=\gamma$.
Therefore, the correctly specified resident obtains the highest payoff.

If $r^{\bullet}=\overline{r}$, then a mutant who believes $\hat{r}=0$
uses strategy with slope $\alpha_{i}=\gamma$; the mutant obtains
$\mathbb{E}[\mathbb{E}[\omega\alpha_{i}s_{i}-\overline{r}\left(\frac{1}{2}(\alpha_{i}+\alpha_{-i})\right)\alpha_{i}s_{i}^{2}-\frac{\alpha_{i}^{2}s_{i}^{2}}{2}]=\mathbb{E}[s_{i}^{2}]\left(\frac{\gamma^{2}}{2}-\overline{r}\gamma^{2}\left(\frac{1}{2}(1+\frac{1-\frac{1}{2}\overline{r}\psi(\kappa)}{1+\overline{r}})\right)\right)$.
Note that since $\frac{1}{2}\psi(\kappa)$ is bounded away from 1,
$\frac{1-\frac{1}{2}\overline{r}\psi(\kappa)}{1+\overline{r}}$ is
bounded away from 0. Therefore, as long as $\overline{r}\geq1$, we
have that the mutant's payoff will be negative. Since payoffs are
continuous, taking $r_{0}\rightarrow0$, we can find some sufficiently
small $r_{0}$ such that any mutant with $r<r_{0}$ obtains a negative
payoff when $r^{\bullet}=\overline{r}$.

From Lemma \ref{lem:BR}, we know that the rational resident always
chooses the linear strategy with $\alpha_{-i}=\gamma$ when $r^{\bullet}=0$.
Thus, an adherent of the singleton model with $r_{0}\le r\le r_{1}$
chooses the linear coefficient $\frac{\gamma-\frac{1}{2}r\psi(\kappa)\gamma}{1+r}<\frac{\gamma}{1+r}<\gamma$
in every EZ when $r^{\bullet}=0.$ But the game with $r^{\bullet}=0$
has $\alpha_{i}=\gamma$ as the strictly dominant strategy, so the
mutant gets strictly lower payoff than the resident. The mutant's
EZ strategy is a continuous function of $r,$ so their payoff as a
function of $r$ must also be continuous. When $r^{\bullet}=\overline{r}$,
because the resident must best respond to the mutant's strategy in
an EZ, the mutant cannot get more than the Stackelberg payoff.

Find a small enough $x>0$ so that $x\gamma-\frac{1}{2}x^{2}<\frac{1}{4}\gamma^{2}.$
By the same argument as before, an adherent of the singleton model
with $r$ chooses the linear coefficient $\frac{\gamma-\frac{1}{2}r\psi(\kappa)\gamma}{1+r}.$
Set $r_{1}$ so that $\frac{\gamma}{1+r_{1}}=x.$ For any $r\ge r_{1},$
we get the mutant's EZ strategy has a linear coefficient of $\frac{\gamma-\frac{1}{2}r\psi(\kappa)\gamma}{1+r}\le\frac{\gamma}{1+r}\le\frac{\gamma}{1+r_{1}}=x,$
so their payoff is no larger than $x\gamma-\frac{1}{2}x^{2}<\frac{1}{4}\gamma^{4}$.
This is less than half of the payoff of the rational residents, who
choose the linear coefficient $\gamma$ and get $\frac{1}{2}\gamma^{2}.$

\newpage{}
\begin{center}
{\Large{}{}{}Online Appendix for ``Evolutionarily Stable (Mis)specifications:
Theory and Applications''}{\Large\par}
\par\end{center}

\begin{center}
{\large{}{}{}Kevin He and Jonathan Libgober}{\large\par}
\par\end{center}

\global\long\def\thesection{OA \arabic{section}}%
\global\long\def\thedefn{OA\arabic{defn}}%
\global\long\def\thelem{OA\arabic{lem}}%
\global\long\def\theprop{OA\arabic{prop}}%
\global\long\def\thecor{OA\arabic{cor}}%
\global\long\def\theassumption{OA\arabic{assumption}}%
\setcounter{section}{0} \setcounter{prop}{0} \setcounter{lem}{0}
\setcounter{defn}{0} \setcounter{assumption}{0} \setcounter{page}{1}

\section{\label{sec:omitted_proofs}Proofs Omitted from the Appendix}

\subsection{Proof of Example \ref{exa:stability_reversal_example}}
\begin{proof}
Define $b^{*}(a_{i},a_{-i}):=b^{\bullet}+\frac{m}{a_{i}+a_{-i}}$.
It is clear that $D_{KL}(F^{\bullet}(a_{i},a_{-i})\parallel\hat{F}(a_{i},a_{-i};b^{*}(a_{i},a_{-i}),m)))=0$,
while this KL divergence is strictly positive for any other choice
of $b.$

In every EZ with $\lambda=0$ and $p=(1,0),$ we must have $a_{AA}=a_{AB}=1.$
If $a_{BA}=2,$ then the adherents of $\Theta_{B}$ infer $b^{*}(1,2)=b^{\bullet}+\frac{m}{3}$.
With this inference, the biased agents expect $1\cdot(2(b^{\bullet}+\frac{m}{3})-m)=2b^{\bullet}-\frac{m}{3}$
from playing 1 against rival investment 1, and expect $2\cdot(3(b^{\bullet}+\frac{m}{3})-m)-c=6b^{\bullet}-c$
from playing 2 against rival investment 1. Since $4b^{\bullet}+\frac{m}{3}-c>0$
from Condition \ref{cond:large_misspec}, there is an EZ with $a_{BA}=2$
and $\mu_{B}$ puts probability 1 on $b^{\bullet}+\frac{m}{3}$. It
is impossible to have $a_{BA}=1$ in EZ. This is because $b^{*}(1,1)>b^{*}(1,2),$
and under the inference $b^{*}(1,2)$ we already have that the best
response to 1 is 2, so the same also holds under any higher belief
about complementarity. Also, we have $a_{BB}=2$, since 2 must best
respond to both 1 and 2. So in every such EZ, $\Theta_{A}$'s conditional
fitness against group A is $2b^{\bullet}$ and $\Theta_{B}$'s conditional
fitness against group A is $6b^{\bullet}-c$, with $2b^{\bullet}>6b^{\bullet}-c$
by Condition \ref{cond:medium_cost}. Also, $\Theta_{A}$'s conditional
fitness against group B is $3b^{\bullet}$, while $\Theta_{B}$'s
conditional fitness against group B is $8b^{\bullet}-c$. Again, $3b^{\bullet}>8b^{\bullet}-c$
by Condition \ref{cond:medium_cost}.

Next, we show $\Theta_{B}$ has strictly higher fitness than $\Theta_{A}$
in every EZ with $\lambda=0,p_{B}=1.$ There is no EZ with $a_{BB}=1.$
This is because $b^{*}(1,1)=b^{\bullet}+\frac{m}{2}$. As discussed
before, under this inference the best response to 1 is 2, not 1. Now
suppose $a_{BB}=2.$ Then $\mu_{B}$ puts probability 1 on $b^{*}(2,2)=b^{\bullet}+\frac{m}{4}.$
With this inference, the biased agents expect $1\cdot(3(b^{\bullet}+\frac{m}{4})-m)=3b^{\bullet}-\frac{m}{4}$
from playing 1 against rival investment 2, and expect $2\cdot(4(b^{\bullet}+\frac{m}{4})-m)-c=8b^{\bullet}-c$
from playing 2 against rival investment 2. We have $5b^{\bullet}+\frac{m}{4}-c>0$
from Condition \ref{cond:large_misspec}, so 2 best responds to 2.
We must have $a_{AA}=a_{AB}=1.$ We conclude the unique EZ behavior
is $(a_{AA},a_{AB},a_{BA},a_{BB})=(1,1,1,2)$, since the biased agents
expect $1\cdot(2(b^{\bullet}+\frac{m}{4})-m)=2b^{\bullet}-\frac{m}{2}$
from playing 1 against rival investment 1, and expect $2\cdot(3(b^{\bullet}+\frac{m}{4})-m)-c=6b^{\bullet}-\frac{m}{2}-c$
from playing 2 against rival investment 1. We have $4b^{\bullet}-c<0$
from Condition \ref{cond:medium_cost}, so 1 best responds to 1. In
the unique EZ with $\lambda=0$ and $p=(0,1),$ the fitness of $\Theta_{A}$
is $2b^{\bullet}$ and the fitness of $\Theta_{B}$ is $8b^{\bullet}-c,$
where $8b^{\bullet}-c>2b^{\bullet}$ by Condition \ref{cond:medium_cost}.
\end{proof}

\subsection{Proof of Example \ref{exa:non_mono_example}}
\begin{proof}
Let $KL_{4,1}:=0.4\cdot\ln\frac{0.4}{0.1}+0.6\cdot\ln\frac{0.6}{0.9}\approx0.3112,$
$KL_{4,8}:=0.4\cdot\ln\frac{0.4}{0.8}+0.6\cdot\ln\frac{0.6}{0.2}\approx0.3819,$
and $KL_{2,4}:=0.2\cdot\ln\frac{0.2}{0.4}+0.8\cdot\ln\frac{0.8}{0.6}\approx0.0915$.
Let $\lambda_{h}$ be the unique solution to $(1-\lambda)KL_{2,4}-\lambda(KL_{4,8}-KL_{4,1})=0,$
so $\lambda_{h}\approx0.564.$

We show for any $\lambda\in[0,\lambda_{h})$, there exists a unique
EZ $\mathfrak{Z}=(\Theta_{A},\Theta_{B},\mu_{A},\mu_{B},p=(1,0),\lambda,(a))$,
and that this EZ has $\mu_{B}$ putting probability 1 on $F_{H}$,
$a_{AA}=a_{1},$ $a_{AB}=a_{1},$ $a_{BA}=a_{2},$ $a_{BB}=a_{2}$.
First, we may verify that under $F_{H},$ $a_{2}$ best responds to
both $a_{1}$ and $a_{2}.$ Also, the KL divergence of $F_{H}$ is
$\lambda\cdot KL_{4,8}$ while that of $F_{L}$ is $\lambda\cdot KL_{4,1}+(1-\lambda)\cdot KL_{2,4}$.
Since $\lambda<\lambda_{h},$ we see that $F_{H}$ has strictly lower
KL divergence. Finally, to check that there are no other EZs, note
we must have $a_{AA}=a_{1},$ $a_{AB}=a_{1},$ $a_{BA}=a_{2}$ in
every EZ. In an EZ where $a_{BB}$ puts probability $q\in[0,1]$ on
$a_{2},$ the KL divergence of $F_{H}$ is $\lambda p\cdot KL_{4,8}$and
the KL divergence of $F_{L}$ is $\lambda p\cdot KL_{4,1}+(1-\lambda)\cdot KL_{2,4}.$
We have {\small{}{}{} 
\[
\lambda q\cdot KL_{4,1}+(1-\lambda)\cdot KL_{2,4}-\lambda q\cdot KL_{4,8}=\lambda q\cdot(KL_{4,1}-KL_{4.8})+(1-\lambda)KL_{2,4}\ge(1-\lambda)KL_{2,4}-\lambda(KL_{4,8}-KL_{4,1}).
\]
} Since $\lambda<\lambda_{h},$ this is strictly positive. Therefore
we must have $\mu_{B}$ put probability 1 on $F_{H},$ which in turn
implies $q=1.$

When $\Theta_{A}$ is dominant, the equilibrium fitness of $\Theta_{A}$
is always 0.25 for every $\lambda$. The equilibrium fitness of $\Theta_{B}$,
as a function of $\lambda$, is $0.4\lambda+0.2(1-\lambda).$ Let
$\lambda_{l}$ solve $0.25=0.4\lambda+0.2(1-\lambda),$ that is $\lambda_{l}=0.25.$
This shows $\Theta_{A}$ is evolutionarily fragile against $\Theta_{B}$
for $\lambda\in(\lambda_{l},\lambda_{h}),$ and it is evolutionarily
stable against $\Theta_{B}$ for $\lambda=0$.

Now suppose $\lambda=1.$ If there is an EZ with $p_{A}=1$ where
$a_{BB}$ plays $a_{2}$ with positive probability, then $\mu_{B}$
must put probability 1 on $F_{L},$ since $KL_{4,1}<KL_{4,8}.$ This
is a contradiction, since $a_{2}$ does not best respond to itself
under $F_{L}.$ So the unique EZ involves $a_{AA}=a_{1},$ $a_{AB}=a_{1},$
$a_{BA}=a_{2},$ $a_{BB}=a_{3}.$  In the EZ, the fitness of $\Theta_{A}$
is 0.25, and the fitness of $\Theta_{B}$ is 0.2. This shows $\Theta_{A}$
is evolutionarily stable against $\Theta_{B}$ for $\lambda=1.$
\end{proof}

\subsection{Proof of Claim \ref{claim:H}}
\begin{proof}
We show that $H(x)$ (i) has a unique root in $[0,\frac{\gamma}{\frac{1}{2}r^{\bullet}\psi(\kappa^{\bullet})}]$
when $\kappa=\kappa^{\bullet}$; (ii) does not have a root at $x=0$
or $x=\frac{\gamma}{\frac{1}{2}r^{\bullet}\psi(\kappa^{\bullet})}$,
and (iii) the root in the interval is not a double root. By these
three statements, since $H(x)$ is a continuous function of $\kappa,$
there must exist some $\underline{\kappa}_{1}<\kappa^{\bullet}<\bar{\kappa}_{1}$
so that it continues to have a unique root in $[0,\frac{\gamma}{\frac{1}{2}r^{\bullet}\psi(\kappa^{\bullet})}]$
for all $\kappa\in[\underline{\kappa}_{1},\bar{\kappa}_{1}]\cap[0,1].$

Statement (i) has to do with the fact that if $\kappa=\kappa^{\bullet},$
then we need $\alpha_{AB}=\frac{\gamma-\frac{1}{2}r^{\bullet}\psi(\kappa^{\bullet})\alpha_{BA}}{1+r^{\bullet}}$
and $\alpha_{BA}=\frac{\gamma-\frac{1}{2}r^{\bullet}\psi(\kappa^{\bullet})\alpha_{AB}}{1+r^{\bullet}}$.
These are linear best response functions with a slope of $-\frac{1}{2}\frac{r^{\bullet}}{1+r^{\bullet}}\psi(\kappa^{\bullet})$,
which falls in $(-\frac{1}{2},0).$ So there can only be one solution
to $H$ in that region (even when we allow $\alpha_{AB}\ne\alpha_{BA})$,
which is the symmetric equilibrium found before $\alpha_{AB}=\alpha_{BA}=\frac{\gamma}{1+r^{\bullet}+\frac{1}{2}r^{\bullet}\psi(\kappa^{\bullet})}$.

For Statement (ii), we evaluate $H(0)=-(\frac{\gamma}{1+r^{\bullet}})^{2}\frac{1}{2}r^{\bullet}\psi(\kappa^{\bullet})^{2}+\frac{\gamma^{2}\psi(\kappa^{\bullet})}{1+r^{\bullet}}=\frac{\psi(\kappa^{\bullet})\gamma^{2}}{1+r^{\bullet}}(1-\frac{(1/2)r^{\bullet}\psi(\kappa^{\bullet})}{1+r^{\bullet}})\ne0$
because $1+r^{\bullet}>(1/2)r^{\bullet}\psi(\kappa^{\bullet}).$ Finally,
we evaluate $H(\frac{\gamma}{\frac{1}{2}r^{\bullet}\psi(\kappa^{\bullet})})=(\frac{\gamma}{\frac{1}{2}r^{\bullet}\psi(\kappa^{\bullet})})^{2}(-1-r^{\bullet})+\gamma\frac{\gamma}{\frac{1}{2}r^{\bullet}\psi(\kappa^{\bullet})}=\frac{\gamma^{2}}{\frac{1}{2}r^{\bullet}\psi(\kappa^{\bullet})}(1-\frac{1+r^{\bullet}}{\frac{1}{2}r^{\bullet}\psi(\kappa^{\bullet})}).$
This is once again not 0 because $1+r^{\bullet}>(1/2)r^{\bullet}\psi(\kappa^{\bullet}).$

For Statement (iii), we show that $H^{'}(x^{*})<0$ where $x^{*}=\frac{\gamma}{1+r^{\bullet}+\frac{1}{2}r^{\bullet}\psi(\kappa^{\bullet})}.$
We find that 
\begin{align*}
H^{'}(x)= & 2x(-1-r^{\bullet})+\left(\frac{\gamma-r^{\bullet}\psi(\kappa^{\bullet})x}{1+r^{\bullet}}\right)(-\psi(\kappa^{\bullet})-\frac{1}{2}r^{\bullet}\psi(\kappa^{\bullet})-r^{\bullet}\psi(\kappa^{\bullet}))\\
 & -2\left(\frac{\gamma-\frac{1}{2}r^{\bullet}\psi(\kappa^{\bullet})x}{1+r^{\bullet}}\right)\left(\frac{-\frac{1}{2}r^{\bullet}\psi(\kappa^{\bullet})}{1+r^{\bullet}}\right)\left(\frac{1}{2}r^{\bullet}\psi(\kappa^{\bullet})^{2}\right)+\gamma-\frac{\frac{1}{2}r^{\bullet}\psi(\kappa^{\bullet})}{1+r^{\bullet}}\gamma\psi(\kappa^{\bullet}).
\end{align*}
Collecting terms, the coefficient on $x$ is 
\[
-2-2r^{\bullet}+\frac{\psi(\kappa^{\bullet})^{2}r^{\bullet}}{1+r^{\bullet}}\left(\frac{3}{2}r^{\bullet}+1-\frac{1}{4}(\frac{(r^{\bullet})^{2}\psi(\kappa^{\bullet})^{2}}{1+r^{\bullet}})\right),
\]
while the coefficient on the constant is 
\[
\frac{\gamma\psi(\kappa^{\bullet})}{1+r^{\bullet}}\left(-\frac{3}{2}r^{\bullet}-1+\frac{1}{2}\frac{(r^{\bullet})^{2}\psi(\kappa^{\bullet})^{2}}{1+r^{\bullet}}-\frac{1}{2}r^{\bullet}\psi(\kappa^{\bullet})\right)+\gamma.
\]
Therefore, we may calculate $H^{'}(x^{*})\cdot\frac{1}{x^{*}}(1+r^{\bullet})^{2},$
which has the same sign as $H^{'}(x^{*}),$ to be: 
\begin{align*}
 & -(1+r^{\bullet})^{2}(2+2r^{\bullet})+\psi(\kappa^{\bullet})^{2}r^{\bullet}((1+r^{\bullet})(\frac{3}{2}r^{\bullet}+1)-\frac{1}{4}(r^{\bullet})^{2}\psi(\kappa^{\bullet})^{2})\\
 & +(1+r^{\bullet}+\frac{1}{2}r^{\bullet}\psi(\kappa^{\bullet}))\left[\psi(\kappa^{\bullet})((1+r^{\bullet})[-\frac{3}{2}r^{\bullet}-1-\frac{1}{2}r^{\bullet}\psi(\kappa^{\bullet})]+\frac{1}{2}(r^{\bullet})^{2}\psi(\kappa^{\bullet})^{2})+(1+r^{\bullet})^{2}\right].
\end{align*}
We have 
\[
-(1+r^{\bullet})^{2}(2+2r^{\bullet})+(1+r^{\bullet}+\frac{1}{2}r^{\bullet}\psi(\kappa^{\bullet}))(1+r^{\bullet})^{2}\le(1+r^{\bullet})^{2}(-1-\frac{1}{2}r^{\bullet})<0,
\]
since $0\le\psi(\kappa^{\bullet})\le1.$ Also, for the same reason,
\[
(1+r^{\bullet})[-\frac{1}{2}r^{\bullet}\psi(\kappa^{\bullet})]+\frac{1}{2}(r^{\bullet})^{2}\psi(\kappa^{\bullet})^{2}\le-\frac{1}{2}(r^{\bullet})^{2}\psi(\kappa^{\bullet})+\frac{1}{2}(r^{\bullet})^{2}\psi(\kappa^{\bullet})^{2}\le0.
\]
Finally, $\psi(\kappa^{\bullet})^{2}r^{\bullet}(1+r^{\bullet})(\frac{3}{2}r^{\bullet}+1)+(1+r^{\bullet}+\frac{1}{2}r^{\bullet}\psi(\kappa^{\bullet}))\psi(\kappa^{\bullet})(1+r^{\bullet})(-\frac{3}{2}r^{\bullet}-1)$
is no larger than 
\begin{align*}
 & \psi(\kappa^{\bullet})^{2}r^{\bullet}(\frac{3}{2}(r^{\bullet})^{2}+\frac{5}{2}r^{\bullet}+1)+\left[r^{\bullet}\psi(\kappa^{\bullet})r^{\bullet}(-(3/2)r^{\bullet})\right]\\
 & +[r^{\bullet}\psi(\kappa^{\bullet})r^{\bullet}(-1)+1\cdot\psi(\kappa^{\bullet})r^{\bullet}(-(3/2)r^{\bullet})]+[r^{\bullet}\psi(\kappa^{\bullet})\cdot1\cdot(-1)]
\end{align*}
where the negative terms in the first, second, and third square brackets
are respectively larger in absolute value than the first, second and
third parts in the expansion of the first summand. Therefore, we conclude
$H^{'}(x^{*})<0.$
\end{proof}

\subsection{Proof of Lemma \ref{lem:kappa}}
\begin{proof}
For $i\ne j,$ rewrite $s_{i}=\left(\omega+\frac{\kappa}{\sqrt{\kappa^{2}+(1-\kappa)^{2}}}z\right)+\frac{1-\kappa}{\sqrt{\kappa^{2}+(1-\kappa)^{2}}}\eta_{i}$
and $s_{j}=\left(\omega+\frac{\kappa}{\sqrt{\kappa^{2}+(1-\kappa)^{2}}}z\right)+\frac{1-\kappa}{\sqrt{\kappa^{2}+(1-\kappa)^{2}}}\eta_{j}.$
Note that $\omega+\frac{\kappa}{\sqrt{\kappa^{2}+(1-\kappa)^{2}}}z$
has a normal distribution with mean 0 and variance $\sigma_{\omega}^{2}+\frac{\kappa^{2}}{\kappa^{2}+(1-\kappa)^{2}}\sigma_{\epsilon}^{2}$.
The posterior distribution of $\left(\omega+\frac{\kappa}{\sqrt{\kappa^{2}+(1-\kappa)^{2}}}z\right)$
given $s_{i}$ is therefore normal with a mean of $\frac{1/(\frac{(1-\kappa)^{2}}{\kappa^{2}+(1-\kappa)^{2}}\sigma_{\epsilon}^{2})}{1/(\sigma_{\omega}^{2}+\frac{\kappa^{2}}{\kappa^{2}+(1-\kappa)^{2}}\sigma_{\epsilon}^{2})+1/(\frac{(1-\kappa)^{2}}{\kappa^{2}+(1-\kappa)^{2}}\sigma_{\epsilon}^{2})}s_{i}$
and a variance of $\frac{1}{1/(\sigma_{\omega}^{2}+\frac{\kappa^{2}}{\kappa^{2}+(1-\kappa)^{2}}\sigma_{\epsilon}^{2})+1/(\frac{(1-\kappa)^{2}}{\kappa^{2}+(1-\kappa)^{2}}\sigma_{\epsilon}^{2})}.$

Since $\eta_{j}$ is mean-zero and independent of $i$'s signal, the
posterior distribution of $s_{j}\mid s_{i}$ under the correlation
parameter $\kappa$ is normal with a mean of 
\[
\frac{1/(\frac{(1-\kappa)^{2}}{\kappa^{2}+(1-\kappa)^{2}}\sigma_{\epsilon}^{2})}{1/(\sigma_{\omega}^{2}+\frac{\kappa^{2}}{\kappa^{2}+(1-\kappa)^{2}}\sigma_{\epsilon}^{2})+1/(\frac{(1-\kappa)^{2}}{\kappa^{2}+(1-\kappa)^{2}}\sigma_{\epsilon}^{2})}s_{i}
\]
and a variance of $\frac{1}{1/(\sigma_{\omega}^{2}+\frac{\kappa^{2}}{\kappa^{2}+(1-\kappa)^{2}}\sigma_{\epsilon}^{2})+1/(\frac{(1-\kappa)^{2}}{\kappa^{2}+(1-\kappa)^{2}}\sigma_{\epsilon}^{2})}+\frac{(1-\kappa)^{2}}{\kappa^{2}+(1-\kappa)^{2}}\sigma_{\epsilon}^{2}$.
We thus define \\
 ${\normalcolor \psi(\kappa):=\frac{1/(\frac{(1-\kappa)^{2}}{\kappa^{2}+(1-\kappa)^{2}}\sigma_{\epsilon}^{2})}{1/(\sigma_{\omega}^{2}+\frac{\kappa^{2}}{\kappa^{2}+(1-\kappa)^{2}}\sigma_{\epsilon}^{2})+1/(\frac{(1-\kappa)^{2}}{\kappa^{2}+(1-\kappa)^{2}}\sigma_{\epsilon}^{2})}}$
for $\kappa\in[0,1),$ and $\psi(1):=1$. To see that $\psi(\kappa)$
is strictly increasing in $\kappa,$ we have 
\begin{align*}
1/\psi(\kappa) & =1+\frac{\frac{(1-\kappa)^{2}}{\kappa^{2}+(1-\kappa)^{2}}\sigma_{\epsilon}^{2}}{\sigma_{\omega}^{2}+\frac{\kappa^{2}}{\kappa^{2}+(1-\kappa)^{2}}\sigma_{\epsilon}^{2}}\\
 & =1+\frac{(1-\kappa)^{2}\sigma_{\epsilon}^{2}}{(\kappa^{2}+(1-\kappa)^{2})\sigma_{\omega}^{2}+\kappa^{2}\sigma_{\epsilon}^{2}}
\end{align*}
and then we can verify that the second term is decreasing in $\kappa.$

As $\kappa\to1,$ the term $1/(\frac{(1-\kappa)^{2}}{\kappa^{2}+(1-\kappa)^{2}}\sigma_{\epsilon}^{2})$
tends to $\infty,$ so $\frac{1/(\frac{(1-\kappa)^{2}}{\kappa^{2}+(1-\kappa)^{2}}\sigma_{\epsilon}^{2})}{1/(\sigma_{\omega}^{2}+\frac{\kappa^{2}}{\kappa^{2}+(1-\kappa)^{2}}\sigma_{\epsilon}^{2})+1/(\frac{(1-\kappa)^{2}}{\kappa^{2}+(1-\kappa)^{2}}\sigma_{\epsilon}^{2})}$
approaches $\frac{1/(\frac{(1-\kappa)^{2}}{\kappa^{2}+(1-\kappa)^{2}}\sigma_{\epsilon}^{2})}{1/(\frac{(1-\kappa)^{2}}{\kappa^{2}+(1-\kappa)^{2}}\sigma_{\epsilon}^{2})}=1$.
We also verify that $\psi(0)=\frac{1/\sigma_{\epsilon}^{2}}{(1/\sigma_{\omega}^{2})+(1/\sigma_{\epsilon}^{2})}>0.$

Finally, for any $\kappa\in[0,1]$, $\frac{\kappa}{\sqrt{\kappa^{2}+(1-\kappa)^{2}}}z+\frac{1-\kappa}{\sqrt{\kappa^{2}+(1-\kappa)^{2}}}\eta_{i}$
has variance $\sigma_{\epsilon}^{2}$ and mean 0, so $\mathbb{E}_{\kappa}[\omega\mid s_{i}]=\frac{1/\sigma_{\epsilon}^{2}}{1/\sigma_{\epsilon}^{2}+1/\sigma_{\omega}^{2}}s_{i}$.
We then define $\gamma$ as the strictly positive constant $\frac{1/\sigma_{\epsilon}^{2}}{1/\sigma_{\epsilon}^{2}+1/\sigma_{\omega}^{2}}.$
\end{proof}

\subsection{Proof of Lemma \ref{lem:BR}}
\begin{proof}
Player $i$'s conditional expected utility given signal $s_{i}$ is
\[
\alpha_{i}s_{i}\cdot\mathbb{E}_{\kappa}[\mathbb{E}_{r\sim\text{marg}_{r}(\mu)}[\omega-\frac{1}{2}r\alpha_{i}s_{i}-\frac{1}{2}r\alpha_{-i}s_{-i}+\zeta]\mid s_{i}]-\frac{1}{2}(\alpha_{i}s_{i})^{2}
\]
by linearity, expectation over $r$ is equivalent to evaluating the
inner expectation with $r=\hat{r}$, which gives 
\begin{align*}
 & \alpha_{i}s_{i}\cdot\mathbb{E}_{\kappa}[\omega-\frac{1}{2}\hat{r}\alpha_{i}s_{i}-\frac{1}{2}\hat{r}\alpha_{-i}s_{-i}+\zeta|s_{i}]-\frac{1}{2}(\alpha_{i}s_{i})^{2}\\
= & \alpha_{i}s_{i}\cdot(\gamma s_{i}-\frac{1}{2}\hat{r}\alpha_{i}s_{i}-\frac{1}{2}\hat{r}\psi(\kappa)s_{i}\alpha_{-i})-\frac{1}{2}(\alpha_{i}s_{i})^{2}\\
= & s_{i}^{2}\cdot(\alpha_{i}\gamma-\frac{1}{2}\hat{r}\alpha_{i}^{2}-\frac{1}{2}\hat{r}\psi(\kappa)\alpha_{i}\alpha_{-i}-\frac{1}{2}\alpha_{i}^{2}).
\end{align*}
The term in parenthesis does not depend on $s_{i},$ and the second
moment of $s_{i}$ is the same for all values of $\kappa.$ Therefore
this expectation is $\mathbb{E}[s_{i}^{2}]\cdot\left(\alpha_{i}\gamma-\frac{1}{2}\hat{r}\alpha_{i}^{2}-\frac{1}{2}\hat{r}\psi(\kappa)\alpha_{i}\alpha_{-i}-\frac{1}{2}\alpha_{i}^{2}\right).$
The expression for $\alpha_{i}^{BR}(\alpha_{-i};\kappa,r)$ follows
from simple algebra, noting that $\mathbb{E}[s_{i}^{2}]>0$ while
the second derivative with respect to $\alpha_{i}$ for the term in
the parenthesis is $-\frac{1}{2}\hat{r}-\frac{1}{2}<0.$

To see that the said linear strategy is optimal among all strategies,
suppose $i$ instead chooses any $q_{i}$ after $s_{i}.$ By above
arguments, the objective to maximize is 
\[
q_{i}\cdot(\gamma s_{i}-\frac{1}{2}\hat{r}q_{i}-\frac{1}{2}\hat{r}\psi(\kappa)s_{i}\alpha_{-i})-\frac{1}{2}q_{i}^{2}.
\]
This objective is a strictly concave function in $q_{i},$ as $-\frac{1}{2}\hat{r}-\frac{1}{2}<0.$
First-order condition finds the maximizer $q_{i}^{*}=\alpha_{i}^{BR}(\alpha_{-i};\kappa,\hat{r})$.
Therefore, the linear strategy also maximizes interim expected utility
after every signal $s_{i}$, and so it cannot be improved on by any
other strategy.
\end{proof}

\subsection{Proof of Lemma \ref{lem:unique_inference}}
\begin{proof}
Note that $\frac{\alpha_{i}+\alpha_{-i}\psi(\kappa^{\bullet})}{\alpha_{i}+\alpha_{-i}\psi(\kappa)}\ge0$
and $\frac{\alpha_{i}+\alpha_{-i}\psi(\kappa^{\bullet})}{\alpha_{i}+\alpha_{-i}\psi(\kappa)}=1+\frac{\alpha_{-i}(\psi(\kappa^{\bullet})-\psi(\kappa))}{\alpha_{i}+\alpha_{-i}\psi(\kappa)}\le1+\frac{1}{\psi(0)}$
(recalling $\psi(0)>0)$. Hence let $L_{3}=r^{\bullet}\cdot(1+\frac{1}{\psi(0)}).$
When $\bar{M}_{r}\ge L_{3},$ we always have $r_{i}^{INF}(\alpha_{i},\alpha_{-i},;\kappa^{\bullet},\kappa,r^{\bullet})\le\bar{M}_{r}$
for all $\alpha_{i},\alpha_{-i}\ge0$ and $\kappa^{\bullet},\kappa\in[0,1].$

Conditional on the signal $s_{i},$ the distribution of market price
under the model $F_{\hat{r},\kappa,\hat{\sigma}_{\zeta}}$ is normal
with a mean of 
\[
\mathbb{E}[\omega\mid s_{i}]-\frac{1}{2}\hat{r}\alpha_{i}s_{i}-\frac{1}{2}\hat{r}\alpha_{-i}\cdot\mathbb{E}_{\kappa}[s_{-i}\mid s_{i}]=\gamma s_{i}-\frac{1}{2}\hat{r}\alpha_{i}s_{i}-\frac{1}{2}\hat{r}\alpha_{-i}\psi(\kappa)s_{i},
\]
while the distribution of market price under the parameter $F_{r^{\bullet},\kappa^{\bullet},\sigma_{\zeta}^{\bullet}}$
is normal with a mean of 
\[
\mathbb{E}[\omega\mid s_{i}]-\frac{1}{2}r^{\bullet}\alpha_{i}s_{i}-\frac{1}{2}r^{\bullet}\alpha_{-i}\cdot\mathbb{E}_{\kappa^{\bullet}}[s_{-i}\mid s_{i}]=\gamma s_{i}-\frac{1}{2}r^{\bullet}\alpha_{i}s_{i}-\frac{1}{2}r^{\bullet}\alpha_{-i}\psi(\kappa^{\bullet})s_{i}.
\]
Matching coefficients on $s_{i},$ we find that if $\hat{r}=r^{\bullet}\frac{\alpha_{i}+\alpha_{-i}\psi(\kappa^{\bullet})}{\alpha_{i}+\alpha_{-i}\psi(\kappa)}$,
then these means match after every $s_{i}.$ On the other hand, for
any other value of $\hat{r},$ these means will not match for any
$s_{i}$ and thus $D_{KL}(F_{r^{\bullet},\kappa^{\bullet},\sigma_{\zeta}^{\bullet}}(\alpha_{i},\alpha_{-i})\parallel F_{\hat{r},\kappa,\hat{\sigma}_{\zeta}}(\alpha_{i},\alpha_{-i}))>0$
for any $\hat{r}\ne r^{\bullet}\frac{\alpha_{i}+\alpha_{-i}\psi(\kappa^{\bullet})}{\alpha_{i}+\alpha_{-i}\psi(\kappa)}.$

Let $L_{1}=\max_{\kappa\in[0,1]}\left\{ \text{Var}_{\kappa}[\omega\mid s_{i}]+\text{Var}_{\kappa}\left[\frac{1}{2}r^{\bullet}\cdot(1+\frac{1}{\psi(0)})B_{\alpha}\cdot s_{-i}\mid s_{i}\right]\right\} $.
This maximum exists and is finite, since the expression is a continuous
function of $\kappa$ on the compact domain $[0,1].$ Also, let $L_{2}=\max_{\kappa\in[0,1]}\left\{ \text{Var}_{\kappa}[\omega\mid s_{i}]+\text{Var}_{\kappa}\left[\frac{1}{2}r^{\bullet}B_{\alpha}\cdot s_{-i}\mid s_{i}\right]\right\} ,$where
the maximum exists for the same reason. Conditional on the signal
$s_{i},$ the variance of market price under the parameter $F_{r^{\bullet}\frac{\alpha_{i}+\alpha_{-i}\psi(\kappa^{\bullet})}{\alpha_{i}+\alpha_{-i}\psi(\kappa)},\kappa,\hat{\sigma}_{\zeta}}$
is 
\[
\text{Var}_{\kappa}\left[\omega-\frac{1}{2}r^{\bullet}\frac{\alpha_{i}+\alpha_{-i}\psi(\kappa^{\bullet})}{\alpha_{i}+\alpha_{-i}\psi(\kappa)}\alpha_{-i}s_{-i}\mid s_{i}\right]+\hat{\sigma}_{\zeta}^{2}.
\]
Since $\omega$ and $s_{-i}$ are positively correlated given $s_{i},$
and using the fact $r^{\bullet}\frac{\alpha_{i}+\alpha_{-i}\psi(\kappa^{\bullet})}{\alpha_{i}+\alpha_{-i}\psi(\kappa)}\le r^{\bullet}\cdot(1+\frac{1}{\psi(0)})$
and $\alpha_{-i}\le B_{\alpha},$ this variance is no larger than
\[
\text{Var}_{\kappa}\left[\omega\mid s_{i}\right]+\text{Var}_{\kappa}\left[\frac{1}{2}r^{\bullet}\cdot(1+\frac{1}{\psi(0)})B_{\alpha}\cdot s_{-i}\mid s_{i}\right]+\hat{\sigma}_{\zeta}^{2}=L_{1}+\hat{\sigma}_{\zeta}^{2}.
\]
On the other hand, the variance of market price under the parameter $F_{r^{\bullet},\kappa^{\bullet},\sigma_{\zeta}^{\bullet}}$
is 
\[
\text{Var}_{\kappa^{\bullet}}\left[\omega-\frac{1}{2}r^{\bullet}\alpha_{-i}s_{-i}\mid s_{i}\right]+(\sigma_{\zeta}^{\bullet})^{2}\le\text{Var}_{\kappa^{\bullet}}[\omega\mid s_{i}]+\text{Var}_{\kappa^{\bullet}}\left[\frac{1}{2}r^{\bullet}B_{\alpha}\cdot s_{-i}\mid s_{i}\right]+(\sigma_{\zeta}^{\bullet})^{2}\le L_{2}+(\sigma_{\zeta}^{\bullet})^{2}.
\]
At the same time, since $(\sigma_{\zeta}^{\bullet})^{2}\ge L_{1},$
this conditional variance is at least $L_{1}.$ Among values of $\hat{\sigma}_{\zeta}^{2}\in[0,\bar{M}_{\sigma_{\zeta}}^{2}],$
there exists exactly one such that the conditional variance under
$F_{r^{\bullet}\frac{\alpha_{i}+\alpha_{-i}\psi(\kappa^{\bullet})}{\alpha_{i}+\alpha_{-i}\psi(\kappa)},\kappa,\hat{\sigma}_{\zeta}}$
is the same as that under $F_{r^{\bullet},\kappa^{\bullet},\sigma_{\zeta}^{\bullet}}$,
since we have let $\bar{M}_{\sigma_{\zeta}}^{2}\ge(\sigma_{\zeta}^{\bullet})^{2}+L_{2}$.
Thus there is one choice of $\hat{\sigma}_{\zeta}\in[0,\bar{M}_{\sigma_{\zeta}}]$
with such that $D_{KL}(F_{r^{\bullet},\kappa^{\bullet},\sigma_{\zeta}^{\bullet}}(\alpha_{i},\alpha_{-i})\parallel F_{r^{\bullet}\frac{\alpha_{i}+\alpha_{-i}\psi(\kappa^{\bullet})}{\alpha_{i}+\alpha_{-i}\psi(\kappa)},\kappa,\hat{\sigma}_{\zeta}}(\alpha_{i},\alpha_{-i}))=0$.
For any other choice of $\tilde{\sigma}_{\zeta}$, we conclude that
$D_{KL}(F_{r^{\bullet},\kappa^{\bullet},\sigma_{\zeta}^{\bullet}}(\alpha_{i},\alpha_{-i})\parallel F_{r^{\bullet}\frac{\alpha_{i}+\alpha_{-i}\psi(\kappa^{\bullet})}{\alpha_{i}+\alpha_{-i}\psi(\kappa)},\kappa,\tilde{\sigma}_{\zeta}}(\alpha_{i},\alpha_{-i}))>0$.
\end{proof}

\subsection{Proof of Lemma \ref{lem:LQN_conditions_1_to_5}}
\begin{proof}
Assumption \ref{assu:compact} holds as $\mathbb{A}$, $\Theta_{A},\Theta_{B}$
are compact due to the finite bounds $\bar{M}_{\alpha},\bar{M}_{r},\bar{M}_{\sigma_{\zeta}}.$
Also, from Lemma \ref{lem:BR}, the expected utility from playing
$\alpha_{i}$ against $\alpha_{-i}$ in a model with parameters $(\hat{r},\kappa,\sigma_{\zeta})$
is $\mathbb{E}[s_{i}^{2}]\cdot\left(\alpha_{i}\gamma-\frac{1}{2}\hat{r}\alpha_{i}^{2}-\frac{1}{2}\hat{r}\psi(\kappa)\alpha_{i}\alpha_{-i}-\frac{1}{2}\alpha_{i}^{2}\right)$.
This is a continuous function in $(\alpha_{i},\alpha_{-i},\hat{r})$
and strictly concave in $\alpha_{i}.$ Therefore Assumptions \ref{assu:continuous_utility}
and \ref{assu:quasiconcave} are satisfied.

To see the finiteness and continuity of the $K$ functions, first
recall that the KL divergence from a true distribution $\mathcal{N}(\mu_{1},\sigma_{1}^{2})$
to a different distribution $\mathcal{N}(\mu_{2},\sigma_{2}^{2})$
is given by $\ln(\sigma_{2}/\sigma_{1})+\frac{\sigma_{1}^{2}+(\mu_{1}-\mu_{2})^{2}}{2\sigma_{2}^{2}}-\frac{1}{2}$.
Under own play $\alpha_{i},$ opponent play $\alpha_{-i},$ correlation
parameter $\kappa,$ elasticity $\hat{r}$ and price idiosyncratic
variance $\sigma_{\zeta}^{2}$, the expected distribution of price
after signal $s_{i}$ is 
\[
-\frac{1}{2}\hat{r}\alpha_{i}s_{i}+(\omega-\frac{1}{2}\hat{r}\alpha_{-i}s_{-i}\mid s_{i},\kappa)+\hat{\zeta}
\]
where the first term is not random, the middle term is the conditional
distribution of $\omega-\frac{1}{2}\hat{r}\alpha_{-i}s_{-i}$ given
$s_{i}$, based on the joint distribution of $(\omega,s_{i},s_{-i})$
with correlation parameter $\kappa.$ The final term is an independent
random variable with mean 0, variance $\sigma_{\zeta}^{2}.$ The analogous
true distribution of price is 
\[
-\frac{1}{2}r^{\bullet}\alpha_{i}s_{i}+(\omega-\frac{1}{2}r^{\bullet}\alpha_{-i}s_{-i}\mid s_{i},\kappa^{\bullet})+\zeta^{\bullet}
\]
where $\zeta^{\bullet}$ is an independent random variable with mean
0, variance $(\sigma_{\zeta}^{\bullet})^{2}.$ For a fixed $\kappa,$
we may find $0<\underline{\sigma}^{2}<\bar{\sigma}^{2}<\infty$ so
that the variances of both distributions lie in $[\underline{\sigma}^{2},\bar{\sigma}^{2}]$
for all $s_{i}\in\mathbb{R},$ $\alpha_{i},\alpha_{-i}\in[0,\bar{M}_{\alpha}],$
$\hat{r}\in[0,\bar{M}_{r}].$ First note that as a consequence of
the multivariate normality, the variances of these two expressions
do not change with the realization of $s_{i}.$ The lower bound comes
from the fact that $\text{Var}_{\kappa}(\omega-\frac{1}{2}\hat{r}\alpha_{-i}s_{-i}\mid s_{i})$
is nonzero for all $\alpha_{-i},\hat{r}$ in the compact domains and
it is a continuous function of these two arguments, so it must have
some positive lower bound $\underline{\sigma}^{2}>0.$ For a similar
reason, the variance of the middle term has a upper bound for choices
of the parameters $\alpha_{-i},\hat{r}$ in the compact domains, and
the inference about $\sigma_{\zeta}^{2}$ is also bounded.

The difference in the means of the two distributions is no larger
than $s_{i}\cdot[\frac{1}{2}(\bar{M}_{r}+r^{\bullet})\cdot1+\frac{1}{2}(\bar{M}_{r}+r^{\bullet})\cdot1\cdot(\psi(\kappa)+\psi(\kappa^{\bullet}))].$
Thus consider the function 
\[
h(s_{i}):=\ln(\bar{\sigma}/\underline{\sigma})+\frac{1}{2}(\bar{\sigma}^{2}/\underline{\sigma}^{2})+\frac{[\frac{1}{2}(\bar{M}_{r}+r^{\bullet})\cdot1+\frac{1}{2}(\bar{M}_{r}+r^{\bullet})\cdot1\cdot(\psi(\kappa)+\psi(\kappa^{\bullet}))]^{2}}{2\underline{\sigma}^{2}}s_{i}^{2}-\frac{1}{2}.
\]
That is $h(s_{i})$ has the form $h(s_{i})=C_{1}+C_{2}s_{i}^{2}$
for constants $C_{1},C_{2}.$ It is absolutely integrable against
the distribution of $s_{i}$, and it dominates the KL divergence between
the true and expected price distributions at every $s_{i}$ and for
any choices of $\alpha_{i},\alpha_{-i}\in[0,\bar{M}_{\alpha}],\hat{r}\in[0,\bar{M}_{r}],\sigma_{\zeta}^{2}\in[0,\bar{M}_{\zeta}].$
This shows $K_{A},K_{B}$ are finite, so Assumption \ref{assu:finite_KL}
holds. Further, since the KL divergence is a continuous function of
the means and variances of the price distributions, and since these
mean and variance parameters are continuous functions of $\alpha_{i},\alpha_{-i},\hat{r},\sigma_{\zeta}^{2},$
the existence of the absolutely integrable dominating function $h$
also proves $K_{A},K_{B}$ (as integrals of KL divergences across
different $s_{i})$ are continuous, so Assumption \ref{assu:continuous_KL}
holds.
\end{proof}

\subsection{Proof of Proposition \ref{prop:no_learning_channel}}
\begin{proof}
Find $L_{1},L_{2},L_{3}$ as given by Lemma \ref{lem:unique_inference}.
Suppose $\Theta_{A}=\Theta(\kappa^{\bullet})$, $\Theta_{B}=\{F_{r^{\bullet},\kappa,\sigma_{\zeta}^{\bullet}}\}$
for any $\kappa\in[0,1],$ $(p_{A},p_{B})=(1,0)$, and $\lambda\in[0,1],$
then arguments similar to those in the proof of Lemma \ref{lem:unique_inference}
imply there exists exactly one EZ, and it involves the adherents of
$\Theta_{A}$ holding correct beliefs and playing $\frac{\gamma}{1+r^{\bullet}+\frac{1}{2}r^{\bullet}\psi(\kappa^{\bullet})}$
against each other.

We now analyze $\alpha_{BA}(\kappa)$ in such EZ. In the proof of
Proposition \ref{prop:LQN_lambda_0}, we defined $\bar{U}_{i}(\alpha_{i})$
as $i$'s objective expected utility of choosing $\alpha_{i}$ when
$-i$ plays the rational best response. We showed that $\bar{U}_{i}^{'}(\frac{\gamma}{1+r^{\bullet}+\frac{1}{2}r^{\bullet}\psi(\kappa^{\bullet})})>0.$
In an EZ where $i$ believes in the parameter $F_{r^{\bullet},\kappa,\sigma_{\zeta}^{\bullet}}$
and $-i$ believes in the parameter $F_{r^{\bullet},\kappa^{\bullet},\sigma_{\zeta}^{\bullet}},$
using the expression for $\alpha_{i}^{BR}$ from Lemma \ref{lem:BR},
the play of $i$ solves $x=\frac{\gamma-\frac{1}{2}r^{\bullet}\psi(\kappa)\left(\frac{\gamma-\frac{1}{2}r^{\bullet}\psi(\kappa^{\bullet})x}{1+r^{\bullet}}\right)}{1+r^{\bullet}}$,
which implies $\alpha_{BA}(\kappa)=\frac{\gamma(1+r^{\bullet}-\frac{1}{2}\psi(\kappa)r^{\bullet})}{1+2r^{\bullet}+(r^{\bullet})^{2}-\frac{1}{4}\psi(\kappa)\psi(\kappa^{\bullet})(r^{\bullet})^{2}}$.
Taking the derivative and evaluating at $\kappa=\kappa^{\bullet},$
we find an expression with the same sign as $\frac{1}{4}\psi^{'}(\kappa^{\bullet})r^{\bullet}(1+r^{\bullet})\gamma(-2(1+r^{\bullet})+\psi(\kappa^{\bullet})r^{\bullet}),$
which is strictly negative because $\psi^{'}(\kappa^{\bullet})>0,$
$r^{\bullet}>0,$ $\gamma>0,$ and $\psi(\kappa^{\bullet})\le1$.
This shows there exists $\epsilon>0$ so that for every $\kappa_{h}\in(\kappa^{\bullet},\kappa^{\bullet}+\epsilon]$,
we have $\bar{U}_{i}(\alpha_{BA}(\kappa_{h}))<\bar{U}_{i}(\frac{\gamma}{1+r^{\bullet}+\frac{1}{2}r^{\bullet}\psi(\kappa^{\bullet})})$,
that is the adherents of $\{F_{r^{\bullet},\kappa_{h},\sigma_{\zeta}^{\bullet}}\}$
have strictly lower fitness than the adherents of $\Theta(\kappa^{\bullet})$
with $\lambda=0$ in the unique EZ. Finally, existence and upper-hemicontinuity
of EZ in population proportion in such societies can be established
using arguments similar to the proof of Propositions \ref{prop:existence}
and \ref{prop:uhc}. This establishes the first claim to be proved.

Next, we turn to $\alpha_{BB}(\kappa).$ Using the expressing for
$\alpha_{i}^{BR}$ in Lemma \ref{lem:BR}, we find that $\alpha_{BB}(\kappa)=\frac{\gamma}{1+r^{\bullet}+\frac{1}{2}r^{\bullet}\psi(\kappa)}.$
Since $\psi^{'}>0,$ we have $\alpha_{BB}(\kappa)$ is strictly larger
than $\alpha_{AA}=\frac{\gamma}{1+r^{\bullet}+\frac{1}{2}r^{\bullet}\psi(\kappa^{\bullet})}$
when $\kappa<\kappa^{\bullet}.$ From the proof of Proposition \ref{prop:LQN_lambda_1},
we know that objective payoffs in the stage game is strictly decreasing
in linear strategies larger than the team solution $\alpha_{TEAM}=\frac{\gamma}{1+r^{\bullet}+r^{\bullet}\psi(\kappa^{\bullet})}.$
Since $\alpha_{BB}(\kappa)>\alpha_{AA}>\alpha_{TEAM},$ we conclude
the adherents of $\{F_{r^{\bullet},\kappa_{l},\sigma_{\zeta}^{\bullet}}\}$
have strictly lower fitness than the adherents of $\Theta(\kappa^{\bullet})$
with $\lambda=1$ in the unique EZ, for any $\kappa_{l}<\kappa^{\bullet}.$
Again , existence and upper-hemicontinuity of EZ in population proportion
in such societies can be established using arguments similar to the
proof of Propositions \ref{prop:existence} and \ref{prop:uhc}. This
establishes the second claim to be proved.
\end{proof}

\subsection{Proof of Proposition \ref{prop:general_incomplete_info_game}}
\begin{proof}
Consider the society where $\Theta_{A}=\Theta_{B}=\Theta(\kappa^{\bullet})$,
$(p_{A},p_{B})=(1,0).$ For any EZ with behavior $(\sigma_{AA},\sigma_{AB},\sigma_{BA},\sigma_{BB})$
and beliefs $(\mu_{A},\mu_{B})$, there exists another EZ $(\sigma_{AA}^{'},\sigma_{AB}^{'},\sigma_{BA}^{'},\sigma_{BB}^{'})$
where $\sigma_{g,g^{'}}^{'}=\sigma_{AA}$ for all $g,g^{'}\in\{A,B\}$
and all agents hold the belief $\mu_{A}$. The uniqueness of EZ from
Assumption \ref{assu:linear_interim_eqm} implies $\alpha_{AB}(\kappa^{\bullet})=\alpha_{BA}(\kappa^{\bullet})=\alpha_{BB}(\kappa^{\bullet})=\alpha^{\bullet}.$

Now consider the society where $\Theta_{B}=\Theta(\kappa)$, $(p_{A},p_{B})=(1,0).$
By the same arguments as the existence arguments in Proposition \ref{prop:existence},
there exists an EZ where $\alpha_{AA}(\kappa)=\alpha_{AA}(\kappa^{\bullet}).$
By the uniqueness of EZ from Assumption \ref{assu:linear_interim_eqm},
we must in fact have $\alpha_{AA}(\kappa)=\alpha_{AA}(\kappa^{\bullet})$
for all $\kappa$, so the fitness of model $\Theta(\kappa^{\bullet})$
in the unique EZ is 
\[
\mathbb{E}^{\bullet}\left[\mathbb{E}^{\bullet}\left[u_{1}^{\bullet}(\alpha^{\bullet}s_{1},\alpha^{\bullet}s_{2},\omega)\mid s_{1}\right]\right].
\]
Under $\lambda$ matching with mutant model $\Theta(\kappa)$, the
mutant's fitness in the unique EZ is 
\[
\mathbb{E}^{\bullet}\left[\mathbb{E}^{\bullet}\left[(1-\lambda)u_{1}^{\bullet}(\alpha_{BA}(\kappa)s_{1},\alpha_{AB}(\kappa)s_{2},\omega)+(\lambda)u_{1}^{\bullet}(\alpha_{BB}(\kappa)s_{1},\alpha_{BB}(\kappa)s_{2},\omega)\mid s_{1}\right]\right].
\]
Differentiate and evaluate at $\kappa=\kappa^{\bullet}$. At $\kappa=\kappa^{\bullet},$
adherents of $\Theta_{A}$ and $\Theta_{B}$ have the same fitness
since they play the same strategies. So, a non-zero sign on the derivative
would give the desired evolutionary fragility against either models
with slightly higher or slightly lower $\kappa.$ This derivative
is:

\[
\mathbb{E}^{\bullet}\left[\mathbb{E}^{\bullet}\left[\left.\begin{array}{c}
\frac{\partial u_{1}^{\bullet}}{\partial q_{1}}(\alpha^{\bullet}s_{1},\alpha^{\bullet}s_{2},\omega)\cdot[(1-\lambda)\alpha_{BA}^{'}(\kappa^{\bullet})+\lambda\alpha_{BB}^{'}(\kappa^{\bullet})]\cdot s_{1}\\
+\frac{\partial u_{1}^{\bullet}}{\partial q_{2}}(\alpha^{\bullet}s_{1},\alpha^{\bullet}s_{2},\omega)\cdot[(1-\lambda)\alpha_{AB}^{'}(\kappa^{\bullet})+\lambda\alpha_{BB}^{'}(\kappa^{\bullet})]\cdot s_{2}
\end{array}\right|s_{1}\right]\right].
\]
Using the interim optimality part of Assumption \ref{assu:linear_interim_eqm},
$\mathbb{E}^{\bullet}\left[\frac{\partial u_{1}^{\bullet}}{\partial q_{1}}(\alpha^{\bullet}s_{1},\alpha^{\bullet}s_{2},\omega)\mid s_{1}\right]=0$
for every $s_{1}\in S$, using the necessity of the first-order condition.
The derivative thus simplifies as claimed.
\end{proof}

\subsection{Proof of Proposition \ref{prop:abee}}
\begin{proof}
When $\Theta_{A}=\Theta_{B}=\Theta^{\bullet}$, for any matching assortativity
$\lambda$ and with $(p_{A},p_{B})=(1,0),$ we show adherents of both
models have 0 fitness in every  EZ. Suppose instead that the match
between groups $g$ and $g^{'}$ reach a terminal node other than
$z_{1}$ with positive probability. Let $n_{L}$ be the last non-terminal
node reached with positive probability, so we must have $L\ge2$,
and also that nodes $n_{1},...,n_{L-1}$ are also reached with positive
probability. So Drop must be played with probability 1 at $n_{L}.$
Since $n_{L}$ is reached with positive probability, correctly specified
agents hold correct beliefs about opponent's play at $n_{L}$, which
means at $n_{L-1}$ it cannot be optimal to play Across with positive
probability since this results in a loss of $\ell$ compared to playing
Drop, a contradiction.

Now let $\Theta_{A}=\Theta^{\bullet}$, $\Theta_{B}=\Theta^{An}$.
Suppose $\lambda\in[0,1]$ and let $p_{B}\in(0,1).$ We claim there
is an EZ where $d_{AA}^{k}=1$ for every $k$, $d_{AB}^{k}=0$ for
every even $k$ with $k<K$, $d_{AB}^{k}=1$ for every other $k$,
$d_{BA}^{k}=0$ for every odd $k$ and $d_{BA}^{k}=1$ for every even
$k$, and $d_{BB}^{k}=0$ for every $k$ with $k<K,$ $d_{BB}^{K}=1.$
It is easy to see that the behavior $(d_{AA})$ is optimal under correct
belief about opponent's play. In the $\Theta_{A}$ vs. $\Theta_{B}$
matches, the conjecture about A's play $\hat{d}_{AB}^{k}=2/K$ for
$k$ even, $\hat{d}_{AB}^{k}=1$ for $k$ odd minimizes KL divergence
among all strategies in $\mathbb{A}^{An}$, given B's play. To see
this, note that when B has the role of P2, opponent Drops immediately.
When B has the role of P1, the outcome is always $z_{K}.$ So a conjecture
with $\hat{d}_{AB}^{k}=x$ for every even $k$ has the conditional
KL divergence of: 
\begin{align*}
 & \sum_{k\le K-1\text{ odd}}\underset{(1,z_{k})\text{ for }k\le K-1\text{ odd}}{\underbrace{0\cdot\ln\left(\frac{0}{0}\right)}}+\sum_{k\le K-1\text{ even}}\underset{(1,z_{k})\text{ for }k\le K-1\text{ even}}{\underbrace{0\cdot\ln\left(\frac{0}{(1/2)\cdot(1-x)^{(k/2)-1}\cdot x}\right)}}\\
 & +\underset{(1,z_{K})}{\underbrace{\frac{1}{2}\ln\left(\frac{1/2}{(1/2)\cdot(1-x)^{(K/2)-1}\cdot x}\right)}}+\underset{(1,z_{end})}{\underbrace{0\cdot\ln\left(\frac{0}{(1-x)^{(K/2)}}\right)}}
\end{align*}
when matched with an opponent from $\Theta_{A}$. Using $0\cdot\ln(0)=0,$
the expression simplifies to $\frac{1}{2}\ln\left(\frac{1}{(1-x)^{(K/2)-1}\cdot x}\right)$,
which is minimized among $x\in[0,1]$ by $x=2/K.$ Against this conjecture,
the difference in expected payoff at node $n_{K-1}$ from Across versus
Drop is $(1-2/K)(g)+(2/K)(-\ell).$ This is strictly positive when
$g>\frac{2}{K-2}\ell.$ This means the continuation value at $n_{K-1}$
is at least $g$ larger than the payoff of Dropping at $n_{K-3},$
so again Across has strictly higher expected payoff than Drop. Inductively,
$(d_{BA}^{k})$ is optimal given the belief $(\hat{d}_{AB}^{k}).$
Also, $(d_{AB}^{k})$ is optimal as it results in the highest possible
payoff. We can similarly show that the conjecture $\hat{d}_{BB}^{k}$
with $\hat{d}_{BB}^{k}=2/K$ for $k$ even, $\hat{d}_{BB}^{k}=0$
for $k$ odd minimizes KL divergence conditional on $\Theta_{B}$
opponent, and $(d_{BB}^{k})$ is optimal given this conjecture.

As $p_{B}\to0,$ we find an  EZ where adherents of A have fitness
0, whereas the adherents of B have fitness at least $\frac{1}{2}(((K/2)-1)$$g-\ell)>0$
since $g>\frac{2}{K-2}\ell.$ This shows $\Theta_{A}$ is not evolutionarily
stable against $\Theta_{B}$.

But consider the same $(d_{AA},d_{AB},d_{BA})$ and suppose $d_{BB}^{k}=1$
for every $k$. Taking $p_{B}\to1,$ with $\lambda<1$, we find an
 EZ where adherents of B have fitness 0, adherents of A have fitness
$(1-\lambda)\cdot\frac{1}{2}\cdot((K/2)g+\ell)>0.$ This shows $\Theta_{B}$
is not evolutionarily stable against $\Theta_{A}$.
\end{proof}

\subsection{Proof of Proposition \ref{prop:stable_pop_share}}
\begin{proof}
In the centipede game, suppose $g>\frac{2}{K-2}\ell$. the misspecified
agent thinks a group B agent in the role of P2 and a group A agent
in either role has a probability $2/K$ of stopping at every node.
Under this belief, choosing to continue instead of drop means there
is a $(K-2)/K$ chance of gaining $g$, but a $2/K$ chance of losing
$\ell.$ Since we assume $g>\frac{2}{K-2}\ell$, it is strictly better
to continue. When $p$ fraction of the agents are correctly specified,
the fitness of $\Theta^{\bullet}$ is $p\cdot0+(1-p)\cdot(\frac{1}{2}\frac{g(K-2)}{2}+\frac{1}{2}(\frac{gK}{2}+\ell))$,
while the fitness of $\Theta^{An}$ is $p\cdot[\frac{1}{2}(\frac{g(K-2)}{2}-\ell)+\frac{1}{2}\frac{g(K-2)}{2}]+(1-p)[\frac{1}{2}(\frac{g(K-2)}{2}-\ell)+\frac{1}{2}(\frac{gK}{2}+\ell)]$.
The difference in fitness is 
\[
-p[\frac{1}{2}(\frac{g(K-2)}{2}-\ell)+\frac{1}{2}\frac{g(K-2)}{2}]+(1-p)\frac{1}{2}\ell.
\]
Simplifying, this is $\frac{1}{2}\ell-p\cdot\frac{g(K-2)}{2}$, a
strictly decreasing function in $p.$ When $p=\frac{\ell}{g(K-2)},$
which is a number strictly between 0 and 1/2 from the assumption $g>\frac{2}{K-2}\ell$
in the centipede game, the two models have the same fitness.
\end{proof}

\subsection{Proof of Proposition \ref{prop:stable_pop_share_dollar}}
\begin{proof}
In the $\overline{\Theta}^{An}$ vs. $\overline{\Theta}^{An}$ match,
the adherents of $\overline{\Theta}^{An}$ hold the belief that $\hat{d}_{BB}^{k}=2/K$
for every even $k$. In the role of P1, at node $k$ for $k\le K-3,$
stopping gives them $k$ but continuing gives them a $(K-2)/K$ chance
to get at least $k+2$, and we have $k\le\frac{K-2}{K}(k+2)\iff2k\le2K-4\iff k\le K-2$.
At node $K-1,$ the agent gets $K-1$ from dropping but expects $(K+2)\cdot\frac{K-2}{K}$
from continuing, and $(K+2)\cdot\frac{K-2}{K}-(K-1)=\frac{K^{2}-4-K^{2}+K}{K}=\frac{K-4}{K}>0$
since $K\ge6.$

In the $\overline{\Theta}^{\bullet}$ vs. $\overline{\Theta}^{An}$
match, the adherents of $\Theta^{An}$ hold the belief that $\hat{d}_{AB}^{k}=2/K$
for every $k.$ By the same arguments as before, the behavior of the
adherents of $\Theta^{An}$ are optimal given these beliefs. Also,
the adherents of $\Theta^{\bullet}$ have no profitable deviations
since they are best responding both as P1 and P2.

When $p$ fraction of the agents are correctly specified, in the dollar
game the fitness of $\overline{\Theta}^{\bullet}$ is $p\cdot0.5+(1-p)\cdot(\frac{1}{2}(K-1)+\frac{1}{2}K)$,
while the fitness of $\overline{\Theta}^{An}$ is $p\cdot0+(1-p)\cdot(\frac{1}{2}\cdot0+\frac{1}{2}K)$.
For any $p$, the fitness of $\overline{\Theta}^{\bullet}$ is strictly
higher than that of $\overline{\Theta}^{An}$.
\end{proof}

\section{\label{sec:Existence-and-Continuity}Existence and Continuity of
EZ}

We provide a few technical results about the existence of EZ and the
upper-hemicontinuity of the set of EZs with respect to population
share.  We suppose that $|\mathcal{G}|=1$ for simplicity, but analogous
results would hold for environments with multiple situations. Note
that the same learning channel that generates new stability phenomena
in Section \ref{sec:new_stability_phenomena} also leads to some difficulty
in establishing existence and continuity results, as agents draw different
inferences with different interaction structures.

Let two models, $\Theta_{A},\Theta_{B}$ be fixed. Also fix population
shares $p$ and matching assortativity $\lambda.$ Let $U_{A}:\mathbb{A}^{2}\times\Theta_{A}\to\mathbb{R}$
be such that $U_{A}(a_{i},a_{-i};F)=U_{i}(a_{i},a_{-i};\delta_{F})$
and let $U_{B}:\mathbb{A}^{2}\times\Theta_{B}\to\mathbb{R}$ be such
that $U_{B}(a_{i},a_{-i};F)=U_{i}(a_{i},a_{-i};\delta_{F})$.
\begin{assumption}
\label{assu:compact}$\mathbb{A},\Theta_{A},\Theta_{B}$ are compact
metrizable spaces.
\begin{assumption}
\label{assu:continuous_utility}$U_{A},U_{B}$ are continuous.
\begin{assumption}
\label{assu:finite_KL}For every $F\in\Theta_{A}\cup\Theta_{B}$ and
$a_{i},a_{-i}\in\mathbb{A},$ $K(F;a_{i},a_{-i})$ is well-defined
and finite.
\end{assumption}
\end{assumption}
\end{assumption}
Under Assumption \ref{assu:finite_KL}, we have the well-defined functions
$K_{A}:\Theta_{A}\times\mathbb{A}^{2}\to\mathbb{R}_{+}$ and $K_{B}:\Theta_{B}\times\mathbb{A}^{2}\to\mathbb{R}_{+}$,
where $K_{g}(F;a_{i},a_{-i}):=D_{KL}(F^{\bullet}(a_{i},a_{-i})\parallel F(a_{i},a_{-i}))$.
\begin{assumption}
\label{assu:continuous_KL} $K_{A}$ and $K_{B}$ are continuous.
\begin{assumption}
\label{assu:quasiconcave}$\mathbb{A}$ is convex and, for all $a_{-i}\in\mathbb{A}$
and $\mu\in\Delta(\Theta_{A})\cup\Delta(\Theta_{B})$, $a_{i}\mapsto U_{i}(a_{i},a_{-i};\mu)$
is quasiconcave.
\end{assumption}
\end{assumption}
We show existence of EZ using the Kakutani-Fan-Glicksberg fixed point
theorem, applied to the correspondence which maps strategy profiles
and beliefs over parameters into best replies and beliefs over KL-divergence
minimizing parameter. We start with a lemma.
\begin{lem}
\label{lem:inference_uhc}For $g\in\{A,B\}$, $a=(a_{AA},a_{AB},a_{BA},a_{BB})\in\mathbb{A}^{4},$
and $0\le m_{g}\le1$, let 
\[
\Theta_{g}^{*}(a,m_{g}):=\underset{\hat{F}\in\Theta_{g}}{\arg\min}\left\{ \begin{array}{c}
m_{g}\cdot K(\hat{F};a_{g,g},a_{g,g})+(1-m_{g})\cdot K(\hat{F};a_{g,-g},a_{-g,g})\end{array}\right\} .
\]
Then, $\Theta_{g}^{*}$ is upper hemicontinuous in its arguments.
\end{lem}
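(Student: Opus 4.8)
The plan is to view $\Theta_{g}^{*}(a,m_{g})$ as the argmin correspondence of a parametrized optimization problem and to apply Berge's Maximum Theorem. Write the objective as
\[
J_{g}(\hat{F};a,m_{g}):=m_{g}\cdot K(\hat{F};a_{g,g},a_{g,g})+(1-m_{g})\cdot K(\hat{F};a_{g,-g},a_{-g,g}),
\]
a real-valued function on $\Theta_{g}\times(\mathbb{A}^{4}\times[0,1])$ --- real-valued because Assumption \ref{assu:finite_KL} makes each $K(\hat{F};\cdot,\cdot)$ finite, and here $K$ restricted to $\Theta_{g}$ coincides with the function $K_{g}$ of Assumption \ref{assu:continuous_KL}. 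Then $\Theta_{g}^{*}(a,m_{g})=\arg\min_{\hat{F}\in\Theta_{g}}J_{g}(\hat{F};a,m_{g})$, and crucially the feasible set $\Theta_{g}$ does not depend on the parameter $(a,m_{g})$.

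First I would verify the hypotheses of Berge's Theorem. The parameter space $\mathbb{A}^{4}\times[0,1]$ and the choice space $\Theta_{g}$ are compact metrizable by Assumption \ref{assu:compact} (together with compactness of $[0,1]$), so the constant feasibility correspondence $(a,m_{g})\mapsto\Theta_{g}$ is nonempty- and compact-valued and trivially continuous (both upper and lower hemicontinuous). Next I would check that $J_{g}$ is jointly continuous: each coordinate projection $\mathbb{A}^{4}\to\mathbb{A}$ is continuous and $K_{g}:\Theta_{g}\times\mathbb{A}^{2}\to\mathbb{R}_{+}$ is continuous by Assumption \ref{assu:continuous_KL}, so $(\hat{F},a)\mapsto K(\hat{F};a_{g,g},a_{g,g})$ and $(\hat{F},a)\mapsto K(\hat{F};a_{g,-g},a_{-g,g})$ are continuous; since $m_{g}\mapsto m_{g}$ and $m_{g}\mapsto 1-m_{g}$ are continuous, $J_{g}$ is continuous as a finite sum of products of continuous functions. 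Berge's Maximum Theorem then gives that $\Theta_{g}^{*}$ is nonempty-, compact-valued, and upper hemicontinuous in $(a,m_{g})$, which is the claim.

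Alternatively, since $\Theta_{g}$ is compact metrizable, upper hemicontinuity is equivalent to having closed graph, and one can argue directly via sequences: if $(a^{n},m_{g}^{n})\to(a,m_{g})$ with $F^{n}\in\Theta_{g}^{*}(a^{n},m_{g}^{n})$ and $F^{n}\to F\in\Theta_{g}$, then $J_{g}(F^{n};a^{n},m_{g}^{n})\le J_{g}(\tilde{F};a^{n},m_{g}^{n})$ for every $\tilde{F}\in\Theta_{g}$ and every $n$; passing to the limit using joint continuity of $J_{g}$ yields $J_{g}(F;a,m_{g})\le J_{g}(\tilde{F};a,m_{g})$, so $F\in\Theta_{g}^{*}(a,m_{g})$. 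I do not anticipate a real obstacle here: because the minimization is over a fixed, parameter-independent compact set, the feasibility correspondence is trivially well behaved, so the entire content of the lemma reduces to the joint continuity of $J_{g}$, which is immediate from the continuity of $K_{g}$ in Assumption \ref{assu:continuous_KL}.
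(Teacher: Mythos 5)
Your proposal is correct and takes essentially the same approach as the paper: the paper's proof is exactly the sequential closed-graph argument you give as your alternative, resting on the joint continuity of the weighted objective guaranteed by Assumption \ref{assu:continuous_KL} (with compactness of $\Theta_{g}$ from Assumption \ref{assu:compact} converting closed graph into upper hemicontinuity). Your primary route via Berge's Maximum Theorem with the constant feasibility correspondence is just a packaged version of the same continuity facts, so there is nothing substantive to add.
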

This lemma says the set of KL-minimizing parameters is upper hemicontinuous
in strategy profile and matching assortativity. This leads to the
existence result.
\begin{prop}
\label{prop:existence}Under Assumptions \ref{assu:compact}, \ref{assu:continuous_utility},
\ref{assu:finite_KL}, \ref{assu:continuous_KL}, and \ref{assu:quasiconcave},
an EZ exists.
\end{prop}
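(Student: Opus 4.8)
The plan is to realize EZs as fixed points of a single correspondence and apply the Kakutani--Fan--Glicksberg theorem. Fix $\Theta_{A},\Theta_{B}$, population shares $p$, and assortativity $\lambda$, and abbreviate $m_{g}:=\lambda+(1-\lambda)p_{g}$, so that $1-m_{g}=(1-\lambda)(1-p_{g})$ is the weight group $g$ places on out-group data. Take $X:=\Delta(\Theta_{A})\times\Delta(\Theta_{B})\times\mathbb{A}^{4}$, with the weak-$*$ topology on the two measure factors and the product topology overall. By Assumption \ref{assu:compact} each $\Theta_{g}$ is compact metrizable, so $\Delta(\Theta_{g})$ is weak-$*$ compact, metrizable, and convex (a weak-$*$ closed convex subset of the unit ball of $C(\Theta_{g})^{*}$); by Assumptions \ref{assu:compact} and \ref{assu:quasiconcave} the set $\mathbb{A}$ is compact metrizable and convex. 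Hence $X$ is a nonempty compact convex subset of a locally convex Hausdorff topological vector space.

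Define $\Phi:X\rightrightarrows X$ by
$\Phi(\mu_{A},\mu_{B},a):=\Delta\!\left(\Theta_{A}^{*}(a,m_{A})\right)\times\Delta\!\left(\Theta_{B}^{*}(a,m_{B})\right)\times\prod_{(g,g')}\mathrm{BR}_{g,g'}(\mu_{g},a_{g',g})$,
where $\Theta_{g}^{*}$ is the KL-minimizing correspondence from Lemma \ref{lem:inference_uhc} and $\mathrm{BR}_{g,g'}(\mu_{g},a_{g',g}):=\arg\max_{a_{i}\in\mathbb{A}}U_{i}(a_{i},a_{g',g};\mu_{g})$. Unwinding Definition \ref{def:EZ} (with $|\mathcal{G}|=1$) and using that the objective there equals $m_{g}K(\hat F;a_{g,g},a_{g,g})+(1-m_{g})K(\hat F;a_{g,-g},a_{-g,g})$, one sees that $(\mu_{A},\mu_{B},a)$ is an EZ precisely when it is a fixed point of $\Phi$. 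So it suffices to check the hypotheses of Kakutani--Fan--Glicksberg for $\Phi$.

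\emph{Nonemptiness and convexity of values.} The map $(a_{i},a_{-i},\mu)\mapsto U_{i}(a_{i},a_{-i};\mu)=\int U_{g}(a_{i},a_{-i};F)\,d\mu(F)$ is jointly continuous, since $U_{g}$ is continuous hence bounded and (uniformly continuous) on the compact $\mathbb{A}^{2}\times\Theta_{g}$ by Assumption \ref{assu:continuous_utility}; thus each $\mathrm{BR}_{g,g'}$ is a nonempty $\arg\max$ over the compact $\mathbb{A}$, and it is convex because $a_{i}\mapsto U_{i}(a_{i},a_{-i};\mu_{g})$ is quasiconcave on the convex $\mathbb{A}$ by Assumption \ref{assu:quasiconcave}. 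Each $\Theta_{g}^{*}(a,m_{g})$ is nonempty since $K_{g}$ is continuous (Assumption \ref{assu:continuous_KL}) on the compact $\Theta_{g}$ (Assumption \ref{assu:compact}), and $\Delta$ of a nonempty set is nonempty and convex. \emph{Closed graph.} For the best-response factors, joint continuity of $U_{i}$ together with Berge's maximum theorem gives that each $\mathrm{BR}_{g,g'}$ is upper hemicontinuous with compact values, hence closed graph. For the belief factors I would show directly that the graph of $a\mapsto\Delta(\Theta_{g}^{*}(a,m_{g}))$ is closed: if $a_{n}\to a$ and $\mu_{n}\to\mu$ weak-$*$ with $\mathrm{supp}(\mu_{n})\subseteq\Theta_{g}^{*}(a_{n},m_{g})$, then for any $F\in\mathrm{supp}(\mu)$ and any open neighborhood $O\ni F$ we have $\liminf_{n}\mu_{n}(O)\ge\mu(O)>0$ by the Portmanteau theorem, so $O$ meets $\Theta_{g}^{*}(a_{n},m_{g})$ for all large $n$; choosing $F_{n}\in\Theta_{g}^{*}(a_{n},m_{g})$ with $F_{n}\to F$ and invoking the upper hemicontinuity of $\Theta_{g}^{*}$ (Lemma \ref{lem:inference_uhc}, which has closed graph and closed values) gives $F\in\Theta_{g}^{*}(a,m_{g})$; hence $\mathrm{supp}(\mu)\subseteq\Theta_{g}^{*}(a,m_{g})$, i.e.\ $\mu\in\Delta(\Theta_{g}^{*}(a,m_{g}))$. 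A finite product of closed-graph correspondences has closed graph, so $\Phi$ does; since $X$ is compact, $\Phi$ is upper hemicontinuous with nonempty compact convex values, and Kakutani--Fan--Glicksberg yields a fixed point, which is an EZ.

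The main obstacle is the closed-graph step for the inference factors: lifting the set-valued upper hemicontinuity of $\Theta_{g}^{*}$ supplied by Lemma \ref{lem:inference_uhc} to upper hemicontinuity of $a\mapsto\Delta(\Theta_{g}^{*}(a,m_{g}))$ via the support/Portmanteau argument above, together with the (routine but essential) weak-$*$ joint continuity of $\mu\mapsto U_{i}(a_{i},a_{-i};\mu)$ needed for Berge's theorem to apply to the best-response factors. This is exactly where the ``learning channel'' bites --- the inference depends on the \emph{entire} strategy profile $a$, not just on the match in question --- so Lemma \ref{lem:inference_uhc} is doing the real work. The remaining items (compactness and convexity of $X$, convex-valuedness of $\Phi$ from quasiconcavity, nonemptiness of values) are bookkeeping.
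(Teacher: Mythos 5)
Your proposal is correct and follows essentially the same route as the paper: the same fixed-point correspondence on $\Delta(\Theta_{A})\times\Delta(\Theta_{B})\times\mathbb{A}^{4}$, the Kakutani--Fan--Glicksberg theorem, quasiconcavity for convex-valued best replies, and Lemma \ref{lem:inference_uhc} for the inference factors. The only difference is cosmetic: where the paper cites Theorem 17.13 of \citet{aliprantis_infinite_2006} to lift upper hemicontinuity of $\Theta_{g}^{*}$ to that of $a\mapsto\Delta(\Theta_{g}^{*}(a,m_{g}))$, you prove that lifting directly via the Portmanteau/support argument, which is a valid substitute.
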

Next, upper hemicontinuity in $m_{g}$ in Lemma \ref{lem:inference_uhc}
allows us to deduce the upper hemicontinuity of the EZ correspondence
in population shares. 
\begin{prop}
\label{prop:uhc}Fix two models $\Theta_{A},\Theta_{B}$. Also fix
matching assortativity $\lambda\in[0,1].$ The set of EZ is an upper
hemicontinuous correspondence in $p_{B}$ under Assumptions \ref{assu:compact},
\ref{assu:continuous_utility}, \ref{assu:finite_KL}, and \ref{assu:continuous_KL}.
\end{prop}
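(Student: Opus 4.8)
The plan is to establish the sequential (closed-graph) form of upper hemicontinuity, which is equivalent to upper hemicontinuity here because all zeitgeists with the fixed theories $\Theta_A,\Theta_B$ and fixed $\lambda$ live in the compact metrizable space $\mathcal{Z}:=\Delta(\Theta_A)\times\Delta(\Theta_B)\times\{(x,1-x):x\in[0,1]\}\times\mathbb{A}^4$, compactness of each $\Delta(\Theta_g)$ in the weak$^*$ topology following from Assumption \ref{assu:compact}. So I would take $p_B^{(n)}\to p_B$ together with EZs $\mathfrak{Z}^{(n)}=(\mu_A^{(n)},\mu_B^{(n)},(1-p_B^{(n)},p_B^{(n)}),\lambda,a^{(n)})$, pass to a convergent subsequence $\mathfrak{Z}^{(n)}\to\mathfrak{Z}=(\mu_A,\mu_B,(1-p_B,p_B),\lambda,a)$ in $\mathcal{Z}$, and verify that $\mathfrak{Z}$ meets the two conditions of Definition \ref{def:EZ} at population share $(1-p_B,p_B)$.

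For the behavior condition, the first step is to note that $U_i(a_i,a_{-i};\mu)=\int_{\Theta_g}U_g(a_i,a_{-i};F)\,d\mu(F)$ is jointly continuous in $(a_i,a_{-i},\mu)$: $U_g$ is continuous on the compact set $\mathbb{A}^2\times\Theta_g$ by Assumption \ref{assu:continuous_utility}, hence bounded and uniformly continuous, so the standard estimate splits into a term bounded by $\sup_F|U_g(a_i^{(n)},a_{-i}^{(n)};F)-U_g(a_i,a_{-i};F)|\to 0$ (uniform continuity) and a term $|\int U_g(a_i,a_{-i};\cdot)\,d\mu^{(n)}-\int U_g(a_i,a_{-i};\cdot)\,d\mu|\to 0$ (weak$^*$ convergence against a bounded continuous integrand). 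Berge's maximum theorem then gives that $(a_{-i},\mu)\mapsto\arg\max_{a_i\in\mathbb{A}}U_i(a_i,a_{-i};\mu)$ is upper hemicontinuous, since the feasible set $\mathbb{A}$ is fixed and compact; and because $a_{g,g'}^{(n)}\in\arg\max_{a_i}U_i(a_i,a_{g',g}^{(n)};\mu_g^{(n)})$ for every $n$, passing to the limit yields $a_{g,g'}\in\arg\max_{a_i}U_i(a_i,a_{g',g};\mu_g)$.

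For the inference condition, the key observation is that the weight on own-group data, $m_g^{(n)}:=\lambda+(1-\lambda)p_g^{(n)}$, converges to $m_g:=\lambda+(1-\lambda)p_g$ for both $g\in\{A,B\}$ (using $p_A^{(n)}=1-p_B^{(n)}$). Thus $\mu_g^{(n)}$ is supported on $\Theta_g^*(a^{(n)},m_g^{(n)})$ in the notation of Lemma \ref{lem:inference_uhc}, and $(a^{(n)},m_g^{(n)})\to(a,m_g)$. By that lemma $\Theta_g^*$ is upper hemicontinuous, so I would invoke the following standard fact, applied with $C_n=\Theta_g^*(a^{(n)},m_g^{(n)})$ and $C=\Theta_g^*(a,m_g)$: if $\nu_n\to\nu$ weak$^*$ on a compact metric space, $\mathrm{supp}(\nu_n)\subseteq C_n$, and $C_n$ is upper hemicontinuous toward the closed set $C$, then $\mathrm{supp}(\nu)\subseteq C$. (Proof: were there $x\in\mathrm{supp}(\nu)\setminus C$, set $\epsilon=\tfrac12 d(x,C)>0$ and $O=\{y:d(y,C)<\epsilon\}$; upper hemicontinuity forces $C_n\subseteq O$ eventually, hence $\nu_n(\bar O)=1$, so $\nu(\bar O)\ge\limsup_n\nu_n(\bar O)=1$ by the Portmanteau theorem for closed sets; but $\{y:d(y,x)<\epsilon\}$ is disjoint from $\bar O$ and has positive $\nu$-measure, a contradiction.) Here $C=\Theta_g^*(a,m_g)$ is closed as the $\arg\min$ of a continuous function (Assumption \ref{assu:continuous_KL}) over the compact $\Theta_g$ (Assumption \ref{assu:compact}), so the conclusion is that $\mu_g$ is supported on $\Theta_g^*(a,m_g)$, completing the verification.

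The substantive content has essentially been relocated into Lemma \ref{lem:inference_uhc}; within the present argument the only delicate point is the passage from ``$\mu_g^{(n)}$ supported on the upper-hemicontinuous sets $\Theta_g^*(a^{(n)},m_g^{(n)})$'' to ``$\mu_g$ supported on the limit set,'' where one must use upper (not lower) hemicontinuity together with the Portmanteau theorem in the correct direction. Everything else --- joint continuity of $U_i$, Berge's theorem, and the compactness of $\mathcal{Z}$ --- is routine under the stated assumptions.
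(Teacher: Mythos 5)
Your proof is correct and follows essentially the same route as the paper: compactness reduces upper hemicontinuity to a sequential closed-graph check, the best-response condition passes to the limit by continuity of $U_g$ under weak convergence of beliefs, and the belief condition rests on Lemma \ref{lem:inference_uhc} applied at $m_g^{(n)}=\lambda+(1-\lambda)p_g^{(n)}\to\lambda+(1-\lambda)p_g$. The only difference is cosmetic: where the paper lifts the upper hemicontinuity of $\Theta_g^*$ to distributions by citing Theorem 17.13 of Aliprantis--Border, you prove that step directly with a Portmanteau argument (and use Berge rather than passing to the limit in the optimality inequalities), both of which are sound.
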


\subsection{Proofs of Results in Appendix \ref{sec:Existence-and-Continuity}}

\subsubsection{Proof of Lemma \ref{lem:inference_uhc}}
\begin{proof}
Write the minimization objective as 
\[
W(a,F,m_{g}):=m_{g}K_{g}(F;a_{g,g},a_{g,g})+(1-m_{g})K_{g}(F;a_{g,-g},a_{-g,g}),
\]
a continuous function of $(a,F,m_{g})$ by Assumption \ref{assu:continuous_KL}.
Suppose we have a sequence $(a^{(n)},m_{g}^{(n)})\to(a^{*},m_{g}^{*})\in\mathbb{A}^{4}\times[0,1]$
and let $F^{(n)}\in\Theta_{g}^{*}(a^{(n)},m_{g}^{(n)})$ for each
$n,$ with $F^{(n)}\to F^{*}\in\Theta_{g}.$ For any other $\hat{F}\in\Theta_{g},$
note that $W(a^{*},m_{g}^{*},\hat{F})=\lim_{n\to\infty}W(a^{(n)},m_{g}^{(n)},\hat{F})$
by continuity. But also by continuity, $W(a^{*},m_{g}^{*},F^{*})=\lim_{n\to\infty}W(a^{(n)},m_{g}^{(n)},F^{(n)})$
and $W(a^{(n)},m_{g}^{(n)},F^{(n)})\le W(a^{(n)},m_{g}^{(n)},\hat{F})$
for every $n.$ It therefore follows $W(a^{*},m_{g}^{*},F^{*})\le W(a^{*},m_{g}^{*},\hat{F}).$
\end{proof}

\subsubsection{Proof of Proposition \ref{prop:existence}}
\begin{proof}
Consider the correspondence $\Gamma:\mathbb{A}^{4}\times\Delta(\Theta_{A})\times\Delta(\Theta_{B})\rightrightarrows\mathbb{A}^{4}\times\Delta(\Theta_{A})\times\Delta(\Theta_{B}),$
\begin{align*}
 & \Gamma(a_{AA},a_{AB},a_{BA},a_{BB},\mu_{A},\mu_{B}):=\\
 & (\text{BR}(a_{AA},\mu_{A}),\text{BR}(a_{BA},\mu_{A}),\text{BR}(a_{AB},\mu_{B}),\text{BR}(a_{BB},\mu_{B}),\Delta(\Theta_{A}^{*}(a)),\Delta(\Theta_{B}^{*}(a))),
\end{align*}
where $\text{BR}(a_{-i},\mu_{g}):=\underset{\hat{a}_{i}\in\mathbb{A}}{\arg\max}U_{g}(\hat{a}_{i},a_{-i};\mu_{g})$
and, for each $g\in\{A,B\},$ the correspondence $\Theta_{g}^{*}$
is defined with $m_{g}=\lambda+(1-\lambda)p_{g},$ $m_{-g}=1-m_{g}.$
It is clear that fixed points of $\Gamma$ are EZ.

We apply the Kakutani-Fan-Glicksberg theorem (see, e.g, Corollary
17.55 in \citet{aliprantis_infinite_2006}). By Assumptions \ref{assu:compact}
and \ref{assu:quasiconcave}, $\mathbb{A}$ is acompact and convex
metric space, and each $\Theta_{g}$ is a compact metric space, so
it follows the domain of $\Gamma$ is a nonempty, compact and convex
metric space. We need only verify that $\Gamma$ has closed graph,
non-empty values, and convex values.

To see that $\Gamma$ has closed graph, the previous lemma shows the
upper hemicontinuity of $\Theta_{A}^{*}(a)$ and $\Theta_{B}^{*}(a)$
in $a,$ and Theorem 17.13 of \citet{aliprantis_infinite_2006} then
implies $\Delta(\Theta_{A}^{*}(a))$ and $\Delta(\Theta_{B}^{*}(a))$
are also upper hemicontinuous in $a.$ It is a standard argument that
since Assumption \ref{assu:continuous_utility} supposes $U_{A},U_{B}$
are continuous, it implies the best-response correspondences $\text{BR}(a_{AA},\mu_{A}),$
$\text{BR}(a_{BA},\mu_{A}),$ $\text{BR}(a_{AB},\mu_{B}),$ $\text{BR}(a_{BB},\mu_{B})$
have closed graphs.

To see that $\Gamma$ is non-empty, recall that each $\hat{a}_{i}\mapsto U_{g}(\hat{a}_{i},a_{-i};\mu_{g})$
is a continuous function on a compact domain, so it must attain a
maximum on $\mathbb{A}.$ Similarly, the minimization problem that
defines each $\Theta_{g}^{*}(a)$ is a continuous function of $F$
over a compact domain of possible $F$'s, so it attains a minimum.
Thus each $\Delta(\Theta_{g}^{*}(a))$ is the set of distributions
over a non-empty set.

To see that $\Gamma$ is convex valued, clearly $\Delta(\Theta_{A}^{*}(a))$
and $\Delta(\Theta_{B}^{*}(a))$ are convex valued by definition.
Also, $\hat{a}_{i}\mapsto U_{A}(\hat{a}_{i},a_{AA};\mu_{A})$ is quasiconcave
by Assumption \ref{assu:quasiconcave}. That means if $a_{i}^{'},a_{i}^{''}\in\text{BR}(a_{AA},\mu_{A}),$
then for any convex combination $\tilde{a}_{i}$ of $a_{i}^{'},a_{i}^{''},$
we have{\small{}{}{} $U_{A}(\tilde{a}_{i},a_{AA};\mu_{A})\ge\min(U_{A}(a_{i}^{'},a_{AA};\mu_{A}),$}
$U_{A}(a_{i}^{''},a_{AA};\mu_{A}))=\max_{\hat{a}_{i}\in\mathbb{A}}U_{A}(\hat{a}_{i},a_{AA};\mu_{A})$.
Therefore, $\text{BR}(a_{AA},\mu_{A})$ is convex. For similar reasons,
$\text{BR}(a_{BA},\mu_{A}),$ $\text{BR}(a_{AB},\mu_{B}),$ $\text{BR}(a_{BB},\mu_{B})$
are convex.
\end{proof}

\subsubsection{Proof of Proposition \ref{prop:uhc}}
\begin{proof}
Since $\mathbb{A}^{4}\times\Delta(\Theta_{A})\times\Delta(\Theta_{B})$
is compact by Assumption \ref{assu:compact}, we need only show that
for every sequence $(p_{B}^{(k)})_{k\ge1}$ and $(a^{(k)},\mu^{(k)})_{k\ge1}=(a_{AA}^{(k)},a_{AB}^{(k)},a_{BA}^{(k)},a_{BB}^{(k)},\mu_{A}^{(k)},\mu_{B}^{(k)})_{k\ge1}$
such that for every $k$, $(a^{(k)},\mu^{(k)})$ is an EZ with $p=(1-p_{B}^{(k)},p_{B}^{(k)})$,
$p_{B}^{(k)}\to p_{B}^{*}$, and $(a^{(k)},\mu^{(k)})\to(a^{*},\mu^{*})$,
then $(a^{*},\mu^{*})$ is an EZ with $p=(1-p_{B}^{*},p_{B}^{*})$.

We first show for all $g,g^{'}\in\{A,B\},$ $a_{g,g^{'}}^{*}$ is
optimal against $a_{g^{'},g}^{*}$ under the belief $\mu_{g}^{*}.$
Assortativity does not matter here, since optimality applies within
all type match-ups. By Assumption \ref{assu:continuous_utility},
$U_{g}(a_{i},a_{-i};F)$ is continuous, so by property of convergence
in distribution, $U_{g}(a_{g,g^{'}}^{(k)},a_{g^{'},g}^{(k)};\mu_{g}^{(k)})\to U_{g}(a_{g,g^{'}}^{*},a_{g^{'},g}^{*};\mu_{g}^{*})$.
For any other $\hat{a}_{i}\in\mathbb{A},$ $U_{g}(\hat{a}_{i},a_{g^{'},g}^{(k)};\mu_{g}^{(k)})\to U_{g}(\hat{a}_{i},a_{g^{'},g}^{*};\mu_{g}^{*})$
and for every $k,$ $U_{g}(a_{g,g^{'}}^{(k)},a_{g^{'},g}^{(k)};\mu_{g}^{(k)})\ge U_{g}(\hat{a}_{i},a_{g^{'},g}^{(k)};\mu_{g}^{(k)}).$
Therefore $a_{g,g^{'}}^{*}$ best responds to $a_{g^{'},g}^{*}$ under
belief $\mu_{g}^{*}.$

Next, we show parameters in the support of $\mu_{g}^{*}$ minimize weighted
KL divergence for group $g.$ First consider the correspondence $H:\mathbb{A}^{4}\times[0,1]\rightrightarrows\Theta_{g}$
where $H(a,p_{g}):=\Theta_{g}^{*}(a,\lambda+(1-\lambda)(p_{g}))$.
Then $H$ is upper hemicontinuous by Lemma \ref{lem:inference_uhc}.
Since $H(a,p_{g})$ represents the minimizers of a continuous function
on a compact domain, it is non-empty and closed. By Theorem 17.13
of \citet{aliprantis_infinite_2006}, the correspondence $\tilde{H}:\mathbb{A}^{4}\times[0,1]\rightrightarrows\Delta(\Theta_{g})$
defined so that $\tilde{H}(a,p_{g}):=\Delta(H(a,p_{g}))$ is also
upper hemicontinuous. For every $k,$ $\mu_{g}^{(k)}\in\tilde{H}(a^{(k)},p_{g}^{(k)})$,
and $\mu_{g}^{(k)}\to\mu_{g}^{*}$, $a^{(k)}\to a^{*},p_{g}^{(k)}\to p_{g}^{*}.$
Therefore, $\mu_{g}^{*}\in\tilde{H}(a^{*},p_{g}^{*}),$ that is to
say $\mu_{g}^{*}$ is supported on the minimizers of weighted KL divergence.
\end{proof}

\section{Learning Foundation of EZ and EZ-SU\label{sec:Learning-Foundation}}

We provide a unified foundation for EZ and EZ-SU as the steady state
of a learning system. This foundation considers a world where agents
have prior beliefs over extended parameters in an extended models, as
in Section \ref{sec:ABEE}. At the end of every match, each agent
observes her consequence and a noisy signal about the matched opponent's
strategy. We show that under any asymptotically myopic policy, if
behavior and beliefs converge, then the limit steady state must be
an EZ-SU when the noisy signals about opponent's strategy are uninformative.
Sufficiently accurate signals about opponent's play cause the steady
states to be EZs, if the extended models allow agents to make rich
enough inferences about opponents' strategies. Finally, if the true
situation is redrawn every $T$ periods and the agents reset their
beliefs over extended parameters to their prior belief when the situation
is redrawn, then their average payoffs approach their fitness in the
EZ or EZ-SU when $T$ is large.

\subsection{\label{subsec:Regularity-Assumptions-learning}Regularity Assumptions}

We make some regularity assumptions on the objective environments
and on the extended models $\overline{\Theta}_{A},\overline{\Theta}_{B}$.
These are similar to the regularity assumptions from Section \ref{sec:Existence-and-Continuity}.

Suppose the strategy set $\mathbb{A}$ is finite. Suppose the marginals
of the extended models $\overline{\Theta}_{A},\overline{\Theta}_{B}$
on the dimension of fundamental uncertainty, denoted as $\Theta_{A},\Theta_{B}$,
are compact and metrizable spaces. Endow $\overline{\Theta}_{A}$
and $\overline{\Theta}_{B}$ with the product metric. Suppose that
every $(a_{A},a_{B},F)\in\overline{\Theta}_{A}\cup\overline{\Theta}_{B}$
is so that for every $(a_{i},a_{-i})\in\mathbb{A}^{2}$ and every
situation $G,$ whenever $f^{\bullet}(a_{i},a_{-i},G)(y)>0$, we also
get $f(a_{i},a_{A})(y)>0$ and $f(a_{i},a_{B})(y)>0$, where $f$
is the density or probability mass function for $F$.

For each $g,g^{'}\in\{A,B\},$ define $K_{g,g^{'}}:\mathbb{A}^{2}\times\mathcal{G}\times\overline{\Theta}_{g}\to\mathbb{R}$
by $K_{g,g^{'}}(a_{i},a_{-i},G;(a_{A},a_{B},F))=D_{KL}(F^{\bullet}(a_{i},a_{-i},G)\parallel F(a_{i},a_{g^{'}})).$
This is the KL divergence of the parameter $(a_{A},a_{B},F)\in\overline{\Theta}_{g}$
in situation $G$ based on the data generated from the strategy profile
$(a_{i},a_{-i})$. Suppose each $K_{g,g^{'}}$ is well defined and
a continuous function of the extended parameter $(a_{A},a_{B},F)$.

For $g\in\{A,B\}$, $F\in\Theta_{g}$, let $U_{g}(a_{i},a_{-i};F)$
be the expected payoffs of the strategy profile $(a_{i},a_{-i})$
for $i$ when consequences are drawn according to $F.$ Assume $U_{A},U_{B}$
are continuous.

Suppose for every extended model $\overline{\Theta}_{g}$ and every
$(a_{A},a_{B},F)\in\overline{\Theta}_{g}$ and $\epsilon>0,$ there
exists an open neighborhood $V\subseteq\overline{\Theta}_{g}$ of
$(a_{A},a_{B},F)$, so that for every $(\hat{a}_{A},\hat{a}_{B},\hat{F})\in V$,
$1-\epsilon\le f(a_{i},a_{A})(y)/\hat{f}(a_{i},\hat{a}_{A})(y)\le1+\epsilon$
and $1-\epsilon\le f(a_{i},a_{B})(y)/\hat{f}(a_{i},\hat{a}_{B})(y)\le1+\epsilon$
for all $a_{i}\in\mathbb{A},y\in\mathbb{Y}$. Also suppose there is
some $M>0$ so that $\ln(f(a_{i},a_{A})(y))$ and $\ln(f(a_{i},a_{B})(y))$
are bounded in $[-M,M]$ for all $(a_{A},a_{B},F)\in\overline{\Theta}_{g}$,
$a_{i},a_{-i}\in\mathbb{A},y\in\mathbb{Y}$.

\subsection{Learning Environment}

We first consider an environment with only one true situation, $|\mathcal{G}|=1.$
Time is discrete and infinite, $t=0,1,2,...$ A unit mass of agents,
$i\in[0,1]$, enter the society at time 0. A $p_{A}\in(0,1)$ measure
of them are assigned to model $A$ and the rest are assigned to model
$B$. Each agent born into model $g$ starts with the same full support
prior over the extended model, $\mu_{g}^{(0)}\in\Delta(\overline{\Theta}_{g})$,
and believes there is some $(a_{A},a_{B},F)\in\overline{\Theta}_{g}$
so that every group $g$ opponent always plays $a_{g}$ and the consequences
are always generated by $F$.

In each period $t$, agents are matched up partially assortatively
to play the stage game. Assortativity is $\lambda\in(0,1).$ Each
person in group $g$ has $\lambda+(1-\lambda)p_{g}$ chance of matching
with someone from group $g,$ and matches with someone from group
$-g$ with the complementary chance. Each agent $i$ observes their
opponent's group membership and chooses a strategy $a_{i}^{(t)}\in\mathbb{A}$.
At the end of the match, the agent observes own consequence $y_{i}^{(t)}$
and a signal $x_{i}^{(t)}\in\mathbb{A}$ about the opponent's play,
where $x_{i}^{(t)}$ equals the matched opponent's strategy $a_{-i}$
with probability $\tau\in[0,1),$ and it is uniformly random on $\mathbb{A}$
with the complementary probability. To give a foundation for a EZ-SU,
we consider $\tau=0$, so the signal $x_{i}$ is uninformative. To
give a foundation for EZ, we consider $\tau$ close to 1.

Thus, the space of histories from one period is $\{A,B\}\times\mathbb{A}\times\mathbb{Y}\times\mathbb{A}$,
with typical element $(g_{i}^{(t)},a_{i}^{(t)},y_{i}^{(t)},x_{i}^{(t)})$.
It records the group membership of $i$'s opponent $g_{i}^{(t)}$,
$i$'s strategy $a_{i}^{(t)},$, $i$'s consequence $y_{i}^{(t)}$,
and $i$'s ex-post signal about the matched opponent's play, $x_{i}^{(t)}$.
Let $\mathbb{H}$ denote the space of all finite-length histories.

Given the assumption on the two models, there is a well-defined
Bayesian belief operator for each model $g,$ $\mu_{g}:\mathbb{H}\to\Delta(\overline{\Theta}_{g}),$
mapping every finite-length history into a belief over extended parameters
in $\overline{\Theta}_{g}$, starting with the prior $\mu_{g}^{(0)}.$

We also take as exogenously given policy functions for choosing strategies
after each history. That is, $\mathfrak{a}_{g,g^{'}}:\mathbb{H}\to\mathbb{A}$
for every $g,g^{'}\in\{A,B\}$ gives the strategy that a group $g$
agent uses against a group $g^{'}$ opponent after every history.
Assume these policy functions are asymptotically myopic.
\begin{assumption}
\label{assu:asymptotic_myopia}For every $\epsilon>0,$ there exists
$N$ so that for any history $h$ containing at least $N$ matches
against opponents of each group, $\mathfrak{a}_{g,g^{'}}(h)$ is an
$\epsilon$-best response to the Bayesian belief $\mu_{g}(h)$.
\end{assumption}
From the perspective of each agent $i$ in group $g,$ $i$'s play
against groups A and B, as well as $i$'s belief over $\overline{\Theta}_{g},$
is a stochastic process $(\tilde{a}_{iA}^{(t)},\tilde{a}_{iB}^{(t)},\tilde{\mu}_{i}^{(t)})_{t\ge0}$
valued in $\mathbb{A}\times\mathbb{A}\times\Delta(\overline{\Theta}_{g}).$
The randomness is over the groups of opponents matched with in different
periods, the strategies they play, and the random consequences and
ex-post signals drawn at the end of the matches. At the same time,
since there is a continuum of agents, the distribution over histories
within each population in each period is deterministic. As such, there
is a deterministic sequence $(\alpha_{AA}^{(t)},\alpha_{AB}^{(t)},\alpha_{BA}^{(t)},\alpha_{BA}^{(t)},\nu_{A}^{(t)},\nu_{B}^{(t)})\in\Delta(\mathbb{A})^{4}\times\Delta(\Delta(\overline{\Theta}_{A}))\times\Delta(\Delta(\overline{\Theta}_{B}))$
that describes the distributions of play and beliefs that prevail
in the two sub-populations in every period $t.$

\subsection{Steady State Limits are EZ-SUs and EZs}

We state and prove the learning foundation of EZ-SU and EZ. For $(\alpha^{(t)})_{t}$
a sequence valued in $\Delta(\mathbb{A})$ and $a^{*}\in\mathbb{A},$
$\alpha^{(t)}\to a^{*}$ means $\mathbb{E}_{\hat{a}\sim\alpha^{(t)}}\parallel\hat{a}-a^{*}\parallel\to0$
as $t\to\infty$. For $(\nu^{(t)})_{t}$ a sequence valued in $\Delta(\Delta(\overline{\Theta}_{g}))$
and $\mu^{*}\in\Delta(\overline{\Theta}_{g}),$ $\nu^{(t)}\to\mu^{*}$
means $\mathbb{E}_{\hat{\mu}\sim\nu^{(t)}}\parallel\hat{\mu}-\mu^{*}\parallel\to0$
as $t\to\infty.$
\begin{prop}
\label{prop:learning}Suppose the regularity assumptions in Section
\ref{subsec:Regularity-Assumptions-learning} hold, and suppose Assumption
\ref{assu:asymptotic_myopia} holds.

Suppose $\tau=0$. Suppose there exists $(a_{AA}^{*},a_{AB}^{*},a_{BA}^{*},a_{BB}^{*},\mu_{A}^{*},\mu_{B}^{*})\in\mathbb{A}^{4}\times\Delta(\overline{\Theta}_{A})\times\Delta(\overline{\Theta}_{B})$
so that $(\alpha_{AA}^{(t)},\alpha_{AB}^{(t)},\alpha_{BA}^{(t)},\alpha_{BA}^{(t)},\nu_{A}^{(t)},\nu_{B}^{(t)})\to(a_{AA}^{*},a_{AB}^{*},a_{BA}^{*},a_{BB}^{*},\mu_{A}^{*},\mu_{B}^{*})$
and for each agent $i$ in group $g,$ almost surely $(\tilde{a}_{iA}^{(t)},\tilde{a}_{iB}^{(t)},\tilde{\mu}_{i}^{(t)})\to(a_{gA}^{*},a_{gB}^{*},\mu_{g}^{*})$.
Then, $(a_{AA}^{*},a_{AB}^{*},a_{BA}^{*},a_{BB}^{*},\mu_{A}^{*},\mu_{B}^{*})$
is an EZ-SU.

Suppose for each $g,$ the extended model $\overline{\Theta}_{g}=\mathbb{A}^{2}\times\Theta_{g}$
for some model $\Theta_{g}$ -- that is, each group can make any
inference about opponents' strategies. There exists some $\underline{\tau}<1$
so that for every $\tau\in(\underline{\tau},1)$ and $(a_{AA}^{*},a_{AB}^{*},a_{BA}^{*},a_{BB}^{*},\mu_{A}^{*},\mu_{B}^{*})$
satisfying the above conditions, we have that $\mu_{A}^{*}$ puts
probability 1 on $(a_{AA}^{*},a_{AB}^{*})$, $\mu_{B}^{*}$ puts probability
1 on $(a_{BA}^{*},a_{BB}^{*}),$ and $(a_{AA}^{*},a_{AB}^{*},a_{BA}^{*},a_{BB}^{*},\mu_{A}^{*}|_{\Theta_{A}},\mu_{B}^{*}|_{\Theta_{B}})$
is an EZ, where $\mu_{g}^{*}|_{\Theta_{g}}$ is the marginal of the
belief $\mu_{g}^{*}$ on the model $\Theta_{g}.$
\end{prop}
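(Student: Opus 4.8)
The plan is to establish each of the two conclusions via the same template: (i) show the limit strategy profile is optimal for the limit belief, and (ii) identify the limit belief as a minimizer of the appropriate weighted Kullback--Leibler criterion; the parameter $\tau$ enters only in step (ii), through how the ex-post signal $x_i$ contributes to the likelihood. For step (i), fix $g,g'\in\{A,B\}$ and a representative agent $i$ in group $g$, working along the full-measure event of the hypothesis where $(\tilde a_{ig}^{(t)},\tilde a_{i,-g}^{(t)},\tilde\mu_i^{(t)})\to(a^*_{gg},a^*_{g,-g},\mu^*_g)$; since $\mathbb{A}$ is finite, a.s.\ the strategies are eventually equal to their limits. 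By Assumption \ref{assu:asymptotic_myopia}, for all large $t$ the action $\tilde a_{ig'}^{(t)}$ maximizes over the finite set $\mathbb{A}$, up to an error $\epsilon_t\to0$, the myopic objective $\hat a\mapsto \mathbb{E}_{(\hat a_A,\hat a_B,\hat F)\sim\tilde\mu_i^{(t)}}[\,\mathbb{E}_{y\sim\hat F(\hat a,\hat a_{g'})}[\pi(y)]\,]$. This objective is jointly continuous in $(\hat a,\mu)$ because $U_g$ is continuous, so letting $t\to\infty$ shows $a^*_{gg'}$ maximizes the limiting objective under $\mu^*_g$; once the support of $\mu^*_g$ is identified this is exactly the optimality clause of the EZ-SU definition, and of Definition \ref{def:EZ} in the second part. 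The work is in step (ii).

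For step (ii) I would run a Bayesian-consistency argument in the spirit of Berk--Nash analysis \citep{esponda2016berk}, adapted to data that are independent but not identically distributed. Aggregate convergence gives $\alpha^{(t)}_{g',g}(\{a^*_{g',g}\})\to1$, so an opponent drawn from group $g'$ plays $a^*_{g',g}$ with probability tending to one; together with $\tilde a_{ig'}^{(t)}\to a^*_{gg'}$ this makes $i$'s consequences in group-$g'$ matches eventually i.i.d.\ from $F^\bullet(a^*_{gg'},a^*_{g',g})$ outside a vanishing fraction of periods, and her signals i.i.d.\ from $\tau\delta_{a^*_{g',g}}+(1-\tau)\,\mathrm{Unif}(\mathbb{A})$. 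The period-$t$ log-likelihood that an extended model $\hat\theta=(\hat a_A,\hat a_B,\hat F)$ assigns in a group-$g'$ match splits into $\ln\hat f(\tilde a_{ig'}^{(t)},\hat a_{g'})(y^{(t)})$ --- bounded in $[-M,M]$ uniformly over $\hat\theta$ by the regularity assumptions --- plus the signal term $\ln(\tau\mathbf{1}\{x^{(t)}=\hat a_{g'}\}+(1-\tau)/|\mathbb{A}|)$. When $\tau=0$ the signal term is the $\hat\theta$-independent constant $-\ln|\mathbb{A}|$, so a law of large numbers for bounded independent summands gives that the time-averaged log-likelihood of $\hat\theta$ converges a.s.\ to $\mathrm{const}-\sum_{g'}m_{g,g'}\,D_{KL}(F^\bullet(a^*_{gg'},a^*_{g',g})\parallel\hat F(a^*_{gg'},\hat a_{g'}))$, where $m_{g,g}=\lambda+(1-\lambda)p_g$ and $m_{g,-g}=(1-\lambda)(1-p_g)$ are the match frequencies and the constant does not depend on $\hat\theta$; uniform likelihood-ratio control on neighborhoods plus compactness of $\overline{\Theta}_g$ make this convergence uniform in $\hat\theta$. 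A standard posterior-odds comparison --- if $\mu^*_g$ charged a model off the $\arg\min$ set, the posterior would put vanishing weight there, contradicting $\tilde\mu_i^{(t)}\to\mu^*_g$ --- then forces $\mu^*_g$ to be supported on exactly the weighted-KL minimizers required by the EZ-SU definition, so the limiting tuple is an EZ-SU.

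For the EZ conclusion, with $\overline{\Theta}_g=\mathbb{A}^2\times\Theta_g$ the signal term now separates conjectures: against the limit opponent, the expected per-period signal log-likelihood of any $\hat a_{g'}\neq a^*_{g',g}$ falls short of that of the correct conjecture by $c(\tau):=\tau\ln\frac{\tau|\mathbb{A}|+1-\tau}{1-\tau}$, which tends to $+\infty$ as $\tau\to1$, while the consequence term stays in $[-M,M]$ and each $m_{g,g'}$ is bounded away from $0$ because $\lambda\in(0,1)$ and $p_A\in(0,1)$. Hence there is $\underline\tau<1$, depending only on $|\mathbb{A}|,M,\lambda,p_A$ (so uniform over steady states), such that for $\tau\in(\underline\tau,1)$ the time-averaged log-likelihood of any model with a wrong conjecture about some group is strictly below that of some model with both conjectures correct; the posterior-odds argument then forces $\mu^*_g$ to put probability one on the extended models whose conjectures about group-$g$ and group-$(-g)$ opponents equal the strategies $a^*_{g,g}$ and $a^*_{-g,g}$ those opponents actually use in the limit. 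On that set the signal term is again $\hat\theta$-independent, so the weighted-KL minimization reduces to $\min_{\hat F\in\Theta_g}$ of the weighted sum of $D_{KL}(F^\bullet(a^*_{gg'},a^*_{g',g})\parallel\hat F(a^*_{gg'},a^*_{g',g}))$ --- the inference condition of Definition \ref{def:EZ} --- and the myopic objective of step (i) reduces to $U_i(\cdot,a^*_{g',g};\mu^*_g|_{\Theta_g})$, its optimality condition; hence $(a^*_{AA},a^*_{AB},a^*_{BA},a^*_{BB},\mu^*_A|_{\Theta_A},\mu^*_B|_{\Theta_B})$ is an EZ.

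The main obstacle is the non-stationary Bayesian-consistency step: the agent's own play and her opponents' play are only convergent, not fixed, so the data are independent but not identically distributed, and one must combine a law of large numbers for bounded non-i.i.d.\ variables with uniform-in-$\hat\theta$ control of the log-likelihood --- supplied by the equicontinuity/likelihood-ratio regularity assumptions and compactness of $\overline{\Theta}_g$ --- before the posterior-odds comparison is legitimate, while absorbing the finitely many early, possibly far-from-limit periods into an $O(1)$ error. The $\tau\to1$ estimate, the uniform choice of $\underline\tau$, and the reductions of the EZ-SU and EZ conditions are comparatively routine once that step is in hand.
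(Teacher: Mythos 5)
Your proposal follows essentially the same route as the paper's proof: optimality of the limit strategies via continuity of $U_g$ plus asymptotic myopia, support of the limit belief via the time-averaged log-likelihood (equivalently weighted-KL) criterion with a law of large numbers and a posterior-odds comparison, and, for $\tau$ near $1$, the observation that the signal-component KL penalty for a wrong conjecture (your $c(\tau)\to\infty$) swamps the bounded consequence-fit term, forcing correct conjectures and reducing the problem to Definition \ref{def:EZ}. The only place needing more care than your sketch provides is the step you yourself flag: one cannot simply condition on the almost-sure event that play is eventually at its limit and then invoke an LLN (that event is future-measurable), and the paper instead works on an unconditional probability space of hypothetical consequences, bounds the effect of off-limit opponent play by a $\delta$-contamination comparison process using the bound $M$, and separately shows that the probability the agent ever plays a non-limit action after period $T_{2}$ vanishes as $T_{2}\to\infty$.
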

\begin{proof}
We first consider the case of $\tau=0,$ so the uninformative ex-post
signals may be ignored.

For $\mu$ a belief and $g\in\{A,B\},$ let $u^{\mu}(a_{i};g)$ represent
subjective expected payoff from playing $a_{i}$ against group $g$.
Suppose $a_{AA}^{*}\notin\text{argmax}_{\hat{a}\in\mathbb{A}}u^{\mu_{A}^{*}}(\hat{a};A)$
(the other cases are analogous). By the continuity assumptions on
$U_{A}$ (which is also bounded because $\Theta_{A}$ is bounded),
there are some $\epsilon_{1},\epsilon_{2}>0$ so that whenever $\mu_{i}\in\Delta(\overline{\Theta}_{A})$
with $\parallel\mu_{i}-\mu_{A}^{*}\parallel<\epsilon_{1}$, we also
have $u^{\mu_{i}}(a_{AA}^{*};A)<\max_{\hat{a}\in\mathbb{A}}u^{\mu_{i}}(\hat{a};A)-\epsilon_{2}.$
By the definition of asymptotically empirical best responses, find
$N$ so that $\mathfrak{a}_{A,A}(h)$ must be a myopic $\epsilon_{2}$-best
response when there are at least $N$ periods of matches against A
and B. Agent $i$ has a strictly positive chance to match with groups
A and B in every period. So, at all except a null set of points in
the probability space, $i$'s history eventually records at least
$N$ periods of play by groups A and B. Also, by assumption, almost
surely $\tilde{\mu}_{i}^{(t)}\to\mu_{A}^{*}.$ This shows that by
asymptotically myopic best responses, almost surely $\tilde{a}_{iA}^{(k)}\not\to a_{AA}^{*},$
a contradiction.

Now suppose some $\theta_{A}^{*}=(a_{A}^{*},a_{B}^{*},f^{*})$ in
the support of $\mu_{A}^{*}$ does not minimize the weighted KL divergence
in the definition of EZ-SU (the case of a parameter $\theta_{B}^{*}$
in the support of $\mu_{B}^{*}$ not minimizing is similar). Then
we have 
\[
\theta_{A}^{*}\notin\underset{\hat{\theta}\in\overline{\Theta}_{A}}{\text{argmin}}\left[\begin{array}{c}
(\lambda+(1-\lambda)p_{A})\cdot D_{KL}(F^{\bullet}(a_{AA}^{*},a_{AA}^{*})\parallel\hat{F}(a_{AA}^{*},\hat{a}_{A}))\\
+(1-\lambda)(1-p_{A})\cdot D_{KL}(F^{\bullet}(a_{AB}^{*},a_{BA}^{*})\parallel\hat{F}(a_{AB}^{*},\hat{a}_{B}))
\end{array}\right]
\]
where $\hat{\theta}=(\hat{a}_{A},\hat{a}_{B},\hat{F}).$

This is equivalent to: 
\[
\theta_{A}^{*}\notin\underset{\hat{\theta}\in\overline{\Theta}_{A}}{\text{argmax}}\left[\begin{array}{c}
(\lambda+(1-\lambda)p_{A})\cdot\mathbb{E}_{y\sim F^{\bullet}(a_{AA}^{*},a_{AA}^{*})}\ln(\hat{f}(a_{AA}^{*},\hat{a}_{A})(y))\\
+(1-\lambda)(1-p_{A})\cdot\mathbb{E}_{y\sim F^{\bullet}(a_{AB}^{*},a_{BA}^{*})}\ln(\hat{f}(a_{AB}^{*},\hat{a}_{B})(y))
\end{array}\right]
\]

Let this objective, as a function of $\hat{\theta}$, be denoted $WL(\hat{\theta}).$
There exists $\theta_{A}^{opt}=(a_{A}^{opt},a_{B}^{opt},f^{opt})\in\overline{\Theta}_{A}$
and $\delta,\epsilon>0$ so that $(1-\delta)WL(\theta_{A}^{opt})-2\delta M-3\epsilon>(1-\delta)WL(\theta_{A}^{*}).$
By assumption on the primitives, find open neighborhoods $V^{opt}$
and $V^{*}$ of $\theta_{A}^{opt},\theta_{A}^{*}$ respectively, so
that for all $a_{i}\in\mathbb{A},$ $g\in\{A,B\},$ $y\in\mathbb{Y}$,
$1-\epsilon\le f^{opt}(a_{i},a_{g}^{opt})(y)/\hat{f}(a_{i},\hat{a}_{g})(y)\le1+\epsilon$,
for all $\hat{\theta}=(\hat{a}_{A},\hat{a}_{B},\hat{f})\in V^{opt}$,
and also $1-\epsilon\le f^{*}(a_{i},a_{g}^{*})(y)/\hat{f}(a_{i},\hat{a}_{g})(y)\le1+\epsilon$
for all $\hat{\theta}=(\hat{a}_{A},\hat{a}_{B},\hat{f})\in V^{*}$.
Also, by convergence of play in the populations, find $T_{1}$ so
that in all periods $t\ge T_{1},$ $\alpha_{AA}^{(t)}(a_{AA}^{*})\ge1-\delta$
and $\alpha_{BA}^{(t)}(a_{BA}^{*})\ge1-\delta$.

For $T_{2}\ge T_{1},$ consider a probability space defined by $\Omega:=(\{A,B\}\times\mathbb{A}^{2}\times(\mathbb{Y})^{\mathbb{A}^{2}})^{\infty}$
that describes the randomness in an agent's learning process starting
with period $T_{2}+1$. For a point $\omega\in\Omega$ and each period
$T_{2}+s$, $s\ge1$, $\omega_{s}=(g,a_{-i,A},a_{-i,B},(y_{a_{i},a_{-i}})_{(a_{i},a_{-i})\in\mathbb{A}^{2}})$
specifies the group $g$ of the matched opponent, the play $a_{-i,A},a_{-i,B}$
of hypothetical opponents from groups A and B, and the hypothetical
consequence $y_{a_{i},a_{-i}}$ that would be generated for every
pair of strategies $(a_{i},a_{-i})$ played. As notation, let $opp(\omega,s)$,
$a_{-i,A}(\omega,s),$ $a_{-i,B}(\omega,s)$, and $y_{a_{i},a_{-i}}(\omega,s)$
denote the corresponding components of $\omega_{s}.$ Define $\mathbb{P}_{T_{2}}$
over this space in the natural way. That is, it is independent across
periods, and within each period, the density (or probability mass
function if $\mathbb{Y}$ is finite) of $\omega_{s}=(g,a_{-i,A},a_{-i,B},(y_{a_{i},a_{-i}})_{(a_{i},a_{-i})\in\mathbb{A}^{2}})$
is 
\[
m_{g}\cdot\alpha_{AA}^{(T_{2}+s)}(a_{-i,A})\alpha_{BA}^{(T_{2}+s)}(a_{-i,B})\cdot\prod_{(a_{i},a_{-i})\in\mathbb{A}^{2}}f^{\bullet}(a_{i},a_{-i})(y_{a_{i},a_{-i}}),
\]
where $m_{g}$ is the probability of $i$ from group A being matched
up against an opponent of group $g,$ that is $m_{A}=(\lambda+(1-\lambda)p_{A})$,
$m_{B}=(1-\lambda)(1-p_{A}).$

For $\theta=(a_{A}^{\theta},a_{B}^{\theta},F^{\theta})\in\overline{\Theta}_{A}$
with $f^{\theta}$ the density of $F^{\theta}$, $\omega\in\Omega,$
consider the stochastic process 
\[
\ell_{s}(\theta,\omega):=\frac{1}{s}\sum_{t=T_{2}+1}^{T_{2}+s}\ln(f^{\theta}(a_{AA}^{*},a_{opp(\omega,t)}^{\theta})(y_{a_{AA}^{*},a_{-i,opp(\omega,t)}(\omega,t)}(\omega,t)).
\]
By choice of the neighborhood $V^{*},$ 
\begin{align*}
\limsup_{s}\sup_{\theta_{A}\in V^{*}}\ell_{s}(\theta_{A},\omega) & \le\epsilon+\frac{1}{s}\sum_{t=T_{2}+1}^{T_{2}+s}\ln(f^{*}(a_{AA}^{*},a_{opp(\omega,t)}^{*})(y_{a_{AA}^{*},a_{-i,opp(\omega,t)}(\omega,t)}(\omega,t))\\
 & \le\epsilon+\frac{1}{s}\sum_{t=T_{2}+1}^{T_{2}+s}\begin{array}{c}
1_{\{a_{-i,opp(\omega,t)}(\omega,t)=a_{opp(\omega,t),A}^{*}\}}\cdot\ln(f^{*}(a_{AA}^{*},a_{opp(\omega,t)}^{*})(y_{a_{AA}^{*},a_{opp(\omega,t),A}^{*}}(\omega,t))\\
(1-1_{\{a_{-i,opp(\omega,t)}(\omega,t)=a_{opp(\omega,t),A}^{*}\}})\cdot M.
\end{array}
\end{align*}
Since $T_{2}\ge T_{1},$ in every period $t,$ $\mathbb{P}_{T_{2}}(a_{-i,opp(\omega,t)}(\omega,t)=a_{opp(\omega,t),A}^{*})\ge1-\delta$.
Let $(\xi_{k})_{k\ge1}$ a related stochastic process: it is i.i.d.
such that each $\xi_{k}$ has $\delta$ chance to be equal to $M,$
$(1-\delta)m_{A}$ chance to be distributed according to $\ln(f^{*}(a_{AA}^{*},a_{A}^{*})(y))$
where $y\sim f^{\bullet}(a_{AA}^{*},a_{AA}^{*}),$ and $(1-\delta)m_{B}$
chance to be distributed according to $\ln(f^{*}(a_{AB}^{*},a_{B}^{*})(y))$
where $y\sim f^{\bullet}(a_{AB}^{*},a_{BA}^{*}).$ By law of large
numbers, $\frac{1}{s}\sum_{k=1}^{s}\xi_{k}$ converges almost surely
to $\delta M+(1-\delta)WL(\theta_{A}^{*}).$ By this comparison, $\limsup_{s}\sup_{\theta_{A}\in V^{*}}\ell_{s}(\theta_{A},\omega)\le\epsilon+\delta M+(1-\delta)WL(\theta_{A}^{*})$
$\mathbb{P}_{T_{2}}$-almost surely. By a similar argument, $\liminf_{s}\inf_{\theta_{A}\in V^{opt}}\ell_{s}(\theta_{A},\omega)\ge-\epsilon-\delta M+(1-\delta)WL(\theta_{A}^{opt})$
$\mathbb{P}_{T_{2}}$-almost surely.

Along any $\omega$ where we have both $\limsup_{s}\sup_{\theta_{A}\in V^{*}}\ell_{s}(\theta_{A},\omega)\le\epsilon+\delta M+(1-\delta)WL(\theta_{A}^{*})$
and $\liminf_{s}\inf_{\theta_{A}\in V^{opt}}\ell_{s}(\theta_{A},\omega)\ge-\epsilon-\delta M+(1-\delta)WL(\theta_{A}^{opt})$,
if $\omega$ also leads to $i$ always playing $a_{AA}^{*}$ against
group A and $a_{AB}^{*}$ against group B in all periods starting
with $T_{2}+1,$ then the posterior belief assigns to $V^{*}$ must
tend to 0, hence $\tilde{\mu}_{i}^{(t)}\not\to\mu_{A}^{*}.$ Starting
from any length $T_{2}$ history $h,$ there exists a subset $\hat{\Omega}_{h}\subseteq\Omega$
that leads to $i$ not playing the EZ-SU strategy in at least one
period starting with $T_{2}+1.$ So conditional on $h,$ the probability
of $\tilde{\mu}_{i}^{(t)}\to\mu_{A}^{*}$ is no larger than $1-\mathbb{P}_{T_{2}}(\hat{\Omega}_{h}).$
The unconditional probability is therefore no larger than $\mathbb{E}_{h}[1-\mathbb{P}_{T_{2}}(\hat{\Omega}_{h})],$
where $\mathbb{E}_{h}$ is taken with respect to the distribution
of period $T_{2}$ histories for $i.$ But this term is also the probability
of $i$ playing non-EZ-SU action at least once starting with period
$T_{2}.$ Since there are finitely many actions and $(\tilde{a}_{iA}^{(t)},\tilde{a}_{iB}^{(t)})\to(a_{AA}^{*},a_{AB}^{*})$
almost surely, $\mathbb{E}_{h}[1-\mathbb{P}_{T_{2}}(\hat{\Omega}_{h})]$
tends to 0 as $T_{2}\to\infty.$ We have a contradiction as this shows
$\tilde{\mu}_{i}^{(t)}\not\to\mu_{A}^{*}$ with probability 1.

Now consider the foundation for EZs. Suppose Let $\bar{K}<\infty$
be an upper bound on $K_{g,g^{'}}(a_{i},a_{-i};(a_{A},a_{B},F))$
across all $g,g^{'}\in\{A,B\},$ $a_{i},a_{-i}\in\mathbb{A},$ $(a_{A},a_{B},F)\in\overline{\Theta}_{g}.$
Here $\bar{K}$ is finite because $\mathbb{A}$ is finite and $K_{g,g^{'}}$
is continuous in the extended parameter, which is from a compact domain.
Let $F_{\tau}^{X}(a_{-i})\in\Delta(\mathbb{A})$ represent the distribution
of ex-post signals given precision $\tau,$ when opponent plays $a_{-i}\in\mathbb{A}.$
It is clear that there exists some $\underline{\tau}<1$ so that for
any $a_{-i}\ne a_{-i}^{'}$, $\tau\in(\underline{\tau},1),$ we get
$\min(m_{A},m_{B})\cdot D_{KL}(F_{\tau}^{X}(a_{-i})\parallel F_{\tau}^{X}(a_{-i}^{'}))>\bar{K}.$
Therefore, given any $(a_{AA}^{*},a_{AB}^{*},a_{BA}^{*})\in\mathbb{A}^{3},$
the solution to 
\[
\underset{\hat{\theta}\in\overline{\Theta}_{A}}{\min}\left[\begin{array}{c}
(\lambda+(1-\lambda)p_{A})\cdot[D_{KL}(F^{\bullet}(a_{AA}^{*},a_{AA}^{*})\parallel\hat{F}(a_{AA}^{*},\hat{a}_{A}))+D_{KL}(F_{\tau}^{X}(a_{AA}^{*})\parallel F_{\tau}^{X}(\hat{a}_{A}))]\\
+(1-\lambda)(1-p_{A})\cdot[D_{KL}(F^{\bullet}(a_{AB}^{*},a_{BA}^{*})\parallel\hat{F}(a_{AB}^{*},\hat{a}_{B}))+D_{KL}(F_{\tau}^{X}(a_{BA}^{*})\parallel F_{\tau}^{X}(\hat{a}_{B})]
\end{array}\right]
\]
must satisfy $\hat{a}_{A}=a_{AA}^{*},$ $\hat{a}_{B}=a_{BA}^{*}$,
because $(a_{AA}^{*},a_{BA}^{*},F)$ for any $F\in\Theta_{A}$ has
a KL divergence no larger than $\bar{K}$. On the other hand, any
$(\hat{a}_{A},\hat{a}_{B},\hat{F})$ with either $\hat{a}_{A}\ne a_{AA}^{*}$
or $\hat{a}_{B}\ne a_{BA}^{*}$ has KL divergence strictly larger
than $\bar{K}$ by the choice of $\tau$. The rest of the argument
is similar to the case of EZ-SU.
\end{proof}

\subsection{Multiple Situations}

Now suppose there are multiple situations $G\in\mathcal{G}$ and a
distribution $q\in\Delta(\mathcal{G})$, with $\mathcal{G}$ finite.
At the start of period $t=1,$ Nature draws a situation $G^{(1)}$
from $\mathcal{G}$ according to $q$, and consequences are generated
according to $F^{\bullet}(\cdot,\cdot,G^{(1)})$ until period $t=T+1.$
In period $T+1,$ Nature again draws a situation $G^{(2)}$ from $\mathcal{G}$
according to $q$, and consequences are generated according to $F^{\bullet}(\cdot,\cdot,G^{(2)})$
until period $t=2T+1,$ and so forth. Agents start with a prior over
their group's extended model, $\mu_{g}^{(0)}\in\Delta(\overline{\Theta}_{g})$.
In periods $T+1,2T+1,...$ agents reset their belief to $\mu_{g}^{(0)},$
and their belief in each period over the extended parameters in their
extended model only use histories since the last reset. This belief
corresponds to agents thinking that the data-generating process is
redrawn according to $\mu_{g}^{(0)}$ every $T$ periods.

Suppose $\tau=0$ and suppose for every $G\in\mathcal{G},$ the hypotheses
of Proposition \ref{prop:learning} hold in a society where $G$ is
the only true situation. Denote $(a_{AA}^{*}(G),a_{AB}^{*}(G),a_{BA}^{*}(G),a_{BB}^{*}(G),\mu_{A}^{*}(G),\mu_{B}^{*}(G))$
as the limit of the agents' behavior and beliefs with situation $G.$
Then it is straightforward to see that in a society with the situation
redrawn every $T$ periods, the expected undiscounted average payoff
of an agent in group $g$ approaches the fitness of $g$ in the EZ-SU
characterized by the behavior and beliefs $(a_{AA}^{*}(G),a_{AB}^{*}(G),a_{BA}^{*}(G),a_{BB}^{*}(G),\mu_{A}^{*}(G),\mu_{B}^{*}(G))_{G\in\mathcal{G}}$
with the distribution $q$ over situations, as $T\to\infty$. This
provides a foundation for fitness in EZ-SU as the agents' objective
payoffs when the true situation changes sufficiently slowly (a similar
foundation applies for the fitness in EZ.)

\section{The Single-Agent Case \label{OA:SingleAgent}}

This section records an observation related to our stability concepts
when applied to the single-agent case.  Specifically,  situation $G$
is a \emph{decision problem} if $(a_{i},a_{-i})\mapsto F^{\bullet}(a_{i},a_{-i},G)$
only depends on $a_{i}.$ If every situation is a  decision problem,
then the correctly specified model is evolutionarily stable against
\emph{any} other model, except when there are identification issues.
We adapt the notion of strong identification from \citet{esponda2016berk}.
\begin{defn}
Model $\Theta_{A}$ is \emph{strongly identified }in EZ $\mathfrak{Z}=(\mu_{A}(G),\mu_{B}(G),p,\lambda,a(G))_{G\in\mathcal{G}}$
if in every situation $G$, whenever $F',F''\in\Theta_{A}$ both solve
\begin{align*}
\min_{F\in\Theta_{A}}\left\{ (\lambda+(1-\lambda)p_{A})\cdot K(F;a_{AA},a_{AA},G)+(1-\lambda)(1-p_{A})\cdot K(F;a_{AB},a_{BA},G)\right\} ,
\end{align*}
we have $F^{'}(a_{i},a_{AA})=F^{''}(a_{i},a_{AA})$ and $F^{'}(a_{i},a_{BA})=F^{''}(a_{i},a_{BA})$
for all $a_{i}\in\mathbb{A}$.
\end{defn}
\begin{prop}
\label{prop:decision_problem}Suppose every situation is a decision
problem. Let $\lambda$ and two models $\Theta_{A},\Theta_{B}$
be given, where $\Theta_{A}$ is correctly specified. Suppose there
exists at least one  EZ with $p_{A}=1$, and $\Theta_{A}$ is strongly
identified in all such equilibria. Then $\Theta_{A}$ evolutionarily
stable under $\lambda$-matching against $\Theta_{B}$.
\end{prop}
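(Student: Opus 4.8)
The plan is to show that in \emph{every} EZ with $p=(1,0)$ the correctly specified residents attain the first-best payoff of the decision problem in each situation, while the mutants can never exceed it; since at least one such EZ exists by hypothesis, evolutionary stability follows immediately from Definition \ref{def:stability}. Throughout, for each situation $G$ write $v_G(a):=\mathbb{E}_{y\sim F^{\bullet}(a,G)}[\pi(y)]$ for the objective expected payoff of action $a$ in the decision problem $G$ (since $F^{\bullet}(a_i,a_{-i},G)$ depends only on $a_i$, I suppress the opponent's action).

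First I would fix an arbitrary EZ $\mathfrak{Z}=(\mu_A(G),\mu_B(G),(1,0),\lambda,a(G))_{G\in\mathcal{G}}$. Because $\Theta_A$ is correctly specified it contains $F^{\bullet}(\cdot,\cdot,G)$, which has zero KL divergence against the truth under every strategy profile; since $p_A=1$ the weighted objective defining $\mu_A(G)$ reduces to $K(F;a_{AA}(G),a_{AA}(G),G)$, a single nonnegative term, so $F^{\bullet}(\cdot,\cdot,G)$ is a global minimizer. Strong identification in $\mathfrak{Z}$ then forces every model $F'$ in the support of $\mu_A(G)$ to agree with $F^{\bullet}(\cdot,\cdot,G)$ on the slice $(\,\cdot\,,a_{AA}(G))$, i.e.\ $F'(a_i,a_{AA}(G))=F^{\bullet}(a_i,a_{AA}(G),G)$ for all $a_i\in\mathbb{A}$. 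This is the only place the hypothesis does real work: without it the residents could hold mistaken beliefs about off-path consequences and play a strictly suboptimal action.

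Next I would substitute this into the residents' subjective expected utility: for every $a_i\in\mathbb{A}$, $U_i(a_i,a_{AA}(G);\mu_A(G))=\mathbb{E}_{F\sim\mu_A(G)}\mathbb{E}_{y\sim F(a_i,a_{AA}(G))}[\pi(y)]=\mathbb{E}_{y\sim F^{\bullet}(a_i,G)}[\pi(y)]=v_G(a_i)$, using the previous step together with the fact that the true consequence distribution ignores the opponent's action. Optimality of behavior in an EZ then gives $a_{AA}(G)\in\arg\max_{a}v_G(a)$, so the residents' fitness equals $\sum_{G}q(G)\max_a v_G(a)$. For the mutants, note that in any match in situation $G$ — against a resident, or (in the formal $p_B=0$ EZ) against another mutant — a group-$B$ agent earns $v_G$ evaluated at whichever action she plays, hence at most $\max_a v_G(a)$; averaging over match types and over $q$ bounds mutant fitness by $\sum_{G}q(G)\max_a v_G(a)$. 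Thus $\Theta_A$ has weakly higher fitness than $\Theta_B$ in $\mathfrak{Z}$, and since $\mathfrak{Z}$ was an arbitrary EZ with $p=(1,0)$ and at least one such EZ exists by assumption, $\Theta_A$ is evolutionarily stable against $\Theta_B$ under $\lambda$-matching.

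I expect the main obstacle to be purely the bookkeeping around the strong-identification step: checking carefully that the true model genuinely lies in the set of KL-minimizers so the identification hypothesis applies to it, and that agreement of the supported models on the single slice $(\,\cdot\,,a_{AA}(G))$ suffices to pin down the residents' subjective best-response comparison over \emph{all} of $\mathbb{A}$ (not merely at the played action). Once those points are nailed down, the decision-problem structure makes every remaining inequality a one-liner.
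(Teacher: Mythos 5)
Your proposal is correct and follows essentially the same argument as the paper: the true model $F^{\bullet}(\cdot,\cdot,G)$ attains zero KL divergence and hence lies in the minimizing set, strong identification forces every supported model to agree with it on the relevant slice, so the residents' subjective best responses are objective maximizers of the decision problem, while the decision-problem structure caps every mutant's payoff at the same maximum. The only cosmetic difference is that you exploit $p_A=1$ to use agreement only on the $(\cdot,a_{AA}(G))$ slice and spell out the fitness accounting, whereas the paper invokes both slices of the strong-identification condition; the substance is identical.
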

\begin{proof}
In any  EZ, let $F\in\text{supp}(\mu_{A}(G))$ and note that $F^{\bullet}(\cdot,\cdot,G)\in\Theta_{A}$
since $\Theta_{A}$ is correctly specified. Both $F$ and $F^{\bullet}(\cdot,\cdot,G)$
solve the weighted minimization problem, the former because it is
in the support of $\mu_{A}$, the latter because it attains the lowest
minimization objective of 0. By strong identification, the set of
best responses to $a_{AA}(G)$ and $a_{BA}(G)$ under the belief $\mu_{A}$
is the same as set of actions that maximize payoffs in the decision
problem given by $F^{\bullet}(\cdot,\cdot,G)$. Therefore, adherents
of $\Theta_{A}$ obtain the highest possible objective payoffs in
the stage game in situation $G$. This applies to every situation,
so $\Theta_{A}$ has weakly higher fitness than $\Theta_{B}$ in the
 EZ.
\end{proof}
The result that a resident correct specification is immune to invasions
from misspecifications echoes related results in \citet{FL_mutation}
and \citet*{FII_welfare_based}. We primarily focus on stage games
where multiple agents' actions jointly determine their payoffs and
characterize which misspecifications can invade a rational society
in which environments. 

\noindent

\end{document}